\definecolor{mauve}{rgb}{0.58,0,0.82}
\tikzset{
>=stealth',
help lines/.style={dashed, thick},
axis/.style={<->},
important line/.style={thick},
connection/.style={thick, dotted},
}
\newcommand{\supp}{\mathrm{supp}\,}
\newcommand{\of}{\overline{f}}
\let\Re\relax
\DeclareMathOperator{\Re}{Re}
\def\R{\mathbb R}
\def\i{\mathrm i}
\let\phi\varphi
\let\epsilon\varepsilon
\newtheorem{thm}{Theorem}[section]
\newtheorem{prop}[thm]{Proposition}
\newtheorem{lem}[thm]{Lemma}
\newtheorem{cor}[thm]{Corollary}
\theoremstyle{definition}
\newtheorem{df}[thm]{Definition}
\newtheorem*{rem}{Remark}
\numberwithin{equation}{section}
\date{}
\author{Simon Barth $^{1,a)}$, Andreas Bitter $^{2,b)}$ and Semjon Vugalter $^{2,c)}$}
\title[]{\ \textbf{The absence of the Efimov effect
in systems \\ of one- and two-dimensional particles} \ \\ \ \\ }
\begin{document}
\maketitle

\section*{Affiliations}$ $\\
\small{
$^1$ Institute of Analysis, Dynamics and Modeling, University of Stuttgart, Germany\\
$^2$ Institute for Analysis, Karlsruhe Institute of Technology, Germany\\
$^{A)}$ {Email: simon.barth@mathematik.uni-stuttgart.de}\\
$^{B)}$ Email: andreas.bitter@kit.edu\\
$^{C)}$ Email: semjon.wugalter@kit.edu}

\normalsize

\begin{abstract}
We study virtual levels of $N$-particle Schr\"odinger operators and prove that if the particles are one-dimensional and $N\ge 3$, then virtual levels at the bottom of the essential spectrum correspond to eigenvalues. The same is true for two-dimensional particles if $N\ge 4$. These results are applied to prove the non-existence of the Efimov effect in systems of $N\ge 4$ one-dimensional or $N\ge 5$ two-dimensional particles.\\
\end{abstract}
\pagestyle{headings}
\renewcommand{\sectionmark}[1]{\markboth{#1}{}}


\section{Introduction}
In recent years the Efimov effect has attracted large interest.  
This effect is named after the physicist V. Efimov and can be stated as follows: 
A system of three quantum particles in dimension three, interacting through attractive short-range potentials, has an infinite number of bound states if the subsystems do not have negative spectrum and at least two of them are resonant \cite{Efimov}, i.e. any arbitrarily small negative perturbation of the pair potential leads to a negative spectrum. In this situation we also say that the two-body Hamiltonian of the system has a virtual level.\\
 The Efimov effect is a surprising phenomenon, because although the pair interactions are short-range, the system of three particles behaves as a system of two particles with long-range potential. Another interesting feature is its universality. This means that the existence of the effect as well as the distribution of the eigenvalues do not depend on the shape of the potentials of interaction of particles. It is only important that they are short-range and resonant. Indeed, the counting function $N(z)$ of the eigenvalues of the three-body Hamiltonian below $z <0$ obeys the following asymptotics 
 \begin{equation}\label{eq: asymptotics Sobolev}
	\lim\limits_{z\rightarrow 0-} \frac{N(z)}{|\ln |z| |}=\mathcal{A}_0>0,
\end{equation}
where the constant $\mathcal{A}_0$ depends on the masses of the particles, but not on the potentials \cite{Efimov}.
\\
For a long time the Efimov effect was regarded by many as a theoretical peculiarity. After the theoretical discovery of the Efimov effect great efforts were made to verify it experimentally. However, it took more than 30 years before in 2002 it was found in an ultra-cold gas of caesium atoms \cite{Grimm}.  This experiment was a milestone and opened the way to many further experiments in different systems of ultra-cold atoms in many laboratories all over the world \cite{Ferlaino, Gross, Berninger}.\\
In addition, it lead to a resurgence of interest to the Efimov effect, see for example the review of P. Naidon and S. Endo \cite{Naidon2017}, which contains 400 references. Since then, many predictions of phenomena similar to the Efimov effect have been made \cite{NishidaFiveBosons, NishidaLiberating, NishidaSemisuper, NishidaSuper}. Some of these predictions focus on the question whether an Efimov-type effect can be found in $N$-particle systems consisting of one- or two-dimensional particles under the assumptions that the $(N-1)$-particle subsystems have a virtual level at the bottom of the essential spectrum. For example in \cite{NishidaSemisuper} and \cite{NishidaFiveBosons} it is predicted that such an effect occurs for systems of $N=4$ two-dimensional or $N=5$ one-dimensional particles if the interactions in subsystems of less than $N-1$ particles are absent. In the table below we give a list of systems of $N$ identical particles where the Efimov effect is expected in the physics literature.
\begin{table}[h]
\begin{tabular}[h]{ccccc}
\hline
\multirow{2}{*}{\textbf{System}}& \textbf{Art of}  &{\textbf{Resonant}} & \multirow{2}{*}{\textbf{Predicted in}} & \textbf{Does a mathematical }\\ & \textbf{Interactions}&\textbf{subsystems} & & \textbf{proof exist?}\\
\hline
\multirow{3}{*}{$5$ bosons, $d=1$} & four-body &  \multirow{3}{*}{four-body} & \multirow{3}{*}{\cite{NishidaFiveBosons}} & \multirow{3}{*}{No} \\
&  no two-body&  & &\\ &  no three-body&  & &\\ \hline
\multirow{2}{*}{$4$ bosons, $d=2$} & three-body &\multirow{2}{*}{three-body }& \multirow{2}{*}{\cite{NishidaSemisuper}} & \multirow{2}{*}{No} \\
&  no two-body&  & &\\ \hline
$3$ bosons, $d=3$ & two-body &two-body & \cite{Efimov}& Yes, \cite{YafaevEfimov}\\ 
$3$ fermions, $d=2$ & two-body & two-body & \cite{NishidaSuper}&Yes, \cite{GridnevSuper}\\
\hline\\
\end{tabular}
\caption{Systems for which an Efimov-type effect is expected by physicists}
\end{table}

Besides these systems, effects similar to the Efimov effect are expected for systems with mixed dimensions, i.e. where the particles move in the three-dimensional space, but some of them are confined in a lower-dimensional space. Such an effect is called confinement-induced Efimov effect, see for example \cite{NishidaLiberating}. From a mathematical point of view the question of existence or non-existence of Efimov-type effects in systems with mixed dimensions is completely open. 

The first mathematically rigorous proof of the Efimov effect for systems of three three-dimensional particles was given by D. R. Yafaev \cite{YafaevEfimov} using a system of symmetrized Faddeev equations combined with the low-energy asymptotics of the resolvents of the two-body Hamiltonians. In \cite{JafaevAbsence} he also showed that the Hamiltonian has only a finite number of bound states if at most one of the subsystems is resonant.
Later, A. V. Sobolev proved the asymptotics \eqref{eq: asymptotics Sobolev} for the eigenvalue counting function \cite{Sobolev}. In the years after the mathematical confirmation of the Efimov effect different techniques were developed and many other mathematical results related to this effect were obtained \cite{Semjon2, Ovchinnikov, Tamura, Tamura2, second, SemjonMagnetic}.
\\
In particular, it was proved in \cite{Semjon2} that the existence of the Efimov effect depends on the nature of the virtual levels in the subsystems. If the virtual level in the two-body subsystems correspond to eigenvalues, which for example is the case if the three-particle Hamiltonian is considered on certain symmetry subspaces of $L^2(\mathbb{R}^6)$, then the Efimov effect is absent.  \\
For a long time it has been expected that due to the same reason the Efimov effect does not exist for systems of $N\ge 4$ three-dimensional bosons. However, to prove that virtual levels in subsystems of $N-1$ particles correspond to eigenvalues and not to resonances was a very challenging problem, because the sum of the pair potentials does not decay in all directions at infinity, which makes it difficult to use Green's functions. This problem was first solved by D. Gridnev \cite{Gridnev2} and recently a proof with simpler methods and less restrictions on the potentials was given in \cite{second}. In addition, it was shown in \cite{second} that the Efimov effect can not occur in systems of $N\ge 4$ one- or two-dimensional spinless fermions.\\
For the case of three two-dimensional spinless fermions, which is not covered by \cite{second}, it was predicted in the physics literature that an effect similar to the Efimov effect is present, namely the so-called super Efimov effect \cite{NishidaSuper}. The first mathematical proof of this prediction was given by D. Gridnev \cite{GridnevSuper}. Recently, this result was improved by H. Tamura, where the conditions on the potentials \cite{TamuraSuper} were less restrictive.\\
Much less is mathematically known about the existence of the Efimov effect in systems of $N\ge 3$ one- or two-dimensional bosons or systems without symmetry restrictions. For such systems consisting of three one- or two-dimensional particles the absence of the Efimov effect was first proved by G. Zhislin and one of the authors of this paper in \cite{Semjon4} under very strong restrictions on the potentials.
Later, in \cite{Semjon3} these restrictions were relaxed, but unfortunately Lemma 1 in \cite{Semjon3} contains a mistake. We will correct this mistake in Section \ref{Section: No Efimov} at the end of this paper. For systems of $N\ge 4$ one- or two-dimensional bosons mathematical results are unknown. The main goal of our work is to fill this gap, at least partially. \\
Our main results are the following: For systems of $N\ge 3$ one-dimensional bosons or particles without symmetry restrictions with pair interactions we prove that the existence of virtual levels in the $(N-1)$-particle subsystems does not imply the infiniteness of the number of negative eigenvalues. For systems of $N$ two-dimensional particles we prove the same result except for $N=4$.
\\ The method of the proof is analogous to the proof in \cite{second}. We study the decay of solutions of the Schr\"odinger equation corresponding to a virtual level and show that these solutions are eigenfunctions. Then we use arguments similar to \cite{Semjon2}. To obtain the decay rate of the solutions we apply a modification of Agmon's method \cite{Ag}, developed in \cite{second}. This method requires estimates on the quadratic form of a multi-particle Schr\"odinger operator on functions supported far from the origin. In order to obtain these estimates we make a partition of unity of the configuration space according to decompositions of the original system into clusters with careful estimates of the localization error.
\\ On the technical level however this work is very different from \cite{second}. A crucial difference between one- or two-dimensional particles and $d$-dimensional particles with $d\geq 3$ is that in lower dimensions the common Hardy inequality does not hold. This manifests in particular in the fact that the one-particle Schr\"odinger operator $h=-\Delta+V$ in dimension one or two with a short-range potential $V\not\equiv 0$ has negative eigenvalues if $\int V(x)\, \mathrm{d}x\leq 0$. Consequently, if we know that $h$ does not have negative spectrum, we can immediately say that $\int V(x)\, \mathrm{d}x >0$. This simple observation plays an important role in our proof.
\\ On the other hand to localize regions in the configuration space in \cite{second} we used a special type of the cut-off functions constructed in \cite{Semjon2}. For this choice of the cut-off functions, due to Hardy's inequality, the localization error can be compensated by a small part of the kinetic energy. Since in one- and two-dimensional cases the Hardy inequalities are different, this construction can not be applied in lower dimensions. To overcome this obstacle, we develop in Section \ref{section: virtual levels multi-particle} an advanced way to construct the cut-off functions, which is better compatible with one- and two-dimensional variants of Hardy's inequality. \\
The paper is organized as follows. In Section \ref{Section Shortrange} we discuss virtual levels of one-body Schr\"odinger operators with short-range potentials in dimension one and two. We prove that virtual levels correspond to resonances and give an estimate for the decay rate of the corresponding solutions. This section is contained for completeness. Readers only interested in the multi-particle case can skip it and go immediately to Section \ref{section: virtual levels multi-particle}, where we extend the study of virtual levels to the multi-particle case. We prove that for systems of $N\ge 3$ one-dimensional or $N\ge 4$ two-dimensional particles virtual levels correspond to eigenvalues. We also derive lower bounds for the decay rates of the zero energy eigenfunctions. In Section \ref{section: d=2 N=3} we discuss systems of three two-dimensional particles, which is the only case where a virtual level might correspond to a resonance. We show that there exists a solution of the Schr\"odinger equation in the space $L^2(\mathbb{R}^4,(1+|x|)^{-\delta}),\ \delta>0$. Section \ref{Section No Efimov multi} is devoted to the absence of the Efimov effect for multi-particle systems in dimension one and two. In Section \ref{Section: No Efimov} we give the proof for the absence of the Efimov effect in systems of three one- or two-dimensional particles.
\section{Virtual levels of one-particle Schr\"odinger operators in dimension one and two}\label{Section Shortrange}
Although the main subject of this paper are virtual levels of multi-particle systems consisting of one- or two-dimensional particles and the non-existence of the Efimov effect in such systems, in this section we discuss virtual levels of one-particle Schr\"odinger operators in these dimensions. Some of the results of this section will be applied later to study the multi-particle case, others are given for a better understanding of one- and two-dimensional systems. 
\subsection{Notation and assumptions}
In this section we consider the one-particle Schr\"odinger operator
\begin{equation}\label{eq: definition operator h0}
	h =-\Delta +V
\end{equation}
in $L^2(\mathbb{R}^d)$ with $d=1$ or $d=2$. Within the whole section we assume that $V\neq 0$. Furthermore, we assume that $V$ is relatively form bounded with relative bound zero, i.e. for every $\varepsilon>0$ there exists a constant $C(\varepsilon)>0$, such that
\begin{equation}\label{cond: relatively form bounded}
	\langle |V|\psi, \psi\rangle \le \varepsilon \Vert \nabla \psi\Vert^2+C(\varepsilon) \Vert \psi\Vert^2
\end{equation}
holds for any $\psi\in H^1(\mathbb{R}^d)$. This condition is fulfilled if $V\in L^p_{\mathrm{loc}}(\mathbb{R}^d)+L^\infty(\mathbb{R}^d)$ with $p=1$ if $d=1$ and $p>1$ if $d=2,$ \cite{BrezisKato}. Due to the KLMN-theorem \cite[Theorem X.17]{reed} under this assumption the operator $h$ is self adjoint on $L^2(\mathbb{R}^d)$ with the associated quadratic form
\begin{equation}
Q[\psi]=\Vert \nabla \psi \Vert^2+\langle V\psi,\psi\rangle
\end{equation}
with form domain $H^1(\mathbb{R}^d)$. For any $\varepsilon \in (0,1)$ we define
\begin{equation}
h_{\varepsilon}=h+\varepsilon \Delta.
\end{equation}
For any self-adjoint operator $\mathcal A$ we denote by $\mathcal{S}(\mathcal A)$, $\mathcal{S}_{\mathrm{ess}}(\mathcal A)$ and $\mathcal{S}_{\mathrm{disc}}(\mathcal A)$ the spectrum, the essential spectrum and the discrete spectrum of $\mathcal A$, respectively. \\
Following \cite{birman} we introduce function spaces which will be important for our studies in this paper. For dimension $d\ge3$ the homogenous Sobolev space $\tilde{H}^1(\mathbb{R}^d)$ is defined as the completion of $C_0^\infty(\mathbb{R}^d)$-functions with respect to the norm
\begin{equation}\label{eq: norm H punkt}
	\Vert u\Vert_{\tilde{H}^1}=\left(\int_{\mathbb{R}^d} |\nabla u|^2\,\mathrm{d}x\right)^\frac{1}{2}.
\end{equation}
It follows from Hardy's inequality that for $d\ge 3$ a sequence of functions $u_n\in \tilde H^1(\mathbb{R}^d)$ with $\Vert u_n\Vert_{\tilde{H}^1}\rightarrow 0$ converges to zero in $L^2_{\mathrm{loc}}(\mathbb{R}^d)$. It is also known (see, for example \cite{birman}) that for $d=1$ and $d=2$ the completion of $C_0^\infty(\mathbb{R}^d)$ with respect to \eqref{eq: norm H punkt} does not lead to a function space, because constant functions are identified. In order to avoid this problem we add a local $L^2$ norm to the gradient semi-norm and define for $d=1$ or $d=2$
\begin{equation}\label{eq: norm tilde H d=2}
\Vert u\Vert_{\tilde{H}^1} = \left(\int_{\mathbb{R}^d} |\nabla u|^2\,\mathrm{d}x+\int_{\{|x|\le 1\}}|u|^2\,\mathrm{d}x\right)^{\frac{1}{2}}.
\end{equation}
Let $\tilde{H}^1(\mathbb{R}^d)$ be the completion of $C_0^\infty(\mathbb{R}^d)$ with respect to the norm \eqref{eq: norm tilde H d=2}, then
\begin{equation}
\tilde{H}^1(\mathbb{R}^d)=\left\{u \in L^1_{\mathrm{loc}}(\mathbb{R}^d), \nabla u \in L^2(\mathbb{R}^d)\right\}.
\end{equation}
In Appendix \ref{Appendix A} we collect some elementary properties of the space $\tilde H^1(\mathbb{R}^d)$, $d=1,2$, which we use in this paper. 
\subsection{Properties of virtual levels of one-particle Schr\"odinger operators with short-range potentials}
\begin{df}\label{df: Virtual level one particle}
Assume that the potential $V$ satisfies \eqref{cond: relatively form bounded}. We say that the operator $h$, defined in \eqref{eq: definition operator h0}, has a virtual level at zero if 
\begin{equation}
h \geq 0, \qquad \inf\mathcal{S}\left(h_\varepsilon\right)< 0 \qquad \text{and}\qquad \inf\mathcal{S}_{\mathrm{ess}}\left(h_\varepsilon\right)= 0
\end{equation}
holds for any sufficiently small $\varepsilon>0$. 
\end{df}
\begin{rem}
Note that the Laplace operator is critical in dimension one and two, i.e. for any $V\in L^1(\mathbb{R}^d)$ satisfying \eqref{cond: relatively form bounded} and $V\neq 0$ with $\int_{\mathbb{R}^d} V(x)\, \mathrm{d}x \leq 0$ the operator $h$ has at least one negative eigenvalue, see \cite{SimonThresold}. Consequently, for $d=1,2$ the condition $h \geq 0$ implies $\int_{-\infty}^\infty V(x)\, \mathrm{d}x>0$. On the other hand, the condition $\inf \mathcal{S}(h_\varepsilon)<0$ yields that $V(x)$ has a non-trivial negative part. 
\end{rem}
Let us briefly motivate our goals for this section. For the case $d=3$ it was shown that the one-particle Schr\"odinger operator $h\ge 0$ with short-range potential has a virtual level if and only if $h\psi=0$ has a solution in $\tilde{H}^1\left(\mathbb{R}^3\right)$. This solution does not belong to $L^2\left(\mathbb{R}^3\right)$ and decays as $|x|^{-1}$ as $|x|\rightarrow \infty$, see \cite{YafVL}. \\
Moreover, by applying Hardy's inequality one can see that in case $d\ge 3$ for short-range potentials the operator $h$ has a virtual level at zero if and only if $h\ge 0$ and for any $\varepsilon>0$ the operator $\tilde h=-\Delta+V-\varepsilon(1+|x|)^{-2}$ has a discrete eigenvalue below zero.\\
Our goal is to generalize these two statements to the cases $d=1,2$. This will be done in the following two theorems.
\begin{thm}[Solutions of the Schr\"odinger equation corresponding to virtual levels]\label{thm: abstraktes Theorem} Assume that $d=1$ or $d=2$ and that the potential $V$ satisfies {$V\neq 0$}, condition \eqref{cond: relatively form bounded} and
\begin{equation}\label{cond: decay at infinity}
	|V(x)|\le C\left(1+|x|\right)^{-2-\nu}, \qquad x\in \mathbb{R}^d, \quad |x|\ge A
\end{equation}
for constants $A,C,\nu>0$. If $h$ has a virtual level at zero, then the following assertions hold:
\begin{enumerate}
\item[\textbf{(i)}]
There exists a solution $\varphi_0 \in \tilde{H}^1 (\mathbb{R}^d),\ \varphi_0\neq 0$, of the equation $-\Delta \varphi_0+V\varphi_0=0$, i.e. for all $\psi \in \tilde{H}^1(\mathbb{R}^d)$
\begin{equation}\label{eq: weak solution phi0}
\langle \nabla\varphi_0,\nabla\psi \rangle + \langle V\varphi_0,\psi\rangle = 0.
\end{equation}
\item[\textbf{(ii)}]  Let $d=1$. Then for the functions $\varphi_0$ satisfying \eqref{eq: weak solution phi0} we have 
\begin{equation}\label{eq: estimate phi0 n=1 N=1}
	 (1+|x|)^{-\frac{1}{2}-\varepsilon}\varphi_0 \in L^2 (\mathbb{R}) \qquad \text{for any} \ \varepsilon>0.
\end{equation}
\item[\textbf{(iii)}] Let $d=2$.  Then for the functions $\varphi_0$ satisfying \eqref{eq: weak solution phi0} we have 
\begin{equation}\label{eq: estimate phi0 n=2 N=1}
	 (1+|x|)^{-1}\left(1+|\ln(|x|)|\right)^{-\frac{1}{2}-\varepsilon}\varphi_0 \in L^2 (\mathbb{R}^2) \qquad \text{for any}\ \varepsilon>0.
\end{equation}
\item[\textbf{(iv)}] If in addition the potential $V$ is relatively $-\Delta$-bounded, i.e. there exists a constant $C>0$, such that
\begin{equation}\label{cond: rel. operator bounded}
	\Vert V\psi\Vert^2 \le C\left( \Vert \Delta \psi\Vert^2 + \Vert \psi\Vert^2\right)
\end{equation}
holds for all functions $\psi\in H^2(\mathbb{R}^d)$, then there exists a constant $\delta_0>0$, such that for any function $\psi \in H^1(\mathbb{R}^d)$ satisfying $\langle \nabla\psi,\nabla \varphi_0\rangle =0$ 
\begin{equation}\label{eq: orthogonality subtract short-range}
\langle h \psi,\psi \rangle \geq \delta_0 \Vert \nabla \psi \Vert^2.
\end{equation}
\end{enumerate}
\end{thm}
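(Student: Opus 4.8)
The plan is to obtain $\varphi_0$ as a weak limit of ground states of the regularized operators $h_\varepsilon=-(1-\varepsilon)\Delta+V$ and then to read off the decay and the coercivity directly from the equation it satisfies. For each small $\varepsilon>0$ the virtual-level hypothesis gives $\inf\mathcal{S}(h_\varepsilon)=\lambda_\varepsilon<0=\inf\mathcal{S}_{\mathrm{ess}}(h_\varepsilon)$, so $\lambda_\varepsilon$ is a discrete eigenvalue with a (nonconstant) ground state $\psi_\varepsilon\in H^1(\mathbb{R}^d)$, which I normalize by $\Vert\nabla\psi_\varepsilon\Vert=1$. Testing the eigenvalue equation against $\psi_\varepsilon$ and subtracting the inequality $\Vert\nabla\psi_\varepsilon\Vert^2+\langle V\psi_\varepsilon,\psi_\varepsilon\rangle\ge 0$ coming from $h\ge 0$ yields $\varepsilon=\varepsilon\Vert\nabla\psi_\varepsilon\Vert^2\ge|\lambda_\varepsilon|\Vert\psi_\varepsilon\Vert^2$, so $\lambda_\varepsilon\Vert\psi_\varepsilon\Vert^2\to 0$ and hence $\langle V\psi_\varepsilon,\psi_\varepsilon\rangle=\lambda_\varepsilon\Vert\psi_\varepsilon\Vert^2-(1-\varepsilon)\to -1$.

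To produce a nontrivial weak limit I must bound $\psi_\varepsilon$ in $\tilde{H}^1(\mathbb{R}^d)$, i.e. control the local mass $\int_{|x|\le 1}|\psi_\varepsilon|^2$. Here the observation from the Remark that $h\ge 0$ forces $\int_{\mathbb{R}^d}V\,\mathrm{d}x>0$ becomes decisive: if $\Vert\psi_\varepsilon\Vert_{L^2(|x|\le 1)}\to\infty$ along a subsequence, I rescale $\hat\psi_\varepsilon=\psi_\varepsilon/\Vert\psi_\varepsilon\Vert_{L^2(|x|\le 1)}$, whose gradient then tends to $0$ while its local mass is $1$; in dimension one and two a Poincaré argument shows that $\hat\psi_\varepsilon$ converges locally to a nonzero constant $c$, and the critical (logarithmically corrected) Hardy inequality available in $\tilde{H}^1(\mathbb{R}^d)$ together with the decay \eqref{cond: decay at infinity} controls the potential tail uniformly, giving $\langle V\hat\psi_\varepsilon,\hat\psi_\varepsilon\rangle\to|c|^2\int_{\mathbb{R}^d}V\,\mathrm{d}x>0$; but this quantity equals $\Vert\psi_\varepsilon\Vert_{L^2(|x|\le 1)}^{-2}\langle V\psi_\varepsilon,\psi_\varepsilon\rangle\to 0$, a contradiction. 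Hence $\psi_\varepsilon$ is bounded in $\tilde{H}^1$, and passing to a weak limit $\varphi_0$ and using the same Hardy-plus-decay tail control to pass to the limit in the potential term, I obtain \eqref{eq: weak solution phi0} and, from $\langle V\varphi_0,\varphi_0\rangle=\lim\langle V\psi_\varepsilon,\psi_\varepsilon\rangle=-1\neq 0$, the nonvanishing $\varphi_0\neq 0$. This proves (i).

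For the decay statements (ii) and (iii) I would argue directly from $-\Delta\varphi_0+V\varphi_0=0$. Elliptic regularity makes $\varphi_0$ continuous, and for $|x|\ge A$ the potential decays strictly faster than the critical rate $|x|^{-2}$. In dimension one this reduces the equation to $\varphi_0''=V\varphi_0$ with integrable right-hand side once a crude a priori bound (valid for any $\tilde{H}^1$-function) is inserted; since $\varphi_0'\in L^2$ forces $\varphi_0'(x)\to 0$, a bootstrap on $\varphi_0'(x)=-\int_x^{\pm\infty}V\varphi_0$ successively improves the growth exponent until $\varphi_0$ is bounded and tends to constants at $\pm\infty$, which gives \eqref{eq: estimate phi0 n=1 N=1} for every $\varepsilon>0$. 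In dimension two I would instead compare $\varphi_0$ with harmonic functions on dyadic annuli (or run the analogous weighted energy estimate), the faster-than-critical decay of $V$ ruling out genuine logarithmic growth and leaving exactly the borderline behaviour recorded in \eqref{eq: estimate phi0 n=2 N=1}.

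For the coercivity (iv) I would assume the contrary, namely that the infimum of $\langle h\psi,\psi\rangle$ over $\{\psi\in H^1(\mathbb{R}^d):\Vert\nabla\psi\Vert=1,\ \langle\nabla\psi,\nabla\varphi_0\rangle=0\}$ is zero, and take a minimizing sequence $\psi_n$. Then $\langle V\psi_n,\psi_n\rangle\to -1$, and the very same $\int_{\mathbb{R}^d}V\,\mathrm{d}x>0$ argument bounds the local mass of $\psi_n$, so $\psi_n\rightharpoonup\psi_*$ in $\tilde{H}^1$; weak continuity of $\psi\mapsto\langle V\psi,\psi\rangle$ (this is where the relative $-\Delta$-boundedness \eqref{cond: rel. operator bounded} enters, upgrading the form bound to relative form compactness) gives $\langle V\psi_*,\psi_*\rangle=-1$, hence $\psi_*\neq 0$, while weak lower semicontinuity forces $Q[\psi_*]=0$, so $\psi_*$ is itself a solution of \eqref{eq: weak solution phi0} with $\langle\nabla\psi_*,\nabla\varphi_0\rangle=0$. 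The contradiction is then closed by the one-dimensionality of the space of $\tilde{H}^1$-solutions (established via the ODE solution space in $d=1$ and a comparison argument in $d=2$), which forces $\psi_*\in\myspan\{\varphi_0\}$ and therefore, by the orthogonality, $\psi_*=0$. I expect the main obstacle throughout to be precisely the compactness of the potential term on $\tilde{H}^1(\mathbb{R}^d)$ in these critical dimensions: since the usual Hardy inequality fails, non-decaying constant-like directions are admissible, and it is only the combination of $\int_{\mathbb{R}^d}V\,\mathrm{d}x>0$ with the logarithmically corrected Hardy inequality that prevents mass from escaping into such directions and makes both the nonvanishing of $\varphi_0$ and the spectral gap in (iv) work.
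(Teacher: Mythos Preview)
Your construction of $\varphi_0$ in part (i) is correct but genuinely different from the paper's. The paper normalises the approximating eigenfunctions in the $\tilde H^1$-norm (which already bounds the local mass) and shows the weak limit has $\tilde H^1$-norm equal to $1$ by a direct variational argument; you normalise by $\Vert\nabla\psi_\varepsilon\Vert=1$ and recover the local-mass bound by the nice contradiction with $\int V>0$. Both work. For (ii)--(iii) the paper runs an Agmon-type multiplier argument on the approximating sequence (multiply the eigenvalue equation by a cut-off power of $|x|$ in $d=1$, or of $|\ln|x||$ in $d=2$, and use the half-line resp.\ two-dimensional Hardy inequality to close the estimate uniformly in $n$), whereas you bootstrap directly on the limiting equation. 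Your one-dimensional ODE bootstrap is fine, but the two-dimensional sketch (``compare with harmonic functions on dyadic annuli'') is not a proof: you have not explained how the weighted energy closes at the borderline logarithmic scale, which is exactly the point where the paper's choice of multiplier $|\ln|x||^{\alpha_0}$ with $\alpha_0<\tfrac12$ and the precise constant $\tfrac14$ in the two-dimensional Hardy inequality do the work.

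There is also a misattribution in (iv). The passage to the limit in $\langle V\psi_n,\psi_n\rangle$ needs only the form bound \eqref{cond: relatively form bounded} together with the decay \eqref{cond: decay at infinity} and the Hardy-type inequalities \eqref{eq: weidl inequality in 1}--\eqref{eq: weidl inequality in 2}; the extra operator-boundedness hypothesis \eqref{cond: rel. operator bounded} is not used there. In the paper it enters solely through the unique continuation theorem, which is what pins down the one-dimensionality of the solution space in $\tilde H^1$. Your alternative route to uniqueness via the two-dimensional ODE solution space is correct in $d=1$ (a linearly growing solution cannot have $L^2$ derivative), but in $d=2$ the ``comparison argument'' you allude to is again only a placeholder; without unique continuation or a concrete substitute you have not closed the gap.
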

\begin{rem}
\begin{enumerate}
\item[\textbf{(i)}]
Note that the left-hand side of \eqref{eq: weak solution phi0} is well-defined due to condition \eqref{cond: decay at infinity} and inequalities \eqref{eq: weidl inequality in 1} and \eqref{eq: weidl inequality in 2} in Appendix \ref{Appendix A}.
\item[\textbf{(ii)}] Similarly to Theorem 2.1 in \cite{second} we use the condition \eqref{cond: rel. operator bounded} on the potential $V$ only to be able to apply the unique continuation theorem. Without this condition we are not able to prove uniqueness of the solution $\varphi_0$ of the equation $-\Delta \varphi_0+V\varphi_0=0$ in $\tilde{H}^1(\mathbb{R}^d)$.
However, similarly to \cite{second} we can show that the subspace $\mathcal{M}\subset \tilde{H}^1(\mathbb{R}^d)$ of functions $\varphi$ satisfying \eqref{eq: weak solution phi0} is finite-dimensional and that for each $\psi\in \tilde{H}^1(\mathbb{R}^d)$, satisfying $\langle \nabla \varphi,\nabla\psi\rangle=0$ for all $\varphi\in \mathcal{M}$ holds \eqref{eq: orthogonality subtract short-range}.
\item[\textbf{(iii)}] 
Theorem \ref{thm: abstraktes Theorem} gives a lower bound on the decay rate of solutions of the Schr\"odinger equation corresponding to virtual levels. It is easy to see that if the potentials are compactly supported and $V(x)=V(|x|)$, then estimates \eqref{eq: estimate phi0 n=1 N=1} and \eqref{eq: estimate phi0 n=2 N=1} are almost sharp. It is also easy to see that the solution can not be an eigenfunction, it is a zero energy resonance. 
\end{enumerate}
\end{rem}
\begin{thm}[Necessary and sufficient condition for a virtual level]\label{lem no virtual level subtract d=1}
Let $d=1,2$. We assume that $V\neq 0$ satisfies \eqref{cond: relatively form bounded} and \eqref{cond: decay at infinity} and that $h \geq 0 $. Further, let $\mathcal{U}$ be a continuous, strictly negative potential satisfying for $|x|\ge A$ the condition 
\begin{align}\label{eq: stoerung durch u virtual level}
\begin{split}
  |\mathcal{U}(x)|\le C|x|^{-2}  \ \text{if}\ d=1\quad \text{and}\quad |\mathcal{U}(x)|\le C|x|^{-2}\ln^{-2}(|x|)  \ \text{if } d=2
\end{split}
\end{align}
for some $A,C>0$. 
Then $h$ has a virtual level at zero if and only if for any $\varepsilon>0$ we have
\begin{align}\label{eq: no virtual level subtract d=1}
\inf\mathcal{S}\left(h+\varepsilon \mathcal{U}\right)<0.
\end{align}
\end{thm}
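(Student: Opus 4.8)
The plan is to reduce the statement to a comparison of two ``non-domination'' properties of $h$ and then prove the two implications separately, the converse being the substantial part. First I would observe that since $V$ and $\mathcal U$ decay at infinity and are relatively form-compact, $\mathcal S_{\mathrm{ess}}(h+\varepsilon\Delta)=\mathcal S_{\mathrm{ess}}(h+\varepsilon\mathcal U)=[0,\infty)$ for all small $\varepsilon>0$, so the requirement $\inf\mathcal S_{\mathrm{ess}}=0$ in the definition of a virtual level is automatic. Using that $\varepsilon\mapsto h+\varepsilon\Delta$ is non-increasing, ``$h$ has a virtual level'' is then equivalent (given $h\ge0$) to $\inf\mathcal S(h+\varepsilon\Delta)<0$ for all $\varepsilon\in(0,1)$, whose negation reads $h\ge\varepsilon_0(-\Delta)$ as quadratic forms for some $\varepsilon_0>0$. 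Since \eqref{eq: no virtual level subtract d=1} says the analogous thing for $\mathcal U$, the theorem amounts to showing that $h$ fails to dominate a positive multiple of $-\Delta$ if and only if it fails to dominate a positive multiple of $|\mathcal U|=-\mathcal U$.

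For the direction ``virtual level $\Rightarrow$ \eqref{eq: no virtual level subtract d=1}'' I would use the resonance function. Theorem~\ref{thm: abstraktes Theorem}(i) gives a nonzero $\varphi_0\in\tilde H^1(\mathbb R^d)$ solving \eqref{eq: weak solution phi0}, and testing \eqref{eq: weak solution phi0} with $\psi=\varphi_0$ yields $\Vert\nabla\varphi_0\Vert^2+\langle V\varphi_0,\varphi_0\rangle=0$. Matching the decay estimates \eqref{eq: estimate phi0 n=1 N=1}/\eqref{eq: estimate phi0 n=2 N=1} against the bound \eqref{eq: stoerung durch u virtual level} shows that $\int\mathcal U|\varphi_0|^2$ is finite and, since $\mathcal U<0$ and $\varphi_0\neq0$, strictly negative. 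I would then insert the cut-offs $\varphi_0^{(R)}=\eta_R\varphi_0\in H^1(\mathbb R^d)$ into the quadratic form of $h+\varepsilon\mathcal U$: the bulk tends to $\varepsilon\int\mathcal U|\varphi_0|^2<0$, so it remains to force the localization error $\int|\nabla\eta_R|^2|\varphi_0|^2\to0$. For $d=1$ an ordinary cut-off on $\{R\le|x|\le 2R\}$ works; for $d=2$ the logarithmic weight in \eqref{eq: estimate phi0 n=2 N=1} compels a logarithmically slow cut-off (interpolating linearly in $\ln|x|$ on $\{R\le|x|\le R^2\}$, so that $|\nabla\eta_R|^2\lesssim(|x|\ln R)^{-2}$), after which the error vanishes. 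This produces a trial function of negative energy and hence \eqref{eq: no virtual level subtract d=1}.

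The harder direction is the converse, which I would prove in contrapositive form: assuming $h\ge\varepsilon_0(-\Delta)$ I must produce $\varepsilon>0$ with $h+\varepsilon\mathcal U\ge0$. The target is the subcritical bound $\langle h\psi,\psi\rangle\ge c\int|\mathcal U||\psi|^2$ for some $c>0$, since then $h+\varepsilon\mathcal U\ge0$ for $\varepsilon\le c$. To obtain it I would localize with a smooth partition $\eta_0^2+\eta_\infty^2\equiv1$, $\eta_0$ supported in a ball and $\eta_\infty$ away from the origin, and use the IMS formula. On the far piece $\eta_\infty\psi$ vanishes near $0$, so the critical one- and two-dimensional Hardy inequalities, whose weights $|x|^{-2}$ and $|x|^{-2}\ln^{-2}|x|$ are exactly the decay rates \eqref{eq: stoerung durch u virtual level} imposed on $\mathcal U$, bound $\int|\mathcal U||\eta_\infty\psi|^2$ by a constant times $\Vert\nabla(\eta_\infty\psi)\Vert^2$, which the spare kinetic energy $\varepsilon_0\Vert\nabla(\eta_\infty\psi)\Vert^2$ absorbs for $\varepsilon$ small (the contribution of $|V|\le C(1+|x|)^{-2-\nu}$ in the far region being negligible). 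On the near piece $\mathcal U$ is bounded and $\langle h\,\eta_0\psi,\eta_0\psi\rangle\ge\varepsilon_0\Vert\nabla(\eta_0\psi)\Vert^2\ge\varepsilon_0\lambda_1\Vert\eta_0\psi\Vert^2$ by the Poincaré inequality on the ball (with first Dirichlet eigenvalue $\lambda_1$), which beats $\varepsilon\int|\mathcal U||\eta_0\psi|^2$ once $\varepsilon$ is small.

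I expect the main obstacle to be the reabsorption of the IMS localization error $\int(|\nabla\eta_0|^2+|\nabla\eta_\infty|^2)|\psi|^2$, which is supported on the transition region. A single-scale transition loses a factor $\ln^2|x|$ against the two-dimensional Hardy weight and is fatal in $d=2$; the cure is once more a logarithmically graded cut-off, for which $|\nabla\eta_j|^2\lesssim|x|^{-2}\ln^{-2}|x|$ matches the Hardy weight and can be absorbed with a small constant. Making the gradients of the cut-offs, the critical Hardy weights and the prescribed decay of $\mathcal U$ fit together quantitatively is the heart of the argument; the one-dimensional case is simpler because no logarithmic factor intervenes. Combining the near and far estimates then yields $c>0$ and closes the contrapositive.
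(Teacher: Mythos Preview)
Your treatment of the direction ``virtual level $\Rightarrow$ \eqref{eq: no virtual level subtract d=1}'' matches the paper's: it too invokes Theorem~\ref{thm: abstraktes Theorem} (the resonance function and its weighted $L^2$ bounds) together with the variational principle, and your handling of the cut-off error is correct.

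For the contrapositive of the converse, however, your IMS scheme has a genuine gap that the logarithmic cut-off does not close. The inequality you are aiming for, $\langle h\psi,\psi\rangle\ge c\int|\mathcal U|\,|\psi|^2$, \emph{cannot} follow from $h\ge\varepsilon_0(-\Delta)$ alone: in $d=1,2$ the Laplacian is critical, i.e.\ $-\Delta\not\ge c(1+|x|)^{-2}$ (resp.\ $c|x|^{-2}\ln^{-2}|x|$) for any $c>0$, as one sees by testing against plateau functions. Your localization reproduces this obstruction. With the log cut-off on $[R,R^K]$ the error on the $\eta_0$-side is of order $R^{-2}(\ln R)^{-2}\Vert\eta_0\psi\Vert^2$, while Poincar\'e on the ball of radius $R^K$ gives only $\varepsilon_0\lambda_1\Vert\eta_0\psi\Vert^2\sim\varepsilon_0 R^{-2K}\Vert\eta_0\psi\Vert^2$, which is much smaller for large $R$; with an ordinary cut-off the mismatch appears instead between the $\eta_\infty$-side error and the Hardy gain (both $\sim R^{-2}$, with constants you cannot make small relative to $\varepsilon_0$). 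No choice of cut-off balances both sides simultaneously.

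The paper supplies the missing ingredient: from $h\ge0$ and $V\neq0$ in low dimensions one has $\int V\,\mathrm{d}x>0$, and this is what controls the ``constant mode'' that pure kinetic energy cannot reach. For $d=1$ one writes $\psi_0(x)=\psi(x)-\psi(0)$, applies Hardy on the half-line to $\psi_0$, and expands $\langle V\psi,\psi\rangle$ around $\psi(0)$ to extract a term $|\psi(0)|^2\int V>0$ bounding $|\psi(0)|^2$ by $C_0^{-1}\langle V\psi,\psi\rangle+C_1\Vert\psi'\Vert^2$. One then splits into the two cases $\langle V\psi,\psi\rangle\lesssim\Vert\psi'\Vert^2$ (where the absence of a virtual level, i.e.\ $h\ge\varepsilon_0(-\Delta)$, suffices) and $\langle V\psi,\psi\rangle\gtrsim\Vert\psi'\Vert^2$ (where positivity of $\langle V\psi,\psi\rangle$ alone suffices). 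For $d=2$ the role of $\psi(0)$ is played by the mean of $\psi$ over the unit circle, and the two-dimensional Hardy inequality \eqref{eq: two dimensional Hardy Solomyak} replaces the half-line Hardy inequality. The point is that $\int V>0$ is used in an essential way; any argument that treats $V$ merely as a short-range perturbation of $-\Delta$ will fail here.
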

\begin{rem}\begin{enumerate}
\item[\textbf{(i)}]
Note that in dimension $d\ge 3$ Hardy's inequality yields that $\inf \mathcal{S}(h+\varepsilon \mathcal{U})=0$ for sufficiently small $\varepsilon>0$ if $h$ does not have a virtual level. For dimension $d=1$ or $d=2$ it does not follow from Hardy's inequality. However, Theorem \ref{lem no virtual level subtract d=1} shows that it is still true.
\item[\textbf{(ii)}]Assume that $V\neq 0$ satisfies \eqref{cond: relatively form bounded} and \eqref{cond: decay at infinity}, $h \geq 0 $ and that $h$ does not have a virtual level at zero. Then for small $\varepsilon_0>0$ the operator $h_{\varepsilon_0}>0$ also does not have a virtual level and therefore Theorem \ref{lem no virtual level subtract d=1} can be applied to the operator $h_{\varepsilon_0}$.
Hence, there exists $\varepsilon_1>0$, such that
\begin{equation}\label{eq: equivalent definition vl rem}
(1-\varepsilon_0)\Vert \nabla\psi\Vert^2 +\langle V\psi,\psi \rangle + \varepsilon_1\langle	 \mathcal{U}\psi,\psi\rangle \geq 0
\end{equation}
holds for any function $\psi \in H^1(\mathbb{R}^d)$ with $\mathcal U$ defined as in Theorem \ref{lem no virtual level subtract d=1}. This modification of Theorem \ref{lem no virtual level subtract d=1} will be used in the case of multi-particle systems in the next sections.
\item[\textbf{(iii)}] Assume that $h\ge 0$ does not have a virtual level and that the potential $V\neq 0$ satisfies \eqref{cond: relatively form bounded} and \eqref{cond: decay at infinity}. Then by choosing $\mathcal{U}$ according to Theorem \ref{lem no virtual level subtract d=1} with $\mathcal{U}(x)=-1$ for $|x|\le 1$ we obtain from \eqref{eq: equivalent definition vl rem} that for any $\psi\in \tilde{H}^1(\mathbb{R}^d)$ 
\begin{equation}
	\Vert \psi\Vert^2_{\tilde{H}^1} \le \frac{1+\varepsilon_1-\varepsilon_0}{\varepsilon_1} \Vert\nabla \psi\Vert^2+\frac{1}{\varepsilon_1}\langle V\psi,\psi\rangle.
\end{equation}
\end{enumerate}
\end{rem}
Theorem \ref{lem no virtual level subtract d=1} will be proved in Appendix \ref{Appendix Virtual Levels}, where similar statements for multi-particle Schr\"odinger operators are established. We turn to the 
\begin{proof}[Proof of Theorem \ref{thm: abstraktes Theorem}]
 Since for any $\varepsilon>0$ we have $\inf\mathcal{S}_{\mathrm{disc}}(h_\varepsilon)<0$, we find a sequence of eigenfunctions $\psi_n \in H^1(\mathbb{R}^d)$, corresponding to eigenvalues $E_n<0$ of the operator $h_{n^{-1}}$, i.e.
\begin{equation}\label{1: eigenfunctions}
-\left( 1-n^{-1}\right)\Delta \psi_n+V\psi_n=E_n\psi_n.
\end{equation}
We normalize the functions $\psi_n$ by the condition $\Vert \psi_n\Vert_{\tilde{H}^1}=1$. Then there exists a subsequence, also denoted by $(\psi_n)_{n\in \mathbb{N}}$, which converges weakly in $\tilde{H}^1(\mathbb{R}^d)$ to a function $\varphi_0\in \tilde{H}^1(\mathbb{R}^d)$. At first, we prove that $\varphi_0$ is a solution of the equation $-\Delta\varphi_0+V\varphi_0=0$ in $\tilde{H}^1(\mathbb{R}^d)$ and that $\varphi_0\neq 0$. Indeed, we have the following
\begin{lem}\label{lem: existence of solution d=1 N=1}
Assume that $h$ has a virtual level at zero and that $V$ satisfies \eqref{cond: relatively form bounded} and \eqref{cond: decay at infinity}. Then the function $\varphi_0$ defined above is not zero and for any $\psi \in \tilde{H}^1(\mathbb{R}^d)$ 
\begin{equation}\label{1: stand.}
\langle  \nabla \varphi_0,\nabla \psi \rangle + \langle V\varphi_0,\psi\rangle = 0.
\end{equation}
\end{lem}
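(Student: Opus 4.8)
The plan is to exploit the weak convergence $\psi_n \rightharpoonup \varphi_0$ in $\tilde H^1(\mathbb R^d)$ to pass to the limit in the eigenvalue equation \eqref{1: eigenfunctions}, and then to rule out $\varphi_0 = 0$ by a compactness argument forcing a contradiction with the normalization $\Vert \psi_n\Vert_{\tilde H^1}=1$. First I would establish that the eigenvalues $E_n \to 0$. Since $\inf\mathcal S_{\mathrm{ess}}(h_\varepsilon)=0$ by the virtual-level hypothesis and $E_n<0$ lies in the discrete spectrum of $h_{n^{-1}}$, one shows $E_n\to 0^-$: the potential $V$ is form-bounded with relative bound zero, so for the normalized minimizers the negative eigenvalue cannot stay bounded away from zero as the kinetic coefficient $1-n^{-1}\to 1$ and the operator tends to $h\ge 0$. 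Concretely, testing \eqref{1: eigenfunctions} against $\psi_n$ gives $(1-n^{-1})\Vert\nabla\psi_n\Vert^2+\langle V\psi_n,\psi_n\rangle=E_n\Vert\psi_n\Vert^2$, and combining with $h\ge 0$ (so $\Vert\nabla\psi_n\Vert^2+\langle V\psi_n,\psi_n\rangle\ge 0$) pins $E_n\to 0$.

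Next I would verify the weak equation \eqref{1: stand.}. Fix $\psi\in C_0^\infty(\mathbb R^d)$; the weak form of \eqref{1: eigenfunctions} reads
\begin{equation*}
(1-n^{-1})\langle\nabla\psi_n,\nabla\psi\rangle+\langle V\psi_n,\psi\rangle=E_n\langle\psi_n,\psi\rangle.
\end{equation*}
The gradient term converges to $\langle\nabla\varphi_0,\nabla\psi\rangle$ by weak convergence in $\tilde H^1$, and the coefficient $1-n^{-1}\to 1$. The right-hand side vanishes in the limit since $E_n\to 0$ while $\langle\psi_n,\psi\rangle$ stays bounded (as $\psi$ has compact support and $\psi_n\to\varphi_0$ in $L^2_{\mathrm{loc}}$ by the compact embedding of $\tilde H^1$ into $L^2_{\mathrm{loc}}$, see Appendix \ref{Appendix A}). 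The potential term is the delicate one: I would split $\langle V\psi_n,\psi\rangle$ using the decay condition \eqref{cond: decay at infinity}. On a large ball the local strong $L^2$ convergence $\psi_n\to\varphi_0$ handles the relatively-form-bounded singular part; outside the ball the bound $|V|\le C(1+|x|)^{-2-\nu}$ together with the uniform $\tilde H^1$-bound and the weighted inequalities \eqref{eq: weidl inequality in 1}, \eqref{eq: weidl inequality in 2} makes the tail uniformly small. This yields $\langle V\psi_n,\psi\rangle\to\langle V\varphi_0,\psi\rangle$, and passing to the limit gives \eqref{1: stand.} for all $\psi\in C_0^\infty$, hence by density for all $\psi\in\tilde H^1(\mathbb R^d)$.

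The main obstacle is proving $\varphi_0\neq 0$. The danger is that all the mass of $\psi_n$ escapes to infinity, leaving $\varphi_0=0$ despite $\Vert\psi_n\Vert_{\tilde H^1}=1$. To exclude this I would argue by contradiction: suppose $\varphi_0=0$. Then $\psi_n\to 0$ in $L^2_{\mathrm{loc}}$, so the local-$L^2$ part of the norm \eqref{eq: norm tilde H d=2} vanishes, forcing $\Vert\nabla\psi_n\Vert^2\to 1$. Using the decay of $V$ and the local convergence exactly as above, $\langle V\psi_n,\psi_n\rangle\to 0$: the near part vanishes by strong local convergence and the tail is controlled by \eqref{cond: decay at infinity} and the weighted estimates of Appendix \ref{Appendix A}. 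But then from $E_n\Vert\psi_n\Vert^2=(1-n^{-1})\Vert\nabla\psi_n\Vert^2+\langle V\psi_n,\psi_n\rangle$ the right-hand side tends to $1$, while the left-hand side tends to $0$ because $E_n\to 0$ and $\Vert\psi_n\Vert$ is bounded. This contradiction shows $\varphi_0\neq 0$, completing the proof. The crux throughout is that the short-range decay \eqref{cond: decay at infinity}, not Hardy's inequality, is what tames the potential at infinity in these low dimensions.
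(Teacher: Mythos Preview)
Your approach is sound and uses the same core ingredients as the paper (local $L^2$ convergence from Proposition~\ref{prop: Homogenous Sobolev}\,(v), relative form-boundedness for the local part of $V$, and the decay \eqref{cond: decay at infinity} combined with the weighted inequalities \eqref{eq: weidl inequality in 1}--\eqref{eq: weidl inequality in 2} for the tail). The organization is different, however. The paper first proves $\langle V\psi_n,\psi_n\rangle \to \langle V\varphi_0,\varphi_0\rangle$ and from the resulting chain of inequalities obtains $\Vert\varphi_0\Vert_{\tilde H^1}=1$ (hence $\varphi_0\neq 0$) and $\langle h\varphi_0,\varphi_0\rangle=0$; the weak equation \eqref{1: stand.} then follows because $\varphi_0$ minimizes the nonnegative form. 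You instead pass to the limit in the weak equation directly and establish $\varphi_0\neq 0$ by contradiction. Both routes work, and yours is arguably more direct for \eqref{1: stand.}; the paper's route gives the extra information $\Vert\varphi_0\Vert_{\tilde H^1}=1$ for free.

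There is one genuine gap in your contradiction step: the claim that $\Vert\psi_n\Vert$ is bounded is unjustified at this stage. In dimensions $d=1,2$ the space $\tilde H^1(\mathbb R^d)$ does not embed into $L^2(\mathbb R^d)$, and the uniform $L^2$ bound only appears afterwards (Lemma~\ref{lem: estimate phi 0 for N=1 n=1}). Fortunately you do not need it. From $h\ge 0$ one has $-n^{-1}\Vert\nabla\psi_n\Vert^2\le E_n\Vert\psi_n\Vert^2<0$, so $E_n\Vert\psi_n\Vert^2\to 0$; or even more simply, the left-hand side $E_n\Vert\psi_n\Vert^2$ is always negative, which already contradicts the right-hand side tending to $1$. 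The same remark applies to your argument that $E_n\to 0$: the ``concretely'' sentence only yields $E_n\Vert\psi_n\Vert^2\to 0$. To get $E_n\to 0$ itself one argues by contradiction exactly as in your non-triviality step (if $E_{n_k}\le -\delta$ then $\Vert\psi_{n_k}\Vert^2\le (\delta n_k)^{-1}\to 0$, whence $\langle V\psi_{n_k},\psi_{n_k}\rangle\to 0$ by form-boundedness and the same contradiction follows). With these repairs your proof is complete.
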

\begin{proof}[Proof of Lemma \ref{lem: existence of solution d=1 N=1}]
Since $\varphi_0$ is the weak limit of the sequence $(\psi_n)_{n\in \mathbb{N}}$ in $\tilde{H}^1(\mathbb{R}^d)$, we have $\varphi_0\in L^2_{\mathrm{loc}}(\mathbb{R}^d)$ and by Proposition \ref{prop: Homogenous Sobolev} \textbf{(iii)} $\psi_n\rightarrow \varphi_0$ strongly in $L^2_{\mathrm{loc}}(\mathbb{R}^d)$. First, we show that 
\begin{equation}\label{eq: local convergence V psin to Vphi0}
\int_{\{|x|\leq R\}} V(x)|\psi_n(x)|^2\, \mathrm{d}x \longrightarrow \int_{\{|x|\leq R\}} V(x)|\varphi_0(x)|^2\, \mathrm{d}x \quad \text{as} \quad n\rightarrow \infty
\end{equation}
for any fixed $R>0$. We write 
\begin{equation}\label{eq: decomposition Vpsin minus Vphi0}
	\langle V \psi_n ,\psi_n\rangle_{B(R)} - \langle V \varphi_0 ,\varphi_0\rangle_{B(R)}= \langle V (\psi_n-\varphi_0) ,\psi_n\rangle_{B(R)}+\langle V \varphi_0, (\psi_n-\varphi_0)\rangle_{B(R)},
\end{equation}
where $B(R)=\{x\in \mathbb{R}^d: |x|\le R\}$.
Let $\chi$ be a piecewise differentiable function satisfying $\chi(x)=1$ for $x\in B(R)$ and $\chi(x)=0$ if $x\notin B(R+1)$. Then we get by Cauchy Schwarz
\begin{align}\label{eq: decomposition Vpsin-Vphi0}
\begin{split}
	\langle |V| |\psi_n-\varphi_0| ,|\psi_n|\rangle_{B(R)} &\le \langle |V|^{\frac{1}{2}} |\psi_n-\varphi_0|\chi ,|V|^{\frac{1}{2}}|\psi_n|\chi\rangle\\
	& \le \left(\langle |V| |\psi_n-\varphi_0|\chi,|\psi_n-\varphi_0|\chi\rangle\right)^{\frac{1}{2}}\left(\langle |V| |\psi_n|\chi,|\psi_n|\chi\rangle\right)^{\frac{1}{2}}.
	\end{split}
\end{align}
We estimate the two factors on the r.h.s. of \eqref{eq: decomposition Vpsin-Vphi0} separately. By assumption \eqref{cond: relatively form bounded} we get
\begin{equation}\label{eq: estimate Vij psin-phi0}
	\langle |V| |\psi_n-\varphi_0|\chi,|\psi_n-\varphi_0|\chi\rangle\le \varepsilon \Vert \nabla_0\left(|\psi_n-\varphi_0|\chi\right)\Vert^2+C(\varepsilon) \Vert (\psi_n-\varphi_0)\chi\Vert^2.
\end{equation}
Due to $\Vert \nabla_0\psi_n\Vert\le 1,\ \Vert \nabla_0\varphi_0\Vert\le 1, \ 0\le \chi\le 1$ and $| \nabla_0 \chi|\le C$ for some $C>0$, the first term on the r.h.s. of \eqref{eq: estimate Vij psin-phi0} is arbitrarily small if $\varepsilon>0$ is small enough. The second term tends to zero as $n\rightarrow\infty$ because $\psi_n\rightarrow \varphi_0$ in $L^2_{\mathrm{loc}}(\mathbb{R}^d)$. Similarly, we can show that $\langle |V| |\psi_n|\chi,|\psi_n|\chi\rangle$ is bounded and therefore $\langle |V| |\psi_n-\varphi_0| ,|\psi_n|\rangle_{B(R)}$ tends to zero as $n\rightarrow \infty$. Analogously we get $\langle V (\psi_n-\varphi_0) ,\varphi_0\rangle_{B(R)}\rightarrow 0$ as $n\rightarrow \infty$. Hence, we get \eqref{eq: local convergence V psin to Vphi0}.\\
By taking $R>A$, condition \eqref{cond: decay at infinity} together with inequality \eqref{eq: weidl inequality in 1} for $d=1$ and \eqref{eq: weidl inequality in 2} for $d=2$, respectively, implies
\begin{align}\label{eq: one dimensional Hardy V relatively bounded}\begin{split}
\int_{\{|x|> R\}} |V(x)||\psi_n(x)|^2\, \mathrm{d}x &\leq C \int_{\{|x|> R\}} \frac{|\psi_n(x)|^2}{(1+|x|)^{2+\nu}}\, \mathrm{d}x 
\\&\leq \tilde C R^{-\frac{\nu}{2}}\Vert \psi_n\Vert_{\tilde{H^1}}^2 = \tilde CR^{-\frac{\nu}{2}}
\end{split}
\end{align}
for some constants $C,\tilde{C}>0$.
Since $\Vert \varphi_0 \Vert_{\tilde{H}^1} \leq 1$, by the same arguments we get 
\begin{equation}
\int_{\{|x|>R\}} |V(x)||\varphi_0(x)|^2\, \mathrm{d}x \leq \tilde CR^{-\frac{\nu}{2}},
\end{equation}
which together with \eqref{eq: local convergence V psin to Vphi0} implies that $\langle V\varphi_0, \varphi_0 \rangle$ is well-defined and 
\begin{equation}\label{11: (.)}
\langle V\psi_n,\psi_n\rangle \rightarrow \langle V\varphi_0,\varphi_0 \rangle \qquad \text{as}\ n \rightarrow \infty.
\end{equation}
Recall that
\begin{align}
\begin{split}
\langle V\psi_n,\psi_n \rangle &\leq -\left(1-n^{-1}\right) \Vert  \nabla \psi_n\Vert^2
\\ &=\left(1-n^{-1}\right)\left(-1+\int_{\{|x|\le 1\}}|\psi_n|^2\, \mathrm{d}x\right) \longrightarrow -1 + \int_{\{|x|\le 1\}} |\varphi_0|^2\, \mathrm{d}x.
\end{split}
\end{align}
Sending $n$ to infinity and using \eqref{11: (.)} yields
\begin{equation}
\langle V\varphi_0,\varphi_0\rangle \le -1 +\int_{\{|x|\le 1\}} |\varphi_0|^2\, \mathrm{d}x = -1 -\Vert \nabla \varphi_0\Vert^2+\Vert \varphi_0 \Vert_{\tilde{H}^1}^2.
\end{equation}
Since $\Vert \varphi_0 \Vert_{\tilde{H}^1}\le 1$ and the operator $h$ is non-negative, we get $\Vert \varphi_0 \Vert_{\tilde{H}^1}=1$ and
\begin{equation}
\Vert\nabla \varphi_0 \Vert^2+\langle V\varphi_0,\varphi_0 \rangle = 0.
\end{equation}
Standard arguments show that $\varphi_0$ satisfies \eqref{1: stand.} for any $\psi\in \tilde{H}^1(\mathbb{R}^d)$.
\end{proof}
Now we turn to the proof of statements \textbf{(ii)} and \textbf{(iii)} of Theorem \ref{thm: abstraktes Theorem}, i.e. the estimate of the weighted $L^2(\mathbb{R}^d)$ norm of $\varphi_0$. At first, we prove a weighted $L^2(\mathbb{R}^d)$-estimate for the functions $\psi_n$.
\begin{lem}\label{lem: estimate phi 0 for N=1 n=1}$ $ \
\hspace{-0.5cm} Assume that $h$ has a virtual level at zero and that $V$ satisfies \eqref{cond: relatively form bounded} and \eqref{cond: decay at infinity}. Let $(\psi_n)_{n\in \mathbb{N}}$ be a sequence of eigenfunctions corresponding to negative eigenvalues $E_n<0$ of the operator $h_{n^{-1}}$, normalized as $\Vert \psi_n\Vert_{\tilde{H}^1}=1$. Then the following assertions hold:
\begin{enumerate} 
\item[\textbf{(i)}] If $d=1$, then for any $ \alpha_0 <\frac{1}{2}$ there exists a $C>0$, such that for all $n\in \mathbb{N}$ we have
\begin{equation}\label{eq: estimate psin d=1 N=1}
\Vert \nabla \left(|x|^{\alpha_0}\psi_n\right)\Vert \le C \quad \text{and} \quad  \Vert (1+|x|)^{\alpha_0-1}\psi_n\Vert \le C.
\end{equation}
\item[\textbf{(ii)}] If $d=2$, then for any $\alpha_0<\frac{1}{2}$ there exists a $C>0$, such that for all $n\in \mathbb{N}$ we have
\begin{equation}
	\Vert \nabla\left(|\ln(|x|)|^{\alpha_0}\psi_n\right) \Vert\le C	\quad \text{and}\quad\Vert (1+|x|)^{-1}(1+|\ln(|x|)|)^{\alpha_0-1}\psi_n \Vert\le C.
\end{equation}
\end{enumerate}
\end{lem}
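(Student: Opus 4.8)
The plan is to run a weighted energy estimate (an Agmon-type argument) directly on the eigenvalue equation \eqref{1: eigenfunctions}, with a weight that mimics the conjectured decay: $w(x)\sim|x|^{\alpha_0}$ for $d=1$ and $w(x)\sim|\ln(|x|)|^{\alpha_0}$ for $d=2$, regularized to be bounded and smooth near the origin (and near $|x|=1$ in the logarithmic case, where the weight degenerates). To keep every manipulation rigorous I would first truncate, setting $w_k=\min(w,k)$ so that $u_n:=w_k\psi_n\in H^1(\mathbb{R}^d)$, derive a bound uniform in $k$, and pass to the limit $k\to\infty$ by weak lower semicontinuity at the end. Testing \eqref{1: eigenfunctions} against $w_k^2\psi_n$ (which lies in the form domain $H^1$) and using the elementary identity $\langle\nabla\psi_n,\nabla(w_k^2\psi_n)\rangle=\Vert\nabla(w_k\psi_n)\Vert^2-\Vert|\nabla w_k|\psi_n\Vert^2$ gives
\begin{equation*}
(1-n^{-1})\Vert\nabla u_n\Vert^2+\langle Vu_n,u_n\rangle=E_n\Vert u_n\Vert^2+(1-n^{-1})\Vert|\nabla w_k|\psi_n\Vert^2 .
\end{equation*}
Since $E_n<0$ and $u_n\in L^2$, I would drop the first term on the right and keep
\begin{equation*}
(1-n^{-1})\Vert\nabla u_n\Vert^2+\langle Vu_n,u_n\rangle\le(1-n^{-1})\Vert|\nabla w_k|\psi_n\Vert^2 .
\end{equation*}

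Next I would show that the potential term is a harmless perturbation, namely $\langle Vu_n,u_n\rangle\ge-\epsilon\Vert\nabla u_n\Vert^2-C(\epsilon)$ with $C(\epsilon)$ uniform in $n$ and $k$. For the far region $|x|>R$ the decay \eqref{cond: decay at infinity} together with the Weidl inequalities \eqref{eq: weidl inequality in 1} and \eqref{eq: weidl inequality in 2} bounds $\int_{\{|x|>R\}}|V|u_n^2$ by $\tilde C R^{-\nu/2}\Vert u_n\Vert_{\tilde{H}^1}^2\le\tilde CR^{-\nu/2}(\Vert\nabla u_n\Vert^2+C)$, exactly as in \eqref{eq: one dimensional Hardy V relatively bounded}; here $\Vert u_n\Vert_{L^2(\{|x|\le1\})}$ is controlled by $\Vert\psi_n\Vert_{L^2(\{|x|\le1\})}\le\Vert\psi_n\Vert_{\tilde{H}^1}=1$. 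For the near region I would localize $V$ by a cutoff and use the relative form bound \eqref{cond: relatively form bounded} together with $\Vert\nabla\psi_n\Vert\le\Vert\psi_n\Vert_{\tilde{H}^1}=1$ and the uniform local $L^2$ bound from Proposition \ref{prop: Homogenous Sobolev}, so that $\int_{\{|x|\le R\}}|V|u_n^2\le C(R)\int_{\{|x|\le R\}}|V|\psi_n^2$ stays bounded uniformly in $n,k$. Fixing $R$ large (depending on $\epsilon$) yields the claim, and for $n\ge2$ the surviving coefficient gives
\begin{equation*}
\bigl(1-n^{-1}-\epsilon\bigr)\Vert\nabla u_n\Vert^2\le(1-n^{-1})\Vert|\nabla w_k|\psi_n\Vert^2+C(\epsilon).
\end{equation*}

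The decisive step is to close the loop with a Hardy inequality. By construction $|\nabla w_k|^2\psi_n^2=\alpha_0^2|x|^{-2}u_n^2$ for $d=1$, and $\alpha_0^2|x|^{-2}(\ln|x|)^{-2}u_n^2$ for $d=2$, on the region where $w_k$ is not constant. The one-dimensional half-line Hardy inequality $\int_{\{|x|>1\}}|x|^{-2}u_n^2\le4\Vert\nabla u_n\Vert^2+C$ (respectively the two-dimensional logarithmic Hardy inequality $\int|x|^{-2}(\ln|x|)^{-2}u_n^2\le4\Vert\nabla u_n\Vert^2+C$), each with the sharp constant $4$ and with the boundary and near-origin contributions absorbed into the uniformly bounded local $L^2$ norm, gives $\Vert|\nabla w_k|\psi_n\Vert^2\le4\alpha_0^2\Vert\nabla u_n\Vert^2+C$. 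Substituting, the coefficient of $\Vert\nabla u_n\Vert^2$ becomes $(1-n^{-1})(1-4\alpha_0^2)-\epsilon$, which is strictly positive for $n$ large and $\epsilon$ small \emph{precisely because} $\alpha_0<\tfrac12$ forces $4\alpha_0^2<1$. This yields $\Vert\nabla u_n\Vert\le C$ uniformly in $n$ and $k$; letting $k\to\infty$ proves the first estimate, and feeding it back into the Hardy inequality proves the weighted $L^2$ estimate. The finitely many small $n$, where $1-n^{-1}$ is too small, are treated separately: each $\psi_n$ is an eigenfunction with $E_n<0=\inf\mathcal{S}_{\mathrm{ess}}(h_{n^{-1}})$, hence decays exponentially and already satisfies the bounds, so the uniform constant is obtained by taking a maximum.

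The main obstacle is exactly the tightness of this absorption. The ordinary Hardy inequality is unavailable in dimensions one and two, so one must use its correct substitutes, the half-line and the logarithmic Hardy inequalities, and retain their sharp constant $4$ throughout, since the whole scheme closes only under $4\alpha_0^2<1$. Keeping the constant sharp forces careful, non-wasteful bookkeeping of the potential term and of the localization errors near the origin (and near $|x|=1$ for $d=2$, where the logarithmic weight degenerates), which is where essentially all of the work lies; by contrast, the positivity $h\ge0$ is not needed for this particular estimate, only the existence of the eigenfunctions $\psi_n$ with $E_n<0$ furnished by the virtual level.
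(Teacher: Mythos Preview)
Your strategy is the same as the paper's: an Agmon-type weighted energy identity on the eigenvalue equation, with the loop closed by the sharp Hardy constant $4$ against the condition $4\alpha_0^2<1$. The paper's implementation differs from yours in one technical choice that removes most of the bookkeeping you flag as ``where essentially all of the work lies'': instead of regularizing the weight near the origin, the paper multiplies by a cutoff $\chi_R$ vanishing on $\{|x|\le R\}$ with $R>A$, taking
\[
G_\varepsilon(x)=\frac{|x|^{\alpha_0}}{1+\varepsilon|x|^{\alpha_0}}\,\chi_R(x)\qquad\Bigl(\text{resp.}\ \ \frac{|\ln|x||^{\alpha_0}}{1+\varepsilon|\ln|x||^{\alpha_0}}\,\chi_R(x)\ \text{for }d=2\Bigr).
\]
This buys two simplifications at once. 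First, $G_\varepsilon\psi_n$ vanishes on all of $[-R,R]$, so the half-line (resp.\ logarithmic) Hardy inequality applies directly with the sharp constant $4$ and no additive remainder; your inequality $\int_{\{|x|>1\}}|x|^{-2}u_n^2\le4\|\nabla u_n\|^2+C$ is not literally true when $u_n$ does not vanish at the origin---splitting off $u_n(0)$ and using Young only gives $4(1+\delta)$, which still closes for every $\alpha_0<\tfrac12$ but is a detail you would have to add. Second, on the support of $G_\varepsilon\psi_n$ one already has $|V(x)|\le C|x|^{-2-\nu}$, so the entire potential term is absorbed by a further small piece of the Hardy budget, yielding the single clean inequality
\[
(1-\gamma_0)\|\nabla(G_\varepsilon\psi_n)\|^2+\langle VG_\varepsilon\psi_n,G_\varepsilon\psi_n\rangle-\alpha_0^2\langle|x|^{-2}G_\varepsilon\psi_n,G_\varepsilon\psi_n\rangle\ge0
\]
without any near/far splitting of $V$. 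The only remaining error is the transition annulus $R\le|x|\le2R$, where $|\nabla G_\varepsilon|$ is bounded and the contribution is controlled by $\|\psi_n\|_{\tilde H^1}=1$ via \eqref{eq: weidl inequality in 1}--\eqref{eq: weidl inequality in 2}. So both routes work; the paper's cutoff just trades your potential bookkeeping and the $(1+\delta)$-fix for a single localization-error estimate on a compact shell.
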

\begin{proof}[Proof of Lemma \ref{lem: estimate phi 0 for N=1 n=1}]
The proof is a modification of the proof of Lemma 2.4 in \cite{second}. At first, we prove the Lemma for the case $d=1$. For any $\varepsilon>0$ and $R>0$ we define the function
\begin{equation}\label{1: function agmon}
G_\varepsilon(x)=\frac{|x|^{\alpha_0}}{1+\varepsilon|x|^{\alpha_0}} \chi_R(x),
\end{equation}
where $\chi_R$ is a $C^1$-cutoff function with
\begin{equation}\label{11chi}
\chi_R(x)=\begin{cases}
0, & |x|\leq R,
\\ 1, & |x|\geq 2R.
\end{cases}
\end{equation}
We multiply the eigenvalue equation 
\begin{equation}\label{1: multiply and int by parts}
-(1-n^{-1})\Delta\psi_n+V\psi_n=E_n\psi_n
\end{equation}
by $G^2_\varepsilon\overline{\psi_n}$ and integrate by parts to obtain \begin{equation}\label{eq: eigenvalue equation G psi_n}
\left( 1-n^{-1}\right) \langle \nabla \psi_n, \nabla\left(G_\varepsilon^2\psi_n\right) \rangle+ \langle V\psi_n , G_\varepsilon^2\psi_n \rangle= E_n\Vert G_\varepsilon \psi_n \Vert^2<0.
\end{equation}
Since 
\begin{equation}
\Re \langle V\psi_n,G_\varepsilon^2\psi_n \rangle = \langle V\psi_n,G_\varepsilon^2\psi_n \rangle \qquad \text{and} \qquad \Re E_n \Vert G_\varepsilon \psi_n \Vert^2 = E_n\Vert G_\varepsilon \psi_n \Vert^2,
\end{equation}
we have
\begin{equation}
\Re \langle \nabla \psi_n,\nabla \left( G_\varepsilon^2 \psi_n \right)\rangle = \langle \nabla \psi_n,\nabla \left( G_\varepsilon^2 \psi_n \right)\rangle.
\end{equation}
Note that 
\begin{align}\label{eq: agmon Re}
\begin{split}
\Re \langle \nabla \psi_n , \nabla (G_\varepsilon^2 \psi_n) \rangle &= \Re \langle \nabla \psi_n, G_\varepsilon \psi_n \nabla G_\varepsilon \rangle  + \Re   \langle (\nabla \psi_n) G_\varepsilon , \nabla (G_\varepsilon \psi_n)\rangle
\\ &= \Re \langle \nabla(\psi_n G_\varepsilon), \psi_n \nabla G_\varepsilon \rangle - \Re \langle \psi_n\nabla G_\varepsilon, \psi_n \nabla G_\varepsilon \rangle  
\\ & \ \ \ +\Re \langle \nabla(\psi_n G_\varepsilon), \nabla (\psi_n G_\varepsilon) \rangle - \Re \langle \psi_n \nabla G_\varepsilon, \nabla (\psi_n G_\varepsilon) \rangle 
\\ &=\Re \langle \nabla(\psi_n G_\varepsilon), \nabla (\psi_n G_\varepsilon) \rangle - \Re \langle \psi_n \nabla G_\varepsilon, \psi_n \nabla G_\varepsilon \rangle.
\end{split}
\end{align}
Therefore, we obtain
\begin{equation}
\langle \nabla \psi_n, \nabla (G_\varepsilon^2 \psi_n) \rangle= \Vert \nabla(\psi_n G_\varepsilon)\Vert^2 - \Vert \psi_n \nabla G_\varepsilon\Vert^2.
\end{equation}
This together with \eqref{eq: eigenvalue equation G psi_n} yields
\begin{equation}\label{eq: agmon with psin}
\left(1-n^{-1} \right)\left(\Vert \nabla(\psi_n G_\varepsilon) \Vert^2 - \int|\psi_n|^2 |\nabla G_\varepsilon|^2\, \mathrm{d}x\right)+\int V|\psi_n G_\varepsilon|^2\, \mathrm{d}x<0.
\end{equation}
Now we estimate the function $|\nabla G_\varepsilon|$. For $|x|>2R$ we have
\begin{equation}\label{eq: estimate nabla G large arguments}
|\nabla G_\varepsilon| = \frac{\alpha_0 |x|^{\alpha_0-1}}{(1+\varepsilon|x|^{\alpha_0})^2}\leq \alpha_0|x|^{-1}|G_\varepsilon|.
\end{equation} 
For $|x| \in[R,2R]$ the function $|\nabla G_\varepsilon|$ is uniformly bounded in $\varepsilon$, which implies
\begin{equation}\label{eq: estimate nabla G}
\int_{\{R\leq |x|\leq 2R\}} |\nabla G_\varepsilon|^2|\psi_n|^2\, \mathrm{d}x \leq C_0\int_{\{R\leq |x|\leq 2R\}} |\psi_n|^2\, \mathrm{d}x ,
\end{equation}
for some $C_0>0$ which depends on $R$ only.
Now we use inequality \eqref{eq: weidl inequality in 1} to estimate the r.h.s. of \eqref{eq: estimate nabla G}. We get
\begin{align}
	\begin{split}
	\int_{\{R\leq |x|\leq 2R\}} |\psi_n|^2\, \mathrm{d}x \le (1+4R^2)\int_{\{R\leq |x|\leq 2R\}} \frac{|\psi_n|^2}{1+x^2}\, \mathrm{d}x \le C_H(1+4R^2) \Vert \psi_n\Vert^2_{\tilde{H}^1},
	\end{split}
\end{align}
where $C_H$ is a Hardy-type constant in \eqref{eq: weidl inequality in 1}.
This, together with \eqref{eq: estimate nabla G} and $\Vert \psi_n\Vert_{\tilde{H}^1}=1$ implies
\begin{equation}\label{eq: estimate nabla G with C(R)}
	\int_{\{R\leq |x|\leq 2R\}} |\nabla G_\varepsilon|^2|\psi_n|^2 \le C_1
\end{equation}
for some $C_1>0$ which is independent of $n\in \mathbb{N}$ and $\varepsilon>0$.  
Substituting \eqref{eq: estimate nabla G large arguments} and \eqref{eq: estimate nabla G with C(R)} into \eqref{eq: agmon with psin} we obtain
\begin{equation}\label{1: 39}
\left(1-n^{-1}\right)\Vert \nabla(\psi_n G_\varepsilon) \Vert^2 +\langle V G_\varepsilon \psi_n,G_\varepsilon \psi_n \rangle - \alpha_0^2 \int_{\{|x|>2R\}}\frac{|G_\varepsilon \psi_n|^2}{|x|^2}\, \mathrm{d}x\leq C_2,
\end{equation}
where $C_2>0$ does not depend on $n\in \mathbb{N}$ or $\varepsilon>0$. The function $G_\varepsilon \psi_n$ is supported outside the ball with radius $R$. Therefore, choosing $R>A$ we can use \eqref{cond: decay at infinity} and apply Hardy's inequality for the half-line, which yields
\begin{equation}\label{1: gamma}
(1-\gamma_0)\Vert \nabla (G_\varepsilon \psi_n)\Vert^2+\langle VG_\varepsilon \psi_n,G_\varepsilon \psi_n \rangle-\alpha_0^2 \langle |x|^{-2}G_\varepsilon \psi_n,G_\varepsilon \psi_n \rangle \geq 0
\end{equation}
for all $\alpha_0^2<\frac{1}{4}$ and $\gamma_0<(1-4\alpha_0^2)$. 
For $n>2\gamma_0^{-1} $ estimates \eqref{1: 39} and \eqref{1: gamma} imply
\begin{equation}\label{241}
\frac{\gamma_0}{2} \Vert \nabla(G_\varepsilon \psi_n) \Vert^2 \leq C_2.
\end{equation}
Taking the limit $\varepsilon\rightarrow 0$ yields $\Vert \nabla \left( |x|^{\alpha_0}\psi_n\right) \Vert \leq C$ for some $C>0$. \\
Applying Hardy's inequality for the half-line to the function $G_\varepsilon \psi_n$ and
taking the limit $\varepsilon\rightarrow 0$ implies 
\begin{equation}
 \Vert (1+|x|)^{\alpha_0-1}\psi_n\Vert \le C.
\end{equation}
 This completes the proof of Lemma \ref{lem: estimate phi 0 for N=1 n=1} for $d=1$. 
Now we assume $d=2$. For $\varepsilon>0$ and $0<\alpha_0<\frac{1}{2}$ let 
\begin{equation}
G_\varepsilon(x)=\frac{|\ln(|x|)|^{\alpha_0}}{1+\varepsilon|\ln(|x|)|^{\alpha_0}}\chi_R(x),
\end{equation}
where $\chi_R$ is a $C^1$-cutoff function with
\begin{equation}
\chi_R(x)=\begin{cases}
0, & |x|\leq R,
\\ 1, & |x|\geq 2R.
\end{cases}
\end{equation}   Due to \eqref{cond: decay at infinity} and Hardy's inequality in dimension two we get similarly to \eqref{1: gamma} that
\begin{equation}
(1-\gamma_0)\Vert \nabla (G_\varepsilon \psi_n)\Vert^2+\langle VG_\varepsilon \psi_n,G_\varepsilon \psi_n \rangle-\alpha_0^2 \langle |x|^{-2}\left(\ln |x|\right)^{-2} G_\varepsilon \psi_n,G_\varepsilon \psi_n \rangle \geq 0
\end{equation}
for all $\alpha_0^2<\frac{1}{4}$ and $\gamma_0 <( 1-4\alpha_0^2)$. 
Now the proof is a straightforward modification of the one-dimensional case.
\end{proof}
Statements \textbf{(ii)} and \textbf{(iii)} of Theorem \ref{thm: abstraktes Theorem} follow from the following
\begin{cor}\label{cor: estimate phi0} The weak limit $\varphi_0$ of the sequence $(\psi_n)_{n\in\mathbb{N}}$ has the following properties. 
\begin{enumerate}
\item[\textbf{(i)}]  If $d=1$, then
\begin{equation}\label{eq: estimate phi0 n=1 N=1}
	 (1+|x|)^{\alpha_0-1}\varphi_0 \in L^2 (\mathbb{R}) \qquad \text{for any} \ \alpha_0<\frac{1}{2}.
\end{equation}
\item[\textbf{(ii)}] If $d=2$, then 
\begin{equation}\label{eq: estimate phi0 n=2 N=1}
	 (1+|x|)^{-1}\left(1+\ln(|x|)\right)^{\alpha_0-1}\varphi_0 \in L^2 (\mathbb{R}^2) \qquad \text{for any}\ \alpha_0<\frac{1}{2}.
\end{equation}
\end{enumerate}
\end{cor}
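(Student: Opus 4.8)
The plan is to transfer the uniform-in-$n$ weighted bounds of Lemma~\ref{lem: estimate phi 0 for N=1 n=1} to the weak limit $\varphi_0$ by a lower-semicontinuity argument based on Fatou's lemma. Recall that $\varphi_0$ was defined as the weak $\tilde{H}^1(\mathbb{R}^d)$-limit of the normalized sequence $(\psi_n)_{n\in\mathbb{N}}$, and that by Proposition~\ref{prop: Homogenous Sobolev}~\textbf{(iii)} this weak convergence already yields strong convergence $\psi_n \to \varphi_0$ in $L^2_{\mathrm{loc}}(\mathbb{R}^d)$.

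First I would pass to a further subsequence, still denoted $(\psi_n)_{n\in\mathbb{N}}$, along which $\psi_n \to \varphi_0$ pointwise almost everywhere on $\mathbb{R}^d$. This is legitimate because strong $L^2_{\mathrm{loc}}$-convergence gives almost everywhere convergence on each ball $B(R)$, and a diagonal extraction over $R\in\mathbb{N}$ produces a single subsequence converging a.e. on all of $\mathbb{R}^d$. Passing to a subsequence costs nothing here, since the limit $\varphi_0$ is already fixed and we only aim to establish an integrability property of it.

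Next, fix $\alpha_0 < \tfrac{1}{2}$ and introduce the weight $w(x) = (1+|x|)^{\alpha_0-1}$ in the case $d=1$, respectively $w(x) = (1+|x|)^{-1}\bigl(1+|\ln(|x|)|\bigr)^{\alpha_0-1}$ in the case $d=2$. In both cases $w$ is a bounded nonnegative function, since it tends to $0$ at infinity, and in the two-dimensional case the logarithmic factor makes it tend to $0$ at the origin as well, so no singularity arises. By Lemma~\ref{lem: estimate phi 0 for N=1 n=1} there is a constant $C>0$, independent of $n$, such that $\Vert w\psi_n\Vert \le C$. Since $w^2|\psi_n|^2 \to w^2|\varphi_0|^2$ pointwise almost everywhere, Fatou's lemma then yields
\begin{equation}
\int_{\mathbb{R}^d} w(x)^2 |\varphi_0(x)|^2 \,\mathrm{d}x \le \liminf_{n\to\infty} \int_{\mathbb{R}^d} w(x)^2 |\psi_n(x)|^2\,\mathrm{d}x \le C^2,
\end{equation}
which is precisely the assertion $w\varphi_0 \in L^2(\mathbb{R}^d)$ in statements \textbf{(i)} and \textbf{(ii)} of the Corollary. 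The corresponding statements \textbf{(ii)} and \textbf{(iii)} of Theorem~\ref{thm: abstraktes Theorem} then follow upon writing $\alpha_0 = \tfrac{1}{2}-\varepsilon$ with $\varepsilon>0$, using that $\varphi_0$ satisfies the weak equation \eqref{eq: weak solution phi0} by Lemma~\ref{lem: existence of solution d=1 N=1}.

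Since the weight is bounded and the almost everywhere convergence is guaranteed, I do not expect any genuine obstacle; the only point requiring care is the subsequence extraction needed to upgrade $L^2_{\mathrm{loc}}$-convergence to pointwise a.e.\ convergence before invoking Fatou. An alternative that avoids pointwise convergence is to observe that $(\psi_n)$ is bounded in the weighted Hilbert space $L^2(\mathbb{R}^d, w^2\,\mathrm{d}x)$, extract a weakly convergent subsequence there, identify its weak limit with $\varphi_0$ by testing against functions in $C_0^\infty(\mathbb{R}^d)$ supported away from the origin (which is consistent with the $L^2_{\mathrm{loc}}$-limit because $w$ is bounded on compact sets), and conclude by weak lower semicontinuity of the norm in $L^2(\mathbb{R}^d, w^2\,\mathrm{d}x)$.
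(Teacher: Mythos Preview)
Your argument is correct. Both you and the paper transfer the uniform weighted $L^2$ bounds on $\psi_n$ from Lemma~\ref{lem: estimate phi 0 for N=1 n=1} to the limit $\varphi_0$, so the strategy is the same; only the technical device differs. You use pointwise a.e.\ convergence plus Fatou (or, alternatively, weak lower semicontinuity in the weighted $L^2$ space), which directly yields $w\varphi_0\in L^2$. The paper instead argues that $(1+|x|)^{\alpha_0-1}\psi_n \to (1+|x|)^{\alpha_0-1}\varphi_0$ \emph{strongly} in $L^2$: one uses the uniform bound for a slightly larger exponent $\alpha_1\in(\alpha_0,\tfrac12)$ to make the tails $\int_{\{|x|>R\}}(1+|x|)^{2(\alpha_0-1)}|\psi_n|^2$ uniformly small, and combines this tightness with the $L^2_{\mathrm{loc}}$ convergence. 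Your route is slightly more economical for the stated goal; the paper's route gives the marginally stronger conclusion of strong convergence, which is not needed here but is occasionally convenient elsewhere.
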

\begin{proof}[Proof of Corollary \ref{cor: estimate phi0}]
Let $d=1$. Since $(\psi_n)_{n\in \mathbb{N}}$ converges to $\varphi_0$ in $L^2_{\mathrm{loc}}(\mathbb{R})$ and for any $\alpha_0<\frac{1}{2}$ we have the estimate $\Vert (1+|x|)^{\alpha_0-1}\psi_n\Vert \le C$ uniformly in $n\in \mathbb{N}$, for every $\alpha_0<\frac{1}{2}$ we get
\begin{equation}
	(1+|x|)^{\alpha_0-1}\psi_n\rightarrow (1+|x|)^{\alpha_0-1}\varphi_0 \quad \text{in } L^2(\mathbb{R}) \quad \text{as} \ n\rightarrow\infty.
\end{equation} 
The case $d=2$ follows analogously.
\end{proof}
To complete the proof of Theorem \ref{thm: abstraktes Theorem} it remains to prove statement \textbf{(iv)}. This is a straightforward generalization of Lemma 2.10 in \cite{second}, which is based on the the unique continuation theorem \cite[Theorem 2.1]{Schechter}. 
\end{proof}
\section{Virtual levels of systems of $N$ one- or two-dimensional particles}\label{section: virtual levels multi-particle}
In this section we introduce virtual levels of Schr\"odinger operators corresponding to systems consisting of $N$ one- or two-dimensional particles. We prove several results on the decay rate of solutions of the Schr\"odinger equation corresponding to virtual levels of multi-particle systems. The main result of this section is Theorem \ref{thm: Virtual level and Hardy constant}, where we give sufficient conditions in terms of a Hardy-type constant, such that virtual levels of multi-particle Schr\"odinger operators correspond to eigenvalues and prove an estimate for the decay rate of the corresponding eigenfunctions.
In Corollaries \ref{cor: d=2, N>3} and \ref{cor: d=1, N>3} and Theorem \ref{thm d=1 N=3} we discuss applications of Theorem \ref{thm: Virtual level and Hardy constant} to multi-particle systems.
\subsection{Notation and definitions for multi-particle systems}
We consider a system of $N \geq 3$ quantum particles in dimension $d=1$ or $d=2$ with masses $m_i>0$ and position vectors $x_i \in \R^d, \ i=1,\dots, N$. Such a system is described by the Hamiltonian
\begin{equation}\label{Definition Hamiltonian}
	H_N = - \sum\limits_{i=1}^N \frac{1}{m_i} \Delta_{x_i} + \sum\limits_{1\le i<j\le N} V_{ij}(x_{ij}),\quad x_{ij}=x_i-x_j
\end{equation}
acting on $L^2(\mathbb{R}^{dN})$. The potentials $V_{ij}$ describe the particle pair interactions and in the following we assume that they satisfy $V_{ij}\neq 0$ and the conditions \eqref{cond: relatively form bounded} and \eqref{cond: decay at infinity}.  
\subsubsection*{Separation of the center of mass of the system}
We will consider the operator $H_N$ in the center-of-mass frame. Following \cite{Sigalov}, we denote by $\langle \cdot, \cdot\rangle_m$ the scalar product on $\mathbb{R}^{dN}$ which is given by
\begin{equation}\label{eq: mass metric}
	\langle x,y\rangle_m = \sum\limits_{i=1}^N m_i \langle x_i ,y_i \rangle, \qquad \vert x\vert^2_m = \langle x,x\rangle_m, \qquad x, y\in \R^{dN}.
\end{equation}
Here, $\langle \cdot,\cdot\rangle$ is the standard scalar product on $\mathbb{R}^{d}$. Let $X$ be the space $\mathbb{R}^{dN}$ equipped with the scalar product $\langle\cdot,\cdot\rangle_m$ and let
\begin{equation}\label{X[C]}
	X_0 = \left\{x=(x_1,\ldots,x_N) \in X \; : \; \sum\limits_{i=1}^N m_i x_i =0 \right\}
\end{equation}
be the space of positions of the particles in the center of mass frame and $X_c =X\ominus X_0$ be the space of the center of mass position of the system.  We denote by $P_0$ and $P_c$ the orthogonal projections from $X$ on $X_0$ and $X_c$, respectively. \\
Furthermore, we introduce $-\Delta,\ -\Delta_0$ and $-\Delta_c$ as the Laplace-Beltrami operators on $L^2(X)$, $L^2(X_0)$ and $L^2(X_c)$, respectively. Then, corresponding to the decomposition $L^2(X)=L^2(X_0)\otimes L^2(X_c)$ we find
\begin{equation}
-\Delta =-\Delta_0\otimes \mathrm{Id} +\mathrm{Id}\otimes(-\Delta_c).
\end{equation}
Since for every $x\in X$ we have
\begin{equation}
(P_0 x)_i-(P_0x)_j=x_i-x_j,
\end{equation}
the potential $V(x)= \sum_{1\le i< j\le N} V_{ij}(x_{ij})$ satisfies
\begin{equation}
	V(x)=V(P_0x).
\end{equation}
Therefore, $H_N$ is unitarily equivalent to the operator
\begin{equation}\label{eq: decomposition operator H_N}
H\otimes \mathrm{Id}+\mathrm{Id}\otimes (-\Delta_c),
\end{equation}
where
\begin{equation}\label{H[C]}
H=-\Delta_0+V.
\end{equation}
In view of \eqref{eq: decomposition operator H_N} the center of mass of the system moves like a free particle and the operator $H$ corresponds to the relative motion of the system.
\subsubsection*{Clusters and Cluster Hamiltonians}
A cluster $C$ of the system is defined as a non-empty subset of $\{1,\dots, N\}$ and we denote by $|C|$ the number of particles contained in $C$. 
For $1<|C|<N$ we  define the space of the relative positions of the particles in the cluster $C$ by
\begin{equation}
X_0[C]=\{x\in X_0: x_i = 0 \ \text{if}\ i\not \in C\}.
\end{equation}
Let $-\Delta_0[C]$ be the the Laplace-Beltrami operator on $L^2(X_0[C])$ and
\begin{equation}
V[C]=\sum\limits_{i,j\in C,\ i < j}V_{ij}
\end{equation} 
the potential of the interactions between the particles in the cluster $C$. Then for $1<|C|<N$ the cluster Hamiltonian with reduced center of mass, acting on $L^2(X_0[C])$, is given by 
\begin{equation}
H[C]=-\Delta_0[C]+V[C]
\end{equation}
and describes the internal dynamics of the cluster $C$. For $C=\{1,\dots, N\}$ we have $X_0[C]=X_0$, so we set $H[C]=H$. For $|C|=1$ we have $X[C]=\{0\}$ and we set $H[C]=0$.\\
Let $P_0[C]$ be the orthogonal projection from $X_0$ to $X_0[C]$ and for $x\in X_0$ let
\begin{equation}
	q[C]=P_0[C]x. 
\end{equation}
\subsubsection*{Partitions of the system}
We say that $Z=(C_1,\dots,C_p)$ is a partition or cluster decomposition of the system of order $\vert Z\vert =p$ if and only if
 \begin{equation}
C_i\neq \emptyset,  \qquad C_i\cap C_j= \emptyset, \qquad \bigcup\limits_{j=1}^p C_j = \{1,\dots, N\}
\end{equation}
holds for all $i,j=1,\dots, p$ with $i\neq j$. We refer to $C\subset Z$ as a cluster of the partition $Z=(C_1,\dots,C_p)$ if $C=C_i$ for some $i=1,\dots,p$.
Let
\begin{equation}
	X_0(Z)=\bigoplus \limits_{C_k \subset Z} X_0[C_k], \qquad X_c(Z)= X_0\ominus X_0(Z).
\end{equation}
This gives rise to the decomposition 
\begin{equation}
L^2(X_0(Z))=\bigotimes\limits_{C_k \subset Z} L^2(X_0[C_k]).
\end{equation}
By abuse of notation we denote the operator
\begin{equation}
	\mathrm{Id}\otimes \dots \otimes \mathrm{Id}\otimes (-\Delta_0[C_k])\otimes \mathrm{Id}\otimes \dots \otimes \mathrm{Id} \quad \text{and} \quad \mathrm{Id}\otimes \dots \otimes \mathrm{Id}\otimes H[C_k]\otimes \mathrm{Id}\otimes \dots \otimes \mathrm{Id},
\end{equation}
acting on $L^2(X_0(Z))$, by $-\Delta_0[C_k]$ and $H[C_k]$, respectively. The cluster decomposition Hamiltonian of the partition $Z$ is defined by
\begin{align}
\begin{split}
	H(Z)= \sum\limits_{C_k \subset Z} H[C_k]
	\end{split}
\end{align}
and describes the joint internal dynamics of the clusters in $Z$. Let $-\Delta_0(Z)$ be the Laplace-Beltrami operator on $L^2(X_0(Z))$. Then 
\begin{equation}
	-\Delta_0(Z)=\sum\limits_{C_k\subset Z} -\Delta_0[C_k].
\end{equation}
We denote the potential of the inter-cluster interaction by
\begin{equation}
	I(Z)=V-\sum\limits_{C_k \subset Z} V[C_k].
\end{equation}
Then the Hamiltonian of the whole system can be written as
\begin{equation}
H=H(Z)\otimes \mathrm{Id}+\mathrm{Id}\otimes (-\Delta_c(Z)) +I(Z),
\end{equation}
where $-\Delta_c(Z)$ is the Laplace-Beltrami operator on $L^2(X_c(Z))$.
We introduce the projections $P_0(Z)$ and $P_c(Z)$ from $X_0$ on $X_0(Z)$ and $X_c(Z)$, respectively. For $x\in X_0$ let
\begin{equation}\label{eq: q and xi}
	q(Z)= P_0(Z)x, \qquad \xi(Z)=P_c(Z) x.
\end{equation}
To emphasize the dependence on $q(Z)$ and $\xi(Z)$ we will write
\begin{equation}
	-\Delta_{q(Z)}=-\Delta_0(Z) \quad \text{and} \quad -\Delta_{\xi(Z)}=-\Delta_c(Z)
\end{equation}
and 
\begin{equation}
	H=-\Delta_{q(Z)}-\Delta_{\xi(Z)}+V\quad \text{or}\quad H=H(Z)-\Delta_{\xi(Z)}+I(Z).
\end{equation}
Note that the $i$-th coordinates of $q(Z)$ and $\xi(Z)$ are vectors $q_i$ and $\xi_i$ given by
\begin{equation}\label{eq: q(z) and xi(Z)}
	q_i=x_i-x_{C_l}, \qquad \xi_i=x_{C_l}
\end{equation}
where $C_l$ is the cluster which contains the particle $i$. Here, 
\begin{equation}
	x_{C_l}=\frac{1}{\sum_{j\in C_l}m_j}\sum_{j\in C_l} m_jx_j
\end{equation}
is the center of mass of the cluster $C_l$. \\
For $\kappa>\kappa'>0$, $ R >0$ and partitions $Z$ with $1<|Z|<N$ we define the regions
\begin{align}\label{eq: cones}
	\begin{split}
		B(R)&=\left\{x\in X_0 \ :\ \vert x\vert_m \le R\right\}, \\
		K(Z,\kappa) &=\left\{x\in X_0 \ :\ \vert q\left({Z}\right)\vert_m \le \kappa \vert \xi\left(Z\right)\vert_m\right\}	,	\\
		K_R(Z,\kappa) &=\left\{x\in X_0 \ :\ \vert q\left({Z}\right)\vert_m \le \kappa \vert \xi\left(Z\right)\vert_m, \ \vert x\vert_m \ge R\right\},\\
		K_R(Z,\kappa',\kappa) &=K_R(Z,\kappa) \setminus K_R(Z,\kappa'). \\
	\end{split}
\end{align}
For the entire system $Z=\{1,\dots,N\}$ we set 
\begin{equation}
	K(Z,\kappa)=\{x\in X_0:|x|_m\le \kappa\}.
\end{equation}
We will use the regions defined in \eqref{eq: cones} to make a partition of unity of $X_0$ corresponding to different cluster decompositions of the $N$-particle system. Now we extend Definition \ref{df: Virtual level one particle} of a virtual level to the case of multi-particle systems but first we give two remarks which justify our assumptions.
\begin{enumerate}
\item[\textbf{(i)}] For three-particle systems with the essential spectrum starting at zero, the existence of resonances in two-particle subsystems may lead, to the appearence of an infinite series of negative eigenvalues accumulating logarithmically at zero, the so-called Efimov effect. These eigenvalues and the corresponding eigenfunctions have many interesting properties. One of the goals of our work is to study whether similar effects may occur in systems of $N$ one- or two-dimensional particles. Due to this specific interest we will consider only the case when $\mathcal{S}_{\mathrm{ess}}(H)=[0,\infty)$. By the HVZ theorem this yields $H[C]\geq 0$ for any cluster $C$ with $|C|<N$.
\item[\textbf{(ii)}] The assumption $H[C]\geq 0$ is a strong restriction on the potentials $V_{ij}$. Since we consider one-or two-dimensional particles only, this in particular implies $\int V_{ij} \, \mathrm{d}x >0$.
\end{enumerate}
\begin{df}\label{def: Vitrtual levels multiparticle}
Assume that the potentials $V_{ij}$ satisfy \eqref{cond: relatively form bounded} and \eqref{cond: decay at infinity}. Let $C\subseteq \{1,\dots,N\}$ be a cluster. We say that $H[C]$ has a virtual level at zero if $H[C]\ge 0$ and 
\begin{enumerate}
\item[\textbf{(i)}] there exists a constant $\varepsilon_0 \in (0,1)$, such that
\begin{equation}\label{eq: essential spectrum stable}
\inf\mathcal{S}_{\mathrm{ess}} \left(H[C]+\varepsilon_0\Delta_0[C]\right)=0,
\end{equation} 
\item[\textbf{(ii)}] for any $\varepsilon\in (0,1)$ we have
\begin{equation}\label{eq_ def vl (ii)}
\inf\mathcal{S}\left(H[C]+\varepsilon\Delta_0[C]\right)<0.
\end{equation}
\end{enumerate}
\end{df}
\begin{rem}
\begin{enumerate}
\item[\textbf{(i)}] Note that if \eqref{eq: essential spectrum stable} is fulfilled for some $\varepsilon_0>0$, then it also holds for all $0<\tilde{\varepsilon}_0<\varepsilon_0$.
\item[\textbf{(ii)}] Let $H[C]\geq 0$. Then condition \eqref{eq: essential spectrum stable} can not be fulfilled if there exists a subcluster $\tilde{C} \subset C$ with $1<|\tilde{C}|< |C|$ such that $H[\tilde{C}]$ has a virtual level. Indeed, in this case we have $\inf \mathcal{S}\left(H[\tilde C]+\varepsilon\Delta_0[\tilde C]\right) <0$ for any $\varepsilon \in (0,1)$ and according to the HVZ theorem \eqref{eq: essential spectrum stable} does not hold. 
\\On the other hand, if \eqref{eq: essential spectrum stable} does not hold for a cluster $C$ and any $\varepsilon_0\in (0,1)$, then due to the HVZ theorem there exists at least one subcluster $\tilde C$ of the cluster $C$ with $1<|\tilde C|<|C|$, such that for any $\varepsilon\in (0,1)$ we have
\begin{equation}\label{eq: inf spectrum subset negative}
\inf\mathcal{S}\left(H[\tilde C]+\varepsilon \Delta_0[\tilde C]\right)<0.
\end{equation}
Among these subclusters we choose one with the smallest number of particles and denote it by $C_0$. If $C_0$ has only two particles, then $\inf \mathcal{S}_{\mathrm{ess}}(H[C_0]+\varepsilon_0\Delta_0[C_0])=0$ and according to the definition of a virtual level, the operator $H[C_0]$ has a virtual. Let $|C_0|\geq 3$. Then, since $C_0$ is the smallest cluster for which \eqref{eq: inf spectrum subset negative} holds for any $\varepsilon\in(0,1)$, for any subcluster $ C'\subsetneq C_0$ with $|C'|>1$ inequality \eqref{eq: inf spectrum subset negative} can not hold for all $\varepsilon\in (0,1)$, i.e. for some $\varepsilon_1\in (0,1)$ we have
\begin{equation}\label{eq: 331}
\inf\mathcal{S}\left(H[{C'}]+\varepsilon_1\Delta_0[{C'}]\right)=0.
\end{equation}
Obviously, this is also true for all $0<\tilde{\varepsilon}_1<\varepsilon_1$. Since $C_0$ has only a finite number of subclusters, we can choose $\varepsilon_1>0$ in \eqref{eq: 331}, such that this inequality holds for all subclusters of $C_0$. Applying the HVZ theorem yields
\begin{equation}
	\inf \mathcal{S}_{\mathrm{ess}} \left(H[C_0]+\varepsilon_0\Delta_0[C_0]\right)=0
\end{equation}
for any $\varepsilon_0\in (0,\varepsilon_1)$. At the same time, $\inf\mathcal{S} \left(H[C_0]+\varepsilon\Delta_0[C_0]\right)<0$ for any $\varepsilon\in(0,1)$. Hence, $H[C_0]$ has a virtual level at zero.
\item[\textbf{(iii)}] Similarly to the case of one-particle Schr\"odinger operators we can give necessary and sufficient conditions for the operator $H$ to have a virtual level at zero in terms of perturbations of the operator with additional potentials. This result can be found in Appendix \ref{Appendix Virtual Levels}, Theorem \ref{thm: equivalent definition virtual level multiparticle}.
\end{enumerate}
\end{rem}

\subsection{Statements of our results on the decay rates of solutions corresponding to virtual levels}
Now we give our main results of this section, namely the existence of solutions of the Schr\"odinger equation in the presence of a virtual level and estimates of the decay rate of these solutions. For these estimates certain Hardy-type constants play an important role. Let
\begin{equation}
	\mathcal{M}=\left\{\psi\in C_0^1(X_0\setminus B(1))\,:\, \psi(x)=0 \text{ for } x_i=x_j \, , 1\le i,j\le N,\ i\neq j\right\}
\end{equation}
and let
\begin{equation}\label{eq: definition tilde C}
 \tilde{C}_H(X_0)= \inf\limits_{0\neq \psi \in \mathcal{M}} \frac{\Vert \nabla_0 \psi\Vert}{\Vert |x|_m^{-1}\psi\Vert}.
\end{equation}
\begin{rem}
If the particles are two-dimensional, the sets $\{x_i=x_j\}$ have co-dimension two and the set $\mathcal{M}$ is dense in $H^1\left(X_0\setminus B(1)\right)$. In this case the constant $\tilde{C}_H(X_0)$ coincides with the Hardy constant $C_H(X_0)=\frac{d(N-1)-2}{2}=N-2$ of the $d(N-1)$-dimensional space $X_0$, see for example inequality $(2.18)$ in \cite{birman}. However, for one-dimensional particles the sets $\{x_i=x_j\}$ are hyperplanes and the closure of $\mathcal{M}$ with respect to the $H^1(X_0)$ norm includes only functions with trace zero on $\{x_i=x_j\}$. Below we will see that in this case we have $\tilde{C}_H(X_0)\ge\frac{N-1}{2}$. 
\end{rem}
The main result of this section is the following
\begin{thm}\label{thm: Virtual level and Hardy constant}
Let $H$ be the Hamiltonian of a system of $N\ge 3$ $d$-dimensional particles with $d\in \{1,2\}$, where the potentials $V_{ij}\neq 0$ satisfy \eqref{cond: relatively form bounded} and \eqref{cond: decay at infinity}. Assume that $H$ has a virtual level at zero and for the constant $\tilde{C}_H(X_0)$ defined in \eqref{eq: definition tilde C} we have $\tilde{C}_H(X_0)>1$. Then 
\begin{enumerate}
\item[\textbf{(i)}] 
 zero is a simple eigenvalue of $H$ and for the corresponding eigenfunction $\varphi_0$ we have
\begin{equation}\label{eq: decay rate n=1}
\nabla_0\left(|x|_m^{\alpha}\varphi_0\right)\in L^2(X_0)\quad \text{and} \quad  (1+|x|_m)^{\alpha-1}\varphi_0 \in L^2(X_0)
\end{equation}
for any $0\le\alpha <\tilde{C}_H(X_0)$.
\item[\textbf{(ii)}] There exists a constant $\delta_0>0$, such that for any function $\psi \in H^1(X_0)$ satisfying $\langle \nabla_0 \varphi_0,\nabla_0 \psi\rangle =0$ 
\begin{equation}\label{111: distance}
(1-\delta_0)\Vert \nabla_0 \psi \Vert^2 + \langle V\psi,\psi \rangle \geq 0.
\end{equation}
\end{enumerate}
\end{thm}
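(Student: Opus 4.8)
The plan is to prove Theorem~\ref{thm: Virtual level and Hardy constant} by adapting the one-particle argument of Theorem~\ref{thm: abstraktes Theorem} to the multi-particle setting, with the decisive new ingredient being the localization of the kinetic energy via the Hardy-type constant $\tilde{C}_H(X_0)$. First I would, exactly as in Lemma~\ref{lem: existence of solution d=1 N=1}, take the sequence $\psi_n \in H^1(X_0)$ of eigenfunctions of $H+n^{-1}\Delta_0$ corresponding to negative eigenvalues $E_n<0$, normalized by $\Vert \psi_n\Vert_{\tilde{H}^1}=1$, and extract a weakly convergent subsequence with limit $\varphi_0$. The argument that $\varphi_0\neq 0$ and that $\varphi_0$ weakly solves $-\Delta_0\varphi_0+V\varphi_0=0$ carries over: the potential $V=\sum_{i<j}V_{ij}$ is form-bounded with relative bound zero, so the local convergence of $\langle V\psi_n,\psi_n\rangle$ follows by the same Cauchy--Schwarz cutoff estimate, and the far-field tail is controlled by \eqref{cond: decay at infinity} together with the weighted inequalities of Appendix~\ref{Appendix A}. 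The normalization then forces $\Vert\varphi_0\Vert_{\tilde{H}^1}=1$ and $\Vert\nabla_0\varphi_0\Vert^2+\langle V\varphi_0,\varphi_0\rangle=0$.

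The heart of the proof, and what turns the \emph{resonance} of the one-particle case into a genuine \emph{eigenfunction}, is the decay estimate \eqref{eq: decay rate n=1}. Here I would run the Agmon-type computation of Lemma~\ref{lem: estimate phi 0 for N=1 n=1} with the weight $G_\varepsilon(x)=|x|_m^{\alpha}(1+\varepsilon|x|_m^{\alpha})^{-1}\chi_R(x)$ on $X_0$, multiply the equation $-(1-n^{-1})\Delta_0\psi_n+V\psi_n=E_n\psi_n$ by $G_\varepsilon^2\overline{\psi_n}$, and integrate by parts to obtain
\begin{equation}
(1-n^{-1})\Vert\nabla_0(G_\varepsilon\psi_n)\Vert^2+\langle VG_\varepsilon\psi_n,G_\varepsilon\psi_n\rangle-\alpha^2\int_{\{|x|_m>2R\}}\frac{|G_\varepsilon\psi_n|^2}{|x|_m^2}\,\mathrm{d}x\le C,
\end{equation}
with $C$ independent of $n,\varepsilon$. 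The key point is that the error term $\alpha^2\langle|x|_m^{-2}G_\varepsilon\psi_n,G_\varepsilon\psi_n\rangle$ must now be absorbed by the kinetic energy. Since $G_\varepsilon\psi_n$ is supported outside $B(R)$ and vanishes on the collision set $\{x_i=x_j\}$ (after a density argument justifying that $\psi_n$ inherits this, or by first approximating), the definition \eqref{eq: definition tilde C} of $\tilde{C}_H(X_0)$ gives $\Vert|x|_m^{-1}u\Vert\le \tilde{C}_H(X_0)^{-1}\Vert\nabla_0 u\Vert$, so for any $\alpha<\tilde{C}_H(X_0)$ the combination $(1-\gamma_0)\Vert\nabla_0(G_\varepsilon\psi_n)\Vert^2-\alpha^2\langle|x|_m^{-2}\cdot\rangle\ge 0$ with a strictly positive margin $\gamma_0$. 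Combined with $h\ge 0$ on the far field (using \eqref{cond: decay at infinity} to make $\langle VG_\varepsilon\psi_n,G_\varepsilon\psi_n\rangle$ negligible) this yields $\frac{\gamma_0}{2}\Vert\nabla_0(G_\varepsilon\psi_n)\Vert^2\le C$ uniformly, and letting $\varepsilon\to 0$ and passing to the weak limit as in Corollary~\ref{cor: estimate phi0} gives \eqref{eq: decay rate n=1}. In particular $(1+|x|_m)^{\alpha-1}\varphi_0\in L^2(X_0)$ for some $\alpha>1$ (since $\tilde{C}_H(X_0)>1$), hence $\varphi_0\in L^2(X_0)$: the virtual level is an eigenvalue.

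For the remaining assertions I would argue as follows. Simplicity of the eigenvalue and statement~\textbf{(ii)}, the coercivity \eqref{111: distance} on the orthogonal complement of $\nabla_0\varphi_0$, follow by the same scheme as statement~\textbf{(iv)} of Theorem~\ref{thm: abstraktes Theorem}, i.e. the generalization of Lemma~2.10 in~\cite{second}: one uses the unique continuation theorem to show that any two solutions of the Schr\"odinger equation in the relevant space are proportional (hence the eigenvalue is simple and the space $\mathcal{M}$ of solutions is one-dimensional), and then a compactness-plus-contradiction argument shows that on the subspace $\{\psi\in H^1(X_0):\langle\nabla_0\varphi_0,\nabla_0\psi\rangle=0\}$ the form $\langle H\psi,\psi\rangle$ is bounded below by $\delta_0\Vert\nabla_0\psi\Vert^2$. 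The main obstacle, in my view, is precisely the step of absorbing the Agmon error term by $\tilde{C}_H(X_0)$: one must verify that the Hardy-type inequality on $X_0\setminus B(1)$ with the collision-set boundary condition genuinely applies to the localized functions $G_\varepsilon\psi_n$ (whose support lies outside a ball but which need not a priori vanish on $\{x_i=x_j\}$ in the one-dimensional case), and that the hypothesis $\tilde{C}_H(X_0)>1$ leaves enough room to choose $\gamma_0>0$. This is exactly where the lower bound $\tilde{C}_H(X_0)\ge(N-1)/2$ for one-dimensional particles and $\tilde{C}_H(X_0)=N-2$ for two-dimensional particles noted in the preceding remark will later be invoked to decide for which $N$ the theorem is applicable.
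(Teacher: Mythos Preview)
Your proposal contains a genuine gap at the crucial step, and it is precisely the step you flag as ``the main obstacle.'' You write that $G_\varepsilon\psi_n$ ``vanishes on the collision set $\{x_i=x_j\}$ (after a density argument\ldots)'' and that therefore the Hardy constant $\tilde C_H(X_0)$ applies directly. This is false: the eigenfunctions $\psi_n$ are general $H^1(X_0)$ functions and have no reason to vanish where $x_i=x_j$; on the contrary, the two-body interaction $V_{ij}$ is concentrated there and the wave function typically does not have a zero trace. Equally problematic is your claim that \eqref{cond: decay at infinity} makes $\langle VG_\varepsilon\psi_n,G_\varepsilon\psi_n\rangle$ negligible far from the origin. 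The multi-particle potential $V=\sum_{i<j}V_{ij}(x_i-x_j)$ does \emph{not} decay as $|x|_m\to\infty$: along directions where some $x_i-x_j$ stays bounded (i.e.\ inside the cones $K(Z,\kappa)$), $V_{ij}$ remains of order one. So neither the Hardy inequality defining $\tilde C_H(X_0)$ nor the smallness of $V$ is available for $G_\varepsilon\psi_n$ globally, and the Agmon bootstrap as written does not close.

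The paper handles this by proving not the Hardy inequality for $G_\varepsilon\psi_n$ but rather the full functional inequality \eqref{eq: L ge 0} (Lemma~\ref{lem abziehlemma N=4}): $(1-\gamma_0)\Vert\nabla_0\varphi\Vert^2+\langle V\varphi,\varphi\rangle-\alpha^2\Vert|x|_m^{-1}\varphi\Vert^2\ge 0$ for \emph{all} $\varphi$ supported outside a large ball, with no boundary condition on the collision set. This is done by a partition of unity into the cones $K(Z,\kappa)$; inside each cone one uses that the sub-cluster Hamiltonians $H[C]$ with $|C|<N$ have no virtual level (a consequence of the virtual-level hypothesis on $H$), together with the new localization-error estimate of Theorem~\ref{lem: localization with log}, which is tailored to the failure of the standard Hardy inequality in dimensions one and two. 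The constant $\tilde C_H(X_0)$ enters only in the residual region where \emph{all} particles are far apart, because there the localized function genuinely lies in (an approximation of) $\mathcal M$. Once Lemma~\ref{lem abziehlemma N=4} is established, it feeds into the abstract Theorem~\ref{thm: abstraktes thm paper2} as condition~\eqref{eq: subtract alpha0}, and your Agmon scheme then goes through verbatim. In short: the missing ingredient is the geometric/cluster analysis establishing \eqref{eq: L ge 0}, which replaces both of your unjustified claims.
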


\begin{cor}\label{cor: d=2, N>3}
 If $d=2$ and $N\ge 4$, then we have $\tilde{C}_H(X_0)= C_H(X_0)=N-2>1$. Therefore, Theorem \ref{thm: Virtual level and Hardy constant} can be applied. In particular, it shows that in this case the solution $\varphi_0$ of the Schr\"odinger equation corresponding to the virtual level is a non-degenerate eigenfunction satisfying
\begin{equation}
	(1+|x|_m)^{\alpha-1}\varphi_0 \in L^2(X_0)\qquad \text{for any } \alpha<N-2.
\end{equation}
\end{cor}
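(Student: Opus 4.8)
The plan is to verify the two numerical hypotheses of Theorem \ref{thm: Virtual level and Hardy constant} and then quote it directly; all of the analytic work has already been carried out in that theorem, so the corollary reduces to a short computation of $\tilde{C}_H(X_0)$ together with an elementary inequality.

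First I would compute the constant $\tilde{C}_H(X_0)$. After reduction of the center of mass, $X_0$ is a Euclidean space (with the mass metric $\langle\cdot,\cdot\rangle_m$) of dimension $d(N-1)=2(N-1)$, which is at least $6$ for $N\ge 4$, so the classical Hardy inequality applies with sharp constant $C_H(X_0)=\frac{d(N-1)-2}{2}=N-2$, as recorded in inequality $(2.18)$ of \cite{birman}. The only point requiring care is the passage from the unconstrained constant $C_H(X_0)$ to the constrained constant $\tilde{C}_H(X_0)$, in whose definition \eqref{eq: definition tilde C} the infimum is taken only over functions in $\mathcal{M}$ that additionally vanish on the collision sets $\{x_i=x_j\}$. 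For two-dimensional particles these sets are cut out by $x_i-x_j=0\in\mathbb{R}^2$ and hence have codimension two in $X_0$. Sets of codimension two are removable for $H^1$, so $\mathcal{M}$ is dense in $H^1(X_0\setminus B(1))$ and the extra vanishing constraint does not lower the infimum; thus $\tilde{C}_H(X_0)=C_H(X_0)=N-2$. This is precisely the content of the remark preceding the statement of Theorem \ref{thm: Virtual level and Hardy constant}, and I regard this density/removability argument as the only genuinely substantive step of the proof.

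Having established $\tilde{C}_H(X_0)=N-2$, the required inequality $\tilde{C}_H(X_0)>1$ is equivalent to $N-2>1$, i.e.\ $N\ge 4$, which holds by hypothesis. Since the potentials $V_{ij}\neq 0$ satisfy \eqref{cond: relatively form bounded} and \eqref{cond: decay at infinity} and $H$ is assumed to possess a virtual level at zero, all hypotheses of Theorem \ref{thm: Virtual level and Hardy constant} are met. Applying part \textbf{(i)} of that theorem yields that zero is a simple (hence non-degenerate) eigenvalue of $H$ with eigenfunction $\varphi_0$ and that $(1+|x|_m)^{\alpha-1}\varphi_0\in L^2(X_0)$ for every $0\le\alpha<\tilde{C}_H(X_0)=N-2$, which is exactly the claimed decay estimate.
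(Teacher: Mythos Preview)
Your proposal is correct and follows essentially the same approach as the paper: the corollary is justified by the remark preceding Theorem \ref{thm: Virtual level and Hardy constant}, which observes that for $d=2$ the collision sets $\{x_i=x_j\}$ have codimension two in $X_0$, so $\mathcal{M}$ is dense in $H^1(X_0\setminus B(1))$ and $\tilde{C}_H(X_0)=C_H(X_0)=\frac{d(N-1)-2}{2}=N-2$, after which Theorem \ref{thm: Virtual level and Hardy constant} applies directly. Your write-up simply makes these steps explicit.
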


\begin{cor}\label{cor: d=1, N>3}
If $d=1$ and $N\ge 4$, each of the hyperplanes $\{x_i=x_j\}$ divides the space $X_0$ into two half-spaces. Taking one of these hyperplanes and using that the Hardy constant for the half-space is given by $\frac{N-1}{2}$ \cite[Proposition 4.1]{nazarov} we get $\tilde{C}_H(X_0)\ge \frac{N-1}{2}>1$. Hence, Theorem \ref{thm: Virtual level and Hardy constant} can be applied. This implies that zero is a simple eigenvalue of $H$ and the corresponding eigenfunction $\varphi_0$ satisfies
\begin{equation}
	(1+|x|_m)^{\alpha-1}\varphi_0 \in L^2(X_0) \qquad \text{for any } \alpha<\frac{N-1}{2}.
\end{equation}
\end{cor}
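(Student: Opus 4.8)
The strategy is to check the only hypothesis of Theorem~\ref{thm: Virtual level and Hardy constant} that remains open in this case, namely the lower bound $\tilde{C}_H(X_0)>1$, and then to read off the conclusion from that theorem. Since the particles are one-dimensional, the relative configuration space has dimension $\dim X_0=d(N-1)=N-1$, and after a linear change of variables turning the mass metric $\langle\cdot,\cdot\rangle_m$ into the standard one, $X_0$ is isometric to the Euclidean space $\R^{N-1}$; under this isometry $|x|_m$ becomes the Euclidean distance to the origin and $\nabla_0$ the Euclidean gradient, so the Hardy-type quotient in \eqref{eq: definition tilde C} is left unchanged. The whole task therefore reduces to establishing $\tilde{C}_H(X_0)\ge\frac{N-1}{2}$, after which $N\ge 4$ forces $\frac{N-1}{2}\ge\frac{3}{2}>1$.

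To bound $\tilde{C}_H(X_0)$ from below I would fix an arbitrary $0\neq\psi\in\mathcal{M}$ and use only that $\psi$ vanishes on a single coincidence set, say $\Pi=\{x_1=x_2\}$. In $X_0$ this is a linear hyperplane through the origin, so it cuts $X_0$ into two open half-spaces $H_+$ and $H_-$ on whose common boundary $\Pi$ the function $\psi$ vanishes. Since $\psi$ is $C^1$, compactly supported in $X_0\setminus B(1)$, and has vanishing trace on $\partial H_\pm=\Pi$, its restriction to each $H_\pm$ belongs to $H^1_0(H_\pm)$ and is supported away from the vertex of the cone. Hence the half-space Hardy inequality with Dirichlet data on the bounding hyperplane, \cite[Proposition~4.1]{nazarov}, applies on each piece with the sharp constant $\frac{N-1}{2}$ (the spherical section of a half-space in $\R^{N-1}$ is a hemisphere, whose first Dirichlet eigenvalue $N-2$ yields $(\frac{N-3}{2})^2+(N-2)=(\frac{N-1}{2})^2$), giving
\[
\int_{H_\pm}|\nabla_0\psi|^2\,\d x\;\ge\;\Bigl(\tfrac{N-1}{2}\Bigr)^2\int_{H_\pm}\frac{|\psi|^2}{|x|_m^2}\,\d x .
\]
Adding the two inequalities, and noting that $H_+\cup H_-$ exhausts $X_0$ up to the null set $\Pi$, produces $\Vert\nabla_0\psi\Vert^2\ge(\frac{N-1}{2})^2\,\Vert|x|_m^{-1}\psi\Vert^2$; taking the infimum over $\mathcal{M}$ gives $\tilde{C}_H(X_0)\ge\frac{N-1}{2}$.

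With $\tilde{C}_H(X_0)\ge\frac{N-1}{2}>1$ in hand, Theorem~\ref{thm: Virtual level and Hardy constant} applies directly: zero is a simple eigenvalue of $H$, and the corresponding eigenfunction $\varphi_0$ satisfies $(1+|x|_m)^{\alpha-1}\varphi_0\in L^2(X_0)$ for every $0\le\alpha<\tilde{C}_H(X_0)$, hence in particular for all $\alpha<\frac{N-1}{2}$, which is the claimed statement. The decisive point is thus not the final invocation of Theorem~\ref{thm: Virtual level and Hardy constant}, which is automatic, but the justification of the half-space step: one must confirm that the isometry sending the mass metric to Euclidean form carries $\Pi$ to a hyperplane through the origin, so that its spherical section is genuinely a hemisphere and the constant is exactly $\frac{N-1}{2}$, and that a function in $\mathcal{M}$ really does restrict to an admissible test function in $H^1_0(H_\pm)$ so that \cite{nazarov} can be invoked.
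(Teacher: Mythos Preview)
Your proof is correct and follows exactly the approach the paper intends: the corollary is stated essentially as its own proof, invoking the half-space Hardy constant $\frac{N-1}{2}$ from \cite[Proposition~4.1]{nazarov} for one coincidence hyperplane and then applying Theorem~\ref{thm: Virtual level and Hardy constant}. Your write-up simply supplies the details the paper leaves implicit --- the reduction to the Euclidean metric, the hemisphere eigenvalue computation $\left(\frac{N-3}{2}\right)^2+(N-2)=\left(\frac{N-1}{2}\right)^2$, and the observation that elements of $\mathcal{M}$ restrict to admissible $H^1_0$ test functions on each half-space --- all of which are indeed routine and consistent with how the paper later uses the same Nazarov result in Lemma~\ref{lem: geometry of R0} and the remark following Corollary~\ref{cor: d=1, N>3}.
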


\begin{rem}
We can significantly improve the estimate from below for the constant $\tilde{C}_H(X_0)$ given in Corollary \ref{cor: d=1, N>3} by taking into account that the traces of functions in $\mathcal{M}$ are zero not only on one of the hyperplanes $\{x_i=x_j\}$, but on all of them. For example, if we have a system of $N=4$ identical particles, then there are six hyperplanes $\{x_i=x_j\}$ which cut the space $X_0$ into congruent sectors $S_i$. One can show that the hyperplanes are the nodal set of a homogeneous harmonic polynomial of degree six. Its restriction to the unit sphere is a spherical harmonic of degree six and an eigenfunction corresponding to the first eigenvalue of the Dirichlet-Laplacian on $S_i\cap \mathbb{S}^2$. This, together with \cite[Proposition 4.1]{nazarov} implies that in this case the constant $\tilde{C}_H(X_0)$ is given by $\tilde{C}_H(X_0)=\left(\frac{1}{4}+42\right)^\frac{1}{2}=\frac{13}{2}.$
\end{rem}
Note that the constant $\tilde{C}_H(X_0)$, which gives a lower bound on the decay rate of the eigenfunction $\varphi_0$, does not depend on the potentials. However, for one-dimensional particles it does depend on the ratios of the masses of the particles. In particular if $d=1, \, N=3$ we get the following 
\begin{thm}\label{thm d=1 N=3}
For a system of three one-dimensional particles with masses $m_1,\, m_2,\, m_3>0$ let
\begin{equation}
	 \theta_i=\arccos\left(\frac{\sqrt{m_j m_k}}{\sqrt{m_i+m_j}\sqrt{m_i+m_k}}\right).
\end{equation}
Then we have
\begin{equation}
\tilde{C}_H(X_0)=\frac{\pi}{\theta_0},
 \quad \text{where} \quad \theta_0 = \max\{\theta_i, \ i=1,2,3\}.
\end{equation}
\end{thm}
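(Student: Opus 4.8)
The plan is to realize $X_0$ as a two-dimensional plane, to describe the collision set inside it as a fan of lines through the origin, and to reduce the computation of $\tilde{C}_H(X_0)$ to a one-dimensional Dirichlet eigenvalue problem on the arcs these lines cut out on the unit circle (in the mass metric). Since $d=1$ and $N=3$, the space $X_0=\{x\in\R^3:\ m_1x_1+m_2x_2+m_3x_3=0\}$ with the inner product $\langle\cdot,\cdot\rangle_m$ is a genuine Euclidean plane, and each collision set $\{x_i=x_j\}\cap X_0$ is a line through the origin. The three lines $L_{12},L_{13},L_{23}$ thus split $X_0$ into six open sectors, which are precisely the regions determined by the six orderings of $x_1,x_2,x_3$; crossing a single line $L_{ij}$ corresponds to transposing the coordinates $x_i,x_j$. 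The antipodal map $x\mapsto-x$ reverses all orderings, so opposite sectors are congruent and the six opening angles take three values, each occurring twice; in particular these three values sum to $\pi$.

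First I would compute the sector angles explicitly. A direction vector of $L_{12}$ is $u_{12}=(m_3,m_3,-(m_1+m_2))$, and cyclically $u_{13}=(m_2,-(m_1+m_3),m_2)$, $u_{23}=(-(m_2+m_3),m_1,m_1)$. Writing $M=m_1+m_2+m_3$, a direct calculation gives $|u_{12}|_m^2=m_3(m_1+m_2)M$ and $\langle u_{12},u_{13}\rangle_m=-m_2m_3M$, so that
\[
\frac{\langle u_{12},u_{13}\rangle_m}{|u_{12}|_m\,|u_{13}|_m}=-\frac{\sqrt{m_2m_3}}{\sqrt{(m_1+m_2)(m_1+m_3)}}=-\cos\theta_1 ,
\]
and cyclically for the other two pairs. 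After checking which of the two half-lines of each $L_{ij}$ actually bounds a given ordering region, one finds that the three distinct opening angles are exactly $\theta_1,\theta_2,\theta_3$, consistent with $\theta_1+\theta_2+\theta_3=\pi$.

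Next I would reduce the variational problem to a single sector. Every admissible $\psi$ in \eqref{eq: definition tilde C} vanishes on all of $L_{12}\cup L_{13}\cup L_{23}$, hence on the boundary of every sector $S$, and both $\Vert\nabla_0\psi\Vert^2$ and $\Vert|x|_m^{-1}\psi\Vert^2$ split as sums of the corresponding integrals over the disjoint sectors. Using polar coordinates $x=r\omega$ adapted to $\langle\cdot,\cdot\rangle_m$, a sector of opening angle $\theta$ corresponds to an angular interval of length $\theta$, and dropping the radial kinetic term followed by the Dirichlet--Poincar\'e inequality on $(0,\theta)$, whose sharp constant is the first eigenvalue $(\pi/\theta)^2$ of $-\partial_\phi^2$ with vanishing endpoints, yields
\[
\int_S|\nabla_0\psi|^2\ \ge\ \Big(\frac{\pi}{\theta}\Big)^2\int_S\frac{|\psi|^2}{|x|_m^{2}} .
\]
Summing over all sectors and using $\theta\le\theta_0=\max_i\theta_i$ gives $\Vert\nabla_0\psi\Vert^2\ge(\pi/\theta_0)^2\Vert|x|_m^{-1}\psi\Vert^2$, hence $\tilde{C}_H(X_0)\ge\pi/\theta_0$. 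The matching upper bound follows by choosing test functions concentrated in the widest sector, of the form $f(r)\sin(\pi\phi/\theta_0)$ with $f$ supported in $\{|x|_m>1\}$ and spread over large radii so that the radial contribution is negligible; this drives the Rayleigh quotient to $(\pi/\theta_0)^2$. Combining the two bounds gives $\tilde{C}_H(X_0)=\pi/\theta_0$.

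The main obstacle is the sign bookkeeping in the second step: each pair of collision lines encloses the two supplementary angles $\theta_i$ and $\pi-\theta_i$, and only a careful determination of which half-line of each $L_{ij}$ borders each ordering region decides whether a sector opens by $\theta_i$ or by $\pi-\theta_i$. It is exactly this bookkeeping that pins the answer down to $\pi/\max_i\theta_i$ rather than, say, $\pi/(\pi-\min_i\theta_i)$; the remaining analytic ingredients (the decomposition over sectors and the sharp Hardy inequality on a planar cone) are routine once the geometry is settled.
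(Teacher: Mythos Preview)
Your proposal is correct and follows essentially the same route as the paper: identify $X_0$ as a Euclidean plane in the mass metric, compute the pairwise angles between the collision lines $L_{ij}$ via the inner products of direction vectors (the paper uses $(1,1,-(m_1+m_2)/m_3)$ etc., a scalar multiple of yours), and then invoke the sharp Hardy inequality on a planar sector, whose constant is the square root of the first Dirichlet eigenvalue $(\pi/\theta)^2$ on the corresponding arc. The only cosmetic difference is that the paper cites \cite[Proposition~4.1]{nazarov} for the Hardy constant on a cone rather than deriving it directly, and leaves the matching upper bound implicit in the sharpness of that constant, whereas you spell out the test-function construction.
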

\begin{rem}
\begin{enumerate}
\item[\textbf{(i)}] It is easy to see that for $d=1, \, N=3$ we have $\frac{\pi}{3}\le \theta_0\le\frac{\pi}{2}.$ The constant $\tilde{C}_H(X_0)$ takes its maximal value  $\tilde{C}_H(X_0)=3$ for $\theta_0=\frac{\pi}{3}$, which corresponds to the case $m_1=m_2=m_3$. On the other hand, if one of the masses $m_i$ tends to infinity, then $\theta_0\rightarrow \frac{\pi}{2}$ and therefore $\tilde{C}_H(X_0)\rightarrow 2$. 
\item[\textbf{(ii)}] Corollary \ref{cor: d=2, N>3}, Corollary \ref{cor: d=1, N>3} and Theorem \ref{thm d=1 N=3} show that for all multi-particle systems consisting of one- or two-dimensional particles, except for the case $d=2$ and $N=3$, virtual levels correspond to eigenvalues. This fact will be used in Section \ref{Section No Efimov multi} to prove the absence of the Efimov effect in multi-particle systems in dimension one and two.
\item[\textbf{(iii)}] Note that if the dimension of the particles is $d\ge 3$, the eigenfunction $\varphi_0$ corresponding to a virtual level decays with the same rate as the fundamental solution of the Laplace operator \cite{second}, \cite{mini}. Theorem \ref{thm: Virtual level and Hardy constant} shows that for one-dimensional particles the decay rate is higher. 
\end{enumerate}
\end{rem}
 For $d=2, \ N=3$ we do not expect that solutions of the Schr\"odinger equation corresponding to a virtual level are eigenfunctions. We will discuss this case in Section \ref{section: d=2 N=3}.

\subsection{Proof of Theorem \ref{thm: Virtual level and Hardy constant} --  Auxiliary results}
To prove Theorem \ref{thm: Virtual level and Hardy constant} we need several auxiliary results.  The first one is a generalization of Theorem \ref{thm: abstraktes Theorem} to potentials which do not necessarily decay at infinity.
\begin{thm}\label{thm: abstraktes thm paper2}
Let $h=-\Delta+V$ acting on $L^2(\mathbb{R}^k), \ k\in \mathbb{N},$ where the potential $V$ satisfies \eqref{cond: relatively form bounded}. Suppose that $h$ has a virtual level at zero and that there exist constants $\alpha_0>1$, $b>0$ and $\gamma_0\in(0,1)$, such that for any function $\psi\in H^1(\mathbb{R}^k)$ with  $\supp (\psi) \subset \{x\in \mathbb{R}^k:\ |x|\geq b\}$ we have
\begin{equation}\label{eq: subtract alpha0}
\langle h\psi,\psi \rangle -\gamma_0 \Vert \nabla \psi\Vert^2-\alpha_0^2 \langle |x|^{-2}\psi,\psi \rangle \geq 0.
\end{equation}
Then zero is a simple eigenvalue of $h$ and the corresponding eigenfunction $\varphi_0$ satisfies
\begin{align}\label{eq: estimate weigthed norm abziehtheorem}
	\begin{split}
	& (1+|x|)^{\alpha-1}\varphi_0 \in L^2(\mathbb{R}^k) \qquad\qquad \qquad \qquad\ \ \,  \text{if } k\neq 2,\\	
	 \text{and} \quad &(1+|x|)^{\alpha-1}(1+|\ln(|x|)|)^{-1}\varphi_0 \in L^2(\mathbb{R}^k) \qquad \text{if } k=2
	\end{split}
\end{align}
for any $\alpha<\alpha_0$. Moreover, there exists a constant $\delta_0>0$, such that for any function $\psi \in H^1(\mathbb{R}^k)$ with $\langle \nabla\psi,\nabla \varphi_0\rangle =0$ 
\begin{equation}\label{eq: orthogonality subtract abziehtheorem}
\langle h \psi,\psi \rangle \geq \delta_0 \Vert \nabla \psi \Vert^2.
\end{equation}
\end{thm}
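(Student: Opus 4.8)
The plan is to repeat the scheme of the proof of Theorem \ref{thm: abstraktes Theorem}, replacing the decay assumption \eqref{cond: decay at infinity} everywhere by the quadratic-form inequality \eqref{eq: subtract alpha0}; the latter is exactly the ingredient that in Theorem \ref{thm: abstraktes Theorem} was \emph{derived} from \eqref{cond: decay at infinity} and Hardy's inequality (compare \eqref{1: gamma}), so here we simply take it as a hypothesis. Since $h$ has a virtual level at zero, for every large $n$ I would choose a negative eigenvalue $E_n<0$ of $h_{n^{-1}}=-(1-n^{-1})\Delta+V$ with eigenfunction $\psi_n$, normalise $\Vert\psi_n\Vert_{\tilde H^1}=1$, and pass to a subsequence converging weakly in $\tilde H^1(\mathbb{R}^k)$ and strongly in $L^2_{\mathrm{loc}}(\mathbb{R}^k)$ to a limit $\varphi_0$.

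The technical core is a uniform weighted bound for the $\psi_n$, i.e.\ the analogue of Lemma \ref{lem: estimate phi 0 for N=1 n=1}. Fix $\alpha\in(1,\alpha_0)$ and run Agmon's method with the weight $G_\varepsilon=\frac{|x|^{\alpha}}{1+\varepsilon|x|^{\alpha}}\chi_R$ for $k\neq2$ (and with a logarithmically modified weight for $k=2$, accommodating the degeneracy of the two-dimensional Hardy inequality and producing the weaker estimate in \eqref{eq: estimate weigthed norm abziehtheorem}). Multiplying \eqref{1: multiply and int by parts} by $G_\varepsilon^2\overline{\psi_n}$ and integrating by parts gives, as in \eqref{eq: agmon with psin}--\eqref{1: 39},
\begin{equation*}
(1-n^{-1})\Vert\nabla(G_\varepsilon\psi_n)\Vert^2+\langle VG_\varepsilon\psi_n,G_\varepsilon\psi_n\rangle-\alpha^2\langle|x|^{-2}G_\varepsilon\psi_n,G_\varepsilon\psi_n\rangle\le C,
\end{equation*}
where $|\nabla G_\varepsilon|\le\alpha|x|^{-1}|G_\varepsilon|$ on $\{|x|>2R\}$ is used and the transition-region error on $\{R\le|x|\le2R\}$ is absorbed into $C$ via $\Vert\psi_n\Vert_{\tilde H^1}=1$ and a local Hardy inequality. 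Now \eqref{eq: subtract alpha0}, applied to $G_\varepsilon\psi_n$ (which is supported in $\{|x|\ge R\ge b\}$), plays the role of \eqref{1: gamma}: bounding $\langle VG_\varepsilon\psi_n,G_\varepsilon\psi_n\rangle$ from below and inserting this into the line above eliminates the indefinite potential term and yields, for $n$ large,
\begin{equation*}
\gamma_0\Vert\nabla(G_\varepsilon\psi_n)\Vert^2+(\alpha_0^2-\alpha^2)\langle|x|^{-2}G_\varepsilon\psi_n,G_\varepsilon\psi_n\rangle\le C.
\end{equation*}
As $\alpha<\alpha_0$ both terms are nonnegative, and letting $\varepsilon\to0$ gives $\Vert\nabla(|x|^{\alpha}\psi_n)\Vert\le C$ and $\Vert(1+|x|)^{\alpha-1}\psi_n\Vert\le C$ uniformly in $n$. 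The decisive gain over Theorem \ref{thm: abstraktes Theorem} is that $\alpha_0>1$ allows the choice $\alpha>1$, whence $(1+|x|)^{\alpha-1}\ge1$, so $\Vert\psi_n\Vert\le C$ uniformly and $\int_{\{|x|>R\}}|\psi_n|^2\to0$ as $R\to\infty$ uniformly in $n$.

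These uniform bounds then transfer to $\varphi_0$. Uniform tail smallness together with strong $L^2_{\mathrm{loc}}$ convergence upgrades $\psi_n\to\varphi_0$ to strong convergence in $L^2(\mathbb{R}^k)$, and \eqref{cond: relatively form bounded} gives $\langle V\psi_n,\psi_n\rangle\to\langle V\varphi_0,\varphi_0\rangle$ with no decay hypothesis on $V$ -- this is precisely where \eqref{eq: subtract alpha0} replaces \eqref{cond: decay at infinity} in Lemma \ref{lem: existence of solution d=1 N=1}. The argument of that lemma then shows $\varphi_0\neq0$ and that $\varphi_0$ solves $-\Delta\varphi_0+V\varphi_0=0$ weakly; passing the weighted bounds to the limit yields \eqref{eq: estimate weigthed norm abziehtheorem}, and since $\alpha>1$ is admissible we get $\varphi_0\in L^2(\mathbb{R}^k)\cap\tilde H^1(\mathbb{R}^k)$, so $\varphi_0$ is a genuine eigenfunction and zero is an eigenvalue of $h$.

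Finally, since $h\ge0$ the value zero is the bottom of $\mathcal{S}(h)$, and being an eigenvalue it is simple by the non-degeneracy of the ground state of $-\Delta+V$ (the semigroup $e^{-th}$ is positivity improving); hence the space of $L^2$-solutions of $-\Delta\varphi+V\varphi=0$ equals $\myspan\{\varphi_0\}$. The estimate \eqref{eq: orthogonality subtract abziehtheorem} then follows by adapting Lemma 2.10 in \cite{second}: were it false, there would be $\psi_m$ with $\langle\nabla\psi_m,\nabla\varphi_0\rangle=0$, $\Vert\nabla\psi_m\Vert=1$ and $\langle h\psi_m,\psi_m\rangle\to0$; inequality \eqref{eq: subtract alpha0} confines the energy of $\psi_m$ to a neighbourhood of the origin, so a weak limit is a zero-energy solution orthogonal to $\nabla\varphi_0$, hence a multiple of $\varphi_0$ with $\langle\nabla\cdot,\nabla\varphi_0\rangle=0$, i.e.\ zero, and the resulting strong convergence of the gradients contradicts $\Vert\nabla\psi_m\Vert=1$. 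I expect the principal difficulties to lie in this last localisation step -- excluding escape of the $L^2$-mass of the minimising sequence to infinity using only the $|x|^{-2}$-weighted control in \eqref{eq: subtract alpha0} -- and in the treatment of $k=2$, where the degeneracy of the two-dimensional Hardy inequality forces the logarithmic modification of the Agmon weight.
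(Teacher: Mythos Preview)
Your proposal is correct and follows essentially the same route as the paper: the paper likewise takes eigenfunctions $\psi_n$ of $h_{n^{-1}}$ normalised in $\tilde H^1$, runs the Agmon argument of Lemma~\ref{lem: estimate phi 0 for N=1 n=1} with \eqref{eq: subtract alpha0} playing the role that decay~$+$~Hardy played there (this is Lemma~\ref{lem: Lemma uniform bound of psi n}), uses $\alpha_0>1$ to upgrade to a uniform $L^2$ bound and hence strong $L^2$ convergence (Lemma~\ref{lem: phi 0 eigenfunction}), and then defers simplicity and \eqref{eq: orthogonality subtract abziehtheorem} to a straightforward modification of the proof of (2.9) in \cite{second}. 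Your appeal to the positivity-improving property of $e^{-th}$ for simplicity is a clean alternative; the one small adjustment is that in your final contradiction argument for \eqref{eq: orthogonality subtract abziehtheorem} the minimising sequence should, for $k=1,2$, be normalised by $\Vert\psi_m\Vert_{\tilde H^1}=1$ rather than $\Vert\nabla\psi_m\Vert=1$ (the paper notes exactly this), since otherwise the local $L^2$ control needed to bound the IMS localisation error is not available.
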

\begin{rem}
\begin{enumerate}
\item[\textbf{(i)}] By Lemma \ref{lem: Abstraktes Lemma endlich viele Eigenwerte} condition \eqref{eq: subtract alpha0} implies $\inf\mathcal{S}_{\mathrm{ess}}( h_\varepsilon)=0$ for sufficiently small $\varepsilon>0$. 
\item[\textbf{(ii)}]
Theorem \ref{thm: abstraktes thm paper2} is a generalization of Theorem 2.1 in \cite{second} to dimensions $k=1$ and $k=2$. Therefore, we only need to prove the theorem for these dimensions.
\item[\textbf{(iii)}] Note that Theorem \ref{thm: abstraktes thm paper2} does not require that the potential $V$ decays in all directions, which is the case if we consider multi-particle systems where $V$ is the sum of the pair-potentials.
\item[\textbf{(iv)}] For dimension $k=1$ or $k=2$ Theorem \ref{thm: abstraktes thm paper2} considers the case which is in some sense complementary to the one studied in Theorem \ref{thm: abstraktes Theorem}. In Theorem \ref{thm: abstraktes Theorem} we assumed that the potential $V$ decays fast at infinity. In Theorem \ref{thm: abstraktes thm paper2} we do not require any decay of the potential. Instead of this we need inequality \eqref{eq: subtract alpha0} for functions $\psi$ which are supported far away from the origin. This condition can not be fulfilled for $k=1$ and $k=2$ if $V$ decays fast at infinity. Moreover, under the conditions of Theorem \ref{thm: abstraktes thm paper2} virtual levels correspond to eigenvalues of $h$. In contrast to that, under the conditions of Theorem \ref{thm: abstraktes Theorem} they correspond to resonances. 
\end{enumerate}
\end{rem}

\subsection*{Proof of Theorem \ref{thm: abstraktes thm paper2} for dimensions $\boldsymbol{k=1}$ and $\boldsymbol{k=2}$}
Since $\inf\mathcal{S}(h_\varepsilon)<0$ for every $\varepsilon>0$, we find a sequence of eigenfunctions $\psi_n \in H^1(\mathbb{R}^k)$, corresponding to eigenvalues $E_n<0$ of the operator $h_{n^{-1}}$, i.e.
\begin{equation}\label{1: eigenfunctions}
-\left( 1-n^{-1}\right)\Delta \psi_n+V\psi_n=E_n\psi_n.
\end{equation}
We normalize the functions $\psi_n$ by the condition $\Vert \psi_n\Vert_{\tilde{H}^1}=1$. In the first step we show a uniform bound for the $L^2(\mathbb{R}^k)$ norm of the functions $\psi_n$. 
\begin{lem}\label{lem: Lemma uniform bound of psi n}
Assume that $h$ has a virtual level at zero, where $V$ satisfies \eqref{cond: relatively form bounded} and suppose that \eqref{eq: subtract alpha0} holds for some $\alpha_0>1$. Then there exists a constant $C>0$, such that for any eigenfunction $\psi_n\in H^1(\mathbb{R}^k)$ corresponding to a negative eigenvalue of the operator $h_{n^{-1}}$, normalized by $\Vert \psi_n \Vert_{\tilde{H}^1}=1$, we have 
\begin{equation}
\Vert \nabla (|x|^{\alpha_0}\psi_n) \Vert \leq C 
\end{equation} 
and
\begin{align}\label{eq: Uniform estimate psin}
	\begin{split}
	& \Vert (1+|x|)^{\alpha_0-1}\psi_n \Vert \le C \qquad\qquad \qquad \qquad \, \ \ \text{if } k=1\\		
	 \text{and} \qquad &\Vert(1+|x|)^{\alpha_0-1}(1+|\ln(|x|)|)^{-1}\psi_n\Vert \le C \qquad \text{if } k=2.
	\end{split}
\end{align}
\end{lem}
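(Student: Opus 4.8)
The plan is to run the Agmon-type scheme already used in the proof of Lemma~\ref{lem: estimate phi 0 for N=1 n=1} (and in Lemma~2.4 of \cite{second}), the only new ingredient being that the explicit one- and two-dimensional Hardy inequalities are now replaced by the hypothesis \eqref{eq: subtract alpha0}, which plays precisely the same structural role. Fix $R\ge b$ and, for $\varepsilon>0$, introduce the truncated Agmon weight
\[
G_\varepsilon(x)=\frac{|x|^{\alpha_0}}{1+\varepsilon|x|^{\alpha_0}}\,\chi_R(x),
\]
where $\chi_R$ is the $C^1$-cut-off from \eqref{11chi} which vanishes on $B(R)$ and equals $1$ outside $B(2R)$. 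The regularizing denominator keeps $G_\varepsilon$ Lipschitz and bounded, so that $G_\varepsilon^2\overline{\psi_n}$ is an admissible test function. Multiplying the eigenvalue equation \eqref{1: eigenfunctions} by $G_\varepsilon^2\overline{\psi_n}$, integrating by parts exactly as in \eqref{eq: agmon Re}--\eqref{eq: agmon with psin}, and using $E_n<0$, I would obtain
\[
(1-n^{-1})\Vert\nabla(G_\varepsilon\psi_n)\Vert^2+\langle VG_\varepsilon\psi_n,G_\varepsilon\psi_n\rangle < (1-n^{-1})\Vert\psi_n\nabla G_\varepsilon\Vert^2 .
\]

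The decisive step is the absorption of the localization error $\Vert\psi_n\nabla G_\varepsilon\Vert^2$. Since $G_\varepsilon\psi_n$ is supported in $\{|x|\ge R\}\subset\{|x|\ge b\}$, the left-hand side equals $\langle h(G_\varepsilon\psi_n),G_\varepsilon\psi_n\rangle - n^{-1}\Vert\nabla(G_\varepsilon\psi_n)\Vert^2$, and applying \eqref{eq: subtract alpha0} to $G_\varepsilon\psi_n$ turns the displayed inequality into
\[
\big(\gamma_0-n^{-1}\big)\Vert\nabla(G_\varepsilon\psi_n)\Vert^2+\alpha_0^2\langle|x|^{-2}G_\varepsilon\psi_n,G_\varepsilon\psi_n\rangle < (1-n^{-1})\Vert\psi_n\nabla G_\varepsilon\Vert^2 .
\]
On $\{|x|>2R\}$ the cut-off is constant and the direct computation \eqref{eq: estimate nabla G large arguments} gives $|\nabla G_\varepsilon|\le\alpha_0|x|^{-1}G_\varepsilon$, so this part of the error is bounded by $\alpha_0^2\langle|x|^{-2}G_\varepsilon\psi_n,G_\varepsilon\psi_n\rangle$ and is swallowed by the corresponding term on the left (the $(1-n^{-1})<1$ factor only helps). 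On the annulus $\{R\le|x|\le 2R\}$ the quantity $|\nabla G_\varepsilon|$ is bounded uniformly in $\varepsilon$, so the remaining error is controlled by $\int_{\{R\le|x|\le2R\}}|\psi_n|^2$; here the normalization $\Vert\psi_n\Vert_{\tilde H^1}=1$ together with the weighted inequalities \eqref{eq: weidl inequality in 1} for $k=1$ and \eqref{eq: weidl inequality in 2} for $k=2$ yields a bound $C_2$ independent of $n$ and $\varepsilon$. Choosing $n>2\gamma_0^{-1}$ then gives $\tfrac{\gamma_0}{2}\Vert\nabla(G_\varepsilon\psi_n)\Vert^2\le C_2$, and letting $\varepsilon\to0$ as in the one-body case produces $\Vert\nabla(|x|^{\alpha_0}\chi_R\psi_n)\Vert\le C$. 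On $B(2R)$, where $|x|^{\alpha_0-1}\le(2R)^{\alpha_0-1}$ is bounded because $\alpha_0>1$, the $H^1$-bound and a local $L^2$-bound complete the global estimate $\Vert\nabla(|x|^{\alpha_0}\psi_n)\Vert\le C$.

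The weighted $L^2$-estimates \eqref{eq: Uniform estimate psin} then follow by applying the Hardy-type inequalities of the Appendix to $u=|x|^{\alpha_0}\psi_n$: one first checks $\Vert u\Vert_{\tilde H^1}\le C$, since the gradient bound was just established and $\int_{\{|x|\le1\}}|u|^2\le\int_{\{|x|\le1\}}|\psi_n|^2\le 1$; then \eqref{eq: weidl inequality in 1} gives $\Vert(1+|x|)^{-1}u\Vert\le C$, i.e.\ $\Vert(1+|x|)^{\alpha_0-1}\psi_n\Vert\le C$ for $k=1$, whereas the logarithmically weighted \eqref{eq: weidl inequality in 2} produces the factor $(1+|\ln(|x|)|)^{-1}$ for $k=2$. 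I expect the main obstacle to be the bookkeeping in the absorption step: one must match the constant $\alpha_0^2$ arising from $|\nabla G_\varepsilon|$ on $\{|x|>2R\}$ \emph{exactly} with the $\alpha_0^2\langle|x|^{-2}\,\cdot\,\rangle$ term supplied by \eqref{eq: subtract alpha0}, and keep the factors $(1-n^{-1})$ and $(\gamma_0-n^{-1})$ under control uniformly in $n$, which is precisely what forces the restriction to large $n$ and ultimately delivers an $n$-independent constant.
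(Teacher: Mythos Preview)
Your proposal is correct and matches the paper's approach: the paper explicitly omits the proof, stating it is a straightforward generalization of Lemma~\ref{lem: estimate phi 0 for N=1 n=1} and of Lemma~2.4 in \cite{second}, and your Agmon-weight argument with \eqref{eq: subtract alpha0} replacing the explicit Hardy inequality in \eqref{1: gamma} is precisely that generalization. The bookkeeping you flag (matching the $\alpha_0^2|x|^{-2}$ terms, handling the annulus via \eqref{eq: weidl inequality in 1}/\eqref{eq: weidl inequality in 2} and the $\tilde H^1$-normalization, then passing $\varepsilon\to0$ and extending inside $B(2R)$) is exactly right.
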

The proof of Lemma \ref{lem: Lemma uniform bound of psi n} is a straightforward generalization of the proof of Lemma \ref{lem: estimate phi 0 for N=1 n=1} in Section \ref{Section Shortrange} and Lemma 2.4 in \cite{second} and we omit it here.
\\
Since we have normalized the sequence $\left(\psi_n\right)_n$ as $\Vert \psi_n\Vert_{\tilde{H}^1}=1$, there exists a subsequence, also denoted by $(\psi_n)_{n\in \mathbb{N}}$, which converges weakly in $\tilde{H}^1(\mathbb{R}^k)$ to a function $\varphi_0\in \tilde{H}^1(\mathbb{R}^k)$.\\
Now we show that $\varphi_0$ is an eigenfunction of $h$ corresponding to the eigenvalue zero. This is done in the following
\begin{lem}\label{lem: phi 0 eigenfunction}
Assume that $h$ has a virtual level at zero, where $V$ satisfies \eqref{cond: relatively form bounded} and suppose that \eqref{eq: subtract alpha0} holds for some $\alpha_0>1$. Then the function $\varphi_0$ given above is an eigenfunction of the operator $h$ corresponding to the eigenvalue zero, satisfying $\Vert \varphi_0\Vert_{\tilde{H}^1}=1$. 
\end{lem}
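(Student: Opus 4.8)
The plan is to follow the skeleton of the proof of Lemma \ref{lem: existence of solution d=1 N=1}, replacing every place where the decay of $V$ was used by the decay of the functions $\psi_n$ themselves, which is now available through the uniform weighted bounds of Lemma \ref{lem: Lemma uniform bound of psi n}. Since $\alpha_0>1$, the weight $(1+|x|)^{\alpha_0-1}$ (respectively $(1+|x|)^{\alpha_0-1}(1+|\ln(|x|)|)^{-1}$ for $k=2$) tends to infinity as $|x|\to\infty$, so \eqref{eq: Uniform estimate psin} gives both a uniform $L^2(\mathbb{R}^k)$ bound on $\psi_n$ and, more importantly, uniform tightness: $\int_{\{|x|>R\}}|\psi_n|^2\,\mathrm{d}x\to 0$ as $R\to\infty$, uniformly in $n$. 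This tightness is exactly what will turn $\varphi_0$ into a genuine eigenfunction rather than a resonance as in Theorem \ref{thm: abstraktes Theorem}.

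The central step is to establish $\langle V\psi_n,\psi_n\rangle\to\langle V\varphi_0,\varphi_0\rangle$. I would split the integral at a large radius $R$. On $\{|x|\le R\}$ I would argue exactly as in \eqref{eq: decomposition Vpsin minus Vphi0}--\eqref{eq: local convergence V psin to Vphi0}: the strong $L^2_{\mathrm{loc}}$ convergence $\psi_n\to\varphi_0$ (from the weak $\tilde{H}^1$ convergence, via Proposition \ref{prop: Homogenous Sobolev}) together with the relative form bound \eqref{cond: relatively form bounded} yields convergence of $\int_{\{|x|\le R\}}V|\psi_n|^2\,\mathrm{d}x$. The new point is the tail $\{|x|>R\}$: since $V$ need not decay, I would bound $\int_{\{|x|>R\}}|V||\psi_n|^2\,\mathrm{d}x$ by applying \eqref{cond: relatively form bounded} to $\eta_R\psi_n$, with $\eta_R$ a smooth cutoff supported in $\{|x|>R\}$. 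The $L^2$ contribution is controlled by the tightness above, while the gradient contribution is handled by observing that \eqref{eq: Uniform estimate psin} together with $\Vert\nabla(|x|^{\alpha_0}\psi_n)\Vert\le C$ gives $\Vert\,|x|^{\alpha_0}\nabla\psi_n\Vert\le C$, whence $\int_{\{|x|>R\}}|\nabla\psi_n|^2\,\mathrm{d}x\le CR^{-2\alpha_0}$; the commutator term $\psi_n\nabla\eta_R$ is small for the same reason. Thus the tail is uniformly small in $n$, and an $\varepsilon/3$-argument yields the convergence and the finiteness of $\langle V\varphi_0,\varphi_0\rangle$.

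With this convergence in hand, the conclusion transfers from Lemma \ref{lem: existence of solution d=1 N=1}: from the eigenvalue equation \eqref{1: eigenfunctions} one has $\langle V\psi_n,\psi_n\rangle\le-(1-n^{-1})\Vert\nabla\psi_n\Vert^2=(1-n^{-1})(-1+\int_{\{|x|\le 1\}}|\psi_n|^2\,\mathrm{d}x)$; passing to the limit and using $h\ge 0$ together with the weak lower semicontinuity of $\Vert\cdot\Vert_{\tilde{H}^1}$ forces $\Vert\varphi_0\Vert_{\tilde{H}^1}=1$ (so $\varphi_0\neq 0$) and $\Vert\nabla\varphi_0\Vert^2+\langle V\varphi_0,\varphi_0\rangle=0$, after which a density argument gives $\langle\nabla\varphi_0,\nabla\psi\rangle+\langle V\varphi_0,\psi\rangle=0$ for all $\psi\in\tilde{H}^1(\mathbb{R}^k)$. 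Finally, Fatou applied to \eqref{eq: Uniform estimate psin} along an a.e.-convergent subsequence shows $\varphi_0\in L^2(\mathbb{R}^k)$; combined with $\nabla\varphi_0\in L^2(\mathbb{R}^k)$ this gives $\varphi_0\in H^1(\mathbb{R}^k)$, so the weak equation says precisely that $\varphi_0\in D(h)$ with $h\varphi_0=0$, i.e.\ $\varphi_0$ is an eigenfunction of $h$ with eigenvalue zero.

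I expect the tail estimate of the potential term to be the \textbf{main obstacle}, since this is the only place where the absence of decay of $V$ genuinely alters the argument of Theorem \ref{thm: abstraktes Theorem}; everything else transfers essentially verbatim. The delicate aspect is ensuring that the gradient contribution in the tail is uniformly small in $n$, which is exactly where the hypothesis $\alpha_0>1$ is indispensable, as it provides honest, integrable decay of both $\psi_n$ and $\nabla\psi_n$ at infinity.
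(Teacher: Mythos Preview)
Your proposal is correct and follows the same overall skeleton as the paper: use the uniform weighted bounds of Lemma~\ref{lem: Lemma uniform bound of psi n} to upgrade weak $\tilde H^1$--convergence to strong $L^2(\mathbb R^k)$--convergence (this is exactly where $\alpha_0>1$ is used), deduce $\langle V\psi_n,\psi_n\rangle\to\langle V\varphi_0,\varphi_0\rangle$, and then repeat the closing argument of Lemma~\ref{lem: existence of solution d=1 N=1} to obtain $\Vert\varphi_0\Vert_{\tilde H^1}=1$ and the equation $h\varphi_0=0$.

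The one place where you work harder than necessary is precisely the step you flag as the ``main obstacle''. The paper does not split the potential term at radius $R$ and does not need the tail bound $\int_{\{|x|>R\}}|\nabla\psi_n|^2\,\mathrm dx\to 0$. Once \emph{global} strong $L^2$--convergence $\psi_n\to\varphi_0$ is in hand (from tightness), the convergence $\langle V\psi_n,\psi_n\rangle\to\langle V\varphi_0,\varphi_0\rangle$ follows directly from \eqref{cond: relatively form bounded} together with the uniform bound $\Vert\nabla\psi_n\Vert\le 1$: writing $u_n=\psi_n-\varphi_0$, one has
\[
\langle |V|u_n,u_n\rangle\le \varepsilon\Vert\nabla u_n\Vert^2+C(\varepsilon)\Vert u_n\Vert^2\le 4\varepsilon+C(\varepsilon)\Vert u_n\Vert^2,
\]
and since $\varepsilon>0$ is arbitrary and $\Vert u_n\Vert\to 0$, this tends to zero; a Cauchy--Schwarz argument then finishes the job. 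So the absence of decay of $V$ is \emph{not} an obstacle here --- relative form boundedness with bound zero together with strong $L^2$--convergence and bounded gradients is already enough. Your more explicit tail argument via $\Vert\,|x|^{\alpha_0}\nabla\psi_n\Vert\le C$ also works (with a harmless logarithmic correction when $k=2$), but it is superfluous.
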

\begin{proof}[Proof of Lemma \ref{lem: phi 0 eigenfunction}]
By Lemma \ref{lem: Lemma uniform bound of psi n} estimate \eqref{eq: Uniform estimate psin} holds for some $\alpha_0>1$. Hence, we get convergence of the subsequence $(\psi_n)_{n\in \mathbb{N}}$ in $L^2(\mathbb{R}^k)$ and the limit satisfies $\left(1+|x|\right)^{\alpha}\varphi_0\in L^2(\mathbb{R}^k)$ for any $\alpha<\alpha_0-1$. In particular, $\varphi_0\in H^1(\mathbb{R}^k)$.
\\Due to the semi-continuity of the norm we have $\Vert \varphi_0\Vert_{\tilde{H}^1}\le 1$. Since $(\psi_n)_{n\in \mathbb{N}}$ converges to $\varphi_0$ in $L^2(\mathbb{R}^k)$ and $V$ satisfies \eqref{cond: relatively form bounded}, we get 
\begin{equation}\label{eq: V psin to Vphi0}
 \langle V\psi_n,\psi_n\rangle \rightarrow \langle V\varphi_0,\varphi_0\rangle, \qquad (n\rightarrow \infty).
\end{equation} 
Now it follows analogously to the proof of Lemma \ref{lem: estimate phi 0 for N=1 n=1} that 
$ \varphi_0$ satisfies
\begin{equation}
	\Vert \nabla \varphi_0\Vert^2+\langle V\varphi_0,\varphi_0\rangle =0
\end{equation}
and $\Vert \varphi_0\Vert_{\tilde{H}^1}=1$. Recall that $\varphi_0\in H^1(\mathbb{R}^k)$ and therefore is an eigenfunction of $h$ corresponding to the eigenvalue zero. This completes the proof of Lemma \ref{lem: phi 0 eigenfunction}.
\end{proof}
It remains to prove \eqref{eq: orthogonality subtract abziehtheorem} and the uniqueness of $\varphi_0$. This is a straightforward modification of the proof of (2.9) in \cite{second}. The only difference is that we normalize the sequence of eigenfunctions by $\Vert \psi_n\Vert_{\tilde{H}^1}=1$ instead of $\Vert \nabla\psi_n\Vert=1.$ 
\subsection*{A new estimate of the localization error}
To prove Theorem \ref{thm: Virtual level and Hardy constant} we will follow the strategy of the proof of Theorem 4.4 in \cite{second}, where so-called geometric methods were used to get lower bounds for the quadratic form of the multi-particle Schr\"odinger operator. These methods include, for example, a separation of regions $K(Z,\kappa)$ corresponding to different partitions $Z$. A crucial element of these methods is a partition of unity of the configuration space, which requires an appropriate estimate of the localization error. Note that in fact the localization error is responsible for the existence or non-existence of the Efimov effect.\\
If the dimension of the particles is  $d\ge 3$, one can use a localization error estimate given in \cite[Lemma 5.1]{Semjon2}. This estimate shows that when we separate a cone $K(Z,\kappa)$, then the localization error can be estimated as $\varepsilon |q(Z)|^{-2}$ with arbitrarily small $\varepsilon>0$. Using Hardy's inequality this term can be controlled by a small part of the kinetic energy. However, if the particles are one- or two-dimensional, the estimate given in \cite{Semjon2} cannot be used, because Hardy's inequality fails in dimension one and two. Therefore, we need a significant improvement of the estimate of the localization error. This is done in the following
\begin{thm}\label{lem: localization with log}
Given $\varepsilon>0$ and $\kappa>0$, for each partition $Z$ with $|Z|\ge 2$ one can find a constant $0<\kappa'<\kappa$ and piecewise continuously differentiable functions $u_{Z}, v_{Z}  :  X_0\rightarrow \mathbb{R}$, such that
\begin{equation}\label{Localization functions}
u_{Z}^2+v_{Z}^2 =1, \qquad u_{Z}(x)=\begin{cases}
    1, & x\in K\left(Z,\kappa'\right),\\
    0, & x\notin K\left(Z,\kappa\right),
\end{cases}
\end{equation}
and
\begin{equation}\label{Localization error I}
    \vert \nabla_0 u_{Z} \vert^2 + \vert \nabla_0 v_{Z} \vert^2 < \varepsilon \left[\vert v_{Z} \vert^2\vert x\vert_m^{-2}+\vert u_{Z} \vert^2\vert q\vert_m^{-2}\ln^{-2}\left(\vert q\vert_m \vert \xi\vert_m^{-1}\right)\right]
\end{equation}
for $x\in  K\left(Z,\kappa',\kappa\right)$. Here $q=q(Z)$ and $\xi=\xi(Z)$.
\end{thm}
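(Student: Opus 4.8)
The plan is to reduce the inequality to a one–dimensional differential inequality in the polar angle of $x$ relative to the subspace $X_c(Z)$, and then to build the cut-off profile so that the logarithmically weighted intra-cluster term on the right-hand side carries almost the entire transition. Writing $q=q(Z)$, $\xi=\xi(Z)$, recall that $X_0=X_0(Z)\oplus X_c(Z)$ is an orthogonal splitting for $\langle\cdot,\cdot\rangle_m$, so that $|x|_m^2=|q|_m^2+|\xi|_m^2$ and $\nabla_0|q|_m$, $\nabla_0|\xi|_m$ are orthogonal unit vectors. I introduce the angle $\vartheta\in[0,\pi/2)$ with $\sin\vartheta=|q|_m|x|_m^{-1}$ and $\cos\vartheta=|\xi|_m|x|_m^{-1}$, so that $\tan\vartheta=|q|_m|\xi|_m^{-1}$; a direct computation then gives $|\nabla_0\vartheta|_m=|x|_m^{-1}$. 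The cone $K(Z,\kappa')$ and the complement $X_0\setminus K(Z,\kappa)$ are the regions $\{\vartheta\le\vartheta_{\kappa'}\}$ and $\{\vartheta>\vartheta_\kappa\}$ with $\vartheta_\kappa=\arctan\kappa$.

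First I would make the ansatz $u_{Z}=\cos f(\vartheta)$, $v_{Z}=\sin f(\vartheta)$ for a piecewise $C^1$ profile $f:[0,\pi/2)\to[0,\pi/2]$ with $f\equiv 0$ on $[0,\vartheta_{\kappa'}]$ and $f\equiv \pi/2$ near $\vartheta_\kappa$. This enforces $u_Z^2+v_Z^2=1$ and the boundary conditions automatically, and it collapses the localization error to $|\nabla_0 u_Z|^2+|\nabla_0 v_Z|^2=|f'(\vartheta)|^2|x|_m^{-2}$. Rewriting the two weights through $|q|_m=|x|_m\sin\vartheta$, the common factor $|x|_m^{-2}$ cancels and \eqref{Localization error I} reduces to the scalar inequality
\begin{equation*}
|f'(\vartheta)|^2<\varepsilon\Bigl[\sin^2 f(\vartheta)+\cos^2 f(\vartheta)\,M(\vartheta)\Bigr],\qquad M(\vartheta)=\frac{1}{\sin^2\vartheta\,\ln^2(\tan\vartheta)},
\end{equation*}
to be verified for $\vartheta\in(\vartheta_{\kappa'},\vartheta_\kappa]$. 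The whole point is that $M(\vartheta)\to\infty$ as $\vartheta\to 0^+$ and that $\int_0^{\vartheta_\kappa}\sqrt{M(\vartheta)}\,\mathrm d\vartheta=\infty$, because near the axis $\sqrt{M(\vartheta)}\sim(\vartheta|\ln\vartheta|)^{-1}$, whose primitive $\ln|\ln\vartheta|$ diverges.

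The construction of $f$ would proceed in two stages, both pushed toward small angles. On a first interval $(\vartheta_{\kappa'},\vartheta_1)$ I let $f$ rise from $0$ almost up to $\pi/2$ using only the logarithmic term, integrating $|f'|\le\tfrac12\sqrt\varepsilon\,\cos f\,\sqrt{M(\vartheta)}$; since $\tfrac{d}{d\vartheta}\ln(\sec f+\tan f)=f'/\cos f$, the divergence of $\int\sqrt M$ lets me reach the level $\cos f(\vartheta_1)=u_1:=M(\vartheta_1)^{-1/2}$ provided $\vartheta_{\kappa'}$ (equivalently $\kappa'$) is chosen small enough, with $|f'|$ as small as $\varepsilon$ demands, the budget being $\approx\sqrt\varepsilon\,\ln|\ln\vartheta_{\kappa'}|$. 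On a second, short interval $(\vartheta_1,\vartheta_2)$ I complete $f$ from this level to $\pi/2$ using the global term: there $\sin^2 f\approx 1$, so $|f'|\le\sqrt\varepsilon$ suffices, and since the remaining increment is $\pi/2-f(\vartheta_1)\approx u_1=\sin\vartheta_1|\ln\tan\vartheta_1|\to 0$ as $\vartheta_1\to 0$, the required length $\vartheta_2-\vartheta_1\approx u_1/\sqrt\varepsilon$ tends to $0$; choosing $\vartheta_1$ small guarantees $\vartheta_2<\vartheta_\kappa$. On $[\vartheta_2,\vartheta_\kappa]$ one sets $u_Z\equiv 0$, where the error vanishes identically. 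At the handover $\vartheta_1$ one has $\cos^2 f\,M=u_1^2 M(\vartheta_1)=1$ and $\sin^2 f\approx 1$, so both pieces satisfy the inequality with room to spare.

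The main obstacle is exactly this completion to $u_Z\equiv 0$. The global weight $\varepsilon\sin^2 f\,|x|_m^{-2}$ does not degenerate, but it is only of unit order in the reduced inequality, so it cannot by itself finish the transition across a fixed angular width near $\vartheta_\kappa$ — this is the precise mechanism by which dimensions one and two differ from $d\ge 3$, where the undamped weight $|q|_m^{-2}$ closes the transition near $\vartheta_\kappa$. The resolution is to finish not near $\vartheta_\kappa$ but at a small angle $\vartheta_1$: the logarithmic weight, being strong near the axis, has already driven $u_Z$ down to the negligible level $u_1\to 0$, so that only an asymptotically vanishing portion of the transition remains for the global term. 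The rest is bookkeeping — quantifying $\vartheta_{\kappa'}$ against $\varepsilon$ through the iterated logarithm, checking the pointwise inequality at the handover where neither term dominates, and verifying the elementary identity $|\nabla_0\vartheta|_m=|x|_m^{-1}$. The dimension $d\in\{1,2\}$ does not enter this geometric construction at all; it affects only which Hardy-type inequality later absorbs the two terms on the right-hand side.
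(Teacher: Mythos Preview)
Your proposal is correct and follows essentially the same two-stage strategy as the paper: use the divergence $\int_0 (\vartheta|\ln\vartheta|)^{-1}\,\mathrm d\vartheta=\infty$ (the content of the paper's auxiliary Lemma~\ref{lem: localization scalar}) to drive $u_Z$ from $1$ down to a negligible level via the logarithmic weight, and then finish with the global $|x|_m^{-2}$ term over an asymptotically short interval. The only cosmetic differences are your reparametrization by the polar angle $\vartheta=\arctan(|q|_m/|\xi|_m)$ in place of $t=|q|_m/|\xi|_m$, and that you complete the transition at a small angle $\vartheta_2$ and set $u_Z\equiv 0$ on $[\vartheta_2,\vartheta_\kappa]$, whereas the paper places the short global stage adjacent to the outer boundary $t=\kappa$ and gives an explicit power-of-logarithm profile for the inner stage.
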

To prove Theorem \ref{lem: localization with log} we will use an auxiliary result for scalar functions, namely the following
\begin{lem}\label{lem: localization scalar}
For any $\varepsilon>0$ and $0<\beta<1$ one can find a constant $0<\alpha<\beta^2$ and a non-increasing function $u\in H^1(\alpha,\beta)\cap C([\alpha,\beta])$, such that $ u(\alpha) =1, \ u(\beta)=0 $ and
\begin{equation}
(u'(t))^2\le \varepsilon t^{-2} \ln^{-2}(t), \qquad \alpha \le t\le \beta.
\end{equation}
\end{lem}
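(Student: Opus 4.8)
The plan is to build $u$ explicitly as a truncated primitive of the borderline weight, exploiting the fact that $\tfrac{1}{t|\ln t|}$ fails to be integrable at the origin. Since $0<\beta<1$ forces $\ln t<0$ on the relevant interval, I first record that $|\ln t|=-\ln t>0$ there, so that the right-hand side $\varepsilon t^{-2}\ln^{-2}(t)$ is a genuine positive weight, and that $\tfrac{1}{t|\ln t|}$ is continuous and bounded on every $[\alpha,\beta]$ with $0<\alpha<\beta<1$.

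The key computation is the antiderivative
\[
\int \frac{dt}{t|\ln t|} = \int \frac{dt}{t(-\ln t)} = -\ln(-\ln t),
\]
which gives
\[
F(t) := \int_\alpha^t \frac{ds}{s|\ln s|} = \ln\frac{-\ln\alpha}{-\ln t}, \qquad F(\beta) = \ln\frac{-\ln\alpha}{-\ln\beta}.
\]
Since $-\ln\alpha\to+\infty$ as $\alpha\to 0^+$, the quantity $F(\beta)$ diverges; this divergence is the only nontrivial input. I would therefore fix $\alpha>0$ small enough that both $\alpha<\beta^2$ and $\sqrt{\varepsilon}\,F(\beta)\ge 1$ hold, which is possible because each of these two conditions only becomes easier as $\alpha$ decreases.

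With such an $\alpha$ fixed, I would set
\[
u(t) = \max\Bigl(0,\; 1-\sqrt{\varepsilon}\,F(t)\Bigr), \qquad \alpha\le t\le\beta.
\]
Then $u(\alpha)=1$, the function $u$ is continuous and non-increasing, and $u(\beta)=0$ because $\sqrt{\varepsilon}\,F(\beta)\ge 1$ guarantees that the argument of the maximum has already reached $0$ by $t=\beta$. On the set where $1-\sqrt{\varepsilon}\,F(t)>0$ one has $u'(t) = -\sqrt{\varepsilon}\,F'(t) = -\sqrt{\varepsilon}\,\tfrac{1}{t|\ln t|}$, so $(u'(t))^2 = \varepsilon\,t^{-2}\ln^{-2}(t)$ with equality, while on the truncated set $u'=0$; in both cases the required bound holds. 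Since $\tfrac{1}{t|\ln t|}$ is bounded on $[\alpha,\beta]$, the function $u$ is Lipschitz, hence $u\in H^1(\alpha,\beta)\cap C([\alpha,\beta])$.

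There is no serious obstacle here: the entire content is the observation that the borderline weight $\tfrac{1}{t|\ln t|}$ has a divergent integral at the origin, so a unit drop of $u$ can be distributed over $[\alpha,\beta]$ while keeping $|u'|$ pointwise below $\sqrt{\varepsilon}/(t|\ln t|)$ — one merely pushes $\alpha$ close enough to $0$. The only points requiring a line of care are verifying that the two constraints $\alpha<\beta^2$ and $\sqrt{\varepsilon}\,F(\beta)\ge 1$ are simultaneously achievable (they are, by monotonicity in $\alpha$) and confirming the endpoint value $u(\beta)=0$ through the truncation.
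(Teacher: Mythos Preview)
Your proof is correct and takes a genuinely simpler route than the paper. The paper builds $u$ in two pieces: on $[\alpha,\beta^2]$ it sets $u(t)=|\ln(\alpha\beta^{-1})|^{-\gamma}|\ln(t\beta^{-1})|^\gamma$ with a small exponent $\gamma<\sqrt{\varepsilon}/2$, and on $[\beta^2,\beta]$ it switches to a function linear in $|\ln(t\beta^{-1})|$; it then estimates $(u')^2$ separately on the two subintervals, first fixing $\gamma$ and afterwards shrinking $\alpha$. Your construction instead integrates the extremal weight $\sqrt{\varepsilon}/(t|\ln t|)$ directly and truncates, so the desired bound holds with equality on the active set and trivially on the truncated set, with only the single parameter $\alpha$ to tune. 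Both arguments rest on the same underlying fact---the divergence of $\int_0 \frac{ds}{s|\ln s|}$---but yours converts it into the conclusion in one step, while the paper's piecewise power-of-logarithm ansatz obscures this and requires the auxiliary splitting at $t=\beta^2$.
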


\begin{proof}[Proof of Lemma \ref{lem: localization scalar}]
   Let $\varepsilon>0$ and $\beta\in(0,1)$ be fixed. For any $0<\gamma <1$ and $\alpha\in(0,\beta^2)$ let $u:[\alpha,\beta] \rightarrow \mathbb{R}$ be given by
    \begin{equation}
        u(t):= \begin{cases}
        \left\vert\ln(\alpha \beta^{-1})\right\vert^{-\gamma}\left\vert \ln(t  \beta^{-1})\right\vert^\gamma & \text{ if } \alpha \le t\le \beta^2 ,       \\
        \left\vert\ln(\alpha \beta^{-1})\right\vert^{-\gamma}\left\vert \ln \beta\right\vert^{\gamma-1} \left\vert \ln(t  \beta^{-1})\right\vert & \text{ if } \beta^2\le t\le \beta.\\
        \end{cases}
    \end{equation}
    Obviously, $u \in C([\alpha,\beta])\cap H^1(\alpha,\beta)$ with $u(\beta)=0$ and $u(\alpha)=1$.\\
    At first, we prove the claimed estimate for $(u'(t))^2$ for $\alpha\le t\le \beta^2$ by choosing the constant $\gamma>0$ sufficiently small. For $\alpha < t< \beta^2$ we have
    \begin{equation}
        \left(u'(t)\right)^2= \gamma^2\left\vert\ln(\alpha \beta^{-1})\right\vert^{-2\gamma}\left\vert \ln(t  \beta^{-1})\right\vert^{2(\gamma-1)}t^{-2}.
    \end{equation}
    Note that $\alpha\beta^{-1}<1$ and $t\beta^{-1}<1$ for $\alpha \le t \le \beta^2$ and therefore $\left\vert\ln(\alpha \beta^{-1})\right\vert \ge \left\vert\ln(t  \beta^{-1})\right\vert$, which yields
    \begin{equation}
        \left(u'(t)\right)^2 \le \gamma^2 \left\vert\ln(t  \beta^{-1})\right\vert^{-2} t^{-2}, \qquad \alpha < t< \beta^2.
    \end{equation}
    Furthermore, for $t\le\beta^2$ we have $\left\vert\ln(t  \beta^{-1})\right\vert \ge \vert \ln \sqrt{t}\vert = \frac{1}{2}\left\vert \ln t\right\vert.$ This implies
    \begin{equation}
        \left(u'(t)\right)^2 \le 4 \gamma^2 \vert \ln t\vert^{-2}t^{-2}, \qquad \alpha< t< \beta^2.
    \end{equation}
    Choosing $0<\gamma < \frac{\sqrt{\varepsilon}}{2}$ we get
    \begin{equation}
        \left(u'(t)\right)^2 \le \varepsilon \vert \ln t\vert^{-2}t^{-2}, \qquad \alpha< t < \beta^2.
    \end{equation}
    Now we estimate $(u'(t))^2$ for $\beta^2< t < \beta$. In this case we have
    \begin{equation}
        \left(u'(t)\right)^2=\left\vert\ln(\alpha \beta^{-1})\right\vert^{-2\gamma}\left\vert \ln \beta\right\vert^{2(\gamma-1)}t^{-2}.
    \end{equation}
    Since $\beta<1$, we have $\vert \ln  \beta^2\vert \ge \vert \ln t\vert$ for $\beta^2\le t \le \beta$ and therefore
    \begin{align}
    \begin{split}
        \left(u'(t)\right)^2&\le\left\vert\ln(\alpha \beta^{-1})\right\vert^{-2\gamma}\left\vert \ln \beta\right\vert^{2(\gamma-1)}\vert \ln  \beta^2\vert^2  \vert \ln t\vert^{-2}t^{-2}\\
        &=4 \left\vert\ln(\alpha \beta^{-1})\right\vert^{-2\gamma}\left\vert \ln \beta\right\vert^{2\gamma}\vert \ln t\vert^{-2}t^{-2} \le \varepsilon\vert \ln t\vert^{-2}t^{-2}
        \end{split}
    \end{align}
    if $\alpha$ is chosen small enough. This completes the proof of Lemma \ref{lem: localization scalar}
\end{proof}
Now we turn to the 
\begin{proof}[Proof of Theorem \ref{lem: localization with log}]
Let $Z$ be a partition with $|Z|\ge 2$ and let $\varepsilon>0$ and $0<\kappa<1$ be fixed. We construct functions $u_{Z},\,v_{Z}$ which satisfy the conditions of Theorem \ref{lem: localization with log}.\\
 Let $v_1\in H^1(\mathbb{R}_+)$ be a non-decreasing function with $v_1(t)=1$ for $t\ge \kappa$ and $0\le v_1(t)<1 $ for $t<\kappa$, such that $v_1'(t)(1-v_1^2)^{-\frac{1}{2}} \rightarrow 0$ as $t \nearrow \kappa$. For $x\in X_0, \ x=q+\xi$, let
    \begin{equation}
        v_{Z}(x) = v_1\left(\frac{\vert q\vert_m}{\vert\xi\vert_m}\right), \qquad u_{Z}(x) = \sqrt{1-v_{Z}^2(x)}.
    \end{equation}
 Then for $x\in K(Z,\kappa)$ we have
\begin{align}\begin{split}\label{eq: gradient u + gradient v}
        \vert \nabla_0 u_{Z} \vert^2 + \vert \nabla_0 v_{Z} \vert^2 &= \vert \nabla_0 v_{Z} \vert^2 \left(1-v^2_{Z}\right)^{-1}\\
        &=(v_1'(t))^2\left(1-v_1^2(t)\right)^{-1}\left(1+\vert q\vert^2_m\vert \xi\vert_m^{-2}\right)\vert \xi\vert_m^{-2},
        \end{split}
\end{align}
where $ t= \vert q\vert_m\vert \xi\vert_m^{-1}.$ For $x\in K(Z,\kappa)$ we have $\vert \xi \vert_m^{-2} \le \left(1+\kappa^2\right) \vert x\vert_m^{-2}$. This implies
\begin{equation}
\vert \nabla_0 u_{Z} \vert^2 + \vert \nabla_0 v_{Z} \vert^2 \le (v_1'(t))^2\left(1-v_1^2(t)\right)^{-1} (1+\kappa^2)^2\vert x\vert_m^{-2}.
\end{equation}
Since $v_1'(t)(1-v_1^2(t))^{-\frac{1}{2}} \rightarrow 0$ as $t \nearrow \kappa$, we can find $0<\kappa''<\kappa$ so close to $\kappa$ that 
\begin{equation}
(v_1'(t))^2\left(1-v_1^2(t)\right)^{-1} (1+\kappa^2)^2\le\varepsilon v_1^2(t), \qquad \kappa'' \le t < \kappa.
\end{equation}
This implies
\begin{equation}
    \vert \nabla_0 u_{Z} \vert^2 + \vert \nabla_0 v_{Z} \vert^2 \le \varepsilon v_{Z}^2 \vert x\vert_m^{-2}, \qquad x \in K(Z,\kappa) \setminus K(Z,\kappa'').
\end{equation}
Now we define $v_{Z}$ for $x\in K(Z,\kappa'')$. By Lemma \ref{lem: localization scalar}, for given $\tilde{\varepsilon}>0$ we find a constant $0<\kappa'<\kappa''$ and a non-decreasing function $v_2$, such that 
\begin{equation}
v_2(\kappa')=0,\ v_2(\kappa'')=v_1(\kappa'') \quad \text{and}\quad  \left(v_2'(t)\right)^2 \le \tilde{\varepsilon}  \vert t\vert^{-2}\ln^{-2} t \quad \text{for} \quad \kappa'<t<\kappa''.
\end{equation}
Let $v_2$ be such a function and for $x\in K(Z,\kappa''), \ x=q+\xi$, let
\begin{equation}
    v_{Z}(x) = v_2\left(\frac{\vert q\vert_m}{\vert \xi \vert_m}\right), \qquad u_{Z}(x) = \sqrt{1-v_{Z}^2(x)}.
\end{equation}
Then, similar to \eqref{eq: gradient u + gradient v} we have
\begin{equation}\label{eq: compute nabla u + nabla v}
    \left(\vert \nabla_0 u_{Z} \vert^2 + \vert \nabla_0 v_{Z} \vert^2\right) u_{Z}^{-2}
    = (v_2'(t))^2\left(1-v_2^2(t)\right)^{-1}u_{Z}^{-2}\left(1+\vert q\vert^2_m\vert \xi\vert_m^{-2}\right)\vert \xi\vert_m^{-2},
\end{equation}
where $t=\vert q\vert_m\vert \xi\vert_m^{-1}$.
Since $v_2$ is non-decreasing, we have $(1-v_2^2(t))^{-1}u_{Z}^{-2}\le \left(1-v_2^2(k'')\right)^{-2}$ for $t\le \kappa''$.
Substituting this estimate into \eqref{eq: compute nabla u + nabla v} we have
\begin{equation}
    \left(\vert \nabla_0 u_{Z} \vert^2 + \vert \nabla_0 v_{Z} \vert^2\right) u_{Z}^{-2} \le (v_2'(t))^2\left(1-v_2(k'')^2\right)^{-2} \left(1+(\kappa'')^2\right) \vert \xi\vert_m^{-2}.
\end{equation}
 Recall that $v_2(\kappa'')$ is close to one, but strictly less then one. Due to $\left({v}_2'(t)\right)^2 \le \tilde \varepsilon \vert t\vert^{-2} \ln^{-2} t$ we get
\begin{equation}
\left(\vert \nabla_0 u_{Z} \vert^2 + \vert \nabla_0 v_{Z} \vert^2\right) u_{Z}^{-2} \le  \tilde \varepsilon \vert t\vert^{-2} \ln^{-2} t \left(1-v_2(k'')^2\right)^{-2} \left(1+(\kappa'')^2\right) \vert \xi\vert_m^{-2}.
\end{equation}
Choosing $\tilde{\varepsilon}>0$ so small that $\tilde{\varepsilon} \left(1-v_2(k'')^2\right)^{-2} \left(1+(\kappa'')^2\right) <\varepsilon$ and using $t=\vert q\vert_m\vert \xi\vert_m^{-1}$ completes the proof of Theorem \ref{lem: localization with log}.
\end{proof} 
\subsection{Proof of Theorem \ref{thm: Virtual level and Hardy constant}}
Now we turn to the proof of Theorem \ref{thm: Virtual level and Hardy constant}. It is an application of Theorem \ref{thm: abstraktes thm paper2} and we use geometrical methods to prove that all conditions of the latter theorem are fulfilled. 
Since the pair potentials $V_{ij}$ are relatively form bounded, so is $V= \sum_{1\le i < j\le N} V_{ij}(x_{ij})$. Hence, we only need to show that condition \eqref{eq: subtract alpha0} is fulfilled for any $0\le\alpha<\tilde{C}_H(X_0)$. This is done in the following
\begin{lem}\label{lem abziehlemma N=4}
Let $d\in\{1,2\}$ and $N\ge 3$. Assume that the potentials $V_{ij}$ satisfy \eqref{cond: relatively form bounded} and \eqref{cond: decay at infinity}. Further, suppose that $H$ has a virtual level at zero. Then for any $0\le\alpha < \tilde{C}_H(X_0)$ there exist constants $\gamma_0,\, R>0$, such that for any function $\varphi\in  H^1(X_0)$ with $\supp (\varphi) \subset \{x\in X_0: |x|_m\geq R \}$ we have
\begin{equation}\label{eq: L ge 0}
L[\varphi]:=(1-\gamma_0) \Vert \nabla_0 \varphi \Vert^2+ \langle V\varphi,\varphi \rangle - \alpha^2 \Vert |x|_m^{-1} \varphi \Vert^2\ge 0.
\end{equation}
\end{lem}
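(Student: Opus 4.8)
The plan is to verify, for each fixed $\alpha<\tilde C_H(X_0)$, the hypothesis \eqref{eq: subtract alpha0} of Theorem \ref{thm: abstraktes thm paper2} with $h=H$ by the geometric (partition-of-unity) method of Theorem 4.4 in \cite{second}, now driven by the sharpened localization of Theorem \ref{lem: localization with log}. First I would build a partition of unity $\sum_a J_a^2\equiv1$ on $\{|x|_m\ge R\}$ by ordering the cluster decompositions according to refinement and applying Theorem \ref{lem: localization with log} successively: one piece $J_0$ is supported in the ``free'' region that avoids every collision, and each remaining $J_Z$ is supported in a cone $K(Z,\kappa)$ in whose support the clusters of $Z$ are close internally and mutually far apart. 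By construction the localization functions satisfy, on the transition layers $K(Z,\kappa',\kappa)$,
\[
|\nabla_0 J_a|^2\le\varepsilon\Big(|x|_m^{-2}+|q(Z)|_m^{-2}\ln^{-2}\!\big(|q(Z)|_m|\xi(Z)|_m^{-1}\big)\Big).
\]
Since $V$ and $|x|_m^{-2}$ are multiplication operators and $\sum_a J_a^2\equiv1$, the IMS localization formula gives
\[
L[\varphi]=\sum_a L[J_a\varphi]-(1-\gamma_0)\sum_a\Vert(\nabla_0 J_a)\varphi\Vert^2,
\]
so it remains to bound each $L[J_a\varphi]$ below by enough to absorb the localization error on the right.

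The positivity input is supplied by the cluster structure. Because $H$ has a virtual level, the Remark following Definition \ref{def: Vitrtual levels multiparticle} shows that $H[C]\ge0$ and $H[C]$ has \emph{no} virtual level for every proper subcluster $C$ with $1<|C|<N$. Hence the multi-particle analogue of Theorem \ref{lem no virtual level subtract d=1} (Theorem \ref{thm: equivalent definition virtual level multiparticle}) applies to each cluster of each $Z$, providing $\varepsilon_0,\varepsilon_1>0$ with
\[
(1-\varepsilon_0)\Vert\nabla_{q[C]}\psi\Vert^2+\langle V[C]\psi,\psi\rangle\ge\varepsilon_1\langle W_C\psi,\psi\rangle,
\]
where the Hardy-type weight $W_C$ is comparable to $|q[C]|_m^{-2}$, up to a factor $\ln^{-2}(\cdots)$ in the critical two-dimensional pair case. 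Summing over the clusters of $Z$ lets the intra-cluster kinetic energy dominate the intra-cluster potential $V(Z)=\sum_{C\subset Z}V[C]$ while leaving a spare weight of size $\varepsilon_1|q(Z)|_m^{-2}\ln^{-2}(\cdots)$. Moreover, on $\supp J_Z$ the clusters are mutually far apart, so \eqref{cond: decay at infinity} yields $|\langle I(Z)J_Z\varphi,J_Z\varphi\rangle|\le\eta(R)\Vert|x|_m^{-1}J_Z\varphi\Vert^2$ with $\eta(R)\to0$ as $R\to\infty$, making the inter-cluster interaction negligible.

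I would then estimate region by region. In the free region $V$ is negligible and $J_0\varphi$ vanishes near the collision set, so the inequality defining \eqref{eq: definition tilde C} gives $\Vert\nabla_0(J_0\varphi)\Vert^2\ge\tilde C_H(X_0)^2\Vert|x|_m^{-1}J_0\varphi\Vert^2$, leaving the strictly positive surplus $(\tilde C_H(X_0)^2-\alpha^2)\Vert|x|_m^{-1}J_0\varphi\Vert^2$. In each cone I would spend the intra-cluster kinetic energy on $V(Z)$ as above, drop the negligible $I(Z)$, and apportion the remaining kinetic energy so that, together with the conical confinement $|q(Z)|_m\le\kappa|\xi(Z)|_m$, it dominates $\alpha^2|x|_m^{-2}$. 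The $|x|_m^{-2}$ part of the localization error is absorbed into the surplus $\tilde C_H(X_0)^2-\alpha^2$, and the $|q(Z)|_m^{-2}\ln^{-2}(\cdots)$ part into the spare weight $\varepsilon_1|q(Z)|_m^{-2}\ln^{-2}(\cdots)$, which forces the choice $\varepsilon<\varepsilon_1$ in Theorem \ref{lem: localization with log}. Fixing $\kappa,\kappa',\gamma_0$ and finally $R$ makes every term nonnegative and yields $L[\varphi]\ge0$.

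The main obstacle is the cone estimate. One must recover the full constant $\tilde C_H(X_0)$ there even though controlling the intra-cluster attraction consumes almost all of the intra-cluster kinetic energy, and one must match the logarithmically weighted localization error $\varepsilon|q(Z)|_m^{-2}\ln^{-2}(|q(Z)|_m|\xi(Z)|_m^{-1})$ against the logarithmically weighted lower bound $\varepsilon_1|q(Z)|_m^{-2}\ln^{-2}(\cdots)$ produced by the subcluster operators. This is precisely where dimensions one and two differ from \cite{second}: the ordinary Hardy inequality in the intra-cluster variable is too weak in low dimension (its constant is even zero for a single two-dimensional pair), so the crude error $\varepsilon|q|_m^{-2}$ used for $d\ge3$ could never be absorbed, and it is the extra $\ln^{-2}$ factor furnished by Theorem \ref{lem: localization with log} that makes the error absorbable. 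A secondary point, relevant only for $d=1$, is that $\tilde C_H(X_0)$ exceeds the naive Hardy constant because the admissible functions vanish on the hyperplanes $\{x_i=x_j\}$, so this vanishing must be tracked through the localization to obtain the sharp value.
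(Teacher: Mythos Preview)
Your overall architecture matches the paper's: iterated partition of unity via Theorem \ref{lem: localization with log}, IMS, Poincar\'e--Friedrichs in the cones to kill $\alpha^2|x|_m^{-2}$ (by choosing $\kappa$ small), and the definition of $\tilde C_H(X_0)$ in the free region. The place where your argument breaks is the absorption of the conical localization error when $\dim X_0(Z)=2$ (e.g.\ a two-particle cluster in $d=2$, or a three-particle cluster in $d=1$).

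You propose to match the error $\varepsilon\,|q(Z)|_m^{-2}\ln^{-2}\!\big(|q(Z)|_m|\xi(Z)|_m^{-1}\big)$ against the ``spare weight'' $\varepsilon_1 W_C$ coming from the subcluster operators via Theorem \ref{thm: equivalent definition virtual level multiparticle}. But these logarithms are not comparable on the transition layer $K_R(Z,\kappa',\kappa)$. There $|q|_m/|\xi|_m\in[\kappa',\kappa]$, so $\ln^{-2}(|q|_m/|\xi|_m)$ is bounded \emph{below} by $|\ln\kappa|^{-2}$ and the error is effectively $c\varepsilon|q|_m^{-2}$. Meanwhile $W_C\sim |q|_m^{-2}\ln^{-2}|q|_m$, and since $|q|_m\ge\kappa'|\xi|_m$ can be arbitrarily large on the layer, $\varepsilon_1 W_C$ cannot dominate $c\varepsilon|q|_m^{-2}$ for any choice of $\varepsilon$. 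So your ``forces the choice $\varepsilon<\varepsilon_1$'' step fails precisely in the two-dimensional intra-cluster case that you yourself flag as critical.

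The paper's remedy is not to use the weighted lower bound here at all. From ``no virtual level'' one takes only the kinetic surplus $H(Z)\ge 3\gamma_0(-\Delta_{q(Z)})$, keeps $\gamma_0\Vert\nabla_q\psi\Vert^2$ in reserve, and then applies Lemma \ref{lem: estimate two-dimensional with subst}: after the scaling $y=q/|\xi|_m$ the localized function is supported in $\{|y|_m\le\kappa\}$, and the two-dimensional Hardy inequality in the $y$-variable (Corollary \ref{lem: two dimensional Hardy radially symmetric functions}) produces exactly the matching weight $|q|_m^{-2}\ln^{-2}(|q|_m/|\xi|_m)$. For $\dim X_0(Z)\ge 3$ one uses ordinary Hardy in the same way; only for $\dim X_0(Z)=1$ (i.e.\ $d=1$, $N=3$) does the paper use the weighted bound of Theorem \ref{lem no virtual level subtract d=1}, and there the weight is $|q|^{-2}$ without a logarithm, so the match you describe does go through.
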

In the proof of Lemma \ref{lem abziehlemma N=4} we use the following 
\begin{lem}\label{lem: estimate two-dimensional with subst}
Let $Z$ be a partition of the system, such that $\dim(X_0(Z))=2$. Furthermore, let $0<\kappa<1$. Then there exists $\varepsilon>0$, such that for any function $\psi \in H^1(R_0)$ with $\mathrm{supp}(\psi)\subset K_R(Z,\kappa)$ and any $0<\kappa'<\kappa$ we have
\begin{equation}
\Vert \nabla_{q} \psi \Vert^2 -\varepsilon\left\Vert  \vert q\vert^{-1}_m\ln^{-1}\left(\vert q\vert_m\vert \xi\vert^{-1}_m\right)\psi\right\Vert^2_{K_R(Z,\kappa',\kappa)} \ge 0.
\end{equation}
\end{lem}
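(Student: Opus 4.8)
\emph{Approach.} The plan is to exploit that $\dim X_0(Z)=2$ by slicing in the variable $\xi=\xi(Z)$ and reducing the claim to a one-dimensional logarithmic Hardy inequality. I would write $x=q+\xi$ with $q=q(Z)\in X_0(Z)$ and $\xi=\xi(Z)\in X_c(Z)$, and fix coordinates on the two-dimensional space $X_0(Z)$ that are orthonormal for $\langle\cdot,\cdot\rangle_m$, so that $\nabla_q$ is the Euclidean gradient and $|q|_m$ the Euclidean length in these coordinates. By Fubini, for almost every $\xi$ the slice $q\mapsto\psi(q,\xi)$ lies in $H^1(X_0(Z))$, and since $\supp(\psi)\subset K_R(Z,\kappa)$ this slice is supported in the disk $\{|q|_m\le\kappa|\xi|_m\}$ and has vanishing trace on its outer boundary $\{|q|_m=\kappa|\xi|_m\}$ (a possible inner cut-off coming from $|x|_m\ge R$ only shrinks the support and is harmless). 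Passing to polar coordinates $(\rho,\theta)$, $\rho=|q|_m$, on $X_0(Z)$ and discarding the nonnegative angular contribution via $|\nabla_q\psi|^2\ge|\partial_\rho\psi|^2$, it then suffices to establish the estimate on each radial slice.

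\emph{The logarithmic Hardy inequality.} Since the ordinary two-dimensional Hardy inequality fails, the crux is a logarithmic substitute. For fixed $\theta$ and $\xi$ I would set $b=\kappa|\xi|_m$ and substitute $s=\ln(\rho/b)\in(-\infty,0]$; then $\partial_s=\rho\,\partial_\rho$, $d\rho=\rho\,ds$, and the vanishing trace at $\rho=b$ turns into a boundary condition at $s=0$. This transforms $\int_0^b|\partial_\rho\psi|^2\rho\,d\rho$ into $\int_{-\infty}^0|\partial_s\psi|^2\,ds$ and the candidate weight into $\int_{-\infty}^0 s^{-2}|\psi|^2\,ds$, so the classical Hardy inequality on the half-line (with constant $\tfrac14$) yields
\begin{equation}
\int_0^{b}|\partial_\rho\psi|^2\rho\,\mathrm{d}\rho\ \ge\ \frac14\int_0^{b}\frac{|\psi|^2}{\rho\,\ln^2(\rho/(\kappa|\xi|_m))}\,\mathrm{d}\rho.
\end{equation}
Integrating over $\theta$ and then over $\xi$ gives
\begin{equation}
\Vert\nabla_q\psi\Vert^2\ \ge\ \frac14\int_{X_0}\frac{|\psi|^2}{|q|_m^2\,\ln^2\!\left(|q|_m(\kappa|\xi|_m)^{-1}\right)}\,\mathrm{d}x.
\end{equation}

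\emph{Matching the weights uniformly in $\kappa'$.} The weight produced by Hardy involves $\ln(|q|_m(\kappa|\xi|_m)^{-1})$, whereas the statement requires $\ln(|q|_m|\xi|_m^{-1})$. I would restrict the last integral to the smaller region $K_R(Z,\kappa',\kappa)$ (the integrand is nonnegative) and write $t=|q|_m|\xi|_m^{-1}\in(\kappa',\kappa]$; since $t<\kappa<1$ one has $|\ln t|>|\ln\kappa|$ and $|\ln(t/\kappa)|=|\ln t|-|\ln\kappa|<|\ln t|$, whence $\ln^2(|q|_m(\kappa|\xi|_m)^{-1})\le\ln^2(|q|_m|\xi|_m^{-1})$ throughout $K_R(Z,\kappa',\kappa)$. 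Consequently the Hardy bound implies the asserted inequality with $\varepsilon=\tfrac14$, independently of $\kappa'$. The main obstacle is exactly this final comparison: because the ordinary two-dimensional Hardy inequality is unavailable, the logarithmic variant unavoidably centres its weight at the support radius $\kappa|\xi|_m$ rather than at $|\xi|_m$, and one must verify that on the annular region $K_R(Z,\kappa',\kappa)$ the two logarithmic weights are comparable with a constant that does not degenerate as $\kappa'\to0$.
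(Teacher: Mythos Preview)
Your proof is correct and follows essentially the same approach as the paper: slice in $\xi$ and apply a two-dimensional logarithmic Hardy inequality on the disk $\{|q|_m\le\kappa|\xi|_m\}$ where $\psi(\cdot,\xi)$ vanishes on the boundary. The paper packages this via the substitution $y=q/|\xi|_m$ and then invokes Corollary~\ref{lem: two dimensional Hardy radially symmetric functions}, which directly produces the weight $\ln^{-2}(|q|_m|\xi|_m^{-1})$; you instead pass to polar coordinates, drop the angular derivative, and reduce via $s=\ln(\rho/(\kappa|\xi|_m))$ to the one-dimensional Hardy inequality on the half-line, obtaining the weight $\ln^{-2}(|q|_m(\kappa|\xi|_m)^{-1})$ and then comparing. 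The two routes are equivalent in spirit; yours is a touch more self-contained (no appeal to a separate corollary or rearrangement), while the paper's substitution avoids the final weight comparison.
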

\begin{proof}[Proof of Lemma \ref{lem: estimate two-dimensional with subst}]
We introduce the new variable $y=\frac{q}{|\xi|_m}$. Then we get
\begin{align}\label{eq: substitution for two dim Hardy}
    \begin{split}
    &\Vert \nabla_{q} \psi \Vert^2 -\varepsilon\left\Vert  \vert q\vert^{-1}_m\ln^{-1}\left(\vert q\vert_m\vert \xi\vert^{-1}_m\right)\psi\right\Vert^2_{K_R(Z,\kappa',\kappa)} \\
    & \ge \int \int_{\{\kappa'| \xi|_m \le |q|_m\le \kappa| \xi|_m\}}\left(|\nabla_{q} \psi|^2-\varepsilon\vert q\vert^{-2}_m\left|\ln^{-2}\left(\vert q\vert_m\vert \xi\vert^{-1}_m\right)\right| |\psi|^2\right)\, \mathrm{d}q\,\mathrm{d}\xi\\
    &=\int \frac{1}{|\xi|_m^2}\int_{\{\kappa'|  \le |y|_m\le \kappa\}} \left(|\nabla_{y} \tilde\psi(y,\xi)|^2-\varepsilon\vert y\vert^{-2}_m\left|\ln^{-2}\left(|y|_m\right)\right| |\tilde\psi(y,\xi)|^2\right)\, \mathrm{d}y\, \mathrm{d}\xi,
    \end{split}
\end{align}
{where $\tilde{\psi}(y,\xi)= \psi(y|\xi|_m,\xi)$.
Note that $\tilde\psi(y,\xi)=0$ for $|y|_m\ge\kappa$.} 
Due to $\kappa <1$ we have $(\ln|y|_m)^{-2} \le C(1+(\ln|y|_m)^2)^{-1}$ for some $C>0$ and $|y|_m\le \kappa$. Therefore, applying Corollary \ref{lem: two dimensional Hardy radially symmetric functions} to the function $\tilde\psi(y,\xi)$ for fixed $\xi$ shows that the r.h.s. of \eqref{eq: substitution for two dim Hardy} is non-negative for sufficiently small $\varepsilon>0$. This completes the proof of Lemma \ref{lem: estimate two-dimensional with subst}.
\end{proof}
Now we turn to the
\begin{proof}[Proof of Lemma \ref{lem abziehlemma N=4}]
The proof follows the idea of the proof of Theorem 4.4 in \cite{second}.  We make a partition of unity of the support of $\varphi$, separating regions $K(Z,\kappa)$ which correspond to different partitions $Z$ of the system into clusters. \\
 We start by estimating the functional $L[\varphi]$ in regions $K(Z,\kappa)$ corresponding to partitions $Z$ into two clusters. Let $\kappa_2\in(0,1)$ be so small that $K_R(Z,\kappa_2)$ and $K_R(Z',\kappa_2)$ for clusters $Z\neq Z'$ with $|Z|=|Z'|=2$ do not overlap. Such a constant $\kappa_2$ exists according to \cite{cones} (an English version can be found in  \cite[Theorem B.2]{second}). By Theorem \ref{lem: localization with log} we get
\begin{equation}
L[\varphi]\ge \sum_{Z:|Z|=2} L_2[\varphi u_{Z}] + L_2'[\mathcal{V}^{(2)}\varphi],
\end{equation}
where $\mathcal{V}^{(2)}=\sqrt{1-\sum_{Z:|Z|=2}u_{Z}^2}$ and the functionals $L_2, L_2':H^1(X_0)\rightarrow\mathbb{R}$ are given by
\begin{align}\label{eq: L2 and L2'}
\begin{split}
        L_2[\psi] &= (1-\gamma_0)\Vert \nabla_0 \psi \Vert^2 + \langle V\psi,\psi \rangle- \alpha^2\Vert |x|_m^{-1}\psi \Vert^2\\
        &\qquad \qquad \qquad \qquad  - {\varepsilon} \Vert |q(Z)|_m^{-1} \ln^{-1} \left(\vert q(Z)\vert_m\vert \xi(Z)\vert^{-1}_m\right)\psi\Vert^2_{K_R(Z,\kappa_2',\kappa_2)},\\
        L_2'[\psi]&=(1-\gamma_0)\Vert \nabla_0 \psi \Vert^2 + \langle V\psi,\psi \rangle- (\alpha^2+\varepsilon)\Vert |x|_m^{-1}\psi \Vert^2,
\end{split}
\end{align}
where $\varepsilon>0$ can be chosen arbitrarily small if $\kappa_2'>0$ is sufficiently small. Recall that the functions $u_{Z}$ are supported in the region $K(Z,\kappa_2)$, i.e where the two clusters in $Z$ are far away from each other. Note also 
that the terms $\varepsilon\Vert |x|_m^{-1}\psi \Vert^2$ and  ${\varepsilon} \Vert |q(Z)|_m^{-1} \ln^{-1} \left(\vert q(Z)\vert_m\vert \xi(Z)\vert^{-1}_m\right)\psi\Vert^2_{K_R(Z,\kappa_2',\kappa_2)}$ come from the estimate for the localization error given in Theorem \ref{lem: localization with log}.\\ 
Let $Z$ be an arbitrary partition into two clusters, $q=q(Z),\ \xi=\xi(Z)$ and $\psi =\varphi u_{Z}$. Our goal is to show that $L_2[\psi]\ge 0$.
We have
\begin{align}\label{eq: L2 ausgeschrieben}
\begin{split}
    L_{2}[\psi]=&\langle H(Z)\psi,\psi\rangle-\gamma_0 \Vert \nabla_{q} \psi \Vert^2  +\left(1-\gamma_0\right) \left\Vert\nabla_{\xi} \psi\right\Vert^2+\langle I(Z)\psi,\psi\rangle\\&   \qquad -\alpha^2\left\Vert \vert x\vert^{-1}_m\psi \right\Vert^2 -\varepsilon\left\Vert  \vert q\vert^{-1}_m\ln^{-1}\left(\vert q\vert_m\vert \xi\vert^{-1}_m\right)\psi\right\Vert^2_{K_R(Z,\kappa_2',\kappa_2)}.
\end{split}
\end{align}
First, we estimate the inter-cluster potential $I(Z)$ by
\begin{equation}
	|I(Z)(x)|\leq C|\xi|_m^{-2-\nu}\le \varepsilon |\xi|_m^{-2}
\end{equation}
for $x\in \mathrm{supp}(\psi)$ and sufficiently large $R>0$. Furthermore, on the support of $\psi$ we have $|q|_m \leq \kappa_2 |\xi|_m$ and therefore the Poincar\'{e}-Friedrich inequality \cite[Theorem 6.30]{adams} yields
\begin{equation}\label{1: friedrichs}
{\gamma_0}\Vert \nabla_q \psi \Vert^2 \geq \frac{\gamma_0}{4\kappa_2^2}\Vert |\xi|_m^{-1}\psi\Vert^2.
\end{equation}
By choosing $\kappa_2>0$ small enough this implies
\begin{equation}
	{\gamma_0}\Vert \nabla_q \psi \Vert^2+\langle I(Z)\psi,\psi\rangle-\alpha^2\left\Vert \vert x\vert^{-1}_m\psi \right\Vert^2\ge 0
\end{equation}
and therefore 
\begin{equation}\label{eq: estimate L2 only localisation}
	L_2[\psi]\ge \langle H(Z)\psi,\psi\rangle-2\gamma_0\Vert \nabla_q \psi \Vert^2-\varepsilon\left\Vert  \vert q\vert^{-1}_m\ln^{-1}\left(\vert q\vert_m\vert \xi\vert^{-1}_m\right)\psi\right\Vert^2_{K_R(Z,\kappa_2',\kappa_2)}.
\end{equation}
To estimate the r.h.s. of \eqref{eq: estimate L2 only localisation} we distinguish between several cases.\\
\textbf{(i)} If $\dim(X_0(Z))=1$, we have $d=1$ and $N=3$. Assume that $Z=(C_1,C_2)$ with $|C_1|=2$, then $H[C_2]=0$ and
\begin{equation}
	\langle H(Z)\psi,\psi\rangle=\langle H[C_1]\psi,\psi\rangle \quad \text{and}\quad \Vert \nabla_{q(Z)} \psi\Vert =\Vert\nabla_{q[C_1]} \psi\Vert.
\end{equation}
We estimate the last term on the r.h.s. of \eqref{eq: L2 ausgeschrieben} by
\begin{equation}\label{eq: estimate localisation error d=1 N=3}
\varepsilon\left\Vert  \vert q\vert^{-1}_m\ln^{-1}\left(\vert q\vert_m\vert \xi\vert^{-1}_m\right)\psi\right\Vert^2_{K_R(Z,\kappa_2',\kappa_2)}\le \varepsilon\Vert (1+|q|_m)^{-1}\psi\Vert^2_{K_R(Z,\kappa_2',\kappa_2)}
\end{equation}
for $\kappa_2>0$ small enough and $R>0$ sufficiently large. This yields 
\begin{equation}
	L_2[\psi]\ge \langle H[C_1]\psi,\psi\rangle-2\gamma_0\Vert \nabla_{q[C_1]}\psi\Vert^2- \varepsilon\Vert (1+|q|_m)^{-1}\psi\Vert^2_{K_R(Z,\kappa_2',\kappa_2)}.
\end{equation}
Since by the remark after Definition \ref{def: Vitrtual levels multiparticle} the operator $H[C_1]$ does not have a virtual level and $V[C_1]\neq 0$, we can use Theorem \ref{lem no virtual level subtract d=1} to conclude that $L_2[\psi]\ge 0$ for $\varepsilon>0$ and $\gamma_0>0$ small enough and $R>0$ sufficiently large.\\
\textbf{(ii)} If $\dim(X_0(Z))\ge 2$, we use again that for clusters $C$ with $1<|C|<N$ the operator $H[C]$ does not have a virtual level, which implies
\begin{equation}
	\langle H(Z)\psi,\psi\rangle -3\gamma_0\Vert \nabla_q \psi\Vert^2\ge 0
\end{equation}
for small $\gamma_0>0$ and therefore
\begin{equation}
	L_2[\psi]\ge \gamma_0\Vert \nabla_q \psi\Vert^2-\varepsilon\left\Vert  \vert q\vert^{-1}_m\ln^{-1}\left(\vert q\vert_m\vert \xi\vert^{-1}_m\right)\psi\right\Vert^2_{K_R(Z,\kappa_2',\kappa_2)}.
\end{equation}
  If $\dim(X_0(Z))=2$, we apply Lemma \ref{lem: estimate two-dimensional with subst} with $\kappa=\kappa_2$ and $\kappa'=\kappa_2'$ to conclude that $L_2[\psi]\ge 0$. If $\dim(X_0(Z))\ge 3$, we use \eqref{eq: estimate localisation error d=1 N=3} and Hardy's inequality in the form of \eqref{eq: Hardy exterior domain d>=3}.  This implies $L_2[\psi]\ge 0$.\\
Now we estimate $L_2'[\mathcal{V}^{(2)}\varphi]$. If $N=3$, the function $\mathcal{V}^{(2)}\varphi$ is supported in the region where all particles are separated, i.e. there exists a constant $c>0$, such that $|x_{ij}|\ge c|x|_m$ for all $i\neq j$.  This implies
\begin{equation}
	|V(x)|\le C |x|_m^{-2-\nu} \le \varepsilon|x|_m^{-2},
\end{equation}
 where $\varepsilon>0$ can be chosen arbitrarily small for sufficiently large $R>0$. Therefore,
\begin{equation}
	 L_2'[\mathcal{V}^{(2)}\varphi]\ge(1-\gamma_0)\Vert \nabla_0 (\mathcal{V}^{(2)}\varphi) \Vert^2 - (\alpha^2+2\varepsilon)\Vert |x|_m^{-1}\mathcal{V}^{(2)}\varphi \Vert^2.
\end{equation}
Since $\mathcal{V}^{(2)}\varphi$ can be approximated (in the norm of $H^1(X_0)$) by functions in  $\mathcal{M}$, we get
\begin{equation}\label{eq: estimate gradient by Hardy particles separated}
	\Vert \nabla_0 (\mathcal{V}^{(2)}\varphi) \Vert^2 \ge \left(\tilde{C}_H(X_0)\right)^2 \Vert |x|_m^{-1}\mathcal{V}^{(2)}\varphi \Vert^2.
\end{equation}
Due to $\alpha<\tilde{C}_H(X_0) $ we obtain $L_2'[\mathcal{V}^{(2)}\varphi]\ge 0$ for the case $ N=3$ by choosing $\gamma_0,\varepsilon>0$ small enough.\\
If $N\ge 4$, we make a partition of unity of the support of $\mathcal{V}^{(2)}\varphi$. Let $\kappa_3\in (0,1)$ be so small that $K(Z,\kappa_3)$ and $K(\tilde{Z},\kappa_3)$ do not overlap on the support of $\mathcal{V}^{(2)}\varphi$ for partitions $Z\neq \tilde{Z}$ with $|Z|=|\tilde{Z}|=3$. Such a constant $\kappa_3$ exists due to \cite{cones} (see \cite[Theorem B.2]{second} for an English version). Applying Theorem \ref{lem: localization with log} we get
\begin{equation}
    L_2'[\mathcal{V}^{(2)}\varphi]\ge \sum_{Z:|Z|=3} L_3[\mathcal{V}^{(2)}\varphi u_{Z}] + L_3'[\mathcal{V}^{(3)}\varphi],
\end{equation}
where $\mathcal{V}^{(3)}=\mathcal{V}^{(2)}\sqrt{1-\sum_{Z}u_{Z}^2}$ and the functionals $L_3, L_3':H^1(X_0)\rightarrow\mathbb{R}$ are given by
\begin{align}\begin{split}
        L_3[\psi] &= (1-\gamma_0)\Vert \nabla_0 \psi \Vert^2 + \langle V\psi,\psi \rangle- (\alpha^2+\varepsilon)^2\Vert |x|_m^{-1}\psi \Vert^2\\
        &\qquad \qquad \qquad \qquad  - {\varepsilon} \Vert |q(Z)|_m^{-1} \ln^{-1} \left(\vert q(Z)\vert_m\vert \xi(Z)\vert^{-1}_m\right)\psi\Vert^2_{K_R(Z,\kappa_3',\kappa_3)},\\
        L_3'[\psi]&=(1-\gamma_0)\Vert \nabla_0 \psi \Vert^2 + \langle V\psi,\psi \rangle- (\alpha^2+\varepsilon)\Vert |x|_m^{-1}\psi \Vert^2
\end{split}
\end{align}
for some $\varepsilon>0$ which can be chosen arbitrarily small. Let $Z$ be an arbitrary partition into three clusters.  Then by the same arguments as for partitions $Z$ with $|Z|=2$ we can prove $L_3[\mathcal{V}^{(2)}\varphi u_{Z}]\ge 0$.
If $N\ge 5$, we continue this process for all partitions $Z$ with $|Z|\le N-1$ and finally arrive at the point where it remains to estimate
the functional
\begin{equation}
	L'[\tilde \psi] :=(1-\gamma_0)\Vert \nabla_0 \tilde \psi \Vert^2 + \langle V\tilde \psi,\tilde \psi \rangle- (\alpha^2+\varepsilon)\Vert |x|_m^{-1}\tilde \psi \Vert^2 \ge 0
\end{equation}
for functions $\tilde\psi:=\mathcal{V}^{(N-1)}\varphi$ supported in the region where all particles are separated from each other, i.e. there exists a constant $c>0$, such that $|x_{ij}|\ge c|x|_m $ for $x\in \supp(\mathcal{V}^{(N-1)}\varphi).$ Therefore, we have
\begin{equation}
	 \vert V(x)\vert \le C(1+|x|_m)^{-2-\nu}\le \varepsilon (1+|x|_m)^{-2}
\end{equation}
on the support of $\mathcal{V}^{(N-1)}\varphi$ if $R>0$ is large enough. This implies
\begin{equation}
    L'[\mathcal V^{(N-1)}\varphi]\ge (1-\gamma_0)\Vert \nabla_0\big(\mathcal V^{(N-1)}\varphi\big) \Vert^2 -(\alpha^2+2\varepsilon)\Vert |x|_m^{-1}\mathcal V^{(N-1)}\varphi \Vert^2.
\end{equation}
Similarly to \eqref{eq: estimate gradient by Hardy particles separated} we have
\begin{equation}
\left\Vert \nabla_0 \left(\mathcal V^{(N-1)}\varphi\right)\right\Vert^2\ge \left(\tilde{C}_H(X_0)\right) ^2\left\Vert |x|_m^{-1} \mathcal V^{(N-1)}\varphi\right\Vert^2.
\end{equation}
Since $\alpha<(\tilde{C}_H(X_0))$, we can choose $\gamma_0,\, \varepsilon>0$ sufficiently small to obtain $L'[\mathcal V^{(N-1)}\varphi]\ge 0$. This completes the proof of Lemma \ref{lem abziehlemma N=4} and therefore the proof of Theorem \ref{thm: Virtual level and Hardy constant}.
\end{proof}
\subsection{Proof of Theorem \ref{thm d=1 N=3}}\label{subsection proof d=1 N=3}
Now we turn to the proof of Theorem \ref{thm d=1 N=3}. Since in the case of three one-dimensional particles the configuration space $X_0$ is two-dimensional, we are able to improve the geometric methods and therefore to derive the exact value of the constant $\tilde{C}_H(X_0)$. The proof of the theorem follows from the following lemma, where we collect some geometric properties of the space $X_0$.
\begin{lem}\label{lem: geometry of R0}
Let $d=1$ and $N=3$. Then the following statements hold.
\begin{enumerate} 
\item[\textbf{(i)}]
The lines $x_1=x_2,\ x_1=x_3$ and $x_2=x_3$ divide the space $X_0$ into six sectors $S_1, S_2, \dots, S_6$ with angles $\theta_1=\theta_4,\, \theta_2=\theta_5$ and $\theta_3=\theta_6$. The angles $\theta_i, \, i=1,2,3$ are given by
\begin{equation}\label{eq: angles theta i}
    \theta_i=\arccos\left(\frac{\sqrt{m_j m_k}}{\sqrt{m_i+m_j}\sqrt{m_i+m_k}}\right).
\end{equation}
\item[\textbf{(ii)}] Let 
$\psi\in H_0^1(S_i)$. Then we have
\begin{equation}\label{eq: Hardy cone}
	\Vert \nabla_0 \psi\Vert \ge \frac{\pi}{\theta_i}\Vert |x|_m^{-1}\psi\Vert.
\end{equation}
\end{enumerate}
\end{lem}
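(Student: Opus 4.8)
The plan is to treat the two parts separately: part \textbf{(i)} is a direct computation of angles in the mass metric, and part \textbf{(ii)} reduces to the one-dimensional Dirichlet--Poincar\'e inequality in the angular variable.

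For \textbf{(i)}, I would first identify, for each pair $i<j$, the Riesz representative $g_{ij}\in X_0$ of the linear functional $x\mapsto x_i-x_j$ with respect to the mass metric, i.e. $\langle g_{ij},x\rangle_m=x_i-x_j$ for all $x\in X_0$. A short calculation gives $g_{12}=(m_1^{-1},-m_2^{-1},0)$ and cyclically for the other pairs; crucially $\langle g_{12},(1,1,1)\rangle_m=m_1 m_1^{-1}-m_2 m_2^{-1}=0$, so $g_{ij}$ already lies in $X_0$ and is a normal vector to the collision line $L_{ij}=\{x\in X_0:x_i=x_j\}$. From $\Vert g_{12}\Vert_m^2=m_1^{-1}+m_2^{-1}$ and $\langle g_{12},g_{23}\rangle_m=-m_2^{-1}$ (and their analogues) I obtain the angle between any two of the three lines. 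The six sectors are exactly the six orderings of $x_1,x_2,x_3$; the sector in which particle $i$ sits in the middle, say $\{x_1<x_2<x_3\}=\{\langle g_{12},x\rangle_m<0\}\cap\{\langle g_{23},x\rangle_m<0\}$ for $i=2$, is a wedge whose opening equals $\pi$ minus the angle between its two inward normals $-g_{12},-g_{23}$. Since $\cos\angle(g_{12},g_{23})=-\sqrt{m_1m_3}/\sqrt{(m_1+m_2)(m_2+m_3)}<0$, the supplement flips the sign and yields exactly $\theta_i=\arccos\big(\sqrt{m_jm_k}/\sqrt{(m_i+m_j)(m_i+m_k)}\big)$. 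Finally, the central reflection $x\mapsto-x$ reverses each ordering while fixing the middle particle, so opposite sectors carry equal angles, giving $\theta_1=\theta_4$, $\theta_2=\theta_5$, $\theta_3=\theta_6$; the three normals are pairwise non-parallel, so the three lines are distinct and genuinely cut $X_0$ into six sectors whose angles sum to $2\pi$.

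For \textbf{(ii)}, since $X_0$ equipped with $\langle\cdot,\cdot\rangle_m$ is isometric to the Euclidean plane and $S_i$ is a planar sector of opening $\theta_i$, I would pass to polar coordinates $(r,\omega)$ with $r=|x|_m$ and $\omega\in(0,\theta_i)$. For $\psi\in C_0^\infty(S_i)$ the two integrals read
\begin{equation*}
\Vert\nabla_0\psi\Vert^2=\int_0^\infty\!\!\int_0^{\theta_i}\Big(|\partial_r\psi|^2+r^{-2}|\partial_\omega\psi|^2\Big)\,r\,\d\omega\,\d r,\qquad \Vert |x|_m^{-1}\psi\Vert^2=\int_0^\infty\!\!\int_0^{\theta_i}r^{-2}|\psi|^2\,r\,\d\omega\,\d r.
\end{equation*}
Discarding the non-negative radial term and applying, for each fixed $r$, the sharp one-dimensional Dirichlet--Poincar\'e inequality $\int_0^{\theta_i}|\partial_\omega\psi|^2\,\d\omega\ge(\pi/\theta_i)^2\int_0^{\theta_i}|\psi|^2\,\d\omega$ (valid because $\omega\mapsto\psi(r,\omega)\in H_0^1(0,\theta_i)$, and the first Dirichlet eigenvalue of $-\partial_\omega^2$ on an interval of length $\theta_i$ is $(\pi/\theta_i)^2$) gives $\Vert\nabla_0\psi\Vert^2\ge(\pi/\theta_i)^2\Vert|x|_m^{-1}\psi\Vert^2$. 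The inequality then extends to all of $H_0^1(S_i)$ by density of $C_0^\infty(S_i)$, using Fatou's lemma on the left-hand Hardy term to handle the singular weight at the apex.

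The norm and inner-product computations in \textbf{(i)} and the polar splitting in \textbf{(ii)} are routine; the only genuinely delicate point is the angle bookkeeping in \textbf{(i)}. One must pass correctly from the angle between the two bounding lines (equivalently their mass-metric normals) to the actual opening of the sector, i.e. keep the supplement $\pi-\angle(g_{ij},g_{ik})$ and check that the sign of $\cos\angle(g_{ij},g_{ik})$ produces precisely the stated $\arccos$ rather than its complement. Verifying this supplement consistently for all six orderings is where I expect to have to be most careful.
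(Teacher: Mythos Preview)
Your proof is correct and follows essentially the same route as the paper's. For \textbf{(i)}, the paper works with direction vectors $u_{ij}$ along the collision lines (e.g.\ $u_{12}=(1,1,-(m_1+m_2)/m_3)$) and computes $\cos\theta_1=\langle -u_{12},u_{13}\rangle_m/(|u_{12}|_m|u_{13}|_m)$ directly, whereas you use the dual picture with the normals $g_{ij}$ and take the supplement; both are elementary linear-algebra computations in the mass metric and yield the same formula. For \textbf{(ii)}, the paper invokes a general Hardy inequality for cones (the constant being the first Dirichlet eigenvalue of the Laplace--Beltrami operator on $S_i\cap\mathbb{S}^1$) and then identifies that eigenvalue as $(\pi/\theta_i)^2$; your polar-coordinate argument, dropping the radial term and applying the one-dimensional Dirichlet--Poincar\'e inequality in the angular variable, is exactly the standard proof of that cited result, so your version is more self-contained but not conceptually different.
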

\begin{proof}[Proof of Lemma \ref{lem: geometry of R0}]
The half lines $x_1=x_2\ge 0$, $x_1=x_3\ge 0$ and $x_2=x_3\le 0$ in $X_0$ are spanned by the vectors 
\begin{align}
\begin{split}
	u_{12} &=\left(1,1,-\frac{m_1+m_2}{m_3}\right)^\top, \quad 	u_{13} =\left(1,-\frac{m_1+m_3}{m_2},1\right)^\top\\
	\text{and} \quad u_{23} &=\left(\frac{m_2+m_3}{m_1},-1,-1\right)^\top, \quad \text{respectively}.
\end{split}
\end{align}
Let $S_1$ be the sector between the half-lines $x_1=x_2\le 0$ and $x_1=x_3\ge 0$, $S_2$ the sector between the half-lines $x_1=x_2\ge 0$ and $x_2=x_3\le 0$ and $S_3$ the sector between the half-lines $x_2\le x_3\ge 0$ and $x_1=x_3\ge 0$. Here, we always choose the one sector with angle $0<\theta_i<\pi$, see Figure \ref{fig: sectors}. To illustrate the situation we choose an orthogonal basis $\{v_1,v_2\}$ of $X_0$ with $v_1=u_{12}$ and $v_2=\left(m_2,-m_1,0  \right)^\top $.
\begin{figure}[h]
\begin{center}
\begin{tikzpicture}[scale=0.8]
     \draw[->] (2,0) -- (3,2)node[above right]{$x_2=x_3\le 0$} ;
     \draw[->] (0,0) -- (4,0)node[right]{$x_1=x_2\ge 0$};
     \draw[->] (2,0)--(0,2) node[above]{$x_1=x_3\ge 0$};
      \draw[densely dotted, ->] (2,0) -- (2,3)node[above]{$v_2$} ;
          \node at (3.5,0.9) {$S_2$};
          \node at (1.7,1.5) {$S_3$};
            \node at (0.5,0.7) {$S_1$};
            \draw  (1,0) arc (180:135:1 cm);
             \node at (1.4,0.25) {$\theta_1$};
              \draw  (3,0) arc (0:60:1 cm);
             \node at (2.6,0.3) {$\theta_2$};
               \draw  (1.29,0.71) arc (135:60:1 cm);
             \node at (1.9,0.5) {$\theta_3$};
\end{tikzpicture}
\caption{The sectors $S_1,\ S_2,\ S_3$}\label{fig: sectors}
\end{center}
\end{figure}
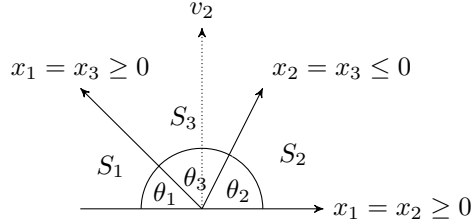$ $ \\
Let $S_4, \ S_5$ and $S_6$ be the sectors which we get by reflecting the sectors $S_1, S_2$ and $S_3$ at the origin. Obviously, $\theta_i=\theta_{i+3},\ i=1,2,3.$ 
Since $\langle - u_{12}, u_{13}\rangle_m > 0$, we have $\theta_1 \in \left(0,\frac{\pi}{2}\right)$ and analogously we see that $\theta_2,\theta_3\in \left(0,\frac{\pi}{2}\right).$ The angle $\theta_1$ can be computed by the formula
\begin{equation}
	\cos(\theta_1)= \frac{\langle -u_{12}, u_{13}\rangle_m}{|u_{12}|_m |u_{13}|_m} = \frac{\sqrt{m_2 m_3}}{\sqrt{m_1+m_2}\sqrt{m_1+m_3}}.
\end{equation}
Similarly we can see that the angles $\theta_2$ and $\theta_3$ also satisfy \eqref{eq: angles theta i}.
This completes the proof of statement \textbf{(i)} of Lemma \ref{lem: geometry of R0}.\\
Now we turn to the proof of the Hardy-type inequality \eqref{eq: Hardy cone} for the sectors $S_i$.
According to \cite[Proposition 4.1]{nazarov} functions $v\in H^1(\mathbb{R}^2)$ supported in a sector $S\subset \mathbb{R}^2$ satisfy
    \begin{equation}\label{eq: Hardy in cone}
        \Vert \nabla v \Vert \ge  \left(\Lambda(G)\right)^{\frac{1}{2}}\Vert \vert x\vert^{-1} v\Vert,
    \end{equation}
    were $\Lambda(G)$ is the first eigenvalue of the Dirichlet problem for the Laplace-Beltrami operator in $G=S \cap \mathbb{S}^1$. In dimension two $G$ can be identified with the interval $(0,\theta)$ where $\theta$ is the angle of $S$. The Dirichlet eigenvalues of the Laplacian on an interval of length $l>0$ are given by $\lambda_k=\left(\frac{k\pi}{l}\right)^2$. Therefore, we have $\Lambda(G) = \left(\frac{\pi}{\theta}\right)^2$, which implies that for any function $v\in H^1(\mathbb{R}^2)$ supported in $S$ we have
    \begin{equation}\label{eq: Hardy in cone 2D}
    \Vert \nabla v \Vert \ge  \frac{\pi}{\theta}\Vert \vert x\vert^{-1} v\Vert.
\end{equation}
This completes the proof of Lemma \ref{lem: geometry of R0} and therefore of Theorem \ref{thm d=1 N=3}.
\end{proof}
\section{Virtual levels of systems of three two-dimensional particles}\label{section: d=2 N=3}
In this section we consider systems of three two-dimensional particles. This is the only case of multi-particle systems in lower dimensions where we have $\tilde{C}_H(X)=1$, which leaves a possibility for virtual levels to correspond to resonances and not to eigenvalues. We give the following
\begin{thm}[Virtual levels of systems of three two-dimensional particles]\label{thm d=2 N=3}
Let $H$ be the Hamiltonian of a system of three two-dimensional particles. Assume that the potentials $V_{ij}\neq 0$ satisfy \eqref{cond: relatively form bounded} and \eqref{cond: decay at infinity} and that $H$ has a virtual level at zero. Then there exists a function $\varphi_0\in \tilde{H}^1(X_0), \ \varphi_0\neq 0$, satisfying 
\begin{equation}
	\Vert \nabla_0\varphi_0\Vert^2+\langle V\varphi_0,\varphi_0\rangle =0
\end{equation}
and 
\begin{equation}\label{eq: estimate phi for d=2 N=3}
	\left(1+|x|_m\right)^{-\alpha}\varphi_0 \in L^2(X_0)\qquad \text{for any } \alpha>0.
	\end{equation}
\end{thm}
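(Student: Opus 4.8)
The plan is to adapt the proof of Theorem~\ref{thm: abstraktes thm paper2} to the \emph{critical} value of the Hardy constant. For $d=2$ and $N=3$ the sets $\{x_i=x_j\}$ have codimension two, so $\tilde{C}_H(X_0)=C_H(X_0)=N-2=1$; this is exactly the borderline case excluded from Theorem~\ref{thm: Virtual level and Hardy constant}, and it forces us to work with exponents $\alpha<1$ only. As in the earlier proofs, since $H$ has a virtual level at zero I would pick for each $n$ an eigenfunction $\psi_n\in H^1(X_0)$ of $H+n^{-1}\Delta_0$ with eigenvalue $E_n<0$, normalized by $\Vert\psi_n\Vert_{\tilde{H}^1}=1$, and pass to a subsequence converging weakly in $\tilde{H}^1(X_0)$ to $\varphi_0$ and strongly in $L^2_{\mathrm{loc}}(X_0)$ by Proposition~\ref{prop: Homogenous Sobolev}. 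Unlike in Theorem~\ref{thm: abstraktes thm paper2}, where $\alpha_0>1$ produced global $L^2$-convergence and hence a genuine eigenfunction, here $\alpha<1$ yields only a resonance, which is why the weaker conclusion \eqref{eq: estimate phi for d=2 N=3} is the natural one.

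First I would establish the uniform weighted bound. By Lemma~\ref{lem abziehlemma N=4}, applied with $\tilde{C}_H(X_0)=1$, for every $\alpha<1$ there are constants $\gamma_0,R>0$ such that the subtraction inequality \eqref{eq: L ge 0} holds for all $\varphi\in H^1(X_0)$ supported in $\{|x|_m\ge R\}$; this is precisely condition \eqref{eq: subtract alpha0} with the critical $\alpha_0=\alpha<1$. I then run the Agmon-type computation of Lemma~\ref{lem: estimate phi 0 for N=1 n=1} and Lemma~\ref{lem: Lemma uniform bound of psi n}: multiplying the eigenvalue equation by $G_\varepsilon^2\overline{\psi_n}$ with $G_\varepsilon=|x|_m^\alpha(1+\varepsilon|x|_m^\alpha)^{-1}\chi_R$, integrating by parts, and using \eqref{eq: L ge 0} to absorb the localization error $\alpha^2\Vert|x|_m^{-1}G_\varepsilon\psi_n\Vert^2$. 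Since $\dim X_0=d(N-1)=4\neq2$, no logarithmic factor appears (this is the $k=2$ exception in Theorem~\ref{thm: abstraktes thm paper2}), and letting $\varepsilon\to0$ gives $\Vert\nabla_0(|x|_m^\alpha\psi_n)\Vert\le C$ and $\Vert(1+|x|_m)^{\alpha-1}\psi_n\Vert\le C$ uniformly in $n$. Passing to the limit with the $L^2_{\mathrm{loc}}$-convergence yields $(1+|x|_m)^{\alpha-1}\varphi_0\in L^2(X_0)$ for every $\alpha<1$; since $(1+|x|_m)^{-\alpha}\le(1+|x|_m)^{-\alpha'}$ whenever $\alpha\ge\alpha'$, this gives \eqref{eq: estimate phi for d=2 N=3} for all $\alpha>0$.

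It remains to prove $\varphi_0\neq0$ and $\Vert\nabla_0\varphi_0\Vert^2+\langle V\varphi_0,\varphi_0\rangle=0$, which I would do as in Lemma~\ref{lem: existence of solution d=1 N=1}. The decisive point is the convergence $\langle V\psi_n,\psi_n\rangle\to\langle V\varphi_0,\varphi_0\rangle$ of the type \eqref{11: (.)}; granting it, the bound $\langle V\psi_n,\psi_n\rangle\le-(1-n^{-1})\Vert\nabla_0\psi_n\Vert^2=(1-n^{-1})(-1+\int_{B(1)}|\psi_n|^2)$ together with $L^2_{\mathrm{loc}}$-convergence, lower semicontinuity of the $\tilde{H}^1$-norm, and $H\ge0$ forces $\Vert\varphi_0\Vert_{\tilde{H}^1}=1$ and the energy identity, after which a density argument extends it to the weak form against all $\psi\in\tilde{H}^1(X_0)$.

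The main obstacle is this last convergence, and it is genuinely harder than in Theorem~\ref{thm: abstraktes Theorem}, because $V=\sum_{i<j}V_{ij}(x_{ij})$ does \emph{not} decay along the collision channels, where one pair $x_{ij}$ stays bounded while the remaining particle escapes, and because $\alpha<1$ gives decay strictly weaker than $|x|_m^{-1}$, so local convergence alone is insufficient. My plan is to split $\langle V\psi_n,\psi_n\rangle$ into a contribution from $B(R_0)$, handled by $L^2_{\mathrm{loc}}$-convergence and form-boundedness as in \eqref{eq: local convergence V psin to Vphi0}, and an exterior contribution treated one pair at a time. For each $V_{ij}$ I would split the exterior into $\{|x_{ij}|>R_1\}$, where the decay \eqref{cond: decay at infinity} combined with the weighted estimate and the two-dimensional Hardy/Weidl inequality \eqref{eq: weidl inequality in 2} in the relative coordinate makes the term uniformly small, and the collision region $\{|x_{ij}|\le R_1\}$, where $|x|_m\simeq|\xi(Z)|_m$ and I would exploit the \emph{gradient-weighted} bound $\Vert\nabla_0(|x|_m^\alpha\psi_n)\Vert\le C$ together with the relative form-boundedness \eqref{cond: relatively form bounded} of $V_{ij}$ in the transverse variable $x_{ij}$ to control the contribution uniformly. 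This channel analysis, which is absent in the $d\ge3$ treatment of \cite{second} where honest $L^2$-eigenfunctions are obtained, is the delicate technical core of the argument.
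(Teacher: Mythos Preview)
Your overall architecture is right and matches the paper: Agmon weights via Lemma~\ref{lem abziehlemma N=4} with $\alpha<1$ give the uniform bound $\Vert(1+|x|_m)^{\alpha-1}\psi_n\Vert\le C$ (this is the paper's Lemma~\ref{lem: L2 bound d=2 N=3}), from which \eqref{eq: estimate phi for d=2 N=3} follows after passing to the limit. The gap is in the convergence $\langle V_{ij}\psi_n,\psi_n\rangle\to\langle V_{ij}\varphi_0,\varphi_0\rangle$ along the collision tube $\{|q_\beta|_m\le R_1,\ |\xi_\beta|_m\ \text{large}\}$. Form-boundedness of $V_{ij}$ in $q_\beta$ (or Corollary~\ref{cor: Estimate Vu^2}) reduces the task to controlling the \emph{unweighted} tube integral $\int\!\!\int_{|q_\beta|_m\le1}|\psi_n|^2\,\mathrm{d}q_\beta\,\mathrm{d}\xi_\beta$ uniformly in $n$; but your weighted bounds with $\alpha<1$ only control $\int|\xi_\beta|_m^{2(\alpha-1)}\int_{|q_\beta|_m\le1}|\psi_n|^2$, which is strictly weaker, and there is no a~priori uniform $L^2$-bound on $\psi_n$. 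The weighted-gradient-plus-form-boundedness plan therefore stalls at exactly this point.

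The paper supplies the missing ingredient, which you do not invoke: since $H$ has a virtual level, the two-body cluster Hamiltonians $H[C_\beta]$ do \emph{not} (remark after Definition~\ref{def: Vitrtual levels multiparticle}), and remark~(iii) after Theorem~\ref{lem no virtual level subtract d=1} then yields the subcriticality estimate $\int_{|q_\beta|_m\le1}|\psi|^2\le C_1\Vert\nabla_{q_\beta}\psi\Vert^2+C_2\langle V_{ij}\psi,\psi\rangle$. This is combined with a separate lemma (Lemma~\ref{lem: d=2,N=3 estimate estimate gradient outside}), proved via the IMS formula and Lemma~\ref{lem abziehlemma N=4} with $\alpha=0$, showing that for a cutoff $\chi$ outside a large ball both $\Vert\nabla_0(\chi\psi_n)\Vert$ and $\langle V_{ij}\chi\psi_n,\chi\psi_n\rangle$ can be made arbitrarily small uniformly in large $n$. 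Together these force the far-field tube integral to be uniformly small, so $\langle V_{ij}\varphi_0,\varphi_0\rangle$ is well defined and the convergence follows. This use of subcluster subcriticality is the essential idea absent from your proposal.
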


\begin{proof}
To prove Theorem \ref{thm d=2 N=3} we take a sequence $(\psi_n)_{n\in \mathbb{N}}$ of eigenfunctions corresponding to eigenvalues $E_n<0$ of the operator $H+n^{-1}\Delta_0$, i.e.
\begin{equation}\label{1: eigenfunctions}
-\left( 1-n^{-1}\right)\Delta_0 \psi_n+V\psi_n=E_n\psi_n.
\end{equation}
We normalize the functions $\psi_n$ by $\Vert \nabla_0\psi_n\Vert=1$. Then there exists a subsequence of $(\psi_n)_{n\in\mathbb{N}}$, also denoted by $(\psi_n)_{n\in\mathbb{N}}$, which converges weakly in $\tilde{H}^1(X_0)$ to a function $\varphi_0\in \tilde{H}^1(X_0)$. Due to the Rellich-Kondrachov theorem we have convergence of $\psi_n$ to $\varphi_0$ in $L_{\mathrm{loc}}^2(X_0)$.\\
At first, we show that $\varphi_0\neq 0$ and establish the decay property \eqref{eq: estimate phi for d=2 N=3} of the function $\varphi_0$. 
Due to Lemma \ref{lem abziehlemma N=4} there exist constants $\gamma_0>0$ and $R>0$, such that for every function $\psi\in H^1(X_0)$ supported in the region $\{|x|_m\ge R\}$
\begin{equation}
	(1-\gamma_0)\Vert \nabla_0 \psi\Vert^2 +\langle V\psi,\psi\rangle \ge 0. 
\end{equation}
Applying Lemma 2.3 in \cite{second} we see that the weak limit $\varphi_0\in \tilde{H}^1(X_0)$ of the sequence $(\psi_n)_{n\in \mathbb{N}}$ of eigenfunctions normalized by $\Vert \nabla_0\psi_n\Vert=1$ is not  zero.
\\ In the next step we show that $\varphi_0$ satisfies the estimate \eqref{eq: estimate phi for d=2 N=3} on the decay rate. To do this we first give the following estimate for a weighted $L^2$ norm of the functions $\psi_n$.
\begin{lem}\label{lem: L2 bound d=2 N=3}
Let $H$ be the Hamiltonian of a system of three two-dimensional particles. Assume that the potentials $V_{ij}$ satisfy \eqref{cond: relatively form bounded} and \eqref{cond: decay at infinity} and that $H$ has a virtual level at zero.
Then, for any $0\le\alpha<1$ there exists a constant $C>0$, such that for all $n\in \mathbb{N}$ we have
\begin{equation}\label{eq: estimate norm psi_n d=2, N=3}
	\Vert \nabla_0\left(|x|_m^{\alpha}\psi_n\right)\Vert \le C \qquad \text{and} \qquad \Vert \left(1+|x|_m\right)^{\alpha-1}\psi_n\Vert \le C.
\end{equation}
\end{lem}
\begin{proof}
The proof is a straightforward modification of the proof of Lemma 2.4 in \cite{second}.
\end{proof}
 By Lemma \ref{lem: L2 bound d=2 N=3} we get convergence of $(\psi_n)_{n\in\mathbb{N}}$ to $\varphi_0$ in $L^2(X_0,(1+|x|_m)^{-\alpha}\mathrm{d}x)$ for any $\alpha>0$. This shows that the function $\varphi_0$ satisfies \eqref{eq: estimate phi for d=2 N=3}.\\
Our next goal is to prove that 
\begin{equation}
	\Vert \nabla_0 \varphi_0\Vert^2+\langle V\varphi_0,\varphi_0\rangle =0.
\end{equation}
Note that first we have to prove that $\langle V\varphi_0,\varphi_0\rangle$ is well defined. Since we do not know whether $\phi_0$ is square-integrable, we can not use the arguments of Lemma \ref{lem: phi 0 eigenfunction}. We prove that $\langle V_{ij}\varphi_0,\varphi_0\rangle$ is well-defined for each pair of particles $\beta=(i,j)$. By Corollary \ref{cor: Estimate Vu^2} 
to do this, it is sufficient to show that
\begin{equation}\label{eq: integral small q finite}
\int \int_{\{|q_{\beta}|_m\le 1\}} |\varphi_0|^2\,\mathrm{d}q_{\beta} \,\mathrm{d}\xi_{\beta} <\infty .
\end{equation}
In other words, it is enough to prove that the restriction of the function $\varphi_0$ to cylindrical regions $\{|q_{\beta}|_m\le 1\}$, $\beta\in\{(1,2),(1,3),(2,3)\}$, is square-integrable. Here and in the following for $\beta=(i,j)$ we denote by $q_{\beta}, \xi_{\beta}$ the variables $q[C],\,\xi[C]$, where $C=\{i,j\}$. \\
 To prove \eqref{eq: integral small q finite} we need to make several steps.  Let $(\psi_n)_{n\in \mathbb{N}}$ be the sequence of eigenfunctions of the operator $H+n^{-1}\Delta_0$, normalized by $\Vert \nabla_0\psi_n\Vert=1$. Furthermore, let $\chi_1:\mathbb{R}_+\rightarrow [0,1]$ be a function with $\chi_1\in C^1(\mathbb{R}_+)$ and $(1-\chi_1^2)^{\frac{1}{2}}\in C^1(\mathbb{R}_+)$, satisfying
\begin{equation}
  \chi_1(t)=0, \, 0\le t\le 1,\qquad \chi_1(t)=1, \, t\ge 2.
  \end{equation}
For $b>0$ let $\chi(x)=\chi_1\left(\frac{|x|_m}{b}\right).$
 The first step to prove that $\langle V_{ij}\varphi_0,\varphi_0\rangle$ is well-defined is the following 
\begin{lem}\label{lem: d=2,N=3 estimate estimate gradient outside}
Let $\psi_n$ and $\chi$ be defined as above. Then, for any $\varepsilon>0$ we can find $b>0$ and $n_0\in \mathbb{N}$, such that for all $n>n_0$ we have
\begin{equation}
	\textbf{(i)} \quad \Vert \nabla_0 \left(\chi \psi_n\right) \Vert <\varepsilon, \qquad \textbf{(ii)}\quad  \langle V_{ij}\chi \psi_n,\chi \psi_n\rangle < \varepsilon, \quad i,j\in \{1,2,3\}.
\end{equation}
\end{lem}
\begin{proof}[Proof of Lemma \ref{lem: d=2,N=3 estimate estimate gradient outside}]
For $\psi\in H^1(X_0)$ let 
\begin{equation}
L[\psi]=\Vert \nabla_0\psi\Vert^2+\langle	 V\psi,\psi\rangle.
\end{equation}
Then, by definition of the functions $\psi_n$ we have $L[\psi_n]\le \frac{1}{n}\Vert \nabla_0\psi_n\Vert^2= \frac{1}{n}$. On the other hand, by the IMS localisation formula we get
\begin{align}\label{eq: IMS d=2, N=3}
	L[\psi_n]=L[(1-\chi^2)^{\frac{1}{2}}\psi_n]+L[\chi\psi_n]-\int_{X_0} \left(|\nabla_0\chi|^2+|\nabla_0(1-\chi^2)^{\frac{1}{2}}|^2\right)|\psi_n|^2\,\mathrm{d}x.
\end{align}
We estimate the terms on the r.h.s. of \eqref{eq: IMS d=2, N=3} separately. Due to $H\ge 0$ the first term is non-negative. Since $\chi$ is supported in the region $\{|x|_m\ge b\}$, by Lemma \ref{lem abziehlemma N=4} with $\alpha=0$ we get 
\begin{equation}\label{eq: estimate L chi psi}
	L[\chi\psi_n]\ge \gamma_0\Vert \nabla_0\left(\chi\psi_n\right)\Vert^2
\end{equation}
for some $\gamma_0>0$ if $b>0$ is large enough. Now we estimate the last term on the r.h.s. of \eqref{eq: IMS d=2, N=3}. Note that $\nabla_0 \chi$ and $\nabla_0 \left(1-\chi^2\right)^{\frac{1}{2}}$ are supported in the region $\{b\le |x|\le 2b\}$ and satisfy 
\begin{equation}\label{eq: estimate cutoff gradient}
	| \nabla_0 \chi|^2+|\nabla_0  \left(1-\chi^2\right)^{\frac{1}{2}}|^2\le \frac{C}{b^2}
\end{equation}
for some $C>0$ which does not depend on $b$. This, together with the estimate \eqref{eq: estimate norm psi_n d=2, N=3} on the decay rate of $\psi_n$ we get, uniformly in $n\in \mathbb{N}$,
\begin{equation}
	\int_{X_0} \left(|\nabla_0\chi|^2+|\nabla_0(1-\chi^2)^{\frac{1}{2}}|^2\right)|\psi_n|^2\,\mathrm{d}x \le 4C \int_{\{|x|_m\ge b\}}\frac{|\psi_n|^2}{|x|_m^2}\,\mathrm{d}x\le \varepsilon_1(b)
\end{equation}
for some $\varepsilon_1(b)$ with $\varepsilon_1(b)\rightarrow 0$ as $b\rightarrow \infty$. Combining this with \eqref{eq: IMS d=2, N=3} and \eqref{eq: estimate L chi psi} we obtain 
\begin{equation}\label{eq: estimate L psin gradient chi psi}
L[\psi_n]\ge \gamma_0\Vert \nabla_0\left(\chi \psi_n\right)\Vert^2-\varepsilon_1(b).
\end{equation}
Since $L[\psi_n]\le \frac{1}{n}$, it follows from \eqref{eq: estimate L psin gradient chi psi} that for fixed $\varepsilon>0$ we can choose $n_0\in \mathbb{N}$ and $b>0$ large enough, such that $\Vert \nabla_0\left(\chi \psi_n\right)\Vert^2 \le \varepsilon$ holds uniformly for $n\ge n_0$. This completes the proof of statement \textbf{(i)} of the Lemma. \\
Now we turn to the proof of assertion \textbf{(ii)}. At first, we note that for any pair $(i_0,j_0)$ we have 
\begin{equation}\label{eq: representation V d=2, N=3}
	\langle V_{i_0j_0} \chi \psi_n,\chi\psi_n\rangle=L[\chi\psi_n]-\Vert \nabla_0\left(\chi\psi_n\right)\Vert^2- \sum\limits_{(i,j)\neq (i_0,j_0)} \langle V_{ij}\chi\psi_n,\chi \psi_n\rangle,
\end{equation} 
i.e. $\langle V_{i_0j_0} \chi \psi_n,\chi\psi_n\rangle$ can be estimated by estimating the r.h.s. of \eqref{eq: representation V d=2, N=3}. For the first term we get by \eqref{eq: IMS d=2, N=3} and \eqref{eq: estimate cutoff gradient}
\begin{equation}
	L[\chi\psi_n]\le L[\psi_n]+C\int_{\{|x|_m\ge b\}} \frac{|\psi_n|^2}{|x|_m^2}\,\mathrm{d}x.
\end{equation}
Now, by using $L[\psi_n]\le \frac{1}{n}$ and the estimate \eqref{eq: estimate norm psi_n d=2, N=3} for the functions $\psi_n$ we obtain
\begin{equation}
	L[\chi\psi_n] \le \frac{1}{n}+\varepsilon_2(b),
\end{equation}
where $\varepsilon_2(b)\rightarrow 0$ as $b\rightarrow \infty$. 
Substituting this in \eqref{eq: representation V d=2, N=3} we get
\begin{equation}\label{eq: estimate V d=2, N=3}
	\langle V_{i_0j_0} \chi \psi_n,\chi\psi_n\rangle \le \frac{1}{n}+\varepsilon_2(b) -\sum\limits_{(i,j)\neq (i_0,j_0)} \langle V_{ij}\chi\psi_n,\chi \psi_n\rangle.
\end{equation}
Now we estimate the last term on the r.h.s. of \eqref{eq: estimate V d=2, N=3}. Since the Hamiltonians of the clusters consisting of two particles do not have negative spectrum, we have
\begin{equation}
\langle V_{ij}\chi\psi_n,\chi \psi_n\rangle\ge -\Vert  \nabla_0\left(\chi\psi_n\right)\Vert^2\ge -\varepsilon,
\end{equation}
where according to statement \textbf{(i)} of the Lemma the constant $\varepsilon>0$ can be chosen arbitrarily small if $b>0$ and $n\in \mathbb{N}$ are sufficiently large.
Inserting this in \eqref{eq: estimate V d=2, N=3} we get
\begin{equation}
	\langle V_{i_0j_0} \chi \psi_n,\chi\psi_n\rangle \le \frac{1}{n}+\varepsilon_2(b)+2\varepsilon,
\end{equation}
which completes the proof of Lemma \ref{lem: d=2,N=3 estimate estimate gradient outside}. 
\end{proof}
Now we turn to the proof of the well-definedness of $\langle V_{ij}\varphi_0,\varphi_0\rangle$. Recall that we need to show that
\begin{equation}
	\int\int_{\{|q_{\beta}|_m\le 1\}} |\varphi_0|^2\,\mathrm{d}q_{\beta}\,\mathrm{d}\xi_{\beta} <\infty.
\end{equation}
Since the cluster Hamiltonians for non-trivial clusters do not have virtual levels and $V_{ij}\neq 0$, by the remark \textbf{(iii)} after Theorem \ref{lem no virtual level subtract d=1} we get
\begin{equation}\label{eq: integral kleine q d=2, N=3}
	\int\int_{\{|q_{\beta}|_m\le 1\}} |\chi\psi_n|^2\,\mathrm{d}q_{\beta}\,\mathrm{d}\xi_{\beta}\le C_1 \Vert \nabla_{q_{\beta}}(\chi\psi_n)\Vert^2+C_2\langle V_{ij}\chi\psi_n,\chi\psi_n\rangle
\end{equation}
for some constants $C_1,C_2>0$ and $\beta=(i,j)$. Now by Lemma \ref{lem: d=2,N=3 estimate estimate gradient outside} we see that the r.h.s. of \eqref{eq: integral kleine q d=2, N=3} can be done arbitrarily small if the constant $b>0$ in the definition of the function $\chi$ and $n\in\mathbb{N}$ are sufficiently large. Hence, for any $\varepsilon>0$ we find $b>0$, such that
\begin{equation}
	\int\int_{\{|q_{\beta}|_m\le 1\}} |\chi\psi_n|^2\,\mathrm{d}q_{\beta}\,\mathrm{d}\xi_{\beta}\le \varepsilon.
\end{equation}
Recall that for $|\xi_{\beta}|_m>2b$ we have $\chi(x)=1$ and therefore
\begin{equation}\label{eq: integral xi gross q klein}
\int_{\{|\xi_{\beta}|_m\ge 2b\}}\int_{\{|q_{\beta}|_m\le 1\}} |\psi_n(x)|^2 \,\mathrm{d}x= \int_{\{|\xi_{\beta}|_m\ge 2b\}}\int_{\{|q_{\beta}|_m\le 1\}} |\chi\psi_n(x)|^2 \,\mathrm{d}x\le \varepsilon
\end{equation}
for $b>0$ and $n\in \mathbb{N}$ large enough. Furthermore, we have $\psi_n\rightarrow \varphi_0$ in $L^2_{\mathrm{loc}}(X_0)$. Therefore, we get
\begin{equation}
\int_{\{|\xi_\beta|_m\le 2b\}} \int_{\{|q_{\beta}|_m\le 1\}} |\psi_n|^2\,\mathrm{d}q_{\beta}\,\mathrm{d}\xi_{\beta} \rightarrow\int_{\{|\xi_\beta|_m\le 2b\}} \int_{\{|q_{\beta}|_m\le 1\}} |\varphi_0|^2\,\mathrm{d}q_{\beta}\,\mathrm{d}\xi_{\beta}.
\end{equation}
This, together with \eqref{eq: integral xi gross q klein} shows that the integral 
\begin{equation}
\int \int_{\{|q_{\beta}|_m\le 1\}} |\varphi_0|^2\,\mathrm{d}q_{\beta}\,\mathrm{d}\xi_{\beta}
\end{equation}
is bounded and thus $\langle V_{ij} \varphi_0,\varphi_0\rangle$ is well-defined. \\
Now we show that $\langle V_{ij} \psi_n,\psi_n\rangle \rightarrow \langle V_{ij}\varphi_0,\varphi_0\rangle$ as $n\rightarrow \infty$.
At first, we consider the integral
\begin{equation}
	\int_{\{|\xi_{\beta}|_m\ge 2b\}} \int|V_{ij}| |\psi_n|^2\,\mathrm{d}q_{\beta}\,\mathrm{d}\xi_{\beta}
\end{equation}
and prove that it can be done arbitrarily small if $b>0$ and $n\in \mathbb{N}$ are large enough. By Corollary \ref{cor: Estimate Vu^2} we have
\begin{align}\label{eq: hardy ungleichung potential angewandt}
\begin{split}
&\int_{\{|\xi_{\beta}|_m\ge 2b\}} \int |V_{ij}| |\psi_n|^2\,\mathrm{d}q_{\beta}\,\mathrm{d}\xi_{\beta}\,\\
&\qquad \qquad \le C\int_{\{|\xi_{\beta}|_m\ge 2b\}}\left(\int |\nabla_{q_{\beta}}\psi_n|^2\,\mathrm{d}q_{\beta}+\int_{\{|q_{\beta}|_m\le 1\}} |\psi_n|^2\,\mathrm{d}q_{\beta}\right)\,\mathrm{d}\xi_{\beta}.
\end{split}
\end{align}
Note that by Lemma \ref{lem: d=2,N=3 estimate estimate gradient outside} we get for arbitrary $\varepsilon>0$
\begin{equation}\label{eq: apply lemma gradient outside}
	\int_{\{|\xi_{\beta}|_m\ge 2b\}}\int |\nabla_{q_{\beta}} \psi_n|^2\,\mathrm{d}q_{\beta}\,\mathrm{d} \xi_{\beta} = \int_{\{|\xi_{\beta}|_m\ge 2b\}}\int |\nabla_{q_{\beta}} (\chi\psi_n)|^2\,\mathrm{d}q_{\beta}\,\mathrm{d} \xi_{\beta} \le \varepsilon
\end{equation}
if $b>0$ and $n\in \mathbb{N}$ are large enough.
Substituting this inequality and inequality \eqref{eq: integral xi gross q klein} in \eqref{eq: hardy ungleichung potential angewandt} yields 
\begin{equation}\label{eq: estimate Vpsin large xi}
	\int_{\{|\xi_{\beta}|_m\ge 2b\}}\int |V_{ij}| |\psi_n|^2\,\mathrm{d}q_{\beta}\,\mathrm{d}\xi_{\beta}\le 2\varepsilon.
\end{equation}
Due to 
\begin{equation}
	\int |V_{ij}| |\varphi_0|^2\,\mathrm{d}q_{\beta}\,\mathrm{d}\xi_{\beta} <\infty
\end{equation}
we also obtain
\begin{equation}\label{eq: estimate Vphi0 large xi}
	\int_{\{|\xi_{\beta}|_m\ge 2b\}}\int |V_{ij}| |\varphi_0|^2\,\mathrm{d}q_{\beta}\,\mathrm{d}\xi_{\beta}\le \varepsilon
\end{equation}
for $b>0$ large enough.
Now we consider the region $\{|\xi_{\beta}|_m\le 2b\}$. Due to the decay property \eqref{cond: decay at infinity} of the potentials $V_{ij}$ and the estimates \eqref{eq: estimate norm psi_n d=2, N=3} and $\eqref{eq: estimate phi for d=2 N=3}$ for the functions $\psi_n$ and $\varphi_0$ we get
\begin{equation}\label{eq: estimate Vpsin large q}
	\int_{\{|\xi_{\beta}|_m\le 2b\}}\int_{\{|q_{\beta}|_m\ge b_1\}}|V_{ij}| |\psi_n|^2\,\mathrm{d}q_{\beta}\,\mathrm{d}\xi_{\beta}<\varepsilon
\end{equation}
and 
\begin{equation}\label{eq: estimate Vphi0 large q}
\int_{\{|\xi_{\beta}|_m\le 2b\}}\int_{\{|q_{\beta}|_m\ge b_1\}}|V_{ij}| |\varphi_0|^2\,\mathrm{d}q_{\beta}\,\mathrm{d}\xi_{\beta}<\varepsilon
\end{equation}
where $\varepsilon>0$ can be chosen arbitrarily small if $b_1>0$ is large enough and estimate \eqref{eq: estimate Vpsin large q} holds uniformly in $n\in \mathbb{N}$.\\
Estimates \eqref{eq: estimate Vpsin large xi} - \eqref{eq: estimate Vphi0 large q} show that to prove convergence $\langle V_{ij}\psi_n,\psi_n\rangle\rightarrow \langle V_{ij}\varphi_0,\varphi_0\rangle$ it suffices to show that $\langle V_{ij}\psi_n,\psi_n\rangle_\Omega\rightarrow \langle V_{ij}\varphi_0,\varphi_0\rangle_\Omega$ for the compact set
\begin{equation}
\Omega:=\{x\in X_0\,:\,|q_{\beta}|_m\le b_1,\ |\xi_{\beta}|_m\le 2b\}.
\end{equation}
We write
\begin{equation}\label{eq: decomposition Vpsin minus Vphi0}
	\langle V_{ij} \psi_n ,\psi_n\rangle_\Omega - \langle V_{ij} \varphi_0 ,\varphi_0\rangle_\Omega= \langle V_{ij} (\psi_n-\varphi_0) ,\psi_n\rangle_\Omega+\langle V_{ij} \varphi_0, (\psi_n-\varphi_0)\rangle_\Omega.
\end{equation}
Since $\psi_n$ converges to $\varphi_0$ in $L^2_{\mathrm{loc}}(X_0)$, $\Vert \nabla_{q_{\beta}}\psi_n\Vert \le 1$, $\Vert \nabla_{q_{\beta}}\varphi_0\Vert \le 1$ and the potential $V_{ij}$ satisfies \eqref{cond: relatively form bounded},  both summands on the r.h.s. of \eqref{eq: decomposition Vpsin minus Vphi0} tend to zero as $n\rightarrow \infty$.  Combining this with the estimates \eqref{eq: estimate Vpsin large xi} - \eqref{eq: estimate Vphi0 large q} we conclude $\langle V_{ij}\psi_n,\psi_n\rangle \rightarrow \langle V_{ij}\varphi_0,\varphi_0\rangle$ for every pair $(i,j)$ of particles and therefore $\langle V\psi_n,\psi_n\rangle \rightarrow \langle V\varphi_0,\varphi_0\rangle$ as $n\rightarrow\infty$.\\
Since by definition of the functions $\psi_n$ 
\begin{equation}
\langle V\psi_n,\psi_n\rangle \le -\left(1-n^{-1}\right),
\end{equation}
we get $\langle V\varphi_0,\varphi_0\rangle \le -1$. On the other hand, $H\ge 0$ and $\Vert \nabla_0\varphi_0\Vert \le 1$. This shows 
\begin{equation}
	\Vert \nabla_0\varphi_0 \Vert^2+\langle V\varphi_0,\varphi_0\rangle =0,
\end{equation}
which completes the proof of Theorem \ref{thm d=2 N=3}.
\end{proof}
\section{Absence of the Efimov effect in multi-particle systems consisting of one- or two-dimensional particles}\label{Section No Efimov multi}
In this section we prove that the Efimov effect does not occur in systems of $N\ge 4$ one-dimensional or $N\ge 5$ two-dimensional particles.
The absence of the Efimov effect for such systems is mainly caused by the fact that in these cases virtual levels of the cluster Hamiltonians $H[C]$ with $|C|=N-1$ correspond to eigenvalues, as we have shown in Section \ref{section: virtual levels multi-particle}. We follow the strategy of the proof of Theorem 5.1 in \cite{second}, which itself is based on ideas of \cite{Semjon2}. However, on a technical level the proof in this section is slightly different from those in \cite{second} and \cite{Semjon2} because Hardy's inequality, which plays an important role in \cite{second} and \cite{Semjon2}, is different in lower dimensions. The main result of this section is the following
\begin{thm}\label{thm: No Efimov}
Let $d=1$ and $N\ge 4$ or $d=2$ and $N\ge 5$. Suppose that every pair potential $V_{ij}\neq 0$ satisfies \eqref{cond: decay at infinity} and is operator bounded with respect to $-\Delta$ with relative bound zero, i.e. for any $\varepsilon>0$ there exists a constant $C(\varepsilon)>0$, such that
\begin{equation}
\Vert V_{ij} \psi\Vert^2\le \varepsilon \Vert \Delta \psi\Vert^2+C(\varepsilon)\Vert \psi\Vert^2,\qquad \psi\in H^2(\mathbb{R}^d).
\end{equation}
Furthermore, assume that $H[C]\ge 0$ for all clusters $C$ with $|C|=N-1$ and there exists $\varepsilon\in (0,1)$, such that
\begin{equation}\label{eq: no virtual level}
	\mathcal{S}_{\mathrm{ess}}\left(-(1-\varepsilon)\Delta_0[C]+V[C]\right)=[0,\infty).
\end{equation}
Then the discrete spectrum of $H$ is finite.
\end{thm}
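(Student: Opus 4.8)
The plan is to prove that $H$ has only finitely many eigenvalues below the bottom of its essential spectrum. First I would record that, by the HVZ theorem together with the assumption $H[C]\ge 0$ for every cluster with $|C|=N-1$ (which, applying HVZ to each such $H[C]$, forces $\inf\mathcal{S}(H[C'])\ge 0$ for all proper clusters $C'$), one has $\mathcal{S}_{\mathrm{ess}}(H)=[0,\infty)$. Hence it suffices to bound the number of negative eigenvalues, and by the min--max principle this reduces to exhibiting a radius $R>0$ and a finite-dimensional subspace $\mathcal{L}\subset L^2(X_0)$ such that $\langle H\psi,\psi\rangle\ge 0$ for every $\psi\in H^1(X_0)$ with $\supp(\psi)\subset\{|x|_m\ge R\}$ and $\psi\perp\mathcal{L}$. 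The contribution of the ball $B(R)$ is handled separately: since the embedding $H^1(B(2R))\hookrightarrow L^2(B(2R))$ is compact and $V$ is relatively form bounded with relative bound zero, the form $\langle H\cdot,\cdot\rangle$ has only finitely many negative directions among functions supported in $B(2R)$, and these are absorbed into $\mathcal{L}$ after an IMS localization.

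Following the strategy of Theorem 5.1 in \cite{second}, I would then make a nested geometric partition of unity of $\{|x|_m\ge R\}$, separating first the cones $K(Z,\kappa)$ of two-cluster partitions $Z$, then, on the complementary region, the cones of three-cluster partitions, and so on down to the region where all particles are mutually separated. The decisive technical input is the refined localization-error estimate of Theorem \ref{lem: localization with log}, which bounds $|\nabla_0 u_Z|^2+|\nabla_0 v_Z|^2$ by $\varepsilon[|v_Z|^2|x|_m^{-2}+|u_Z|^2|q(Z)|_m^{-2}\ln^{-2}(|q(Z)|_m|\xi(Z)|_m^{-1})]$; the first summand will later be absorbed by the Hardy constant $\tilde{C}_H(X_0)>1$ and the second by the two-dimensional Hardy inequality in the internal variables, exactly as in Lemma \ref{lem: estimate two-dimensional with subst}. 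This is the step where the failure of the ordinary Hardy inequality in dimensions one and two is circumvented, replacing the cruder estimate of \cite{Semjon2} used for $d\ge 3$.

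The heart of the argument is the estimate inside each cone $K(Z,\kappa)$ of a two-cluster partition $Z=(C_1,C_2)$, where I would use $H=H(Z)-\Delta_{\xi(Z)}+I(Z)$ and the decay $|I(Z)(x)|\le C|\xi(Z)|_m^{-2-\nu}$. If $H(Z)$ has no virtual level, the absence of a virtual level yields a kinetic gap $\langle H(Z)\psi,\psi\rangle\ge\varepsilon_0\Vert\nabla_{q(Z)}\psi\Vert^2$, and this surplus together with the Poincar\'e--Friedrich inequality \eqref{1: friedrichs} and the decay of $I(Z)$ dominates the localization error, giving non-negativity for $R$ large. If instead $H(Z)$ does have a virtual level, then --- and this is precisely where the hypotheses $d=1,\ N\ge 4$ (so $|C_k|=N-1\ge 3$) and $d=2,\ N\ge 5$ (so $|C_k|=N-1\ge 4$) enter, through Corollary \ref{cor: d=2, N>3}, Corollary \ref{cor: d=1, N>3} and Theorem \ref{thm d=1 N=3} --- the virtual level is a genuine eigenvalue of $H(Z)$ with an $L^2$ eigenfunction $\varphi_0^{(Z)}$. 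I would split the internal Hilbert space into the span of $\varphi_0^{(Z)}$ and its orthogonal complement: on the complement, Theorem \ref{thm: Virtual level and Hardy constant}\,(ii) restores the gap $\langle H(Z)\psi,\psi\rangle\ge\delta_0\Vert\nabla_{q(Z)}\psi\Vert^2$, while on the one-dimensional eigenfunction channel the problem collapses to a one-body Schr\"odinger operator $-\Delta_{\xi(Z)}+W_{\mathrm{eff}}$ in dimension $d$, with $W_{\mathrm{eff}}(\xi)=\langle\varphi_0^{(Z)},I(Z)\varphi_0^{(Z)}\rangle$. Because $\varphi_0^{(Z)}\in L^2$ decays and $I(Z)$ satisfies \eqref{cond: decay at infinity}, this effective potential decays strictly faster than $|\xi(Z)|_m^{-2}$, so the operator has finite discrete spectrum; its finitely many negative modes are added to $\mathcal{L}$. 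Summing over the finitely many cones, together with the fully-separated region where $|V|\le C|x|_m^{-2-\nu}$ is controlled by $\tilde{C}_H(X_0)>1$ exactly as in the proof of Lemma \ref{lem abziehlemma N=4}, produces the required finite-dimensional $\mathcal{L}$ and the non-negativity estimate outside $B(R)$.

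I expect the main obstacle to be twofold. The first is the precise matching of the localization error to the weak low-dimensional Hardy inequalities: the logarithmic factor in Theorem \ref{lem: localization with log} is tailored so that the internal error $|q(Z)|_m^{-2}\ln^{-2}(|q(Z)|_m|\xi(Z)|_m^{-1})$ is exactly of the form controllable by Lemma \ref{lem: estimate two-dimensional with subst}, and carrying this bookkeeping through the nested decomposition without losing the small parameters $\gamma_0,\varepsilon$ is delicate. The second is ensuring that the genuine zero-energy resonances which can still occur in the smallest clusters (two-body in $d=1$, three-body in $d=2$) do not generate an accumulation of eigenvalues; as in \cite{second}, the point is that such resonant clusters are always separated together with at least one further cluster, so that the associated inter-cluster channel is genuinely short-range in a separation space of dimension high enough for the residual attraction to be dominated by the kinetic energy rather than producing an Efimov-type logarithmic series.
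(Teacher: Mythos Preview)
Your overall architecture (geometric partition of unity, Theorem \ref{lem: localization with log} for the localization error, the split between partitions with and without virtual levels) matches the paper. The treatment of the eigenfunction channel, however, diverges from the paper and contains a real gap.

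You propose to project onto $\varphi_0^{(Z)}$ and obtain a one-body operator $-\Delta_\xi+W_{\mathrm{eff}}$ with $W_{\mathrm{eff}}(\xi)=\langle\varphi_0^{(Z)},I(Z)\varphi_0^{(Z)}\rangle$, then collect its finitely many negative modes. Two issues: (a) $W_{\mathrm{eff}}$ must also contain the contribution of the localization error $\varepsilon_1\Vert |q|_m^{-1}\ln^{-1}(|q|_m|\xi|_m^{-1})\,\varphi_0 f\Vert^2_{K_R(Z,\kappa',\kappa)}$, and for this term mere square-integrability of $\varphi_0$ gives only $o(|\xi|_m^{-2})$ decay, which is not sufficient for a Bargmann-type bound in dimension $d=1$ or $2$. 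The paper uses the \emph{quantitative} decay of Theorem \ref{thm: Virtual level and Hardy constant}\,(i), namely $|q|_m^{\alpha}\nabla_q\varphi_0\in L^2$ with $\alpha>1$, to turn this term into $\int|\xi|_m^{-2\alpha}|f(\xi)|^2\,\mathrm d\xi$ (Lemma \ref{lem: estimate H0g g}). (b) The decomposition $\varphi u_Z=f\varphi_0+g$ does not cleanly decouple $\Vert\nabla_\xi(\varphi u_Z)\Vert^2$: the gap from Theorem \ref{thm: Virtual level and Hardy constant}\,(ii) requires \emph{gradient} orthogonality $\langle\nabla_q g,\nabla_q\varphi_0\rangle=0$, while your reduction to an effective one-body problem implicitly assumes $L^2$ orthogonality in order to kill cross terms. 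The paper resolves this by Lemma \ref{lem: estimate nabla_xi phi no efimov}, which invokes \cite[Lemma 5.3]{Semjon2} to show that gradient orthogonality still yields $\Vert\nabla_\xi(\varphi u_Z)\Vert^2\ge\omega(\Vert|\xi|_m^{-1-\delta}\varphi_0 f\Vert^2+\Vert|\xi|_m^{-1-\delta}g\Vert^2)$ for some $\omega>0$.

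With Lemmas \ref{lem: estimate H0g g} and \ref{lem: estimate nabla_xi phi no efimov} in hand, the paper obtains the stronger conclusion $L_2[\varphi u_Z]\ge 0$ for \emph{every} $\varphi$ supported outside $B(R)$ (choosing $\delta<\alpha-1$ so that the positive $|\xi|_m^{-2-2\delta}$ term dominates the negative $|\xi|_m^{-2\alpha}$ term), so no finite-rank subspace in the exterior is needed and Lemma \ref{lem: Abstraktes Lemma endlich viele Eigenwerte} applies directly.
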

\begin{rem}
We emphasize that in Theorem \ref{thm: No Efimov} the cluster Hamiltonian $H[C]$ with $|C|=N-1$ may have a virtual level at zero. For clusters $C'$ with $1<|C'|<N-1$ however, the Hamiltonian $H[C']$ are not allowed to have a virtual level, which is a consequence of \eqref{eq: no virtual level} and the HVZ theorem. 
\end{rem}
\begin{proof}[Proof of Theorem \ref{thm: No Efimov}]
For $\varepsilon>0$ we define the functional $L:H^1(X_0)\rightarrow \mathbb{R}$ as
\begin{equation}
    L[\varphi]:= \langle H\varphi,\varphi\rangle -\varepsilon \Vert |x|_m^{-2}\varphi\Vert^2
\end{equation}
and prove that $L[\varphi]\ge 0$ for any function $\varphi\in H^1(X_0)$ with $\supp(\varphi)\subset \{\vert x\vert_m\ge R\}$ if $R>0$ is large enough and $\varepsilon>0$ is small enough. This implies finiteness of the discrete spectrum of $H$, see Lemma \ref{lem: Abstraktes Lemma endlich viele Eigenwerte} in Appendix \ref{App: Technical Lemmas} (see also \cite{Zhislin1}).
\\
We fix a constant $\kappa>0$, such that $K_R(Z,\kappa)\cap K_R(Z',\kappa)=\emptyset$ for all partitions $Z\neq Z'$ with $|Z|=|Z'|=2$. By applying Theorem \ref{lem: localization with log} we get
\begin{align}
    L[\varphi] \ge \sum\limits_{Z:|Z|=2} L_2[\varphi u_{Z}] + L_2'[\varphi \mathcal{V}],
\end{align}
where $\mathcal{V}=\sqrt{1-\sum_{Z:|Z|=2} u^2_{Z}}$ and the functionals $L_2$ and $L'$ are defined by
\begin{align}
\begin{split}
        L_2[\psi] &:= \langle H\psi,\psi\rangle - \varepsilon \Vert |x|_m^{-2}\psi\Vert^2\\
        &\qquad- \varepsilon_1 \left\Vert \vert q(Z)\vert_m^{-1} |\ln( |q(Z)|_m|\xi(Z)|_m^{-1} )|^{-1}\psi \right\Vert^2_{K_R(Z,\kappa',\kappa)},\\
        L_2' [\psi]&:=\langle H\psi, \psi\rangle - (\varepsilon+\varepsilon_1)  \Vert |x|_m^{-2}\psi\Vert^2.\label{1: L'}
        \end{split}
\end{align}
Here, the constants $\kappa>0$ and $\varepsilon_1>0$ can be chosen arbitrarily small and $\kappa'\in (0,\kappa)$ depends on $\kappa$ and $\varepsilon_1$ only. For the sake of simplicity we omit the index $Z$ in the following computations and write $q$ and $\xi$ instead of $q(Z)$ and $\xi(Z)$, respectively. At first, we prove $L_2[\varphi u_{Z}]\ge 0$. We distinguish between the following two types of partitions $Z=(C_1,C_2)$:
\begin{enumerate}
\item[\textbf{(i)}] $|C_1|<N-1$ and $|C_2|<N-1$,
\item[\textbf{(ii)}] $|C_1|=N-1$ or $|C_2|=N-1$.
\end{enumerate}
In the first case the operators $H[C_1]$ and $H[C_2]$ do not have virtual levels, which implies that there exists a constant $\mu_0>0$, such that
\begin{equation}
\langle H(Z)\psi ,\psi\rangle \geq \mu_0 \Vert \nabla_q \psi \Vert^2
\end{equation}
holds for any function $\psi \in H^1(X_0)$. 
Repeating the arguments which were used in the proof of Lemma \ref{lem abziehlemma N=4} we get $L_2[\varphi u_{Z}]\ge 0$.\\
We turn to case \textbf{(ii)}, where the Hamiltonian $H[C_1]$ or $H[C_2]$ may have a virtual level. Suppose that $|C_1|=N-1$ and that $H[C_1]$ has a virtual level. According to Theorem \ref{thm: Virtual level and Hardy constant}, Corollaries \ref{cor: d=2, N>3} and \ref{cor: d=1, N>3} and Theorem \ref{thm d=1 N=3} zero is a simple eigenvalue of the operator $H[C_1]$. Let $\varphi_0$ be the corresponding eigenfunction normalized by $\Vert \varphi_0 \Vert=1$.  Let
\begin{equation}\label{eq: definition f,g}
    f(\xi) := \Vert \nabla_q \varphi_0\Vert^{-2}\langle \nabla_q \left(\varphi u_{Z}(\cdot, \xi)\right),\nabla_q \varphi_0 \rangle_{{L^2(X_0(Z))}} 
\end{equation}
and 
\begin{equation}\label{eq: definition g}
 g(q,\xi) :=\varphi u_{Z}(q,\xi) -f(\xi)\varphi_0(q).
\end{equation}
Then we have
\begin{equation}\label{orthogonality of phi 0 f and g}
\varphi u_Z= f\varphi_0+g \quad \text{and}\quad \langle \nabla_{q}  g(\cdot,\xi), \nabla_{q}\varphi_0\rangle_{L^2(X_0(Z))} = 0
\end{equation}
for almost every $\xi$. For $|\xi|_m \le\frac{R}{2}$ we have $f(\xi)=0$ and $g(q,\xi) = 0$, because $\varphi u_{Z}=0$ for $|x|\le R$. We write
\begin{align}\label{eq: L2 decomposition}
\begin{split}
    L_2[\varphi u_{Z}] &= \langle H[C_1]\ g,g\rangle  +\langle H[C_1] \varphi_0f,\varphi_0f\rangle + 2\Re\langle g,H[C_1]\varphi_0f\rangle\\& \qquad + \Vert \nabla_{\xi} \left(\varphi u_{Z}\right)\Vert^2 
    +\langle I(Z)\varphi  u_{Z},\varphi  u_{Z}\rangle - \varepsilon  \Vert |x|_m^{-2}\varphi u_{Z}\Vert^2\\& \qquad \qquad - \varepsilon_1 \Vert \vert q\vert_m^{-1} |\ln(|q|_m|\xi|_m^{-1})|^{-1}\varphi  u_{Z}\Vert^2_{K_R(Z,\kappa',\kappa)}.
     \end{split}
\end{align}
Due to $H[C_1]\varphi_0=0$ the second term and the third term on the r.h.s. of \eqref{eq: L2 decomposition} are zero.
Now we estimate the term $\langle I(Z)\varphi  u_{Z},\varphi  u_{Z}\rangle$. For fixed $\varepsilon_2>0$ we get
\begin{equation}
|I(Z)(x)| \le C|\xi|_m^{-2-\nu}\le \frac{\varepsilon_2}{4} |\ln(|\xi|_m)|^{-2}|\xi|_m^{-2}
\end{equation}
for $x\in K_R(Z,\kappa)$ if $R>0$ is large enough. Since $\varphi u_{Z}(q,\xi)=0$ for $|\xi|_m\le \frac{R}{2}$, we can apply the one- or two- dimensional Hardy inequality in the $\xi$-variable to obtain
\begin{equation}\label{eq: Hardy for I(Z2)}
 |	\langle I(Z)\varphi  u_{Z},\varphi  u_{Z}\rangle|\le \frac{\varepsilon_2}{4} \Vert |\ln(|\xi|_m)|^{-1}|\xi|_m^{-1}\varphi u_{Z}\Vert^2\le \varepsilon_2 \Vert \nabla_\xi (\varphi u_{Z})\Vert^2.
\end{equation}
This, together with \eqref{eq: L2 decomposition} implies
\begin{align}\label{eq: absense efimov L2 estimate |x|^-2}
\begin{split}
    L_2[\varphi u_{Z}] \geq & \langle H[C_1] g,g\rangle  +(1-\varepsilon_2) \Vert \nabla_{\xi}\left( \varphi u_{Z}\right)\Vert^2
    - \varepsilon  \Vert |x|_m^{-2}\varphi u_{Z}\Vert^2\\ 
    &\ \qquad   \qquad \ \ \ \ \ \ \qquad- \varepsilon_1 \Vert |\ln(|q|_m|\xi|_m^{-1})|^{-1}\vert q\vert_m^{-1} \varphi  u_{Z}\Vert^2_{K_R(Z,\kappa',\kappa)}.
\end{split}
\end{align}
Since 
\begin{equation}
	\left\Vert |x|_m^{-2}\varphi u_{Z}\right\Vert^2\le \left\Vert |\xi|_m^{-1} (\ln^{-1}|\xi|_m)\varphi u_{Z}\right\Vert^2
\end{equation}
for $ |x|_m>1$ and we have $|\xi|_m\ge \frac{R}{2}$ on the support of $\varphi u_{Z}$, we get
\begin{equation}
	4\varepsilon \Vert \nabla_\xi \left(\varphi u_{Z}\right)\Vert^2-\varepsilon\Vert |x|_m^{-2}\varphi u_{Z}\Vert^2\ge 0.
\end{equation}
Substituting this inequality into \eqref{eq: absense efimov L2 estimate |x|^-2} yields
\begin{align}\label{eq: L2 after 1d Hardy}
\begin{split}
	  L_2[\varphi u_{Z}] &\geq  \langle H[C_1] g,g\rangle  +(1-\varepsilon_3) \Vert \nabla_{\xi}\left( \varphi u_{Z}\right)\Vert^2 \\&\qquad -\varepsilon_1 \Vert |\ln(|q|_m|\xi|_m^{-1})|^{-1} \vert q\vert_m^{-1} \varphi  u_{Z}\Vert^2_{K_R(Z,\kappa',\kappa)},
\end{split}
\end{align}
where $\varepsilon_3=\varepsilon_2+4\varepsilon.$ Now we estimate the term
\begin{equation}\label{eq: H0g,g - loc error}
\langle H[C_1] g,g\rangle -\varepsilon_1 \Vert |\ln(|q|_m|\xi|_m^{-1})|^{-1} \vert q\vert_m^{-1} \varphi  u_{Z}\Vert^2_{K_R(Z,\kappa',\kappa)}.
\end{equation}
This is done in the following
\begin{lem}\label{lem: estimate H0g g}
Let $1<\alpha<\tilde{C}_H(X_0)$ and let $C_1$ be a cluster with $|C_1|=N-1$ and the functions $f,g$ be defined by \eqref{eq: definition f,g} and \eqref{eq: definition g}. Then for $\varepsilon_1>0$ small enough and $R>0$ sufficiently large
\begin{align}
\begin{split}
\langle H[C_1] g,g\rangle  -\varepsilon_1 \Vert |\ln(|q|_m|\xi|_m^{-1})|^{-1} \vert q\vert_m^{-1} \varphi  u_{Z}&\Vert^2_{K_R(Z,\kappa',\kappa)}\\
&\ge - \int_{\{|\xi|_m\ge \frac{R}{2}\}} |\xi|_m^{-2\alpha}|f(\xi)|^2\, \mathrm{d}\xi.
\end{split}
\end{align}
\end{lem}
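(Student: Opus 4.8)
The plan is to use the decomposition $\varphi u_{Z}=f\varphi_0+g$ from \eqref{orthogonality of phi 0 f and g} and to treat the coercive term $\langle H[C_1]g,g\rangle$ and the localization error separately, exploiting that, for almost every fixed $\xi$, the remainder $g(\cdot,\xi)$ is gradient-orthogonal to the zero-energy eigenfunction $\varphi_0$, while the component $f\varphi_0$ is controlled by the decay of $\varphi_0$. Since $|C_1|=N-1\ge 3$, Corollary \ref{cor: d=2, N>3}, Corollary \ref{cor: d=1, N>3} and Theorem \ref{thm d=1 N=3} guarantee $\tilde{C}_H(X_0[C_1])>1$, so Theorem \ref{thm: Virtual level and Hardy constant} applies to the cluster Hamiltonian $H[C_1]$. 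Its part \textbf{(ii)}, applied for almost every $\xi$ to $g(\cdot,\xi)\in H^1(X_0(Z))$, which satisfies $\langle\nabla_q g(\cdot,\xi),\nabla_q\varphi_0\rangle=0$, yields a constant $\delta_0>0$ with $(1-\delta_0)\|\nabla_q g(\cdot,\xi)\|^2+\langle V[C_1]g(\cdot,\xi),g(\cdot,\xi)\rangle\ge 0$. Integrating in $\xi$ gives the coercivity estimate $\langle H[C_1]g,g\rangle\ge\delta_0\|\nabla_q g\|^2$.

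Next I would split the localization error by Cauchy--Schwarz, writing $\||q|_m^{-1}\ln^{-1}(|q|_m|\xi|_m^{-1})\varphi u_{Z}\|^2_{K_R(Z,\kappa',\kappa)}$ as at most $2\||q|_m^{-1}\ln^{-1}(\cdots)f\varphi_0\|^2_{K_R(Z,\kappa',\kappa)}+2\||q|_m^{-1}\ln^{-1}(\cdots)g\|^2_{K_R(Z,\kappa',\kappa)}$. For the first summand, on $K_R(Z,\kappa',\kappa)$ one has $|q|_m\ge\kappa'|\xi|_m$ and $|\xi|_m\ge R/2$, and $\ln^{-2}(|q|_m|\xi|_m^{-1})$ is bounded there because $|q|_m|\xi|_m^{-1}\in[\kappa',\kappa]$ with $\kappa<1$. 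Using $\alpha>1$, I would write $|q|_m^{-2}=|q|_m^{-2\alpha}|q|_m^{2\alpha-2}\le(\kappa')^{-2\alpha}|\xi|_m^{-2\alpha}(1+|q|_m)^{2\alpha-2}$, so that the $q$-integration of $|\varphi_0|^2$ against this weight is finite by the decay $(1+|q|_m)^{\alpha-1}\varphi_0\in L^2(X_0[C_1])$ from Theorem \ref{thm: Virtual level and Hardy constant}\,\textbf{(i)}. This produces $\||q|_m^{-1}\ln^{-1}(\cdots)f\varphi_0\|^2_{K_R(Z,\kappa',\kappa)}\le C\int_{\{|\xi|_m\ge R/2\}}|\xi|_m^{-2\alpha}|f(\xi)|^2\,\mathrm{d}\xi$, with $C$ depending only on $\varphi_0,\alpha,\kappa'$; choosing $\varepsilon_1$ small so that $2\varepsilon_1 C\le 1$ gives precisely the right-hand side of the lemma.

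It then remains to absorb the second ($g$-)summand into $\delta_0\|\nabla_q g\|^2$. When $\dim X_0(Z)\ge 3$ this is immediate from the classical Hardy inequality in $q$, which bounds $\||q|_m^{-1}g\|^2\le C\|\nabla_q g\|^2$; together with the boundedness of $\ln^{-1}(\cdots)$ on the region, $2\varepsilon_1\||q|_m^{-1}\ln^{-1}(\cdots)g\|^2_{K_R(Z,\kappa',\kappa)}\le\delta_0\|\nabla_q g\|^2$ for $\varepsilon_1$ small. In the critical case $d=1,\,N=4$, where $\dim X_0(Z)=2$, I would invoke the logarithmic Hardy inequality, via the substitution $y=q/|\xi|_m$ and Corollary \ref{lem: two dimensional Hardy radially symmetric functions} as in Lemma \ref{lem: estimate two-dimensional with subst}, to again control the weighted $g$-norm by $\|\nabla_q g\|^2$. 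Combining the coercivity bound, the $f\varphi_0$-estimate, and this absorption yields the assertion.

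The main obstacle is exactly this last step in the two-dimensional cluster case. Unlike $\varphi u_{Z}$, the remainder $g$ is \emph{not} compactly supported in the cone $K_R(Z,\kappa)$, since the component $f\varphi_0$ spreads over all of $X_0[C_1]$; hence Lemma \ref{lem: estimate two-dimensional with subst} cannot be applied to $g$ verbatim, and the bounded logarithmic weight on the annular region $K_R(Z,\kappa',\kappa)$ does not by itself suffice (a naive Poincaré bound fails on constants). The resolution must use that $g(\cdot,\xi)$ is gradient-orthogonal to the zero mode $\varphi_0$, which is precisely the mode obstructing the two-dimensional Hardy inequality; removing it restores the missing control, and this is the delicate point I would work out in detail.
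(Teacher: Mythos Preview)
Your coercivity step and the Cauchy--Schwarz splitting of the localization error are fine and agree with the paper. You are also right that the obstruction lies in the $g$--part when $\dim X_0(Z)=2$: since $g=\varphi u_Z-f\varphi_0$ and $\varphi_0$ has full support in $X_0[C_1]$, the function $g(\cdot,\xi)$ does not vanish on $\{|q|_m=\kappa|\xi|_m\}$, so neither Lemma \ref{lem: estimate two-dimensional with subst} nor Corollary \ref{lem: two dimensional Hardy radially symmetric functions} applies to it.

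However, your proposed fix---that gradient-orthogonality of $g$ to $\varphi_0$ ``removes the mode obstructing the two-dimensional Hardy inequality''---does not work. The obstruction to the logarithmic Hardy inequality in a two-dimensional ball is the constant mode, not the zero-energy eigenfunction $\varphi_0$ of $H[C_1]$; orthogonality to $\nabla_q\varphi_0$ carries no Hardy-type information. The paper avoids the issue altogether by a different, and simpler, manoeuvre: it never applies a Hardy inequality to $g$. Instead, from $|\nabla_q(\varphi u_Z)|^2\le 2|\nabla_q(\varphi_0 f)|^2+2|\nabla_q g|^2$ one gets
\[
\|\nabla_q g\|^2_{K_R(Z,\kappa',\kappa)}\ \ge\ \tfrac12\|\nabla_q(\varphi u_Z)\|^2_{K_R(Z,\kappa',\kappa)}-\|\nabla_q(\varphi_0 f)\|^2_{K_R(Z,\kappa',\kappa)},
\]
and then the logarithmic Hardy estimate is applied to $\varphi u_Z$ itself (which \emph{does} vanish on $|q|_m=\kappa|\xi|_m$), absorbing the full localization error $\varepsilon_1\|\,|q|_m^{-1}\ln^{-1}(\cdots)\varphi u_Z\|^2$ into $\tfrac{\delta_0}{2}\|\nabla_q(\varphi u_Z)\|^2$ uniformly in the dimension of $X_0(Z)$. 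What remains is the single term $-\delta_0\|\nabla_q(\varphi_0 f)\|^2_{K_R(Z,\kappa',\kappa)}$, which is controlled using the weighted gradient decay $|q|_m^{\alpha}\nabla_q\varphi_0\in L^2(X_0[C_1])$ from Theorem \ref{thm: Virtual level and Hardy constant}\,\textbf{(i)} (note: the gradient, not $\varphi_0$ itself) together with $|q|_m\ge\kappa'|\xi|_m$ on the annulus; for $R$ large the tail integral $\int_{\kappa'|\xi|_m}^{\kappa|\xi|_m}|q|_m^{2\alpha}|\nabla_q\varphi_0|^2\,\mathrm{d}q$ is as small as one wishes, yielding exactly $-\int_{\{|\xi|_m\ge R/2\}}|\xi|_m^{-2\alpha}|f|^2\,\mathrm{d}\xi$.
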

\begin{proof}[Proof of Lemma \ref{lem: estimate H0g g}]
Due to Theorem \ref{thm: Virtual level and Hardy constant} the orthogonality in \eqref{orthogonality of phi 0 f and g} implies
\begin{equation}\label{eq: estimate Hg,g}
      \langle H[C_1] g,g\rangle \ge \delta_0 \Vert\nabla_{q} g\Vert^2
 \end{equation}
for some $\delta_0>0$. Therefore,
\begin{align}\label{eq: estimate delta gradient g - loc error}
\begin{split}
\langle H[C_1] g,g\rangle &-\varepsilon_1 \Vert |\ln(|q|_m|\xi|_m^{-1})|^{-1} \vert q\vert_m^{-1} \varphi  u_{Z}\Vert^2_{K_R(Z,\kappa',\kappa)}\\
& \ge \delta_0 \Vert\nabla_{q} g\Vert^2-\varepsilon_1 \Vert |\ln(|q|_m|\xi|_m^{-1})|^{-1} \vert q\vert_m^{-1} \varphi  u_{Z}\Vert^2_{K_R(Z,\kappa',\kappa)}.
\end{split}
\end{align}
Since $\varphi u_{Z} = \varphi_0 f + g$, we have 
\begin{equation}
|\nabla_q(\varphi u_{Z})|^2= |\nabla_q (\varphi_0f+g)|^2\le 2|\nabla_q \varphi_0f|^2+2|\nabla_q g|^2,
\end{equation}
which yields
\begin{equation}\label{eq: estimate gradient of g}
\Vert \nabla_q g\Vert^2_{K_R(Z,\kappa',\kappa)}\ge \frac{1}{2}\Vert \nabla_q(\varphi u_{Z})\Vert^2_{K_R(Z,\kappa',\kappa)}-\Vert \nabla_q \varphi_0 f\Vert^2_{K_R(Z,\kappa',\kappa)}.
\end{equation}
Since $\varphi u_{Z}=0$ for $|q|_m=\kappa|\xi|_m$, we get similarly as in the proof of Lemma \ref{lem abziehlemma N=4}  that
\begin{equation}\label{eq: Hardy phi u}
\frac{\delta_0}{2}\Vert \nabla_q(\varphi u_{Z})\Vert^2_{K_R(Z,\kappa',\kappa)}-\varepsilon_1 \Vert |\ln(|q|_m|\xi|_m^{-1})|^{-1} \vert q\vert_m^{-1} \varphi  u_{Z}\Vert^2_{K_R(Z,\kappa',\kappa)} \ge 0
\end{equation}
if $\varepsilon_1>0$ is small enough.
Combining this inequality with \eqref{eq: estimate gradient of g} and \eqref{eq: estimate delta gradient g - loc error} yields
\begin{equation}\label{eq: Hgg nabla q phi0}
	\langle H[C_1]g,g\rangle -\varepsilon_1 \Vert |\ln(|q|_m|\xi|_m^{-1})|^{-1} \vert q\vert_m^{-1} \varphi  u_{Z}\Vert^2_{K_R(Z,\kappa',\kappa)}\ge-\delta_0 \Vert \nabla_q \varphi_0 f\Vert^2_{K_R(Z,\kappa',\kappa)}.
\end{equation}
Now we estimate the term $ \Vert \nabla_q \varphi_0 f\Vert^2_{K_R(Z,\kappa',\kappa)}.$
By Theorem \ref{thm: Virtual level and Hardy constant} we have
\begin{equation}
\left|\nabla_q\left(|q|_m^{\alpha}\varphi_0\right)\right|\in L^2(X_0(Z)) \quad \text{and}\quad  (1+|q|_m)^{\alpha-1}\varphi_0 \in L^2(X_0(Z))
\end{equation}
for any $0\le\alpha<\tilde{C}_H(X_0)$. 
This implies
\begin{equation}\label{eq: |q|gradient in L2}
	|q|_m^{\alpha} \left|\nabla_q \varphi_0\right|\in L^2(X_0(Z)).
\end{equation}
Due to $|\xi|_m\ge \frac{R}{2}$ for $x\in K_R(Z,\kappa)$ we get
\begin{align}
	\notag \int_{K_R(Z,\kappa',\kappa)} |\nabla_q \varphi_0 f|^2\,\mathrm{d}x &= \int_{\{|\xi|_m\ge \frac{R}{2}\}}  |f(\xi)|^2\int_{\kappa'|\xi|_m}^{\kappa|\xi|_m} |\nabla_q \varphi_0|^2\,\mathrm{d}q\, \mathrm{d}\xi\\
	& = \int_{\{|\xi|_m\ge \frac{R}{2}\}}  |f(\xi)|^2\int_{\kappa'|\xi|_m}^{\kappa|\xi|_m} |q|_m^{-2\alpha} |q|_m^{2\alpha} |\nabla_q \varphi_0|^2\,\mathrm{d}q\, \mathrm{d}\xi\\ \notag
	&\le (\kappa')^{-2\alpha}\int_{\{|\xi|_m\ge \frac{R}{2}\}}  |\xi|_m^{-2\alpha}|f(\xi)|^2\int_{\kappa'|\xi|_m}^{\kappa|\xi|_m} |q|_m^{2\alpha} |\nabla_q \varphi_0|^2\,\mathrm{d}q\,\mathrm{d}\xi,
\end{align}
where in the last inequality we used $|q|_m\ge \kappa'|\xi|_m$. Since $|q|_m^{\alpha}|\nabla_q\varphi_0|\in L^2(X_0(Z))$, we have
\begin{equation}
\int_{\kappa'|\xi|_m}^{\kappa|\xi|_m} |q|_m^{2\alpha} |\nabla_q \varphi_0|^2\,\mathrm{d}q \le (\kappa')^{2\alpha}\delta_0^{-1}
\end{equation}
for $|\xi|_m \ge \frac{R}{2}$ if $R>0$ is sufficiently large. This yields
\begin{equation}
-\delta_0 \Vert \nabla_q \varphi_0 f\Vert^2_{K_R(Z,\kappa',\kappa)}\ge - \int_{\{|\xi|_m\ge \frac{R}{2}\}} |\xi|_m^{-2\alpha}|f(\xi)|^2\, \mathrm{d}\xi,
\end{equation}
which completes the proof of Lemma \ref{lem: estimate H0g g}.
\end{proof}
We continue to estimate the functional $L_2[\varphi u_{Z}]$. Combining \eqref{eq: L2 after 1d Hardy} with Lemma \ref{lem: estimate H0g g} we get
\begin{equation}\label{eq: L2 after Lemma}
L_2[\varphi u_{Z}] \ge (1-\varepsilon_3)\Vert \nabla_\xi(\varphi u_{Z})\Vert^2 - \varepsilon_1\int_{\{|\xi|_m\ge \frac{R}{2}\}} |\xi|_m^{-2\alpha}|f(\xi)|^2\, \mathrm{d}\xi.
\end{equation}
In the next step we estimate the term $\Vert \nabla_\xi(\varphi u_{Z})\Vert^2$. This is done in the following
\begin{lem}\label{lem: estimate nabla_xi phi no efimov}
Let $\delta>0$. There exists a constant $\omega>0$ which depends on $\Vert \varphi_0\Vert$, $\Vert \nabla_q\varphi_0\Vert$ and $\Vert\Delta_q\varphi_0\Vert$ only, such that
\begin{equation}\label{eq: estimate nabla_xi phi no efimov}
	\Vert \nabla_{\xi}\left( \varphi u_{Z}\right)\Vert^2 \ge {\omega}\left(\Vert|\xi|_m^{-1-\delta}\varphi_0f \Vert^2+\Vert|\xi|_m^{-1-\delta}g \Vert^2\right).
\end{equation}
\end{lem}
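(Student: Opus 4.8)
The plan is to bound the two terms on the right-hand side of \eqref{eq: estimate nabla_xi phi no efimov} separately, each by a multiple of $\Vert\nabla_\xi(\varphi u_Z)\Vert^2$, and then to average. Throughout write $w=\varphi u_Z$ and recall that, since $|C_1|=N-1$ and $|C_2|=1$, the variable $\xi$ ranges over the $d$-dimensional space $X_c(Z)$, that $w$ is supported in $\{|\xi|_m\ge R/2\}$, and that $\varphi_0=\varphi_0(q)$ does not depend on $\xi$, so $\nabla_\xi(\varphi_0 f)=\varphi_0\,\nabla_\xi f$ and $\nabla_\xi w=\varphi_0\,\nabla_\xi f+\nabla_\xi g$. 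The reason for treating the two terms separately is that in $\Vert\nabla_\xi w\Vert^2=\Vert\varphi_0\nabla_\xi f\Vert^2+2\Re\langle\varphi_0\nabla_\xi f,\nabla_\xi g\rangle+\Vert\nabla_\xi g\Vert^2$ the cross term need not vanish: the decomposition \eqref{orthogonality of phi 0 f and g} is orthogonal in the $\nabla_q$-inner product, not in $L^2$. Bounding each piece by $\Vert\nabla_\xi w\Vert^2$ and averaging sidesteps this.

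The key step is to control $\nabla_\xi f$ by $\nabla_\xi w$ rather than by a second, mixed derivative of $w$. Because $\varphi_0$ is the zero-energy eigenfunction of $H[C_1]=-\Delta_q+V[C_1]$ and the potentials are $-\Delta$-bounded, one has $\varphi_0\in H^2(X_0(Z))$ with $\Delta_q\varphi_0=V[C_1]\varphi_0\in L^2(X_0(Z))$. Integrating by parts in $q$ in the definition \eqref{eq: definition f,g} gives $f(\xi)=-\Vert\nabla_q\varphi_0\Vert^{-2}\langle w(\cdot,\xi),\Delta_q\varphi_0\rangle_{L^2(X_0(Z))}$, and since $\nabla_\xi$ commutes with integration in $q$, $\nabla_\xi f(\xi)=-\Vert\nabla_q\varphi_0\Vert^{-2}\langle\nabla_\xi w(\cdot,\xi),\Delta_q\varphi_0\rangle_{L^2(X_0(Z))}$. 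Applying Cauchy--Schwarz in $q$ and integrating in $\xi$ yields $\Vert\nabla_\xi f\Vert\le C_\varphi\Vert\nabla_\xi w\Vert$ with $C_\varphi=\Vert\Delta_q\varphi_0\Vert\,\Vert\nabla_q\varphi_0\Vert^{-2}$, a constant of the type allowed by the lemma. Consequently $\Vert\nabla_\xi(\varphi_0 f)\Vert=\Vert\varphi_0\Vert\,\Vert\nabla_\xi f\Vert\le\Vert\varphi_0\Vert\,C_\varphi\Vert\nabla_\xi w\Vert$, and by the triangle inequality $\Vert\nabla_\xi g\Vert\le(1+\Vert\varphi_0\Vert\,C_\varphi)\Vert\nabla_\xi w\Vert$.

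Finally I would invoke the one- and two-dimensional Hardy inequality in the $\xi$-variable already used in \eqref{eq: Hardy for I(Z2)}: for any $u$ supported in $\{|\xi|_m\ge R/2\}$ one has $\Vert\nabla_\xi u\Vert^2\ge c\,\Vert|\xi|_m^{-1-\delta}u\Vert^2$, since the subcritical weight $|\xi|_m^{-1-\delta}$ is dominated by $|\xi|_m^{-1}|\ln|\xi|_m|^{-1}$ on the support (and $c$ depends only on $\delta$ and $d$). Applying this to $f$ as a function of $\xi$ gives $\Vert\nabla_\xi f\Vert^2\ge c\,\Vert|\xi|_m^{-1-\delta}f\Vert^2$, hence $\Vert\nabla_\xi w\Vert^2\ge C_\varphi^{-2}c\,\Vert\varphi_0\Vert^{-2}\Vert|\xi|_m^{-1-\delta}\varphi_0 f\Vert^2$; applying it to $g(q,\cdot)$ for a.e. fixed $q$ and integrating in $q$ gives $\Vert\nabla_\xi g\Vert^2\ge c\,\Vert|\xi|_m^{-1-\delta}g\Vert^2$, hence $\Vert\nabla_\xi w\Vert^2\ge(1+\Vert\varphi_0\Vert\,C_\varphi)^{-2}c\,\Vert|\xi|_m^{-1-\delta}g\Vert^2$. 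Averaging the two bounds produces \eqref{eq: estimate nabla_xi phi no efimov} with $\omega=\tfrac12\min\{C_\varphi^{-2}c\Vert\varphi_0\Vert^{-2},\,(1+\Vert\varphi_0\Vert C_\varphi)^{-2}c\}$, which depends only on $\Vert\varphi_0\Vert$, $\Vert\nabla_q\varphi_0\Vert$ and $\Vert\Delta_q\varphi_0\Vert$. The only genuinely delicate point is the integration by parts in $q$ that replaces $\nabla_q\varphi_0$ by $\Delta_q\varphi_0$ and thereby bounds $\nabla_\xi f$ by $\nabla_\xi w$; everything else is Cauchy--Schwarz, the triangle inequality, and the Hardy inequality already established.
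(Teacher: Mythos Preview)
Your argument is correct, and it is a genuinely different route from the paper's. The paper first applies the Hardy inequality in $\xi$ directly to $w=\varphi u_Z$, obtaining $\Vert\nabla_\xi w\Vert^2\ge\tfrac14\Vert|\xi|_m^{-1-\delta}w\Vert^2$, and then expands $|\xi|_m^{-1-\delta}w=|\xi|_m^{-1-\delta}\varphi_0 f+|\xi|_m^{-1-\delta}g$. The cross term $\langle|\xi|_m^{-1-\delta}\varphi_0 f,|\xi|_m^{-1-\delta}g\rangle$ does not vanish, but the $\nabla_q$-orthogonality \eqref{orthogonality of phi 0 f and g} carries over to the weighted functions, and the paper invokes Lemma~5.3 of \cite{Semjon2} to convert $\nabla_q$-orthogonality into an $L^2$ almost-orthogonality bound $|\langle\cdot,\cdot\rangle|\le\tfrac12(1-4\omega)(\Vert\cdot\Vert^2+\Vert\cdot\Vert^2)$, with $\omega$ depending on the three norms of $\varphi_0$. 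Your integration-by-parts step $f(\xi)=-\Vert\nabla_q\varphi_0\Vert^{-2}\langle w(\cdot,\xi),\Delta_q\varphi_0\rangle$ is precisely what lets you avoid that external lemma: it trades the $\nabla_q$ in the definition of $f$ for a plain $L^2$ pairing, so that $\nabla_\xi f$ is controlled by $\nabla_\xi w$ via Cauchy--Schwarz alone, after which Hardy can be applied to $f$ and $g$ separately. Both arguments need $\varphi_0\in H^2(X_0(Z))$ (equivalently, $\Delta_q\varphi_0\in L^2$), which is guaranteed by the relative operator-boundedness hypothesis in Theorem~\ref{thm: No Efimov}. Your approach is more self-contained; the paper's approach isolates the ``almost-orthogonality'' mechanism as a reusable lemma.
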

\begin{rem}
For the case $\dim X_c(Z)=3$, a statement similar to Lemma \ref{lem: estimate nabla_xi phi no efimov} was proved in \cite{Semjon2}. In the proof of Lemma \ref{lem: estimate nabla_xi phi no efimov} we follow the ideas of this work.
\end{rem}
\begin{proof}[Proof of Lemma \ref{lem: estimate nabla_xi phi no efimov}]
Since $\varphi u_{Z}(q,\xi)=0$ for $|\xi|_m\le \frac{R}{2}$, we can apply the one- or two-dimensional Hardy inequality in the space $X_c(Z)$ to the function $\varphi u_{Z}(q,\cdot)$ for fixed $q$. This implies
\begin{align}\label{eq: estimate of gradient xi}
\begin{split}
	\Vert \nabla_{\xi}\left(\varphi u_{Z}\right)\Vert^2&\ge \frac{1}{4}\Vert |\xi|_m^{-1-\delta} \varphi u_{Z}\Vert^2 = \frac{1}{4} \Vert |\xi|_m^{-1-\delta}\varphi_0f+|\xi|_m^{-1-\delta}g\Vert^2\\
	&\ge \frac{1}{4}\left( \Vert|\xi|_m^{-1-\delta}\varphi_0f \Vert^2+\Vert|\xi|_m^{-1-\delta}g \Vert^2-2|\langle |\xi|_m^{-1-\delta}\varphi_0f ,|\xi|_m^{-1-\delta}g \rangle |\right).
\end{split}
\end{align}
Since $\langle \nabla_q \varphi_0 , \nabla_q g\rangle_{L^2(X_0(Z))}=0$, we have
\begin{equation}
	\langle \nabla_q |\xi|_m^{-1-\delta}\varphi_0 f, \nabla_q |\xi|_m^{-1-\delta}g\rangle=0
\end{equation}
and by Lemma 5.3 in \cite{Semjon2} we can find a constant $\omega>0$ which depends on $\Vert \varphi_0\Vert$, $\Vert \nabla_q\varphi_0\Vert$ and $\Vert\Delta_q\varphi_0\Vert$ only, such that 
\begin{equation}
	 \left|\langle |\xi|_m^{-1-\delta}\varphi_0f ,|\xi|_m^{-1-\delta}g \rangle\right| \le \frac{1}{2}\left(1-4\omega\right)\left(\Vert|\xi|_m^{-1-\delta}\varphi_0f \Vert^2+\Vert|\xi|_m^{-1-\delta}g \Vert^2\right).
\end{equation}
Substituting this inequality in \eqref{eq: estimate of gradient xi} yields
\begin{equation}
	\Vert \nabla_{\xi}\left( \varphi u_{Z}\right)\Vert^2 \ge \omega\left(\Vert|\xi|_m^{-1-\delta}\varphi_0f \Vert^2+\Vert|\xi|_m^{-1-\delta}g \Vert^2\right),
\end{equation}
which completes the proof of Lemma \ref{lem: estimate nabla_xi phi no efimov}.
\end{proof}
Combining \eqref{eq: L2 after Lemma} with \eqref{eq: estimate nabla_xi phi no efimov} and using $\Vert \varphi_0\Vert=1$ we get
 \begin{equation}
L_2[\varphi u_{Z}] \ge (1-\varepsilon_3){\omega}\int_{\{|\xi|_m\ge \frac{R}{2}\}}|\xi|_m^{-2-2\delta}|f(\xi)|^2\,\mathrm{d}\xi - \varepsilon_1\int_{\{|\xi|_m\ge \frac{R}{2}\}} |\xi|_m^{-2\alpha}|f(\xi)|^2\, \mathrm{d}\xi.
\end{equation}
Choosing $\delta<\alpha-1$ and $\varepsilon_1, \varepsilon_3>0$ small enough yields $L_2[\varphi u_{Z}]\ge 0$.
\\
To complete the proof of Theorem \ref{thm: No Efimov} it remains to show $L_2'[\varphi\mathcal{V}] \geq 0$ for every $\varphi\in H^1(X_0)$ with $\supp(\varphi)\subset\{x\in X_0\,:\,|x|_m\ge R\}$, where $L_2'$ is the functional defined in \eqref{1: L'}. 
Note that for all partitions $Z=(C_1,\dots,C_p)$ with $p=3,4,\dots , N-1$ the Hamiltonians $H[C_i]$ do not have a virtual level if $|C_i|>1$. Hence, we can estimate the functional $L_2'[\mathcal{V} \varphi]\ge 0$ in cones corresponding to partitions $Z$ with $|Z|\ge 3$ in the same way as in the proof of Lemma \ref{lem abziehlemma N=4}. In the region which remains after separation of the cones corresponding to all partitions $Z$ with $|Z|\le N-1$ we have $|V_{ij}(x_{ij})|\le |x|_m^{-2-\nu}$ for all $i\neq j$. Applying Hardy's inequality in the space $X_0$ completes the proof. 
\end{proof}
\section{Absence of the Efimov effect in systems \\of three one- or two-dimensional particles}\label{Section: No Efimov}
Now we prove that the Efimov effect is absent for systems of three one- or two-dimensional particles. This was first proved in \cite{Semjon4} under restrictive conditions on the pair potentials. There, the potentials had to be compactly supported or short-range and negative at infinity. Later, in \cite{Semjon3} the restrictions on the potentials were relaxed. Unfortunately, Lemma 1 in \cite{Semjon3} contains a mistake. Below we follow the ideas of \cite{Semjon3} and correct this mistake. We give the proof for both, the one- and the two-dimensional case.

\subsection{Systems of three one-dimensional particles}

\begin{thm}[Absence of the Efimov effect for systems of three one-dimensional particles]\label{thm: No Efimov N=3}
Let $H$ be the Hamiltonian corresponding to a system of $N=3$ one-dimensional particles. Suppose that each pair potential $V_{ij}$ satisfies \eqref{cond: relatively form bounded} and \eqref{cond: decay at infinity} and that $H[C]\ge 0$ for any cluster $C$ with $|C|=2$. Then the discrete spectrum of $H$ is finite.
\end{thm}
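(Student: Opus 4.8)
The plan is to follow the scheme of the proof of Theorem \ref{thm: No Efimov}, but to account for the fact that for three one-dimensional particles the relevant two-particle subsystems live on a \emph{one-dimensional} space, so that when $H[\{i,j\}]$ has a virtual level it is a genuine resonance (by Theorem \ref{thm: abstraktes Theorem}) and not an eigenvalue. First I would reduce the claim to showing that the functional
$$L[\varphi]=\langle H\varphi,\varphi\rangle-\varepsilon\big\||x|_m^{-2}\varphi\big\|^2$$
is non-negative for every $\varphi\in H^1(X_0)$ supported in $\{|x|_m\ge R\}$, with $R$ large and $\varepsilon$ small; by Lemma \ref{lem: Abstraktes Lemma endlich viele Eigenwerte} this yields finiteness of the discrete spectrum. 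Since $X_0$ is two-dimensional and the only partition finer than the two-cluster ones is the fully separated one, a single partition of unity suffices: using Theorem \ref{lem: localization with log} I would separate the three cones $K(Z,\kappa)$ associated with the partitions $Z=(\{i,j\},\{k\})$ (made pairwise disjoint by a suitable choice of $\kappa$, see \cite{cones}) from the region where all three particles are pairwise far apart, obtaining $L[\varphi]\ge\sum_{Z}L_2[\varphi u_Z]+L_2'[\varphi\mathcal V]$ with $L_2,L_2'$ the functionals appearing in the proof of Theorem \ref{thm: No Efimov}.

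In the fully separated region the localized function vanishes on all collision lines $\{x_i=x_j\}$ and can be approximated in $H^1$ by elements of $\mathcal M$; there the potential decays faster than $|x|_m^{-2}$, and the sharp sector Hardy inequality of Theorem \ref{thm d=1 N=3}, which gives $\tilde C_H(X_0)=\pi/\theta_0\ge 2>1$, dominates both the potential and the weight, so $L_2'[\varphi\mathcal V]\ge 0$. This is the point that genuinely needs the one-dimensional geometry, since in the two-dimensional space $X_0$ the ordinary Hardy inequality fails and only the vanishing on the collision lines restores a positive constant. In a cone $Z=(\{i,j\},\{k\})$, writing $q=q(Z)$, $\xi=\xi(Z)$ (both one-dimensional), $\psi=\varphi u_Z$ and $H=H[\{i,j\}]-\Delta_\xi+I(Z)$, the case in which $H[\{i,j\}]$ has no virtual level is easy: then $H[\{i,j\}]\ge\mu_0(-\Delta_q)$ for some $\mu_0>0$, and the gap, the Poincar\'e--Friedrich inequality in the cone, and the one-dimensional Hardy inequality in $\xi$ (available because $\dim X_c(Z)=1$) absorb $I(Z)$, the weight $\varepsilon|x|_m^{-2}$ and the localization error, exactly as in Lemma \ref{lem abziehlemma N=4}.

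The essential new case is when $H[\{i,j\}]$ has a virtual level. By Theorem \ref{thm: abstraktes Theorem} this corresponds to a resonance $\varphi_0$ on the line with $\varphi_0\notin L^2$, but satisfying the coercivity \eqref{eq: orthogonality subtract short-range} on the gradient-orthogonal complement (here the relative $-\Delta$-boundedness assumed in the theorem is used). As for $H[C_1]$ in Theorem \ref{thm: No Efimov}, I would split $\psi=f(\xi)\varphi_0(q)+g(q,\xi)$ with $\langle\nabla_q g(\cdot,\xi),\nabla_q\varphi_0\rangle=0$, so that the cross terms and the self-energy of $f\varphi_0$ in $H[\{i,j\}]$ vanish and $\langle H[\{i,j\}]g,g\rangle\ge\delta_0\|\nabla_q g\|^2$. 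The localization error is then absorbed into $\delta_0\|\nabla_q g\|^2$ up to the remainder $\delta_0\|\nabla_q\varphi_0\,f\|^2$ over the shell $K_R(Z,\kappa',\kappa)$. The decisive observation is that although $\varphi_0$ itself does not decay, its gradient does: the weighted bound $\||q|^{\alpha_0}\nabla_q\varphi_0\|<\infty$ of Lemma \ref{lem: estimate phi 0 for N=1 n=1}, bootstrapped through $\varphi_0''=V\varphi_0$ with $|V|\le C(1+|q|)^{-2-\nu}$, yields $|\nabla_q\varphi_0(q)|\lesssim(1+|q|)^{-1-\nu}$. Integrating over $\kappa'|\xi|\le|q|\le\kappa|\xi|$ produces an effective attraction of order $\int|f(\xi)|^2|\xi|^{-1-2\nu}\,\mathrm{d}\xi$ in the channel $\xi$, whereas the surviving $\xi$-kinetic energy carries the weight $\int_{\{|q|\le\kappa|\xi|\}}|\varphi_0|^2\,\mathrm{d}q\sim|\xi|$. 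After the substitution $t=\ln|\xi|$ the kinetic term becomes flat while the attraction becomes exponentially small in $t$, hence short-range, so the one-dimensional Hardy inequality in $\xi$ dominates it once $R$ is large and $L_2[\varphi u_Z]\ge 0$.

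I expect the main obstacle to be precisely this last estimate. Because $\varphi_0$ is not square-integrable and is asymptotically constant, the auxiliary Lemmas \ref{lem: estimate H0g g} and \ref{lem: estimate nabla_xi phi no efimov}, which rely on an $L^2$ eigenfunction with fast decay, do not apply verbatim; one must instead track the decay $|\nabla_q\varphi_0|\lesssim(1+|q|)^{-1-\nu}$ and verify that the induced $\xi$-attraction is short-range in the logarithmic variable. This is exactly where the strength $\nu>0$ of the decay assumption \eqref{cond: decay at infinity} enters: it makes the effective attraction decay strictly faster than the critical rate, whereas in dimension three the analogous attraction is exactly critical (constant in $t$), which is why the Efimov effect is present there and absent here. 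This is the step that must be corrected relative to the argument of \cite{Semjon3}.
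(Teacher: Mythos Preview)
Your approach is genuinely different from the paper's, and the difference is instructive. You try to adapt the projection scheme of Theorem~\ref{thm: No Efimov} to the resonant two-body subsystem, splitting $\psi=f\varphi_0+g$ along the zero-energy resonance. The paper does \emph{not} do this at all. Instead it avoids the partition of unity of Theorem~\ref{lem: localization with log} entirely and integrates directly over each cone $K_R(Z,\kappa)$, producing boundary terms on the edges $\{|\tilde q|=\kappa|\tilde\xi|\}$: Lemma~\ref{lem: Lemma 1 in diemension 1} bounds the truncated two-body functional $\int_{-b_0}^{b_0}(|\psi'|^2+V|\psi|^2)$ from below by $-Cb_0^{-1-\nu}(|\psi(b_0)|^2+|\psi(-b_0)|^2)$ using only $h\ge 0$ and an extension trick, and Lemma~\ref{lem: integration in xi} does the analogous job in the $\xi$-variable. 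These boundary terms, living on the one-dimensional edges of the cone, are then absorbed by a fraction of $\|\nabla_0\varphi\|^2$ over the collar $K_R(Z,\kappa,\kappa')$ via the elementary trace estimate of Lemma~\ref{lem: Lemma 2 in diemension 1}. The crucial point is that Lemma~\ref{lem: Lemma 1 in diemension 1} is oblivious to whether $H[\{i,j\}]$ has a virtual level: it uses only $h\ge 0$, so the whole resonance-vs-eigenvalue dichotomy that you are fighting simply never arises.

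Your route, by contrast, has a real gap at exactly the place you identify. Once $\varphi_0\notin L^2$, neither Lemma~\ref{lem: estimate H0g g} nor Lemma~\ref{lem: estimate nabla_xi phi no efimov} is available, and the heuristic ``after $t=\ln|\xi|$ the kinetic term becomes flat while the attraction is exponentially small'' is not a proof: you still need to show that $\|\nabla_\xi(\varphi u_Z)\|^2$ controls a weighted norm of $f$ with a weight that beats $|\xi|^{-1-2\nu}$, and the near-orthogonality Lemma 5.3 of \cite{Semjon2} you would need for this fails because $\|\varphi_0\|=\infty$. A second, smaller issue: you invoke \eqref{eq: orthogonality subtract short-range} and write ``here the relative $-\Delta$-boundedness assumed in the theorem is used'', but Theorem~\ref{thm: No Efimov N=3} assumes only form boundedness \eqref{cond: relatively form bounded}, so you would have to pass through the finite-dimensional version in Remark~(ii) after Theorem~\ref{thm: abstraktes Theorem}. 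The paper's direct boundary-term method buys you exactly that you never need to project onto $\varphi_0$, never need $\varphi_0\in L^2$, and never need operator boundedness.
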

In the proof of Theorem \ref{thm: No Efimov N=3} we will use the following lemmas.
\begin{lem}\label{lem: Lemma 1 in diemension 1}
Consider the Schr\"odinger operator $h=-\Delta +V$ in $L^2(\mathbb{R})$, such that $h\geq 0$ and the potential $V$ satisfies \eqref{cond: relatively form bounded} and \eqref{cond: decay at infinity}. Then there exists a constant $C>0$, such that for any $b_0>A$ and any function $\psi \in H^1(\mathbb{R})$ 
\begin{equation}\label{APP: Lemma2}
J[\psi,b_0]:=\int_{-b_0}^{b_0} \left(|\psi'(t)|^2+V(t)|\psi(t)|^2\right) \, \mathrm{d}t \geq - Cb_0^{-1-\nu} \left(|\psi(b_0)|^2+|\psi(-b_0)|^2\right).
\end{equation}
Here, $\nu$ and $A$ are the constants given by \eqref{cond: decay at infinity}.
\end{lem}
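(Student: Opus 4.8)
The plan is to exploit the global non-negativity $h\geq 0$ by extending an arbitrary $\psi\in H^1(\mathbb{R})$ to a function $\tilde\psi\in H^1(\mathbb{R})$ that agrees with $\psi$ on $[-b_0,b_0]$ and whose ``tail'' outside this interval contributes only $O(b_0^{-1-\nu})$ to the quadratic form. Since $\psi\in H^1(\mathbb{R})$ is continuous in dimension one, the boundary values $\psi(\pm b_0)$ are well-defined, and the key observation is that the decay \eqref{cond: decay at infinity} of $V$ makes the potential part of the tail automatically small, so that only the gradient part must be engineered by hand.

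Concretely, I would set $\tilde\psi=\psi$ on $[-b_0,b_0]$ and let $\tilde\psi$ decrease linearly from $\psi(b_0)$ to $0$ on $[b_0,b_0+L]$, linearly from $\psi(-b_0)$ to $0$ on $[-b_0-L,-b_0]$, and $\tilde\psi=0$ beyond, with $L:=b_0^{1+\nu}$. This $\tilde\psi$ is continuous and piecewise $H^1$, hence $\tilde\psi\in H^1(\mathbb{R})$, so the hypothesis $h\geq 0$ together with the form boundedness \eqref{cond: relatively form bounded} gives
\[
0\le \int_{\mathbb{R}}\left(|\tilde\psi'|^2+V|\tilde\psi|^2\right)\,\mathrm{d}t = J[\psi,b_0]+T_++T_-,
\]
where $T_\pm$ denote the contributions over $(b_0,\infty)$ and $(-\infty,-b_0)$. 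It then suffices to prove the bound $|T_+|+|T_-|\le Cb_0^{-1-\nu}\left(|\psi(b_0)|^2+|\psi(-b_0)|^2\right)$, since then $J[\psi,b_0]\ge -(T_++T_-)\ge -(|T_+|+|T_-|)$.

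For the gradient part of $T_+$, the linear profile satisfies $|\tilde\psi'|=|\psi(b_0)|/L$ on $[b_0,b_0+L]$, so $\int_{b_0}^{b_0+L}|\tilde\psi'|^2\,\mathrm{d}t=|\psi(b_0)|^2/L=|\psi(b_0)|^2b_0^{-1-\nu}$; the choice $L=b_0^{1+\nu}$ is exactly what produces this rate. For the potential part, since $b_0>A$ and $|\tilde\psi|\le|\psi(b_0)|$ on the tail, the decay bound \eqref{cond: decay at infinity} gives
\[
\left|\int_{b_0}^{b_0+L}V|\tilde\psi|^2\,\mathrm{d}t\right|\le C|\psi(b_0)|^2\int_{b_0}^\infty (1+t)^{-2-\nu}\,\mathrm{d}t\le \frac{C}{1+\nu}|\psi(b_0)|^2b_0^{-1-\nu}.
\]
The same estimates apply to $T_-$ with $\psi(-b_0)$ in place of $\psi(b_0)$, and combining them yields the claim with a constant $C$ depending only on $\nu$ and the constant in \eqref{cond: decay at infinity}.

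The main delicate point is the balancing of the length $L$: any cutoff forces gradient energy of order $1/L$, so $L$ must grow like $b_0^{1+\nu}$, while one must simultaneously check that over such a long interval the potential energy remains $O(b_0^{-1-\nu})$. This is guaranteed by the integrability of $(1+t)^{-2-\nu}$ near infinity and is precisely where the exponent $2+\nu$, rather than merely $2$, in \eqref{cond: decay at infinity} enters. Note that no Hardy-type inequality is needed in this one-dimensional boundary estimate, in contrast to the higher-dimensional arguments.
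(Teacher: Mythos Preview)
Your proof is correct and follows essentially the same strategy as the paper: extend $\psi$ linearly outside $[-b_0,b_0]$, apply $h\ge 0$ to the extension, and bound the tail contributions using the decay of $V$. The only difference is cosmetic: the paper extends over $[b_0,nb_0]$ and sends $n\to\infty$, so the gradient tail vanishes in the limit and only the potential tail survives, whereas you choose the finite length $L=b_0^{1+\nu}$ so that the gradient tail is already of the right order $b_0^{-1-\nu}$. Both variants yield the claimed bound; yours avoids the limiting argument at the cost of having to balance $L$.
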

\begin{proof}[Proof of Lemma \ref{lem: Lemma 1 in diemension 1}] Let $\psi \in H^1(\mathbb{R})$ and $b_0>A$. For $n\geq 2$ we define the function $\psi_n$ as 
$\psi_n(t)=\psi(t) $ for $-b_0 \leq t\leq b_0$, $\psi_n(t) = 0$ for $   t< -nb_0$ and for $t> nb_0 $,  $\psi_n(t)=\psi(-b_0)\frac{nb_0+t}{b_0(n-1)}$ for $ -nb_0<t<-b_0$ and $\psi_n(t)= \psi(b_0)\frac{nb_0-t}{b_0(n-1)}$ for $ b_0<t<nb_0.$ 
Since $\psi$ and $\psi_n$ coincide for $-b_0\le t\le b_0$, we have
\begin{align}\label{eq: estimate quadratic form one dimensional}
\begin{split}
 \langle h \psi_n,\psi_n\rangle &\leq \int_{-b_0}^{b_0}\left(|\psi'(t)|^2+V(t)|\psi(t)|^2\right) \, \mathrm{d}t +\int\limits_{-nb_0}^{-b_0}\left(|\psi_n'(t)|^2+|V(t)| |\psi_n(t)|^2\right)\mathrm{d}t 
\\ &\ \qquad  + \int\limits_{b_0}^{nb_0}\left(|\psi_n'(t)|^2+|V(t)||\psi_n(t)|^2\right) \, \mathrm{d}t.
\end{split}
 \end{align}
At first, we estimate the two last integrals of the r.h.s of \eqref{eq: estimate quadratic form one dimensional}. Since $\psi_n'(t)=\frac{\psi(-b_0)}{b_0(n-1)}$ for $t\in (-nb_0,-b_0)$ and $\psi_n'(t)=-\frac{\psi(b_0)}{b_0(n-1)}$ for $t\in (b_0,nb_0)$, we get
\begin{equation}\label{eq: estimate psi' by eps}
	\int\limits_{-nb_0}^{-b_0}|\psi_n'(t)|^2\,\mathrm{d}t+ \int\limits_{b_0}^{nb_0}|\psi_n'(t)|^2\,\mathrm{d}t <\varepsilon,
\end{equation}
where $\varepsilon>0$ can be chosen arbitrarily small if $n$ is large enough. Moreover, $0\le\frac{nb_0+t}{b_0(n-1)}\le 1$ for $t\in (-nb_0,-b_0)$.  
 This implies
 \begin{equation}
 \int\limits_{-nb_0}^{-b_0}|V(t)| |\psi_n(t)|^2\mathrm{d}t \le |\psi(-b_0)|^2\int \limits_{-nb_0}^{-b_0}|V(t)|\mathrm{d}t
 \end{equation}
 and analogously we get
\begin{equation}
 \int\limits_{b_0}^{nb_0}|V(t)| |\psi_n(t)|^2\mathrm{d}t \le |\psi(b_0)|^2\int \limits_{b_0}^{nb_0}|V(t)|\mathrm{d}t.
 \end{equation} 
 This, together with \eqref{eq: estimate quadratic form one dimensional} and \eqref{eq: estimate psi' by eps} yields
\begin{align}\label{eq: estimate quadratic form one dimensional 2}
\begin{split}
 \langle h \psi_n,\psi_n\rangle & \le J[\psi,b_0]+|\psi(-b_0)|^2\int\limits_{-nb_0}^{-b_0}|V(t)|\mathrm{d}t + |\psi(b_0)|^2\int\limits_{b_0}^{nb_0}|V(t)| \, \mathrm{d}t+\varepsilon.
\end{split}
 \end{align}
Now we estimate the integrals on the r.h.s of \eqref{eq: estimate quadratic form one dimensional 2}.
For any $0<\delta<\nu$ we have $-1-\nu+\delta<-1$. Since $b_0\ge A$, we get by \eqref{cond: decay at infinity}
\begin{equation}\label{eq: estimate integral v 1}
\int_{b_0}^{nb_0}|V(t)|\, \mathrm{d}t \leq c b_0^{-1-\delta} \int_{A}^{\infty}|t|^{-1-\nu+\delta}\, \mathrm{d}t \leq c_1 b_0^{-1-\delta}
\end{equation}
for some constants $c,c_1>0$. Analogously we have
\begin{equation}\label{eq: estimate integral v 2}
\int_{-nb_0}^{-b_0}|V(t)|\, \mathrm{d}t \leq c_1 b_0^{-1-\delta}.
\end{equation}
 Due to $h \geq 0$ we conclude from \eqref{eq: estimate quadratic form one dimensional 2}, \eqref{eq: estimate integral v 1} and \eqref{eq: estimate integral v 2} that
\begin{equation*}
J[\psi,b_0] \geq - cb_0^{-1-\delta} \left(|\psi(b_0)|^2+|\psi(-b_0)|^2\right)-\varepsilon.
\end{equation*}
Since $\varepsilon>0$ can be chosen arbitrarily small, this completes the proof.
\end{proof}

\begin{lem}\label{lem: integration in xi}
Let $C_0>0$. Then for any sufficiently large $b>0$ and for any $\psi\in H^1(\mathbb{R})$ 
\begin{equation}
\int_b^\infty \left(|\psi'(t)|^2-C_0 t^{-2-\nu}|\psi(t)|^2\right)\,\mathrm{d}t\ge -{2C_0b^{-1-\nu}} |\psi(b)|^2.
\end{equation}
\end{lem}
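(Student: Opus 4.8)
The plan is to prove this one-dimensional, \emph{subcritical} Hardy-type inequality on the half-line by a ground-state (Jacobi) substitution with an explicit positive supersolution. Since $\psi\in H^1(\mathbb{R})$ is continuous and $\psi(t)\to 0$ as $t\to+\infty$ (Sobolev embedding on the line), and since writing $\psi=\psi_1+\mathrm{i}\psi_2$ splits the claim into its real and imaginary parts, which can be added back afterwards, I would assume $\psi$ real-valued from the outset. The core step is to introduce a weight $w>0$ on $[b,\infty)$ and substitute $\psi=w\phi$. Integrating the cross term $\int_b^\infty 2w'w\,\phi\phi'$ by parts and using $\psi(t)\to0$ to annihilate the contribution at $t=\infty$, one arrives at the identity
\begin{equation*}
\int_b^\infty\!\big((\psi')^2-C_0t^{-2-\nu}\psi^2\big)\,\mathrm{d}t
=\int_b^\infty w^2(\phi')^2\,\mathrm{d}t-\int_b^\infty\Big(\tfrac{w''}{w}+C_0t^{-2-\nu}\Big)\psi^2\,\mathrm{d}t-\frac{w'(b)}{w(b)}\,\psi(b)^2 .
\end{equation*}

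The first term on the right is nonnegative and is simply discarded. It then remains to choose $w$ so that the bracket $\tfrac{w''}{w}+C_0t^{-2-\nu}$ is $\le 0$ (a supersolution condition, which makes the middle term contribute nonnegatively) and so that the boundary coefficient obeys $w'(b)/w(b)\le 2C_0b^{-1-\nu}$. Here the decay of the potential faster than $t^{-2}$ is exactly what makes this possible; I would take the explicit function
\begin{equation*}
w(t)=1-\frac{C_0}{\nu(\nu+1)}\,t^{-\nu},
\end{equation*}
for which a direct computation gives $w''+C_0t^{-2-\nu}w=-\tfrac{C_0^2}{\nu(\nu+1)}t^{-2-2\nu}\le 0$. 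For $b$ large enough one has $w\ge\tfrac12$ on $[b,\infty)$, so $w>0$ and $w$ is increasing, whence $w'(b)/w(b)=\tfrac{C_0}{\nu+1}b^{-1-\nu}/w(b)\le\tfrac{2C_0}{\nu+1}b^{-1-\nu}\le 2C_0b^{-1-\nu}$. Feeding these two facts into the identity yields precisely the asserted bound $\int_b^\infty\!\big((\psi')^2-C_0t^{-2-\nu}\psi^2\big)\ge -2C_0b^{-1-\nu}\psi(b)^2$; the hypothesis ``$b$ sufficiently large'' is exactly what secures positivity of $w$.

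The main point requiring care is the integration by parts together with the vanishing of the boundary term at infinity. I would first establish the identity for $\psi\in C^1_c([b,\infty))$, where every integral is manifestly finite, and then pass to general $\psi\in H^1(\mathbb{R})$ by density, using $w,1/w\in L^\infty[b,\infty)$ and that $\tfrac{w'(t)}{w(t)}\psi(t)^2\to 0$ follows from $w'/w\to0$ and $\psi(t)\to 0$. A cleaner-looking but less sharp alternative is to integrate $C_0\int_b^\infty t^{-2-\nu}\psi^2$ by parts against the primitive $W(t)=\tfrac{C_0}{1+\nu}t^{-1-\nu}$ and bound the cross term $2\int_b^\infty W\psi\psi'$ via Cauchy--Schwarz; however, that route produces a self-referential estimate in $\int_b^\infty t^{-2-\nu}\psi^2$ and forces one to absorb error terms for large $b$, whereas the supersolution argument is transparent and delivers the stated constant $2$ directly.
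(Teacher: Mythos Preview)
Your proof is correct but follows a genuinely different route from the paper's. The paper subtracts the boundary value: setting $\tilde\psi(t)=\psi(t)-\psi(b)$ so that $\tilde\psi(b)=0$ and $\tilde\psi'=\psi'$, one uses $|\psi|^2\le 2|\tilde\psi|^2+2|\psi(b)|^2$ together with the one-dimensional Hardy inequality on $[b,\infty)$ (applicable since $\tilde\psi(b)=0$) to absorb $2C_0\int_b^\infty t^{-2-\nu}|\tilde\psi|^2\,\mathrm{d}t$ into $\int_b^\infty|\tilde\psi'|^2\,\mathrm{d}t$ for $b$ large; the leftover $2C_0|\psi(b)|^2\int_b^\infty t^{-2-\nu}\,\mathrm{d}t$ is evaluated explicitly and gives the right-hand side. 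Your ground-state (Jacobi) substitution with the explicit supersolution $w(t)=1-\tfrac{C_0}{\nu(\nu+1)}t^{-\nu}$ is a more structural argument: it recasts the inequality as nonnegativity of a Schr\"odinger form plus a boundary term, and makes transparent \emph{why} the subcritical decay $t^{-2-\nu}$ (rather than the critical $t^{-2}$) is exactly what allows a bounded positive supersolution and hence the estimate. Both approaches actually deliver the same intermediate constant $\tfrac{2C_0}{1+\nu}b^{-1-\nu}$ before relaxing to the stated $2C_0b^{-1-\nu}$. The paper's argument is shorter and needs only the standard Hardy inequality as a black box; your route costs a little more algebra but explains the mechanism and would generalize more readily to other subcritical weights.
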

\begin{proof}[Proof of Lemma \ref{lem: integration in xi}]
Let $\psi \in H^1(\mathbb{R})$ and $\tilde{\psi}(t)= \psi(t)-\psi(b).$ Then $\tilde{\psi}'(t)=\psi'(t)$ and we have
\begin{align}\label{eq: subtract psi(b) to apply hardy}
\begin{split}
\int_b^\infty \left(|\psi'(t)|^2-C_0 t^{-2-\nu}|\psi(t)|^2\right)\,\mathrm{d}t &\ge \int_b^\infty \left(|\tilde \psi'(t)|^2-2C_0 t^{-2-\nu}|\tilde\psi(t)|^2\right)\,\mathrm{d}t\\
& \quad -2C_0 \int_b^\infty t^{-2-\nu} |{\psi}(b)|^2\,\mathrm{d}t.
\end{split}
\end{align}
Since $\tilde{\psi}(b)=0$, we can use the one-dimensional Hardy inequality, which for sufficiently large $b>0$ yields
\begin{equation}
	\int_b^\infty \left(|\tilde \psi'(t)|^2-2C_0 t^{-2-\nu}|\tilde\psi(t)|^2\right)\,\mathrm{d}t\ge 0.
\end{equation}
This, together with \eqref{eq: subtract psi(b) to apply hardy} implies
\begin{equation}\label{eq: inequality for integration in xi}
\int_b^\infty \left(|\psi'(t)|^2-C_0 t^{-2-\nu}|\psi(t)|^2\right)\,\mathrm{d}t \ge -2C_0|{\psi}(b)|^2 \int_b^\infty t^{-2-\nu} \,\mathrm{d}t.
\end{equation}	
Computing the integral on the r.h.s. of \eqref{eq: inequality for integration in xi} completes the proof.
\end{proof}

\begin{lem}\label{lem: Lemma 2 in diemension 1}
Let $b_2>b_1$. Then for any $\psi \in H^1(\mathbb{R})$ 
\begin{align}
|\psi(b_i)|^2 \leq 2(b_2-b_1)^{-1} \int_{b_1}^{b_2}|\psi(x)|^2\, \mathrm{d}x+2(b_2-b_1) \int_{b_1}^{b_2}|\psi'(x)|^2\, \mathrm{d}x, \quad i=1,2.
\end{align}
\end{lem}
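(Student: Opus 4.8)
The plan is to prove the inequality by the standard averaging argument based on the fundamental theorem of calculus. Since any $\psi\in H^1(\mathbb{R})$ admits an absolutely continuous representative (by the one-dimensional Sobolev embedding $H^1(\mathbb{R})\hookrightarrow C(\mathbb{R})$), for every $x\in[b_1,b_2]$ and each $i\in\{1,2\}$ we may write
$$\psi(b_i)=\psi(x)+\int_x^{b_i}\psi'(t)\,\mathrm{d}t.$$
First I would apply the elementary inequality $(a+b)^2\le 2a^2+2b^2$ together with the Cauchy--Schwarz inequality to the integral term, obtaining
$$|\psi(b_i)|^2\le 2|\psi(x)|^2+2\left|\int_x^{b_i}\psi'(t)\,\mathrm{d}t\right|^2\le 2|\psi(x)|^2+2|x-b_i|\int_{b_1}^{b_2}|\psi'(t)|^2\,\mathrm{d}t,$$
where in the last step I enlarged the domain of integration of $|\psi'|^2$ to the whole interval $[b_1,b_2]$.

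Next, since $x,b_i\in[b_1,b_2]$ we have $|x-b_i|\le b_2-b_1$, so the pointwise (in $x$) bound becomes
$$|\psi(b_i)|^2\le 2|\psi(x)|^2+2(b_2-b_1)\int_{b_1}^{b_2}|\psi'(t)|^2\,\mathrm{d}t.$$
The left-hand side does not depend on $x$, so I would integrate this inequality over $x\in[b_1,b_2]$ and divide by the length $b_2-b_1$. The first term then produces $2(b_2-b_1)^{-1}\int_{b_1}^{b_2}|\psi(x)|^2\,\mathrm{d}x$, while the second term, being constant in $x$, is simply reproduced, giving precisely the claimed estimate for both $i=1$ and $i=2$ at once.

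There is essentially no serious obstacle here: the statement is a one-line trace estimate on a bounded interval. The only point deserving a word of care is the appeal to the fundamental theorem of calculus, which is legitimate because functions in $H^1(\mathbb{R})$ are absolutely continuous; everything else reduces to Cauchy--Schwarz and averaging, and the symmetry of the representation handles both endpoints $b_1$ and $b_2$ simultaneously.
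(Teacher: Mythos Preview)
Your proof is correct and follows essentially the same route as the paper's: write $\psi(b_i)$ via the fundamental theorem of calculus from a generic point $x\in(b_1,b_2)$, apply Cauchy--Schwarz to the integral of $\psi'$, and then average the resulting pointwise bound over the interval. The only cosmetic difference is that you treat both endpoints simultaneously while the paper handles $b_1$ explicitly and says ``similarly'' for $b_2$.
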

\begin{rem}
Lemma \ref{lem: Lemma 2 in diemension 1} is the one-dimensional analogue of Lemma 2 in \cite{Semjon3}.
\end{rem}
\begin{proof}[Proof of Lemma \ref{lem: Lemma 2 in diemension 1}]
For $x\in(b_1,b_2)$ we write
\begin{equation}
\psi(x)=\int_{b_1}^x \psi'(t)\, \mathrm{d}t+\psi(b_1).
\end{equation}
Therefore, we have
\begin{equation}\label{eq: estimate psi(d1)}
|\psi(b_1)|^2 \leq 2|\psi(x)|^2+2\left(\int_{b_1}^{b_2} |\psi'(x)|\, \mathrm{d}x\right)^2, \quad x\in(b_1,b_2).
\end{equation}
Applying the Cauchy-Schwarz inequality to the integral on the r.h.s. of \eqref{eq: estimate psi(d1)} yields
\begin{equation}\label{eq: cauchy schwarz}
|\psi(b_1)|^2 \leq 2|\psi(x)|^2 + 2(b_2-b_1) \int_{b_1}^{b_2} |\psi'(x)|^2\, \mathrm{d}x, \quad x\in (b_1,b_2).
\end{equation}
Integrating both sides of \eqref{eq: cauchy schwarz} over $(b_1,b_2)$ and dividing by $(b_2-b_1)$ implies
\begin{equation}
|\psi(b_1)|^2 \leq 2(b_2-b_1)^{-1}\int_{b_1}^{b_2}|\psi(x)|^2\, \mathrm{d}x + 2(b_2-b_1) \int_{b_1}^{b_2} |\psi'(x)|^2\, \mathrm{d}x.
\end{equation}
Similarly we can prove the statement for $b_2$.
\end{proof}
\begin{proof}[Proof of Theorem \ref{thm: No Efimov N=3}] 
As in the proof of Theorem \ref{thm: No Efimov} we show that 
\begin{equation}
    L[\varphi]:= \int\left(|\nabla_0\varphi|^2+V|\varphi|^2-\varepsilon |x|_m^{-4}|\varphi|^2\right)\,\mathrm{d}x\ge 0
\end{equation}
holds for all functions $\varphi\in H^1(X_0)$ with $\supp(\varphi)\subset \{\vert x\vert_m\ge R\}$ if $\varepsilon>0$ is small enough and $R>0$ is sufficiently large. Let $Z=(C_1,C_2)$ be a partition into two clusters with $|C_1|=2$. First, we estimate the part of the quadratic form $L$ corresponding to the cone $K(Z,\kappa)$, where $\kappa>0$ is so small that cones $K(Z,\kappa)$ and $K(Z',\kappa)$ corresponding to different partitions do not overlap. Denote the particles in $C_1$ by $i$ and $j$ and the third particle by $k$.  In the following we will need subtle geometric arguments and therefore we introduce a basis of $X_0$ and work with the corresponding coordinates. Recall that $\dim(X_0(Z))=1$ and $\dim(X_c(Z))=1$. Choosing a vector $u_1\in X_0(Z)$ and a vector $u_2\in X_c(Z)$, both normalized with respect to the norm $|u_i|_m=1$, we get an orthonormal basis of $X_0$. Denote by $\tilde{q}$ and $\tilde{\xi} $ the coefficients corresponding to the basis $\{u_1,u_2\}$. Then we have $|q|_m=|\tilde{q}|$, $|\xi|_m=|\tilde{\xi}|$ and we can represent $K_R(Z,\kappa)$ as 
\begin{equation}
K_R(Z,\kappa)=\left\{(\tilde q,\tilde \xi)\in \mathbb{R}^2 \,:\, |\tilde{q}|\le \kappa |\tilde \xi|,\ |\tilde{q}|^2+|\tilde{\xi}|^2\ge R^2\right\}
\end{equation}
and $\varphi=\varphi(\tilde{q},\tilde{\xi})$ as a function of $\tilde{q}$ and $\tilde{\xi}$. We have
 \begin{align}\label{eq: quadratic form no Efimov N=3}
 \begin{split}
 	&\int_{K_R(Z,\kappa)}\left(|\nabla_0\varphi|^2+V|\varphi|^2-\varepsilon |x|_m^{-4}|\varphi|^2\right)\,\mathrm{d}x  =\int_{K_R(Z,\kappa)}\left(|\partial_{\tilde q}\varphi|^2+V_{ij}|\varphi|^2\right)\,\mathrm{d}x \\
 	&\qquad\qquad+\int_{K_R(Z,\kappa)}\left(|\partial_{\tilde \xi}\varphi|^2+(V_{ik}+V_{jk})|\varphi|^2-\varepsilon|x|_m^{-4}|\varphi|^2\right)\,\mathrm{d}x
 	\end{split}
 \end{align}
 and estimate the two integrals on the r.h.s of \eqref{eq: quadratic form no Efimov N=3} separately. By choosing $\kappa>0$ small enough we have $|\tilde \xi|\ge \frac{R}{2}$ and therefore
 \begin{equation}\label{eq: no efimov N=3 integral over q}
 	\int_{K_R(Z,\kappa)}\!\left(|\partial_{\tilde q}\varphi|^2+V_{ij}|\varphi|^2\right)\,\mathrm{d}x = \int_{\{|\tilde \xi|\ge\frac{R}{2}\}} \int_{\{|\tilde q|\le  \kappa|\tilde \xi|\}}\left(|\partial_{\tilde q}\varphi|^2+V_{ij}|\varphi|^2\right)\,\mathrm{d}\tilde q\, \mathrm{d}\tilde \xi.
 \end{equation}
 Applying Lemma \ref{lem: Lemma 1 in diemension 1} to the integral over $\tilde q$ in \eqref{eq: no efimov N=3 integral over q} with $b_0=\kappa|\tilde \xi|$ we get
 \begin{align}\label{eq: no efimov N=3 estimate q part}
 \begin{split}
 	&\int_{\{|\tilde \xi|\ge\frac{R}{2}\}} \int_{\{|\tilde q|\le  \kappa|\tilde \xi|\}}\left(|\partial_{\tilde q}\varphi|^2+V_{ij}|\varphi|^2\right)\,\mathrm{d}\tilde q\, \mathrm{d}\tilde \xi\\
 	& \quad \ge -C \int_{\{|\tilde \xi|\ge\frac{R}{2}\}} |\tilde \xi|^{-1-\nu} \left(|\varphi(\kappa \tilde \xi,\tilde \xi)|^2+|\varphi(-\kappa \tilde \xi,\tilde \xi)|^2\right)\,\mathrm{d}\tilde \xi.
 	\end{split}
 \end{align}
 Now we estimate the second integral on the right hand side of \eqref{eq: quadratic form no Efimov N=3}. Note that for $R>0$ sufficiently large we have 
 \begin{equation}
 	|V_{ik}(x_{ik})|+|V_{jk}(x_{jk})|\le c|\tilde \xi|^{-2-\nu}
 \end{equation}
 for some $c>0$. This, together with $|x|_m^{-1}\le |\tilde \xi|^{-1}$ implies
 \begin{align}\label{eq: integral cone with xi}
 \begin{split}
& \int_{K_R(Z,\kappa)}\left(|\partial_{\tilde \xi}\varphi|^2+(V_{ik}+V_{jk})|\varphi|^2-|x|_m^{-4}|\varphi|^2\right)\,\mathrm{d}x\\ &  \qquad \qquad\qquad\qquad\qquad\ge \int_{K_R(Z,\kappa)}\left(|\partial_{\tilde \xi}\varphi|^2-C|\tilde \xi|^{-2-\nu}|\varphi|^2\right)\,\mathrm{d}x
 \end{split}
 \end{align}
 for some $C>0$, where without loss of generality we assumed that $\nu<2$. 
{To estimate the integral on the r.h.s. of \eqref{eq: integral cone with xi} we first integrate over the variable $\tilde \xi$ for fixed $\tilde q$. Let us describe the domain of integration first. The integral over $\tilde \xi$ is from the boundary of $K_R(Z,\kappa)$ to infinity. Note that if $|\tilde q|$ is small, then the boundary of $K_R(Z,\kappa)$ is given by an arc with radius $R$, see Figure 2. 
 \begin{figure}[h!]
   \begin{tikzpicture}[scale=1]
      \path [fill, pattern=north west lines ,opacity=.6]
      (1.8477,0.765) -- (4.619,1.9125)--(4.619,-1.9125) -- (1.8477,-0.765);
              \path [fill, pattern=north west lines ,opacity=.6]
      (-1.8477,0.765) -- (-4.619,1.9125)--(-4.619,-1.9125) -- (-1.8477,-0.765);     
      \path [fill=white] (0,0) circle(2cm);
      \path [fill,blue,opacity=.1] (0,0) circle(2cm);
      \draw[->] (-5,0) -- (5.5,0) node[below] {$\tilde \xi$};
      \draw[->] (0,-2.5) -- (0,3) node[left] {$\tilde q$};
      \draw[blue] (-0.5,-0.5) node {\scriptsize  $B(R)$};
      \draw[thick] (3,0.5) node {\scriptsize  \textbf{$K_R(Z,\kappa_2)$}};
      \draw[blue] (0,0) circle (2cm);
      \draw[thick,gray] (1.8477,0.765) -- (4.619,1.9125);
      \draw[thick,gray] (1.8477,-0.765) -- (4.619,-1.9125);
       \draw[thick,gray] (1.8477,0.765) -- (4.619,1.9125);
      \draw[dashed,thick,gray,->] (4.1574,1.722) -- (5.17356,2.142) node [right] {$\tilde q = \kappa \tilde \xi$};
      \draw[dashed,thick,gray,->] (4.1574,-1.722) -- (5.17356,-2.142) node [right] {$\tilde q = -\kappa \tilde \xi$};
      \draw (1,0.1)--(1,-0.1) node[below]{$\frac{R}{2}$};
      \draw (-0.1,0.765) -- (0.1,0.765) ;
      \node at (-0.4,0.765) {$\eta$};
      \draw[thick,gray] (-1.8477,-0.765) -- (-4.619,-1.9125);
       \draw[thick,gray] (-1.8477,0.765) -- (-4.619,1.9125);  
 \draw[dashed,thick,gray]    (0,0.765) --  (1.8477,0.765);
    \end{tikzpicture}
     \caption{The cone $K_R(Z,\kappa)$}
\end{figure}
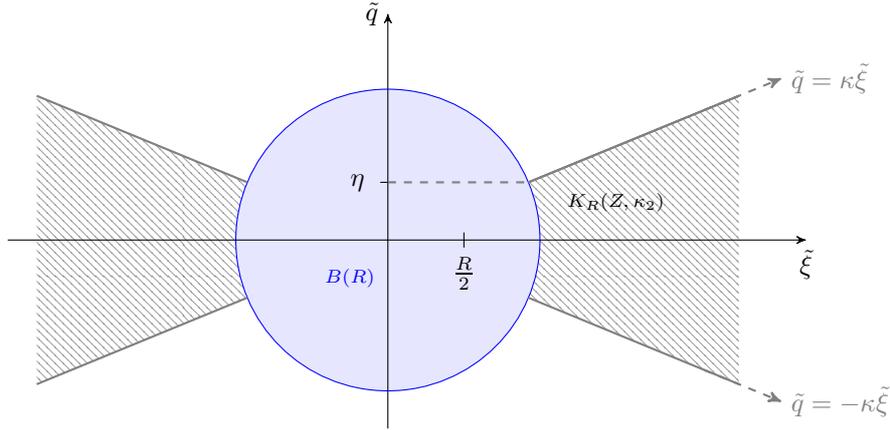
$ $\\
By definition, the function $\varphi$ vanishes on this arc. For large values of $|\tilde q|$ the boundary of $K_R(Z,\kappa)$ is given by the straight lines $|\tilde q|=\kappa|\tilde \xi|$.  {Let $x=(\tilde q,\tilde\xi)$ be a point of intersection of the ball $B(R)$ with the set $\{|\tilde q| =\kappa |\tilde \xi|\}$. Then $|\tilde q|=\left(1+\kappa^{-2}\right)^{-\frac{1}{2}}R=:\eta$}} and we have
 \begin{align}\label{eq: no efimov N=3 integral in xi}
 \begin{split}
 	&\int_{K_R(Z,\kappa)}\!\!\left(|\partial_{\tilde \xi}\varphi|^2-c|\tilde \xi|^{-2-\nu}|\varphi|^2\right)\,\mathrm{d}x \\
 	&\qquad = \int_{\{|\tilde q| < \eta\}}\int_{\left\{|\tilde \xi|\ge  \sqrt{R^2-|\tilde q|^2}\right\}}\!\!\left(|\partial_{\tilde \xi}\varphi|^2-c|\tilde \xi|^{-2-\nu}|\varphi|^2\right)\,\mathrm{d}\tilde \xi \,\mathrm{d}\tilde q\\
 	&\qquad \qquad +  \int_{\{|\tilde q|\ge \eta\}}\int_{\left\{|\tilde \xi|\ge \kappa^{-1}|\tilde q|\right\}}\!\!\left(|\partial_{\tilde \xi}\varphi|^2-c|\tilde \xi|^{-2-\nu}|\varphi|^2\right)\,\mathrm{d}\tilde \xi\,\mathrm{d}\tilde q.
 	\end{split}
 \end{align}
Since $\varphi(\tilde q,\tilde \xi)=0$ for $|\tilde \xi| \le \sqrt{R^2-|\tilde q|^2}$, the one-dimensional Hardy inequality implies
\begin{equation}
\int_{\{|\tilde q| < \eta\}}\int_{\left\{|\tilde \xi|\ge \sqrt{R^2-|\tilde q|^2}\right\}}\!\!\left(|\partial_{\tilde \xi}\varphi|^2-c|\tilde \xi|^{-2-\nu}|\varphi|^2\right)\,\mathrm{d}\tilde \xi\,\mathrm{d}\tilde q\ge 0
\end{equation}
if $R>0$ is large enough. To estimate the second integral on the r.h.s of \eqref{eq: no efimov N=3 integral in xi} we apply Lemma \ref{lem: integration in xi} with $b=\kappa^{-1}|\tilde{q}|$, which yields
\begin{align}\label{eq: no efimov N=3 estimate xi part}
\begin{split}
&\int_{\{|\tilde q|\ge \eta\}}\int_{\left\{|\tilde \xi|\ge \kappa^{-1}|\tilde q|\right\}}\!\!\left(|\partial_{\tilde \xi}\varphi|^2-c|\tilde \xi|^{-2-\nu}|\varphi(x)|^2\right)\,\mathrm{d}\tilde \xi\,\mathrm{d}\tilde q\\
&\qquad\ge- C \int_{\{|\tilde q|\ge\eta\}} |\tilde q|^{-1-\nu} \left(|\varphi(\tilde q,\kappa^{-1}|\tilde q|)|^2+|\varphi(\tilde q,-\kappa^{-1}|\tilde q|)|^2\right)\,\mathrm{d}\tilde q
\end{split}
\end{align}
for some $C>0$.
Combining \eqref{eq: no efimov N=3 estimate q part} and \eqref{eq: no efimov N=3 estimate xi part} with \eqref{eq: quadratic form no Efimov N=3} we get
 \begin{align}\label{eq: quadratic form no Efimov N=3 estimate cone}
 \begin{split}
 	&\int_{K_R(Z,\kappa)}\left(|\nabla_0\varphi|^2+V|\varphi|^2-\varepsilon |x|_m^{-4}|\varphi|^2\right)\,\mathrm{d}x  \\
 	&\qquad \ge -C \int_{\{|\tilde \xi|\ge\frac{R}{2}\}} |\tilde \xi|^{-1-\nu} \left(|\varphi(\kappa|\tilde \xi|,\tilde \xi)|^2+|\varphi(-\kappa|\tilde \xi|,\tilde \xi)|^2\right)\,\mathrm{d}\tilde \xi \\
 	&\qquad\qquad  - C \int_{\{|\tilde q|\ge\eta\}} |\tilde q|^{-1-\nu} \left(|\varphi(\tilde q,\kappa^{-1}|\tilde q|)|^2+|\varphi(\tilde q,-\kappa^{-1}|\tilde q|)|^2\right)\,\mathrm{d}\tilde q.
\end{split} 	
 \end{align}
Note that the integrals on the r.h.s of \eqref{eq: quadratic form no Efimov N=3 estimate cone} are in fact integrals of the function $\varphi$ over the edges of the cone $K(Z,\kappa)$. We introduce polar coordinates $(\rho,\theta)$ in the space $X_0$. Let $\theta_0=\arctan(\kappa)\in\left(0,\frac{\pi}{2}\right)$. At first, we consider the integral over $\tilde\xi$ in \eqref{eq: quadratic form no Efimov N=3 estimate cone} and integrate initially over the set where $\tilde{\xi}>0$. We have
\begin{equation}\label{eq: no Efimov N=3 integral xi}
\int_{\{\tilde \xi \ge\frac{R}{2}\}} \tilde \xi^{-1-\nu} |\varphi(\kappa \tilde \xi,\tilde \xi)|^2\,\mathrm{d}\tilde \xi =c \int_R^\infty \rho^{-1-\nu} |\varphi(\rho,\theta_0)|^2\,\mathrm{d}\rho
\end{equation} 
and 
\begin{equation}
	\int_{\{\tilde \xi \ge\frac{R}{2}\}} \tilde \xi^{-1-\nu} |\varphi(-\kappa \tilde \xi,\tilde \xi)|^2\,\mathrm{d}\tilde \xi =c \int_R^\infty \rho^{-1-\nu} |\varphi(\rho,-\theta_0)|^2\,\mathrm{d}\rho,
\end{equation}
where $c$ is a constant depending on $\theta_0$ only. Together with the analogous integrals for $\tilde\xi<-\frac{R}{2}$ we get
\begin{align}
&\int_{\{|\tilde \xi|\ge\frac{R}{2}\}} |\tilde \xi|^{-1-\nu} \left(|\varphi(\kappa|\tilde \xi|,\tilde \xi)|^2+|\varphi(-\kappa|\tilde \xi|,\tilde \xi)|^2\right)\,\mathrm{d}\tilde \xi\\ \notag
&\quad = c\int_{R}^\infty\rho^{-1-\nu} \left(|\varphi(\rho,\theta_0)|^2 +|\varphi(\rho,\pi-\theta_0)|^2 +|\varphi(\rho,\pi+\theta_0)|^2 +|\varphi(\rho,-\theta_0)|^2 \right) \,\mathrm{d}\rho.
\end{align}
Similarly we can represent the second integral on the r.h.s. of \eqref{eq: quadratic form no Efimov N=3 estimate cone}. Hence, we arrive at
\begin{align}\label{eq: no efimov N=3 eindimensionales integral}
&\int_{K_R(Z,\kappa)}\left(|\nabla_0\varphi|^2+V|\varphi|^2-\varepsilon |x|_m^{-4}|\varphi|^2\right)\,\mathrm{d}x  \\ \notag
&\quad \ge -C\int_{R}^\infty\rho^{-1-\nu} \left(|\varphi(\rho,\theta_0)|^2 +|\varphi(\rho,\pi-\theta_0)|^2 +|\varphi(\rho,\pi+\theta_0)|^2 +|\varphi(\rho,-\theta_0)|^2 \right) \,\mathrm{d}\rho
\end{align}
for some $C>0$. To estimate the r.h.s. of \eqref{eq: no efimov N=3 eindimensionales integral} let us estimate exemplarily the integral
\begin{equation}\label{eq: integral boundary}
\int_R^\infty \rho^{-1-\nu}|\varphi(\rho,\theta_0)|^2 \,\mathrm{d}\rho.
\end{equation}
 We choose $\kappa'>\kappa$ such that $K(Z,\kappa')$ and $K(Z',\kappa')$ do not overlap for any pair of two-cluster partitions $Z\neq Z'$ and denote $\theta_1 = \arctan(\kappa')\in \left(0,\frac{\pi}{2}\right)$. Applying Lemma \ref{lem: Lemma 2 in diemension 1} to the function $\varphi(\rho,\cdot)$ for fixed $\rho$ and with $b_1=\theta_0$ and $b_2=\theta_1$ we get
\begin{equation}\label{eq: no Efimov N=3 applying Lemma 2 d=1}
	|\varphi(\rho,\theta_0)|^2 \le C(\theta_0,\theta_1)\int_{\theta_0}^{\theta_1} \left(|\varphi(\rho,\theta)|^2 +|\partial_\theta \varphi(\rho,\theta)|^2\right)\,\mathrm{d}\theta
\end{equation}
for some $C(\theta_0,\theta_1)>0$. Substituting inequality \eqref{eq: no Efimov N=3 applying Lemma 2 d=1} into
\eqref{eq: integral boundary} we get
\begin{align}\label{eq: no Efimov N=3 estimate integral rho}
\begin{split}
& \int_R^\infty \rho^{-1-\nu}|\varphi(\rho,\theta_0)|^2 \,\mathrm{d}\rho\\
&\qquad\le C(\theta_0,\theta_1)\int_R^\infty\int_{\theta_0}^{\theta_1} \rho^{-1-\nu}\left(|\varphi(\rho,\theta)|^2 +|\partial_\theta \varphi(\rho,\theta)|^2\right)\,\mathrm{d}\theta\,\mathrm{d}\rho\\
&\qquad = C(\theta_0,\theta_1)\int_{\theta_0}^{\theta_1}\int_R^\infty \rho^{-1-\nu}\left(|\varphi(\rho,\theta)|^2 +|\partial_\theta \varphi(\rho,\theta)|^2\right)\,\mathrm{d}\rho\,\mathrm{d}\theta.
\end{split}
\end{align}
Applying inequality \eqref{eq: two dimensional hardy - scala variant} for fixed $\theta$ yields
\begin{equation}
C(\theta_0,\theta_1)\int_R^\infty\rho^{-1-\nu}|\varphi(\rho,\theta)|^2 \,\mathrm{d}\rho\le \varepsilon \int_R^\infty |\partial_\rho \varphi(\rho,\theta)|^2\rho\,\mathrm{d}\rho
\end{equation}
where $\varepsilon>0$ can be chosen arbitrarily small if $R>0$ is large enough. Substituting this inequality into \eqref{eq: no Efimov N=3 estimate integral rho} and using 
\begin{equation}
\left|\frac{\partial \varphi}{\partial \rho}\right|^2 + \frac{1}{\rho^2}\left|\frac{\partial \varphi}{\partial\theta}\right|^2\le |\nabla_0 \varphi|^2
\end{equation}
we obtain
\begin{equation}\label{eq: estimate integral over rho by integral over x d=1 N=3}
 \int_R^\infty \rho^{-1-\nu}|\varphi(\rho,\theta_0)|^2 \,\mathrm{d}\rho\le \varepsilon\int_{K_R(Z,\kappa,\kappa')} |\nabla_0\varphi|^2\,\mathrm{d}x
\end{equation}
for sufficiently large $R>0$. We can estimate the other integrals on the r.h.s. of \eqref{eq: no efimov N=3 eindimensionales integral} by the same arguments.
Therefore, we obtain
\begin{align}\label{eq: no efimov N=3 inequality KR(Z2)}
\begin{split}
&\int_{K_R(Z,\kappa)}\left(|\nabla_0\varphi|^2+V|\varphi|^2-\varepsilon |x|_m^{-4}|\varphi|^2\right)\,\mathrm{d}x \ge - 4\varepsilon\int_{K_R(Z,\kappa,\kappa')} |\nabla_0\varphi|^2\,\mathrm{d}x.
\end{split}
\end{align}
Summing inequality \eqref{eq: no efimov N=3 inequality KR(Z2)} over all partitions $Z$ with $|Z|=2$, inserting the resulting inequality into the definition of $L$ and using that the cones $K_R(Z,\kappa')$ and $K_R(Z',\kappa')$ do not intersect we get
\begin{equation}
    L[\varphi]\ge \int_{K_R^c(\kappa)}\left((1-4\varepsilon)|\nabla_0\varphi|^2+V|\varphi|^2-\varepsilon |x|_m^{-4}|\varphi|^2\right)\,\mathrm{d}x,
\end{equation}
where $K_R^c(\kappa)=X_0\setminus\left(B(R)\cup\bigcup_{Z:|Z|=2}K(Z,\kappa)\right)$. Note that $K_R^c(\kappa)$ is the region in $X_0$, where all particles are far away from each other. Therefore, we can estimate $|V(x)|\le C|x|_m^{-2-\nu}$. Moreover, we can assume $\nu<2$ and therefore $|x|_m^{-4}\le |x|_m^{-2-\nu}$. Hence, we get
\begin{equation}
  L[\varphi]\ge \int_{K_R^c(\kappa)}\left((1-4\varepsilon)|\nabla_0\varphi|^2-(C+\varepsilon) |x|_m^{-2-\nu}|\varphi|^2\right)\,\mathrm{d}x.
\end{equation}
Using polar coordinates $(\rho,\theta)$ and $|\nabla_0\varphi|\ge \left|\frac{\partial\varphi}{\partial \rho}\right|$ we find
\begin{equation}
L[\varphi]\geq \int_{\omega \in I}\int_{R}^\infty\left((1-4\varepsilon) \left|\partial_\rho \varphi \right|^2-(C+\varepsilon)\rho^{-2-\nu}|\varphi|^2 \right)\rho \, \mathrm{d}\rho \mathrm{d}\theta,
\end{equation}
where $I\subset [0,2\pi]$ is the set of angles corresponding to the region $K_R^c(\kappa)$. Now since $\varphi(\rho,\theta)=0$ for $\rho\leq R$, we can apply inequality \eqref{eq: two dimensional hardy - scala variant} to the function $u(\rho)=\varphi(\rho,\theta)$ for fixed $\theta \in I$. Choosing $R$ sufficiently large completes the proof of Theorem \ref{thm: No Efimov N=3}. 
\end{proof}

\subsection{Systems of three two-dimensional particles}
Now we turn to systems of three two-dimensional bosons or three non-identical two-dimensional particles. For systems of three spinless fermions in dimension two there exists the so-called super Efimov effect, see \cite{GridnevSuper}. We prove that for systems without symmetry restrictions such an effect can not occur. Our result is the following
\begin{thm}[Absence of the Efimov effect for systems of three two-dimensional particles]\label{thm: No Efimov n=2 N=3}
Let $H$ be the Hamiltonian corresponding to a system of $N=3$ two-dimensional particles. Assume that $H[C]\ge 0$ for any cluster $C$ with $|C|=2$ and that the pair potentials $V_{ij}$ satisfy \eqref{cond: relatively form bounded} and \eqref{cond: decay at infinity} and they are radially symmetric, i.e. $V_{ij}(x_{ij})=V_{ij}(|x_{ij}|)$. Then the discrete spectrum of $H$ is finite.
\end{thm}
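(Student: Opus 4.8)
The plan is to follow closely the proof of Theorem~\ref{thm: No Efimov N=3} for the one-dimensional case. As there, I would introduce
\[
  L[\varphi] := \int_{X_0}\left(|\nabla_0\varphi|^2 + V|\varphi|^2 - \varepsilon\,|x|_m^{-4}|\varphi|^2\right)\mathrm{d}x
\]
and reduce the finiteness of the discrete spectrum to the inequality $L[\varphi]\ge 0$ for all $\varphi\in H^1(X_0)$ with $\supp(\varphi)\subset\{|x|_m\ge R\}$, with $\varepsilon>0$ small and $R>0$ large; this implication is Lemma~\ref{lem: Abstraktes Lemma endlich viele Eigenwerte}. Fixing $\kappa>0$ so small that the cones $K_R(Z,\kappa)$ belonging to the three two-cluster partitions $Z=(\{i,j\},\{k\})$ are pairwise disjoint, I would estimate the part of $L$ over each $K_R(Z,\kappa)$ and over the complement $K_R^c(\kappa)$ separately, splitting the form on a cone as in \eqref{eq: quadratic form no Efimov N=3} into an intra-cluster part $\int(|\nabla_q\varphi|^2+V_{ij}|\varphi|^2)$ and an inter-cluster part $\int(|\nabla_\xi\varphi|^2+(V_{ik}+V_{jk})|\varphi|^2-\varepsilon|x|_m^{-4}|\varphi|^2)$, where $q=q(Z)$ and $\xi=\xi(Z)$ are both two-dimensional.

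The decisive new ingredient, and the only place where radial symmetry is used, is a two-dimensional substitute for Lemma~\ref{lem: Lemma 1 in diemension 1}. For fixed $\xi$ the intra-cluster form equals $\int_{\{|q|\le b_0\}}(|\nabla_q\varphi|^2+V_{ij}(|q|)|\varphi|^2)\,\mathrm{d}q$ with $b_0=\kappa|\xi|_m$ and $-\Delta_q+V_{ij}\ge 0$. Since $V_{ij}$ is radial, I would expand $\varphi(\cdot,\xi)=\sum_{m\in\mathbb Z}\varphi_m(r)e^{im\theta}$, so that the form decouples into $2\pi\sum_m\int_0^{b_0}\bigl(|\varphi_m'|^2+\tfrac{m^2}{r^2}|\varphi_m|^2+V_{ij}|\varphi_m|^2\bigr)r\,\mathrm{d}r$. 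Discarding the nonnegative centrifugal term and bounding the remaining $s$-wave radial form by extending $\varphi_m$ \emph{logarithmically}, $\tilde\varphi_m(r)=\varphi_m(b_0)\,\ln(nb_0/r)/\ln n$ on $[b_0,nb_0]$ and $0$ beyond, I would use $-\Delta+V_{ij}\ge 0$ on the plane together with the facts that the radial kinetic energy of this tail is $|\varphi_m(b_0)|^2/\ln n\to 0$ and its potential energy is $\le C b_0^{-\nu}|\varphi_m(b_0)|^2$ by \eqref{cond: decay at infinity}. Summing over $m$ by Parseval gives
\[
  \int_{\{|q|\le b_0\}}\!\bigl(|\nabla_q\varphi|^2+V_{ij}|\varphi|^2\bigr)\mathrm{d}q \ \ge\ -C\,b_0^{-\nu}\!\int_0^{2\pi}\!|\varphi(b_0,\theta)|^2\,\mathrm{d}\theta,
\]
so the loss is a boundary integral over the lateral surface $\{|q|_m=\kappa|\xi|_m\}$. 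The logarithmic profile is the two-dimensional analogue of the affine tail used in the one-dimensional lemma and reflects the criticality of the Laplacian.

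For the inter-cluster part I would use $|V_{ik}+V_{jk}|\le C|\xi|_m^{-2-\nu}$ and $|x|_m^{-4}\le|\xi|_m^{-2-\nu}$ for large $R$ (assuming, as we may, $\nu<2$), and integrate in $\xi$ exactly as in \eqref{eq: no efimov N=3 integral in xi}: where the $\xi$-slice reaches the sphere $\{|x|_m=R\}$ the function vanishes and a two-dimensional Hardy inequality valid for functions supported away from the origin (cf.\ \eqref{eq: two dimensional hardy - scala variant}) gives a nonnegative contribution, whereas on the conical part a two-dimensional exterior analogue of Lemma~\ref{lem: integration in xi} (again proved mode-by-mode, subtracting the boundary value in the $s$-wave channel and using the centrifugal barrier in the others) yields another boundary integral of the same type on $\{|q|_m=\kappa|\xi|_m\}$. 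All remaining losses are then boundary integrals over the lateral surfaces. Passing to polar coordinates $x=\rho\Theta$ in the four-dimensional space $X_0$, these become integrals $\int_R^\infty\rho^{1-\nu}|\varphi|^2\,\mathrm{d}\rho$ against the angular cross-section; a trace estimate in the angular variables (the analogue of Lemma~\ref{lem: Lemma 2 in diemension 1}) moves the trace into the transition layer $K_R(Z,\kappa,\kappa')$, and the four-dimensional radial version of the weighted Hardy inequality \eqref{eq: two dimensional hardy - scala variant} (weight $\rho^{3}$) absorbs them into $\varepsilon\int_{K_R(Z,\kappa,\kappa')}|\nabla_0\varphi|^2$. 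After summing over the three partitions, the complement $K_R^c(\kappa)$ is the region where all particles are mutually far apart, so $|V|\le C|x|_m^{-2-\nu}$; as $X_0$ is four-dimensional, applying the radial Hardy inequality for each fixed angle (where $\varphi$ vanishes for $\rho\le R$) dominates both $C|x|_m^{-2-\nu}$ and $\varepsilon|x|_m^{-4}$ once $R$ is large, and $L[\varphi]\ge 0$ follows.

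I expect the main obstacle to be the two-dimensional disk estimate and the control of the boundary terms it produces. The intra-cluster integration has to accommodate the possibility that $H[C]$ with $|C|=2$ has a genuine zero-energy resonance (Theorem~\ref{thm: abstraktes Theorem}), so that no spectral gap is available; the only positivity is the criticality $-\Delta_q+V_{ij}\ge 0$, and extracting it while losing no more than the weight $b_0^{-\nu}$ is exactly what the angular-momentum decomposition combined with the logarithmic extension accomplishes, and exactly where the radial symmetry hypothesis is indispensable. The remaining difficulty relative to the one-dimensional case is geometric rather than conceptual: the lateral boundaries of the cones are now three-dimensional surfaces $S^1\times S^1\times[R,\infty)$ instead of half-lines, so the trace-and-Hardy absorption of the boundary integrals requires the four-dimensional polar bookkeeping and the correct radial weights described above.
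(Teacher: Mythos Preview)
Your proposal is correct and follows essentially the same approach as the paper: the two-dimensional disk estimate via angular-momentum decomposition with a logarithmic tail is precisely the paper's Lemma~\ref{lem: lemma for main result 1 ind dim 2}, the exterior $\xi$-estimate via $s$-wave subtraction plus centrifugal barrier is Lemma~\ref{lem: integration in xi d=2}, and the absorption of the lateral boundary integrals by a trace estimate in the angular variable $\theta=\arctan(|q|/|\xi|)$ followed by the four-dimensional radial Hardy inequality is exactly how the paper closes the argument. The only cosmetic difference is that the paper writes the four-dimensional polar coordinates explicitly as $(\rho,\theta,\beta_1,\beta_2)$ with $\beta_1,\beta_2$ the angles in the $q$- and $\xi$-planes, rather than the abstract $x=\rho\Theta$ you mention, but your sketch already anticipates this bookkeeping.
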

First, we give some auxiliary Lemmas which are analogous to Lemma \ref{lem: Lemma 1 in diemension 1} and Lemma \ref{lem: integration in xi} for $d=2$. 
\begin{lem}\label{lem: lemma for main result 1 ind dim 2}
Let $d=2$ and consider the operator $h=-\Delta+V$ acting in $L^2\left(\mathbb{R}^2\right)$, where $h\ge 0$ and $V(x)=V(|x|)$ satisfies \eqref{cond: relatively form bounded} and \eqref{cond: decay at infinity}. Then there exists a constant $c>0$, such that  for any $b_0>A$ and for any function $\psi \in H^1(\mathbb{R}^2)$ we have
\begin{equation}
J[\psi,b_0]:= \int_{\left\{|x|\leq b_0\right\}}\left( |\nabla \psi(x)|^2+V(x)|\psi(x)|^2 \right)\, \mathrm{d}x \geq - cb_0^{-\nu} \int_0^{2\pi} |\psi(b_0,\theta)|^2\,\mathrm{d}\theta.
\end{equation}
\end{lem}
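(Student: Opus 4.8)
The plan is to exploit the radial symmetry of $V$ in order to decompose the problem into angular momentum channels, in which it reduces to a family of one-dimensional weighted estimates amenable to the arguments behind Lemma \ref{lem: Lemma 1 in diemension 1}. Writing $\psi(r,\theta)=\sum_{k\in\mathbb Z}\psi_k(r)e^{ik\theta}$ in polar coordinates and using $V(x)=V(|x|)$, the cross terms in the quadratic form vanish and
\begin{equation*}
J[\psi,b_0]=2\pi\sum_{k\in\mathbb Z}Q_k[\psi_k],\qquad Q_k[\phi]:=\int_0^{b_0}\Big(|\phi'|^2+\tfrac{k^2}{r^2}|\phi|^2+V|\phi|^2\Big)r\,\mathrm dr,
\end{equation*}
while $\int_0^{2\pi}|\psi(b_0,\theta)|^2\,\mathrm d\theta=2\pi\sum_k|\psi_k(b_0)|^2$. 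Hence it suffices to prove, with a constant $c>0$ independent of $k$ and of $b_0>A$, the per-channel bound $Q_k[\psi_k]\ge-c\,b_0^{-\nu}|\psi_k(b_0)|^2$. Because $V$ is radial, the subspace of functions $f(r)e^{ik\theta}$ is invariant under $h$, so $h\ge0$ descends to the non-negativity of each channel operator $h_k=-\tfrac1r(r\,\cdot')'+\tfrac{k^2}{r^2}+V$ on $L^2(\R_+,r\,\mathrm dr)$; this is the only way the positivity of $h$ will be used.

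For the radial channel $k=0$ I would argue exactly as in the one-dimensional Lemma \ref{lem: Lemma 1 in diemension 1}, with the linear cut-off replaced by a \emph{logarithmic} one, which is dictated by the criticality of the two-dimensional radial Laplacian. Concretely, extend $\psi_0$ to $[0,\infty)$ by $\tilde\psi_0(r)=\psi_0(b_0)\,\eta_N(r)$ on $[b_0,Nb_0]$, where $\eta_N(r)=\ln(Nb_0/r)/\ln N$, and $\tilde\psi_0=0$ for $r\ge Nb_0$. A direct computation gives $\int_{b_0}^{Nb_0}|\eta_N'|^2\,r\,\mathrm dr=1/\ln N\to0$, so the radial kinetic energy of the extension is negligible, while $0\le\eta_N\le1$ together with \eqref{cond: decay at infinity} yields $\int_{b_0}^{\infty}|V|\,\tilde\psi_0^2\,r\,\mathrm dr\le|\psi_0(b_0)|^2\int_{b_0}^\infty|V|\,r\,\mathrm dr\le \tfrac{C}{\nu}\,b_0^{-\nu}|\psi_0(b_0)|^2$. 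Applying $h_0\ge0$ to $\tilde\psi_0$, splitting the integral at $r=b_0$, and letting $N\to\infty$ gives $Q_0[\psi_0]\ge-\tfrac{C}{\nu}\,b_0^{-\nu}|\psi_0(b_0)|^2$, as required.

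The channels $k\ne0$ are genuinely two-dimensional and cannot be handled by extension: any extension to infinity carrying the boundary value costs an amount $\sim|k|\,|\psi_k(b_0)|^2$ of centrifugal energy, far larger than the target error. Instead I would use the centrifugal barrier directly through the ground-state substitution $\psi_k=r^{|k|}g$, which yields the exact identity
\begin{equation*}
\int_0^{b_0}\Big(|\psi_k'|^2+\tfrac{k^2}{r^2}|\psi_k|^2\Big)r\,\mathrm dr=|k|\,|\psi_k(b_0)|^2+\int_0^{b_0}r^{2|k|+1}|g'|^2\,\mathrm dr,
\end{equation*}
so that $Q_k[\psi_k]=|k|\,|\psi_k(b_0)|^2+\int_0^{b_0}r^{2|k|+1}|g'|^2\,\mathrm dr+\int_0^{b_0}V|\psi_k|^2\,r\,\mathrm dr$. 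The first two terms are non-negative, so it remains to show $\int_0^{b_0}V|\psi_k|^2\,r\,\mathrm dr\ge-(|k|+c\,b_0^{-\nu})|\psi_k(b_0)|^2$. The negativity of $V$ near the origin, where only the form bound \eqref{cond: relatively form bounded} is available, is absorbed by $h_k\ge0$: lifting the boundary value by the $h_k$-harmonic profile $\phi_c$ (the solution of $h_k\phi_c=0$ that is regular at $0$ with $\phi_c(b_0)=\psi_k(b_0)$) and using that $\psi_k-\phi_c$ vanishes at $b_0$ together with the Dirichlet positivity of $h_k$ on $[0,b_0]$ gives $Q_k[\psi_k]\ge Q_k[\phi_c]=b_0\,\mathrm{Re}\big(\phi_c'(b_0)\overline{\psi_k(b_0)}\big)$. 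Since $\phi_c$ is concentrated near $r=b_0$ (like the free profile $(r/b_0)^{|k|}$) and $|V|\lesssim r^{-2-\nu}$ there, this Dirichlet-to-Neumann quantity differs from the free value $|k|\,|\psi_k(b_0)|^2$ by at most $c\,b_0^{-\nu}|\psi_k(b_0)|^2$, the far-field correction being controlled by the two-dimensional radial Hardy inequality (Corollary \ref{lem: two dimensional Hardy radially symmetric functions}) and the weighted estimate \eqref{eq: two dimensional hardy - scala variant}.

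Summing the channel estimates over $k$ and recombining with $\int_0^{2\pi}|\psi(b_0,\theta)|^2\,\mathrm d\theta=2\pi\sum_k|\psi_k(b_0)|^2$ then yields the lemma. I expect the main obstacle to be the family $k\ne0$: one must keep the constant $c$ uniform in $k$ (the favourable boundary term $|k|\,|\psi_k(b_0)|^2$ grows with $k$ and helps, but the estimates on $\phi_c$ and on the potential correction must not deteriorate as $|k|\to\infty$), and one must cleanly separate the wild near-origin behaviour of $V$—handled only through $h_k\ge0$ and the Dirichlet positivity on $[0,b_0]$—from the far-field decay \eqref{cond: decay at infinity}, which is what actually produces the rate $b_0^{-\nu}$. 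By contrast the radial channel $k=0$, which is the analogue of the one-dimensional situation and the only critical one, is comparatively routine once the logarithmic cut-off is in place.
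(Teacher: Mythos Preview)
Your treatment of the $k=0$ channel is correct and matches the paper exactly: logarithmic cutoff, use $h\ge 0$, send the truncation parameter to infinity.

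For $k\neq 0$, however, you take an unnecessary detour and leave a real gap. Your claim that ``any extension to infinity carrying the boundary value costs an amount $\sim|k|\,|\psi_k(b_0)|^2$ of centrifugal energy'' is true only if you insist on keeping the centrifugal term. The paper's observation is that you do not have to: since $\tfrac{k^2}{r^2}|\psi_k|^2\ge 0$, one has the lower bound
\[
Q_k[\psi_k]\ge \int_0^{b_0}\big(|\psi_k'|^2+V|\psi_k|^2\big)\,r\,\mathrm dr,
\]
and now the right-hand side is exactly the quadratic form of $h$ evaluated on the \emph{radial} function $\tilde\psi_k(x):=\psi_k(|x|)$, restricted to the ball. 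Apply the same logarithmic extension to $\tilde\psi_k$ and use $h\ge 0$ on this radial lift; the tail estimate is identical to the $k=0$ case and gives $Q_k[\psi_k]\ge -C\,b_0^{-\nu}|\psi_k(b_0)|^2$ with $C$ independent of $k$. All channels are handled at once by the same two lines.

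By contrast, your Dirichlet-to-Neumann route requires controlling $b_0\,\phi_c'(b_0)$ for the $h_k$-harmonic lift $\phi_c$, uniformly in $k$, with $V$ known only through \eqref{cond: relatively form bounded} near the origin. You assert the needed bound but do not prove it, and you yourself flag uniformity in $k$ as the main obstacle. That obstacle disappears entirely once you drop the centrifugal term and apply $h\ge 0$ to the radial function.
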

\begin{rem}
Lemma \ref{lem: lemma for main result 1 ind dim 2} does not hold if we restrict the operator $h$ to anti-symmetric functions. This is the reason why our proof of Theorem \ref{thm: No Efimov n=2 N=3} does not work for a fermionic system, where the super Efimov effect is known to exist. 
\end{rem}
\begin{proof}[Proof of Lemma \ref{lem: lemma for main result 1 ind dim 2}]
Let $\psi \in H^1(\mathbb{R}^2)$ and $b_0>A$. We introduce polar coordinates $x=(\rho,\omega)$ and write $\psi(x)=\sum\limits_{n=-\infty}^\infty \psi_n(x)$ with $\psi_n(x)= R_n(\rho)e^{\mathrm{i} n\omega}$. For $k\in \mathbb{N}, \ k\ge 2,$ let
\begin{equation}
R_n^k(\rho):=
\begin{cases}
R_n(\rho) & \text{if }\rho \leq b_0,
\\ R_n(b_0) \ln \left(k b_0 \rho^{-1}\right) \left(\ln k\right)^{-1} & \text{if } b_0 <\rho \leq kb_0,
\\ 0 &\text{if } \rho >k b_0.
\end{cases}
\end{equation}
We set $\psi_n^k: \mathbb{R}^2 \rightarrow \mathbb{C}, \ \psi_n^k(x)= R_n^k(|x|)\mathrm{e}^{\i n\omega}$. Then we have $J[\psi_n^k,b_0] =J[\psi_n,b_0]$ and therefore
\begin{equation}
    J[\psi,b_0] = \sum\limits_{n=-\infty}^\infty J[\psi_n^k,b_0] \quad \text{for any} \ k\ge 2.
\end{equation}
Now we estimate $J[\psi_n^k,b_0]$ for fixed $k,n\in \mathbb{N}$ with $k\ge 2$.
Due to
\begin{equation}
\vert \nabla\psi\vert^2 =\vert \frac{\partial\psi}{\partial \rho}\vert^2+\frac{1}{\rho^2}\vert \frac{\partial\psi}{\partial \omega}\vert^2\ge\vert \frac{\partial \psi}{\partial \rho}\vert^2
\end{equation}
 and $V(x)=V(|x|)$ we can estimate
\begin{equation}\label{eq: estimate integral of d rho R + V d=2 N=3}
J[\psi_n^k,b_0]
    \ge 2\pi \int\limits_{0}^{b_0} \left(|\partial_\rho R_n^{k}(\rho)|^2 +V(\rho)|R_n^{k}(\rho)|^2\right)\,\rho\mathrm{d} \rho.
\end{equation}
Using $\langle h \tilde{\psi}_n^k,\tilde{\psi}_n^k\rangle \ge 0$ for the radial function $\tilde{\psi}_n^k(x)=R_n^k(|x|)$ and \eqref{eq: estimate integral of d rho R + V d=2 N=3} yields
\begin{equation}\label{eq: estimate Jpsink}
    J[\psi_n^k ,b_0]\ge -2\pi\int\limits_{b_0}^\infty \left(|\partial_\rho R_n^{k}(\rho)|^2 +V(\rho)|R_n^{k}(\rho)|^2\right)\,\rho\mathrm{d} \rho.
\end{equation}
%
%
%
Easy computation shows that
\begin{equation}
    \partial_\rho R_n^{k}(\rho)=\begin{cases}
    -R_n(b_0)\left(\ln(k)\right)^{-1}\rho^{-1} & \text{if } b_0<\rho<kb_0,\\
    0 &\text{if } \rho >kb_0.
    \end{cases}
\end{equation}
This implies
\begin{equation}\label{eq: estimate integral of d rho R d=2 N=3}
     \int\limits_{b_0}^\infty |\partial_\rho R_n^{k}(\rho) |^2 \,\rho\mathrm{d} \rho \le  |R_n(b_0)|^2(\ln(k))^{-2}\int_{b_0}^{kb_0}\rho^{-1}\,\mathrm{d}\rho =  |R_n(b_0)|^2(\ln(k))^{-1}.
\end{equation}
 Since $\vert V(\rho)\vert \le C\left(1+\rho\right)^{-2-\nu}$ for $\rho>b_0$, we get
\begin{align}\label{eq: estimate vrnk}
\notag
\int\limits_{b_0}^\infty| V(\rho)||R_n^k(\rho)|^2 \, \rho\mathrm{d}\rho&\le C |R_n(b_0)|^2(\ln(k))^{-2} \int\limits_{b_0}^{kb_0}(1+\rho)^{-2-\nu} \left(\ln(kb_0\rho^{-1})\right)^{2} \, \rho\mathrm{d}\rho\\
& \le C|R_n(b_0)|^2\int\limits_{b_0}^{kb_0} (1+\rho)^{-2-\nu}\,\rho\mathrm{d}\rho,
\end{align}
where for the last inequality we used that $(\ln(k))^{-2}\left(\ln(kb_0\rho^{-1})\right)^{2}\le 1$ for $\rho\in (b_0,kb_0)$. By inserting
\begin{equation}
    \int\limits_{b_0}^{kb_0} (1+\rho)^{-2-\nu}\,\rho\mathrm{d}\rho \le \int\limits_{b_0}^{\infty} \rho^{-1-\nu}\,\mathrm{d}\rho = b_0^{-\nu}
\end{equation}
in inequality \eqref{eq: estimate vrnk} we find
\begin{equation}\label{eq: estimate integral of V Lemma d=2 N=3}
\int\limits_{b_0}^\infty |V(\rho)||R_n^k(\rho)|^2 \, \rho\mathrm{d}\rho  \le C|R_n(b_0)|^2b_0^{-\nu}.
\end{equation}
Combining \eqref{eq: estimate Jpsink} with  \eqref{eq: estimate integral of d rho R d=2 N=3} and \eqref{eq: estimate integral of V Lemma d=2 N=3}  we obtain
\begin{equation}\label{eq: Estimate Jpsink}
    J[\psi_n^k,b_0]\ge -2\pi C|R_n(b_0)|^2b_0^{-\nu}-2\pi |R_n(b_0)| (\ln(k))^{-1}.
\end{equation}
Recall that the left hand side of \eqref{eq: Estimate Jpsink} coincides with $J[\psi_n,b_0]$ and in particular does not depend on $k$. Therefore, sending $k$ to infinity and using
\begin{equation}
2\pi \sum_{n=-\infty}^\infty \left|R_n(b_0)\right|^2 = \int_0^{2\pi} |\psi(b_0,\omega)|^2\, \mathrm{d}\omega
\end{equation}
completes the proof of Lemma \ref{lem: lemma for main result 1 ind dim 2}.
\end{proof}
The following lemma is analogous to Lemma \ref{lem: integration in xi}
\begin{lem}\label{lem: integration in xi d=2}
Let $C_0>0$. Then for any sufficiently large $b>0$ and for any $\psi\in H^1(\mathbb{R}^2)$ we have
\begin{equation}\label{eq: no efimov Hardy-type n=2}
\int_{\{|x|\ge b\}}\left(|\nabla\psi(x)|^2-C_0 |x|^{-2-\nu}|\psi(x)|^2\right)\,\mathrm{d}x\ge -{2C_0b^{-\nu}} (2\pi)^{-1}\int_{0}^{2\pi}|\psi(b,\theta)|^2\,\mathrm{d}\theta.
\end{equation}
\end{lem}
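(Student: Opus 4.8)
The plan is to mirror the one‑dimensional argument of Lemma \ref{lem: integration in xi}, carried out in the radial variable after passing to polar coordinates, exactly as was done for the interior estimate in Lemma \ref{lem: lemma for main result 1 ind dim 2}. Writing $x=(\rho,\theta)$ with $\rho=|x|$ and using $|\nabla\psi|^2=|\partial_\rho\psi|^2+\rho^{-2}|\partial_\theta\psi|^2\ge|\partial_\rho\psi|^2$, I would discard the non‑negative angular part of the gradient, so that
\begin{equation}
\int_{\{|x|\ge b\}}\left(|\nabla\psi|^2-C_0|x|^{-2-\nu}|\psi|^2\right)\mathrm{d}x\ge\int_0^{2\pi}\int_b^\infty\left(|\partial_\rho\psi|^2\rho-C_0\rho^{-1-\nu}|\psi|^2\right)\mathrm{d}\rho\,\mathrm{d}\theta.
\end{equation}
It then suffices to establish, for each fixed $\theta$, a one‑dimensional estimate on $(b,\infty)$ for $u(\rho):=\psi(\rho,\theta)$ with the weight $\rho\,\mathrm{d}\rho$ in the kinetic term and $\rho^{-1-\nu}\,\mathrm{d}\rho$ in the potential term, and finally to integrate the resulting boundary contribution over $\theta$.

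Following the proof of Lemma \ref{lem: integration in xi}, I subtract the boundary value: set $\tilde u(\rho):=u(\rho)-u(b)$, so that $\tilde u(b)=0$ and $\tilde u'=u'$. From $|u|^2\le 2|\tilde u|^2+2|u(b)|^2$ one obtains
\begin{equation}
\int_b^\infty\left(|u'|^2\rho-C_0\rho^{-1-\nu}|u|^2\right)\mathrm{d}\rho\ge\int_b^\infty\left(|\tilde u'|^2\rho-2C_0\rho^{-1-\nu}|\tilde u|^2\right)\mathrm{d}\rho-2C_0|u(b)|^2\int_b^\infty\rho^{-1-\nu}\,\mathrm{d}\rho.
\end{equation}
The argument now splits into showing that the first integral on the right is non‑negative and that the second, boundary, term reproduces the right‑hand side of \eqref{eq: no efimov Hardy-type n=2} after integration in $\theta$.

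The heart of the matter, and the step I expect to be the main obstacle, is the weighted radial Hardy inequality
\begin{equation}\label{plan: weighted hardy}
\int_b^\infty|v'(\rho)|^2\rho\,\mathrm{d}\rho\ge 2C_0\int_b^\infty\rho^{-1-\nu}|v(\rho)|^2\,\mathrm{d}\rho\qquad\text{for all }v\text{ with }v(b)=0,
\end{equation}
valid once $b$ is sufficiently large. This is precisely where the failure of the genuine two‑dimensional Hardy inequality must be circumvented: the critical weight $\rho^{-2}$ cannot be controlled, but the subcritical weight $\rho^{-1-\nu}$ with $\nu>0$ can be, provided $b$ is large. I would prove \eqref{plan: weighted hardy} by hand: since $v(b)=0$, we have $v(\rho)=\int_b^\rho v'(s)\,\mathrm{d}s$, and Cauchy--Schwarz in the measure $s\,\mathrm{d}s$ gives $|v(\rho)|^2\le\ln(\rho/b)\int_b^\rho|v'(s)|^2 s\,\mathrm{d}s$. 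Inserting this bound, applying Fubini, and evaluating $\int_s^\infty\rho^{-1-\nu}\ln(\rho/b)\,\mathrm{d}\rho$ explicitly, one controls the right‑hand side of \eqref{plan: weighted hardy} by a constant multiple of $b^{-\nu}\int_b^\infty|v'|^2 s\,\mathrm{d}s$, where the constant tends to zero as $b\to\infty$. Hence the kinetic term dominates for $b$ large and \eqref{plan: weighted hardy} holds; this estimate is what fixes the admissible range of $b$ in the lemma.

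With \eqref{plan: weighted hardy} in hand the first integral is non‑negative, and computing the weighted integral $\int_b^\infty\rho^{-1-\nu}\,\mathrm{d}\rho$ leaves only the boundary term, a constant multiple of $b^{-\nu}|u(b)|^2$. Integrating over $\theta\in[0,2\pi]$ and recalling $u(b)=\psi(b,\theta)$ then yields a lower bound of exactly the form on the right‑hand side of \eqref{eq: no efimov Hardy-type n=2}, the coefficient $2C_0\,(2\pi)^{-1}$ being read off by carrying these weighted integrals through the estimate and passing to the regime of large $b$. The only genuinely new ingredient relative to the one‑dimensional Lemma \ref{lem: integration in xi} is the weighted radial Hardy inequality \eqref{plan: weighted hardy}; the remainder is the direct radial transcription of that proof, exactly parallel to the way Lemma \ref{lem: lemma for main result 1 ind dim 2} transcribes Lemma \ref{lem: Lemma 1 in diemension 1} to dimension two.
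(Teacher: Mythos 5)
Your proof is correct, but it takes a genuinely different route from the paper's. The paper splits $\psi=\psi_0+\psi_1$ into the angular momentum zero component and the rest, kills the $|m|\ge 1$ part by the pointwise bound $|\nabla\psi_1|^2\ge \rho^{-2}|\psi_1|^2$, and handles the radial part $\psi_0$ by subtracting the boundary value $\psi_0(b)$ and invoking the two-dimensional logarithmic Hardy inequality \eqref{eq: two dimensional Hardy Solomyak}; you instead freeze $\theta$, discard the angular gradient via $|\nabla\psi|^2\ge|\partial_\rho\psi|^2$, and replace the cited Hardy inequality by the elementary weighted bound
\begin{equation*}
\int_b^\infty \rho^{-1-\nu}|v(\rho)|^2\,\mathrm{d}\rho\;\le\; \frac{b^{-\nu}}{\nu^2}\int_b^\infty |v'(s)|^2\, s\,\mathrm{d}s,\qquad v(b)=0,
\end{equation*}
which follows from your Cauchy--Schwarz step $|v(\rho)|^2\le \ln(\rho/b)\int_b^\rho|v'(s)|^2 s\,\mathrm{d}s$ together with $\int_b^\infty\rho^{-1-\nu}\ln(\rho/b)\,\mathrm{d}\rho=b^{-\nu}\nu^{-2}$. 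This is sound: the weight $\rho^{-2-\nu}$ is subcritical, so the resulting constant $2C_0 b^{-\nu}\nu^{-2}$ drops below $1$ once $b$ is large, which is exactly what fixes the admissible range of $b$; moreover your fiberwise argument produces the boundary integral $\int_0^{2\pi}|\psi(b,\theta)|^2\,\mathrm{d}\theta$ directly, whereas the paper passes through the circular mean $\psi_0(b)$ and Cauchy--Schwarz on the circle. What your approach buys is self-containedness (no appeal to \eqref{eq: two dimensional Hardy Solomyak} or Corollary \ref{lem: two dimensional Hardy radially symmetric functions}) and an explicit quantification of how largeness of $b$ compensates the failure of the critical two-dimensional Hardy inequality; what the paper's version buys is reuse of machinery already in its appendix and a proof exactly parallel to Lemma \ref{lem: lemma for main result 1 ind dim 2}. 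One caveat on constants: carried through literally, your boundary term is $-\tfrac{2C_0}{\nu}\, b^{-\nu}\int_0^{2\pi}|\psi(b,\theta)|^2\,\mathrm{d}\theta$, which matches the right-hand side of \eqref{eq: no efimov Hardy-type n=2} only up to the factor $2\pi/\nu$; note, however, that the paper's own computation carries the same slack, and the discrepancy is immaterial because the lemma is only ever invoked with an unspecified constant, cf.\ \eqref{eq: no efimov n=2 N=3 estimate xi}.
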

\begin{proof}[Proof of Lemma \ref{lem: integration in xi d=2}] Let $\psi\in H^1(\mathbb{R}^2)$. We write $\psi=\psi_0+\psi_1$ with $\psi_0=\mathcal{P}^{m=0}\psi$ and $\psi_1=\mathcal{P}^{|m|\ge 1}\psi$, where $P^{m}$ is the projection onto the space of functions with angular momentum $m$. Then for $\psi_1$ we have 
\begin{equation}
	|\nabla \psi_1|^2 =  |\partial_\rho \psi_1|^2 +\frac{1}{\rho^2}|\partial_\theta \psi_1|^2  \ge  \frac{1}{\rho^2}|\psi_1|^2
\end{equation}
 and therefore 
\begin{equation}
\int_{\{|x|\ge b\}}\left(|\nabla\psi_1(x)|^2-C_0 |x|^{-2-\nu}|\psi_1(x)|^2\right)\,\mathrm{d}x\ge 0
\end{equation}
if $b>0$ is sufficiently large. Hence, it suffices to prove inequality \eqref{eq: no efimov Hardy-type n=2} for the spherically symmetric function $\psi_0$. For $|x|\ge b$ let $\tilde{\psi}(|x|)=\psi_0(|x|)-\psi_0(b)$, such that $\tilde{\psi}(b)=0$ and we extend $\tilde{\psi}$ with zero to the region $\{|x|<b\}$. Then, similarly to the one-dimensional case we obtain
\begin{align}\label{eq: integral decomposition radial part kuenstlich null}
\begin{split}
	&\int_{\{|x|\ge b\}}\left(|\nabla\psi_0(x)|^2-C_0 |x|^{-2-\nu}|\psi_0(x)|^2\right)\,\mathrm{d}x \\
	&\ge \int_{\{|x|\ge b\}}\!\left(|\nabla\tilde \psi(x)|^2-2C_0 |x|^{-2-\nu}|\tilde\psi(x)|^2\right)\,\mathrm{d}x - \int_{\{|x|\ge b\}} 2C_0 |x|^{-2-\nu}|\psi_0(b)|^2\,\mathrm{d}x.
\end{split}
\end{align}
Since $\tilde{\psi}(|x|)=0$ for $|x|\le b$, we can apply the two-dimensional Hardy inequality to the function $\tilde{\psi}$, which implies that the first integral on the r.h.s of \eqref{eq: integral decomposition radial part kuenstlich null} is non-negative.
Hence, we arrive at
\begin{equation}\label{eq: inequality integration in xi d=2}
	\int_{\{|x|\ge b\}}\left(|\nabla\psi_0(x)|^2-C_0 |x|^{-2-\nu}|\psi_0(x)|^2\right)\,\mathrm{d}x \ge  - 2C_0 \int_{\{|x|\ge b\}} |x|^{-2-\nu}|\psi_0(b)|^2\,\mathrm{d}x.
\end{equation}
Computing the integral on the r.h.s. of \eqref{eq: inequality integration in xi d=2} completes the proof of Lemma \ref{lem: integration in xi d=2}.
\end{proof}
\begin{proof}[Proof of Theorem \ref{thm: No Efimov n=2 N=3}]
In the proof we follow the same strategy as in the proof of Theorem \ref{thm: No Efimov N=3}. Let
\begin{equation}
    L[\varphi]:= \int\left(|\nabla_0\varphi|^2+V|\varphi|^2-\varepsilon |x|_m^{-4}|\varphi|^2\right)\,\mathrm{d}x.
\end{equation}
We show that $L[\varphi]\ge 0$ for all functions $\varphi\in H^1(X_0)$ with $\supp(\varphi)\subset \{\vert x\vert_m\ge R\}$ if $\varepsilon>0$ is small enough and $R>0$ is sufficiently large. First, we estimate the part of $L[\varphi]$ corresponding to the cone $K(Z,\kappa)$ for an arbitrary partition $Z$ into two clusters. Assume that $Z=(C_1,C_2)$ with $C_1=\{i,j\}$ and $C_2=\{k\}$. Note that the spaces $X_0(Z)$ and $X_c(Z)$ are both two-dimensional. We choose orthonormal bases of $X_0(Z)$ and $X_c(Z)$ and denote by $\tilde{q}_1,\, \tilde{q}_2, \, \tilde{\xi}_1, \, \tilde{\xi}_2$ the corresponding coordinates.
We write $\tilde{q}=(\tilde{q}_1, \tilde{q}_2)$, $\tilde{\xi}=(\tilde{\xi}_1,\tilde{\xi}_2)$ and $\varphi=\varphi(\tilde{q},\tilde{\xi})$.
Similarly to \eqref{eq: quadratic form no Efimov N=3} we write
 \begin{align}\label{eq: quadratic form no Efimov n=2 N=3}
 \begin{split}
 	&\int_{K_R(Z,\kappa)}\left(|\nabla_0\varphi|^2+V|\varphi|^2-\varepsilon |x|_m^{-4}|\varphi|^2\right)\,\mathrm{d}x  =\int_{K_R(Z,\kappa)}\left(|\nabla_{\tilde{q}}\varphi|^2+V_{ij}|\varphi|^2\right)\,\mathrm{d}x \\
 	&\qquad\qquad+\int_{K_R(Z,\kappa)}\left(|\nabla_{\tilde{\xi}}\varphi|^2+(V_{ik}+V_{jk})|\varphi|^2-\varepsilon|x|_m^{-4}|\varphi|^2\right)\,\mathrm{d}x.
 	\end{split}
 \end{align}
To estimate the integrals on the r.h.s of \eqref{eq: quadratic form no Efimov n=2 N=3} we introduce polar coordinates $\tilde q=(\rho_1,\beta_1)$ and $\tilde \xi=(\rho_2,\beta_2)$ in the planar spaces $X_0(Z)$ and $X_c(Z)$. For the first integral on the r.h.s. of \eqref{eq: quadratic form no Efimov n=2 N=3} we use Lemma \ref{lem: lemma for main result 1 ind dim 2} for fixed $\tilde \xi$ with $b_0=\kappa |\tilde \xi|$.
Then similarly to \eqref{eq: no efimov N=3 estimate q part} we get
\begin{align}\label{eq: no efimov n=2 N=3 estimate q}
 \begin{split}
 	\int_{K_R(Z,\kappa)}\left(|\nabla_{\tilde{q}}\varphi|^2+V_{ij}|\varphi|^2\right)\,\mathrm{d}x&= \int_{\{|\tilde{\xi}| \ge \frac{R}{2}\}}\int_{\{|\tilde{q}|\le \kappa|\tilde{\xi}|\}}\left(|\nabla_{\tilde{q}}\varphi|^2+V_{ij}|\varphi|^2\right)\,\mathrm{d}x\\
 	& \ge -C \int_{\{|\tilde{\xi}|\ge\frac{R}{2}\}} |\tilde{\xi}|^{-\nu} \int_0^{2\pi} |\varphi(\kappa|\tilde{\xi}|,\beta_1,\tilde \xi)|^2\,\mathrm{d}\beta_1\,\mathrm{d}\tilde \xi
 	\end{split}
 \end{align}
for some $C>0$. For the second integral on the r.h.s. of \eqref{eq: quadratic form no Efimov n=2 N=3} we use Lemma \ref{lem: integration in xi d=2} for fixed $\tilde q$ and with $b=\kappa^{-1}|\tilde{q}|$, which similarly to \eqref{eq: no efimov N=3 estimate xi part} yields
\begin{align}\label{eq: no efimov n=2 N=3 estimate xi}
\begin{split}
&\int_{K_R(Z,\kappa)}\left(|\nabla_{\tilde{\xi}}\varphi|^2+(V_{ik}+V_{jk})|\varphi|^2-\varepsilon|x|_m^{-4}|\varphi|^2\right)\,\mathrm{d}x\\
&\qquad\qquad  \qquad\qquad\qquad\ge- C \int_{\{|\tilde{q}|\ge\eta\}} |\tilde{q}|^{-\nu}\int_{0}^{2\pi} |\varphi(\tilde q,\kappa^{-1}|\tilde{q}|,\beta_2)|^2\,\mathrm{d}\beta_2\,\mathrm{d}\tilde q,
\end{split}
\end{align}
where $\eta=\left(1+\kappa^{-2}\right)^{-1}R$ is analogous to the proof of Theorem \ref{thm: No Efimov N=3}. Combining \eqref{eq: no efimov n=2 N=3 estimate q} and \eqref{eq: no efimov n=2 N=3 estimate xi} with \eqref{eq: quadratic form no Efimov n=2 N=3} implies
\begin{align}\label{eq: no efimov n=2 N=3 estimate q and xi combined}
	\begin{split}
	&\int_{K_R(Z,\kappa)}\left(|\nabla_0\varphi|^2+V|\varphi|^2-\varepsilon |x|_m^{-4}|\varphi|^2\right)\,\mathrm{d}x \\
	& \qquad \ge  -C \int_{\{|\tilde{\xi}|\ge\frac{R}{2}\}} |\tilde{\xi}|^{-\nu} \int_0^{2\pi} |\varphi(\kappa|\tilde{\xi}|,\beta_1,\tilde \xi)|^2\,\mathrm{d}\beta_1\,\mathrm{d}\tilde \xi\\
	& \qquad \qquad \qquad - C\int_{\{|\tilde{q}|\ge\eta\}} |\tilde{q}|^{-\nu}\int_{0}^{2\pi} |\varphi(\tilde q,\kappa^{-1}|\tilde{q}|,\beta_2)|^2\,\mathrm{d}\beta_2\,\mathrm{d}\tilde q.
	\end{split}
\end{align}
In the set $\{(|\tilde{q}|,|\tilde{\xi}|)\in \mathbb{R}_+\times \mathbb{R}_+\}$ we introduce the polar coordinates $(\rho,\theta)$, where $\rho^2 = |\tilde{q}|^2+|\tilde{\xi}|^2=|x|_m^2$ and $\theta= \arctan\left(\frac{|\tilde{q}|}{|\tilde{\xi}|}\right)\in [0,\frac{\pi}{2})$. Then $\rho_1=\rho \sin(\theta)$ and $\rho_2=\rho\cos(\theta).$ We represent the function $\varphi(x)$ as a function $\tilde{\varphi}(\rho,\theta,\beta_1,\beta_2)$.  Note that the integrals on the r.h.s of \eqref{eq: no efimov n=2 N=3 estimate q and xi combined} are integrals of the function $|\varphi(x)|^2$ over the set where $|\tilde{q}|=\kappa|\tilde{\xi}|$, i.e. where $\theta_0=\arctan(\kappa)$. Therefore, for the first integral on the r.h.s of \eqref{eq: no efimov n=2 N=3 estimate q and xi combined} we get
\begin{align}\label{eq: no Efimov integral rand xi}
\begin{split}
	&\int_{\{|\tilde{\xi}|\ge\frac{R}{2}\}} |\tilde{\xi}|^{-\nu} \int_0^{2\pi} |\varphi(\kappa|\tilde{\xi}|,\beta_1,\tilde \xi)|^2\,\mathrm{d}\beta_1\,\mathrm{d}\tilde \xi\\
	&=\int_{{\frac{R}{2}}}^\infty\int_0^{2\pi}\int_0^{2\pi}\rho_2^{-\nu} |\varphi(\kappa \rho_2,\beta_1,\rho_2,\beta_2)|^2\,\mathrm{d}\beta_1\,\mathrm{d}\beta_2\, \rho_2\mathrm{d}\rho_2,
	 \\&=c \int_R^\infty \int_0^{2\pi}\int_0^{2\pi}\rho^{-\nu} |\tilde{\varphi}(\rho,\theta_0,\beta_1,\beta_2)|^2\,\mathrm{d}\beta_1\,\mathrm{d}\beta_2\, \rho\mathrm{d}\rho,
\end{split}
\end{align}
where $c>0$ is a constant which comes from the transformation of variables if we represent the function $\rho_2\mapsto \varphi(\kappa\rho_2,\beta_1,\rho_2,\beta_2)$ as function $\rho\mapsto\tilde{\varphi}(\rho,\theta_0,\beta_1,\beta_2)$, where $\theta_0=\arctan(\kappa)$. In the first equality in \eqref{eq: no Efimov integral rand xi} we used that $\dim \left(X_c(Z)\right)=2$, which implies that the Jacobian of the transformation to polar coordinates in $X_c (Z)$ gives a factor $\rho_2$. In the last equality of \eqref{eq: no Efimov integral rand xi} we used that the function $\tilde\varphi$ is zero for $\rho<R$. Similarly we get
\begin{align}\label{eq: no Efimov integral rand q}
\begin{split}
&\int_{\{|\tilde{q}|\ge\eta\}} |\tilde{q}|^{-\nu}\int_{0}^{2\pi} |\varphi(\tilde q,\kappa^{-1}|\tilde{q}|,\beta_2)|^2\,\mathrm{d}\beta_2\,\mathrm{d}\tilde q\\
&=c'\int_R^\infty \int_0^{2\pi}\int_0^{2\pi}\rho^{-\nu} |\tilde{\varphi}(\rho,\theta_0,\beta_1,\beta_2)|^2\,\mathrm{d}\beta_1\,\mathrm{d}\beta_2\, \rho\mathrm{d}\rho
\end{split}
\end{align}
for some $c'>0$ . Therefore, by combining \eqref{eq: no Efimov integral rand xi} and \eqref{eq: no Efimov integral rand q} with \eqref{eq: no efimov n=2 N=3 estimate q and xi combined} we obtain 
\begin{align}\label{eq: estimate integral inside cone by scalar integral}
\begin{split}
&\int_{K_R(Z,\kappa)}\left(|\nabla_0\varphi|^2+V|\varphi|^2-\varepsilon |x|_m^{-4}|\varphi|^2\right)\,\mathrm{d}x\\
&\qquad \ge -C\int_R^\infty \int_0^{2\pi}\int_0^{2\pi}\rho^{1-\nu} |\tilde{\varphi}(\rho,\theta_0,\beta_1,\beta_2)|^2\,\mathrm{d}\beta_1\,\mathrm{d}\beta_2\, \mathrm{d}\rho
\end{split}
\end{align}
for some $C>0$. 
Now as in the proof of Theorem \ref{thm: No Efimov N=3} we estimate the integral on the r.h.s. of \eqref{eq: estimate integral inside cone by scalar integral}, which is an integral over the edge of $K(Z,\kappa)$ given by $\{|\tilde{q}|=\kappa|\tilde{\xi}|\}$, by an integral over the set $K(Z,\kappa,\kappa')$ for some $\kappa'$ which is slightly larger than $\kappa$. For this purpose let $\kappa'>\kappa$ be so small that the cones $K_R(Z,\kappa')$ and $K_R(Z',\kappa')$ do not overlap for partitions $Z\neq Z'$ with $|Z|=|Z'|=2$ and let $\theta_1=\arctan(\kappa')$. We apply Lemma \ref{lem: Lemma 2 in diemension 1} to the function $\varphi(\rho,\cdot, \theta_1,\theta_2)$ for fixed $\rho,\theta_1,\theta_2$ and with $b_1=\theta_0, \ b_2=\theta_1$. Then we get
\begin{align}\label{eq: no efimov n=2 Lemma winkel}
&\int_R^\infty \int_0^{2\pi}\int_0^{2\pi}\rho^{1-\nu} |\tilde{\varphi}(\rho,\theta_0,\beta_1,\beta_2)|^2\,\mathrm{d}\beta_1\,\mathrm{d}\beta_2\, \mathrm{d}\rho\\ \notag
&\le C(\theta_0,\theta_1) \int_R^\infty \!\int_0^{2\pi}\!\int_0^{2\pi}\!\int_{\theta_0}^{\theta_1}\rho^{1-\nu}\! \left(|\tilde{\varphi}(\rho,\theta,\beta_1,\beta_2)|^2+|\partial_\theta\tilde{\varphi}(\rho,\theta,\beta_1,\beta_2)|^2\right)\,\mathrm{d}\theta\,\mathrm{d}\beta_1\,\mathrm{d}\beta_2\, \mathrm{d}\rho,
\end{align}
where $C(\theta_0,\theta_1)$ depends on $\theta_0$ and $\theta_1$ only.
Using the scalar form of the four-dimensional Hardy inequality \cite[eq. (2.15)]{birman} we obtain
\begin{equation}
	\int_R^\infty\rho^{1-\nu} |\tilde{\varphi}(\rho,\theta,\beta_1,\beta_2)|^2\,\mathrm{d}\rho \le  \int_R^\infty\rho^{3-\nu} |\partial_\rho\tilde{\varphi}(\rho,\theta,\beta_1,\beta_2)|^2\,\mathrm{d}\rho. 
\end{equation}
Therefore, we get
\begin{align}\label{eq: ungleichung winkel und radiale komponente}
\begin{split}
&\int_R^\infty \rho^{1-\nu} \left(|\tilde{\varphi}(\rho,\theta,\beta_1,\beta_2)|^2+|\partial_\theta\tilde{\varphi}(\rho,\theta,\beta_1,\beta_2)|^2\right)\, \mathrm{d}\rho\\
&\qquad \le R^{-\nu}\int_R^\infty \rho^{3} \left(|\partial_\rho\tilde{\varphi}(\rho,\theta,\beta_1,\beta_2)|^2+\frac{|\partial_\theta\tilde{\varphi}(\rho,\theta,\beta_1,\beta_2)|^2}{\rho^2}\right)\, \mathrm{d}\rho.
\end{split}
\end{align}
Recall that $(\rho,\theta)$ are the polar coordinates corresponding to $(|\tilde{q}|,|\tilde{\xi}|)$, which implies
\begin{equation}
	\left(|\partial_\rho\tilde{\varphi}(\rho,\theta,\beta_1,\beta_2)|^2+\frac{|\partial_\theta\tilde{\varphi}(\rho,\theta,\beta_1,\beta_2)|^2}{\rho^2}\right) = |\partial_{|\tilde{q}|}\varphi|^2 + |\partial_{|\tilde{\xi}|}\varphi|^2\le |\nabla_0\varphi|^2.
\end{equation}
This yields
\begin{align}\label{eq: no efimov d=2 N=3 estimate integral < eps gradient}
& R^{-\nu}\int_R^\infty\! \int_0^{2\pi}\!\int_0^{2\pi}\!\int_{\theta_0}^{\theta_1}\rho^{3} \left(|\partial_\rho\tilde{\varphi}(\rho,\theta,\beta_1,\beta_2)|^2+\frac{|\partial_\theta\tilde{\varphi}(\rho,\theta,\beta_1,\beta_2)|^2}{\rho^2}\right)\,\mathrm{d}\theta\,\mathrm{d}\beta_1\,\mathrm{d}\beta_2\, \mathrm{d}\rho\notag\\
& \le \varepsilon\int_{K_R(Z,\kappa,\kappa')} |\nabla_0\varphi|^2\,\mathrm{d}x,
\end{align}
where $\varepsilon>0$ can be chosen arbitrarily small if $R>0$ is sufficiently large. Here we used that the Jacobian of the transformation from the coordinates $x=(\tilde{q}_1,\tilde{q}_2,\tilde{\xi}_1,\tilde{\xi}_2)$ to the variables $(\rho,\theta,\beta_1,\beta_2)$ is given by $\rho^3\sin(\theta)\cos(\theta)$ and we can estimate 
\begin{equation}
0<\sin(\theta_0)\cos(\theta_0)\le \sin(\theta)\cos(\theta)
\end{equation}
 for any $\theta\in (\theta_0,\theta_1)$ if $0<\kappa<\kappa'<1$. Combining \eqref{eq: no efimov d=2 N=3 estimate integral < eps gradient} with \eqref{eq: ungleichung winkel und radiale komponente}, \eqref{eq: no efimov n=2 Lemma winkel} and \eqref{eq: estimate integral inside cone by scalar integral} we get 
\begin{align}
\begin{split}
&\int_{K_R(Z,\kappa)}\left(|\nabla_0\varphi|^2+V|\varphi|^2-\varepsilon |x|_m^{-4}|\varphi|^2\right)\,\mathrm{d}x \ge - \varepsilon\int_{K_R(Z,\kappa,\kappa')} |\nabla_0\varphi|^2\,\mathrm{d}x.
\end{split}
\end{align}
This inequality is an analogue to \eqref{eq: estimate integral over rho by integral over x d=1 N=3} in the proof of Theorem \ref{thm: No Efimov N=3}. Now we can complete the proof of Theorem \ref{thm: No Efimov n=2 N=3} by repeating the same steps as in the proof of Theorem \ref{thm: No Efimov N=3} if we replace the scalar form of the two-dimensional Hardy inequality by the scalar form of the four-dimensional one. 
\end{proof}

\section*{Acknowledgements}
The authors are deeply grateful to Timo Weidl for his support, in particular for providing them an insight into the unpublished manuscript \cite{Weidlbook}. The work of Simon Barth and Andreas Bitter was supported by the Deutsche Forschungsgemeinschaft (DFG) through the Research Training Group 1838: Spectral Theory and Quantum Systems. Semjon Vugalter gratefully acknowledges the funding by the Deutsche Forschungsgemeinschaft (DFG), German Research Foundation Project ID 258734477 - SFB1173. Semjon Vugalter is grateful to the University of Toulon for the hospitality during his stay there. The authors thank the Mittag-Leffler Institute, where a part of the work was done during the semester program \textit{Spectral Methods in Mathematical Physics.}

\appendix
\section{Properties of the space $\tilde{H}^1(\mathbb{R}^d)$}\label{Appendix A}
Here we collect some properties of the space $\tilde{H}^1(\mathbb{R}^d)$ for dimensions $d=1$ and $d=2$. These spaces were introduced M. Birman in \cite[Section 2]{birman} and are intensively discussed in the book \cite{Weidlbook} by R. Frank, A. Laptev and T. Weidl. For convenience we give the statements and some of these properties below.
 \begin{prop}[Properties of $\tilde{H}^1(\mathbb{R}^d), \ d=1,2$]\label{prop: Homogenous Sobolev}  The following assertions hold.
\begin{enumerate}
\item[\textbf{(i)}](Hardy's inequality for the half-line, inequality $(2.17)$ in \cite{birman}) Let $d=1$ and $u\in \tilde{H}^1(\mathbb{R})$, such that $\liminf_{t\rightarrow 0} |u(t)|=0$. Then
\begin{equation}\label{eq: one dimensional Hardy Birman}
	\int\limits_0^\infty \frac{|u(t)|^2}{t^2}\,\mathrm{d}t \le 4 \int\limits_0^\infty |u'(t)|^2\,\mathrm{d}t.
\end{equation}
\item[\textbf{(ii)}](Two-dimensional Hardy inequality, inequality $(2)$ in \cite{Solomyak}) Let $d=2$ and assume that $u\in \tilde{H}^1(\mathbb{R}^2)$, represented in polar coordinates $(r,\theta)$, satisfies
\begin{equation}\label{int in trace sense}
	\int_{0}^{2\pi} u(1,\theta)\,\mathrm{d}\theta = 0,
\end{equation}
where $u(1,\theta)$ is understood in the trace sense. Then
\begin{equation}\label{eq: two dimensional Hardy Solomyak}
	\int_{\mathbb{R}^2} \frac{|u|^2}{|x|^2(1+\ln^2(|x|))} \,\mathrm{d}x \le 4 \int_{\mathbb{R}^2} |\nabla u|^2 \,\mathrm{d}x.
\end{equation}
	\item[\textbf{(iii)}] Let $d=1$. Then there exists a constant $C>0$, such that for all functions $u\in \tilde{H}^1(\mathbb{R})$ 
\begin{equation}\label{eq: weidl inequality in 1}
\int_{-\infty}^\infty \frac{|u(x)|^2}{1+x^2}\, \mathrm{d}x \leq C \Vert u \Vert_{\tilde{H}^1}^2.
\end{equation}
\item[\textbf{(iv)}] Let $d=2$. Then there exists a constant $C>0$, such that for all functions $u\in \tilde{H}^1(\mathbb{R}^2)$ 
\begin{equation}\label{eq: weidl inequality in 2}
\int_{\mathbb{R}^2} \frac{|u(x)|^2}{1+x^2(\ln|x|)^2}\, \mathrm{d}x \leq C \Vert u \Vert_{\tilde{H}^1}^2.
\end{equation}
\item[\textbf{(v)}] Let $u\in \tilde{H}^1(\mathbb{R}^d)$ and let $(u_n)_{n\in \mathbb{N}}$ be a sequence in $\tilde{H}^1(\mathbb{R}^d)$, such that $u_n \rightharpoonup u$ weakly in $\tilde{H}^1(\mathbb{R}^d)$. Then for every measurable bounded set $B\subset \mathbb{R}^d$ we have $\chi_B u_n \rightarrow \chi_B u$ in $L^2(\mathbb{R}^d).$
\end{enumerate}
\end{prop}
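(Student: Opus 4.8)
The plan is to handle the five assertions by reducing everything to the two classical Hardy inequalities \textbf{(i)} and \textbf{(ii)}, which I would simply quote from \cite{birman} and \cite{Solomyak}. If self-contained arguments are preferred, \textbf{(i)} is the standard integration-by-parts proof: integrating $\int_0^\infty |u(t)|^2 t^{-2}\,\mathrm dt$ by parts, the boundary term at the origin is killed by the hypothesis $\liminf_{t\to 0}|u(t)|=0$ and the one at infinity by $u'\in L^2$, after which Cauchy--Schwarz yields the constant $4$ in \eqref{eq: one dimensional Hardy Birman}. Inequality \textbf{(ii)} reduces to \textbf{(i)}: expanding $u=\sum_n a_n(r)\mathrm e^{\mathrm i n\theta}$ and substituting $s=\ln r$ transforms the left-hand side of \eqref{eq: two dimensional Hardy Solomyak} into $\sum_n\int_{\mathbb R}|b_n(s)|^2(1+s^2)^{-1}\,\mathrm ds$ and the Dirichlet energy into $\sum_n\int_{\mathbb R}\bigl(|b_n'(s)|^2+n^2|b_n(s)|^2\bigr)\,\mathrm ds$, with $b_n(s)=a_n(\mathrm e^s)$. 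For $n=0$ the mean-value condition \eqref{int in trace sense} forces $b_0(0)=0$, so \textbf{(i)} applies on each half-line (note $(1+s^2)^{-1}\le s^{-2}$), while for $n\neq 0$ one has $(1+s^2)^{-1}\le 1\le n^2$ and the bound is immediate.

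For \textbf{(iii)} and \textbf{(iv)} I would split $\mathbb R^d=\{|x|\le 1\}\cup\{|x|>1\}$. On the unit ball the weight is bounded by $1$, so its contribution is at most $\int_{\{|x|\le 1\}}|u|^2\,\mathrm dx$, which is exactly the local part of the norm \eqref{eq: norm tilde H d=2}. On the exterior the weight is comparable to $|x|^{-2}$ (resp.\ $|x|^{-2}\ln^{-2}|x|$). Subtracting a suitable constant $c$ — the boundary value $u(\pm1)$ on each half-line when $d=1$, the mean of $u$ over the circle $\{|x|=1\}$ when $d=2$ — I would feed the function $u-c$, extended by zero inside the unit ball, into \textbf{(i)} (resp.\ \textbf{(ii)}), obtaining a bound by $\|\nabla u\|^2$; the leftover constant contributes $|c|^2\int_{\{|x|>1\}}|x|^{-2}\ln^{-2}|x|\,\mathrm dx$, which is finite. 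It then remains to estimate $|c|^2$ by $\|u\|_{\tilde H^1}^2$, which is a trace estimate: $u\in H^1(B(1))$ has a trace on $\{|x|=1\}$ with $|c|^2\le C\|u\|_{H^1(B(1))}^2$, and $\|u\|_{H^1(B(1))}^2=\int_{\{|x|\le1\}}(|u|^2+|\nabla u|^2)\,\mathrm dx$ is dominated by the norm \eqref{eq: norm tilde H d=2}. This yields \eqref{eq: weidl inequality in 1} and \eqref{eq: weidl inequality in 2}.

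Assertion \textbf{(v)} is then a soft consequence of the weighted bounds just proved. A weakly convergent sequence $u_n\rightharpoonup u$ in $\tilde H^1$ is norm-bounded, so by \textbf{(iii)}/\textbf{(iv)} the weighted integrals $\int|u_n|^2(1+|x|^2(\cdots))^{-1}\,\mathrm dx$ are uniformly bounded. For any bounded $B\subset B(R)$ the weight is bounded below on $B(R)$, hence $\int_B|u_n|^2\,\mathrm dx\le C_R$; combined with the uniform bound on $\|\nabla u_n\|_{L^2}$ this shows $(u_n)$ is bounded in $H^1(B(R))$. The Rellich--Kondrachov theorem supplies a subsequence converging strongly in $L^2(B(R))$, whose limit must be $u$ by uniqueness of the weak limit; a standard subsequence argument upgrades this to convergence of the whole sequence, so $\chi_B u_n\to\chi_B u$ in $L^2(\mathbb R^d)$.

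The main obstacle I anticipate is \textbf{(iv)}. Compared with the one-dimensional case, the logarithmic weight is genuinely singular relative to $|x|^{-2}$ on the exterior, and — crucially — the norm $\|\cdot\|_{\tilde H^1}$ supplies $L^2$-control only inside the unit ball. Consequently the trace constant $|c|^2$ cannot be absorbed for free (as it could under a global $L^2$ hypothesis) but must be estimated genuinely through the gradient energy plus the local $L^2$ piece. Verifying the elementary pointwise comparison between the weight $(1+|x|^2\ln^2|x|)^{-1}$ and the Solomyak weight $|x|^{-2}(1+\ln^2|x|)^{-1}$ on $\{|x|>1\}$, and checking integrability near the origin where $\ln|x|\to-\infty$ while the weight stays finite, is where the real care is required.
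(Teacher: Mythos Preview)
Your proposal is correct. For \textbf{(i)} the paper uses the square-completion identity $|u'-\tfrac{1}{2t}u|^2=|u'|^2-\tfrac{1}{4t^2}|u|^2-\bigl(\tfrac{1}{2t}|u|^2\bigr)'$ rather than integration by parts followed by Cauchy--Schwarz, but the two arguments are equivalent up to cosmetics; the handling of the boundary terms via the $\liminf$ hypothesis is the same. For \textbf{(ii)} and \textbf{(v)} the paper simply cites \cite{Solomyak} and Rellich--Kondrachov, so your sketches are strictly more detailed than what appears there.

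The genuine difference is in \textbf{(iii)}--\textbf{(iv)}. You subtract a boundary constant $c$ (the value $u(\pm 1)$, respectively the circular mean), apply the Hardy inequality to $u-c$ on the exterior, and then control $|c|^2$ by a trace estimate on $H^1(B(1))$. The paper instead multiplies by a smooth cutoff $\xi$ supported away from the origin (with $\xi=0$ on $\{|x|\le \tfrac12\}$ and $\xi=1$ on $\{|x|\ge 1\}$) and applies Hardy directly to $\xi u$, which vanishes at the origin; the commutator term $|\xi'|^2|u|^2$ is supported in $\{\tfrac12\le|x|\le 1\}$ and is absorbed by the local $L^2$ piece of $\|u\|_{\tilde H^1}$. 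The cutoff route is marginally cleaner in that it bypasses the trace inequality altogether, whereas your route makes explicit the role of the constants that $\tilde H^1$ fails to distinguish. One small wrinkle: in $d=2$ your phrase ``extended by zero inside the unit ball'' is not literally correct, since $u-c$ need not vanish on $\{|x|=1\}$ and the zero extension would have a jump there; but this is harmless, as you can either apply \textbf{(ii)} to $u-c\in\tilde H^1(\mathbb R^2)$ globally without any extension, or---as your own Fourier argument for \textbf{(ii)} already shows mode by mode---run the estimate directly on the exterior using only the exterior gradient. Your anticipated obstacle in \textbf{(iv)} (the pointwise weight comparison) is real but routine: on $\{|x|>1\}$ one checks $|x|^2(1+\ln^2|x|)\le C(1+|x|^2\ln^2|x|)$ by splitting at $|x|=e$, and on $\{|x|\le1\}$ the weight $(1+|x|^2\ln^2|x|)^{-1}$ is bounded by $1$.
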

\begin{proof}
\textbf{(i)} We borrow the proof of this inequality from the book \cite{Weidlbook}, which is currently in preparation. We are grateful to R. Frank, A. Laptev and T. Weidl for sharing it with us before publishing.
\\
By applying twice the product rule for weakly differentiable functions we get
\begin{align}
\begin{split}
	\left| u'-\frac{1}{2x}u\right|^2&=|u'|^2+\frac{1}{4x^2}|u|^2-\frac{1}{2x}\left(|u|^2\right)^{'}=|u'|^2-\frac{1}{4x^2}|u|^2-\left(\frac{1}{2x}|u|^2\right)^{'}.
	\end{split}
\end{align}
Hence, for fixed $0<\varepsilon<M<\infty$ we have
\begin{align}\label{eq: Start Hardy}
\begin{split}
0\le \int_\varepsilon^M  \left| u'-\frac{1}{2x}u\right|^2\,\mathrm{d}x= &\int_\varepsilon^M |u'|^2 \,\mathrm{d}x-\int_\varepsilon^M \frac{1}{4x^2}|u|^2\,\mathrm{d}x\\
&-\frac{1}{2M}\left(|u(M)|^2\right)+\frac{1}{2\varepsilon}\left(|u(\varepsilon)|^2\right).
\end{split}
\end{align}
Note that 
\begin{equation}\label{eq: ineq u by int u'}
	|u(x)|^2\le x\int_0^x|u'(y)|^2\,\mathrm{d}y.
\end{equation}
Indeed, we have
\begin{align}
\begin{split}
|u(x)-u(x')|&=\left|\int_{x'}^x u'(y)\,\mathrm{d}y\right| \\ &\le \left(\int_{x'}^x |u'(y)|^2\,\mathrm{d}y\right)^{\frac{1}{2}} \left(\int_{x'}^x 1\,\mathrm{d}y\right)^{\frac{1}{2}}\le \left(\int_{0}^x |u'(y)|^2\,\mathrm{d}y\right)^{\frac{1}{2}}x^{\frac{1}{2}}.
\end{split}
\end{align}
Now \eqref{eq: ineq u by int u'} follows from the assumption that $\liminf_{x'\rightarrow 0} |u(x')|=0.$\\
 Now we take inequality \eqref{eq: Start Hardy} and let $M\rightarrow \infty$. Note that the first integral on the r.h.s. of \eqref{eq: Start Hardy} converges as $M\rightarrow \infty$, because $u'\in L^2(0,\infty).$ By \eqref{eq: ineq u by int u'} we get $\sup_M |u(M)|M^{-1}<\infty$. Therefore,
\begin{equation}
	\int_\varepsilon^\infty |u|^2x^{-2}\,\mathrm{d}x<\infty.
\end{equation}
The finiteness of this integral and the fact that $x\mapsto x^{-1}$ is not integrable implies
\begin{equation}
	\liminf_{x\rightarrow \infty} |u(x)|^2x^{-1} =0.
\end{equation}
Choosing $M$ in \eqref{eq: Start Hardy} along a sequence where this $\liminf$ is realized we obtain 
\begin{align}\label{eq: Start Hardy 2}
\begin{split}
0\le &\int_\varepsilon^\infty |u'|^2 \,\mathrm{d}x-\int_\varepsilon^\infty \frac{1}{4x^2}|u|^2\,\mathrm{d}x+\frac{1}{2\varepsilon}\left(|u(\varepsilon)|^2\right).
\end{split}
\end{align}
Now we let $\varepsilon\rightarrow 0$. Since $u'\in L^2(0,\infty)$, the first integral on the r.h.s. of \eqref{eq: Start Hardy 2} converges. By the same argument, together with \eqref{eq: ineq u by int u'} we get $\lim_{\varepsilon\rightarrow 0} |u(\varepsilon)|^2 \varepsilon^{-1}=0.$ This shows that the second integral on the r.h.s. of \eqref{eq: Start Hardy 2} also converges and we have
\begin{equation}
	0\le \int_0^\infty |u'|^2 \,\mathrm{d}x-\int_0^\infty \frac{1}{4x^2}|u|^2\,\mathrm{d}x.
\end{equation}
\textbf{(ii)} The proof can be found in \cite{Solomyak}.
\\ \textbf{(iii)} We take a smooth function $\xi$ with $0\le \xi\le 1$ on $\mathbb{R}_+$, $\xi=0$ on $(0,1/2]$ and $\xi=1$ on $[1,\infty)$. Then
\begin{equation}\label{eq: start hardy local}
	\int_0^\infty \frac{|u|^2}{1+x^2}\,\mathrm{d}x \le 2\int_0^\infty (1-\xi)^2\frac{|u|^2}{1+x^2}\,\mathrm{d}x + 2 \int_0^\infty \xi^2 \frac{|u|^2}{1+x^2}\,\mathrm{d}x. 
\end{equation}
The first term on the r.h.s. of \eqref{eq: start hardy local} can be controlled by 
\begin{equation}
2\int_0^\infty (1-\xi)^2\frac{|u|^2}{1+x^2}\,\mathrm{d}x \le 2\sup\limits_{0\le s\le 1} \frac{1}{1+s^2}\int_0^1 |u|^2\,\mathrm{d}x=2\int_0^1 |u|^2\,\mathrm{d}x.
\end{equation}
The second term on the r.h.s. of \eqref{eq: start hardy local} can be estimated by using the Hardy inequality as
\begin{align}
\begin{split}
	2 \int_0^\infty \xi^2 \frac{|u|^2}{1+x^2}\,\mathrm{d}x &\le 8 \int_0^\infty |(\xi u)'|^2\,\mathrm{d}x \le 16 \int_0^\infty \left(\xi^2|u'|^2+\left(\xi'\right)^2|u|^2\right)\,\mathrm{d}x
\\	&\le C \left(\int_0^\infty |u'|^2 \,\mathrm{d}x + \int_0^1 |u|^2\,\mathrm{d}x\right),
\end{split}
\end{align}
where $C$ depends on $\xi$, but not on $u$. 
\\ \textbf{(iv)} The proof is a simple modification of the proof of assertion \textbf{(iii)}.
\\
\textbf{(v)} The proof follows immediately from the Rellich-Kondrachov theorem, see \cite[Theorem 6.3]{adams2003sobolev}
\end{proof}

\begin{rem}
 Inequality \eqref{eq: two dimensional Hardy Solomyak} is equivalent to the scalar inequality 
\begin{equation}\label{eq: two dimensional hardy - scala variant}
	\int\limits_0^\infty \frac{|u(t)|^2}{t\left(1+\ln^2(t)\right)}\,\mathrm{d}t \le 4\int\limits_0^\infty t(u'(t))^2\,\mathrm{d}t, \qquad  u(1)=0.
\end{equation}
\end{rem}

\begin{cor}\label{cor: Estimate Vu^2}
It follows from Proposition \ref{prop: Homogenous Sobolev} \textbf{(iv)} and \textbf{(v)} that if $V$ satisfies \eqref{cond: relatively form bounded} and \eqref{cond: decay at infinity},
then there exists a constant $C>0$, such that for all $u\in \tilde{H}^1(\mathbb{R}^d)$ we have
\begin{equation}
	\int_{\mathbb{R}^d}|V(x)||u(x)|^2\,\mathrm{d}x \le C\Vert u\Vert_{\tilde{H}^1}^2.
\end{equation}
\end{cor}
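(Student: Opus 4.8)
The plan is to split the integral $\int_{\mathbb{R}^d} |V(x)||u(x)|^2\,\mathrm{d}x$ into a bounded region around the origin, where $V$ may be singular and only the relative form bound \eqref{cond: relatively form bounded} is available, and an exterior region, where the pointwise decay \eqref{cond: decay at infinity} combines with the weighted Hardy inequalities of Proposition \ref{prop: Homogenous Sobolev}. Fix $R=\max\{A,1\}$. In the exterior region $\{|x|\ge R\}$ one observes that \eqref{cond: decay at infinity} yields, for $d=2$, the pointwise estimate $(1+|x|)^{-2-\nu}\le C\,(1+|x|^2(\ln|x|)^2)^{-1}$ (because $(\ln|x|)^2\le C|x|^\nu$ for large $|x|$), so that
\begin{equation*}
\int_{\{|x|\ge R\}} |V||u|^2 \le C\int_{\mathbb{R}^2}\frac{|u|^2}{1+|x|^2(\ln|x|)^2} \le C'\Vert u\Vert_{\tilde{H}^1}^2
\end{equation*}
by Proposition \ref{prop: Homogenous Sobolev} \textbf{(iv)}; for $d=1$ the same argument with $(1+|x|)^{-2-\nu}\le C(1+x^2)^{-1}$ and part \textbf{(iii)} applies.

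For the bounded region I would localize with a cutoff $\chi\in C_0^\infty(\mathbb{R}^d)$ equal to $1$ on $\{|x|\le R\}$ and supported in $\{|x|\le 2R\}$. Since $u\in\tilde{H}^1$ has $\nabla u\in L^2$ and (by Proposition \ref{prop: Homogenous Sobolev} \textbf{(iv)}, the weight being bounded below on bounded sets) $u\in L^2_{\mathrm{loc}}$, the product $\chi u$ lies in $H^1(\mathbb{R}^d)$. Applying \eqref{cond: relatively form bounded} to $\chi u$ with a fixed $\varepsilon$ and using the product rule gives
\begin{equation*}
\int_{\{|x|\le R\}}|V||u|^2 \le \langle |V|\,\chi u,\chi u\rangle \le \varepsilon\Vert\nabla(\chi u)\Vert^2+C(\varepsilon)\Vert\chi u\Vert^2 \le C\Big(\Vert\nabla u\Vert^2+\int_{\{|x|\le 2R\}}|u|^2\Big).
\end{equation*}
It then remains to bound $\int_{\{|x|\le 2R\}}|u|^2$ by $\Vert u\Vert_{\tilde{H}^1}^2$, which again follows from Proposition \ref{prop: Homogenous Sobolev} \textbf{(iv)} since the weight $(1+|x|^2(\ln|x|)^2)^{-1}$ is bounded from below by a positive constant on the compact set $\{|x|\le 2R\}$.

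Combining the two regions yields the claimed estimate. To be completely rigorous about applying the relative form bound, which is an $H^1$-statement, I would first prove the inequality for $u\in C_0^\infty(\mathbb{R}^d)$ with a constant independent of $u$, and then extend to all $u\in\tilde{H}^1$ by density: taking $u_n\in C_0^\infty$ with $u_n\to u$ in $\tilde{H}^1$, Proposition \ref{prop: Homogenous Sobolev} \textbf{(v)} gives $u_n\to u$ in $L^2_{\mathrm{loc}}$ and hence, along a subsequence, almost everywhere, so that Fatou's lemma together with $\Vert u_n\Vert_{\tilde{H}^1}\to\Vert u\Vert_{\tilde{H}^1}$ transfers the uniform bound $\int|V||u_n|^2\le C\Vert u_n\Vert_{\tilde{H}^1}^2$ to the limit. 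The only genuine subtlety is the near-origin term: because $V$ need not be locally bounded and $\tilde{H}^1$-functions need not be globally square integrable, one cannot invoke \eqref{cond: relatively form bounded} directly on $u$, and the cutoff-plus-weighted-Hardy device is exactly what reconciles the $H^1$ nature of the form bound with the weaker global integrability encoded in the $\tilde{H}^1$ norm.
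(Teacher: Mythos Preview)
Your argument is correct and is precisely what the paper leaves implicit: split into a bounded region (handled by the relative form bound applied to the localized function $\chi u\in H^1$) and an exterior region (handled by the decay \eqref{cond: decay at infinity} combined with the weighted inequalities \eqref{eq: weidl inequality in 1}/\eqref{eq: weidl inequality in 2}); the paper carries out essentially the same computation in the proof of Lemma~\ref{lem: existence of solution d=1 N=1}. The density-plus-Fatou step is a valid alternative, though your direct cutoff argument already applies to every $u\in\tilde{H}^1(\mathbb{R}^d)$, since the weighted inequalities guarantee $u\in L^2_{\mathrm{loc}}$ and hence $\chi u\in H^1$.
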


\begin{cor}\label{lem: two dimensional Hardy radially symmetric functions}
For any function $u\in \tilde H^1(\mathbb{R}^2)$ with $\supp(u)\subset \{x\in \mathbb{R}^2:|x|<1\}$ and any constant $\nu\in(0,1)$ we have
\begin{equation}\label{eq: hardy inequality scheibe}
		\int_{\{|x|\ge \nu\}} \frac{|u|^2}{|x|^2(1+\ln^2(|x|))} \,\mathrm{d}x \le 4 \int_{\{|x|\ge \nu\}} |\nabla u|^2 \,\mathrm{d}x.
\end{equation}
\end{cor}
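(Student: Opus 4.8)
The weight and the constant $4$ in \eqref{eq: hardy inequality scheibe} are exactly those of the two-dimensional Hardy inequality \eqref{eq: two dimensional Hardy Solomyak}, and since $\supp(u)\subset\{|x|<1\}$ the angular mean of $u$ at $|x|=1$ vanishes, so \eqref{eq: two dimensional Hardy Solomyak} is in principle applicable to $u$. One cannot, however, simply invoke it: the right-hand side of \eqref{eq: hardy inequality scheibe} integrates $|\nabla u|^2$ only over $\{|x|\ge\nu\}$, which is strictly smaller than the full-space Dirichlet energy in \eqref{eq: two dimensional Hardy Solomyak}. The plan is therefore to reduce \eqref{eq: hardy inequality scheibe} mode by mode to the scalar half-line inequality \eqref{eq: two dimensional hardy - scala variant}, for which the restricted radial domain can be dealt with by an extension argument. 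Writing $x=(\rho,\theta)$ in polar coordinates I would expand
\begin{equation}
u(\rho,\theta)=\sum_{n=-\infty}^{\infty} R_n(\rho)\,\mathrm e^{\i n\theta},
\end{equation}
which is justified by standard Fourier analysis for $u\in\tilde H^1(\R^2)$. Because $\supp(u)\subset\{|x|<1\}$ each coefficient satisfies $R_n(\rho)=0$ for $\rho\ge 1$, so in particular $R_n(1)=0$ in the trace sense; this boundary value is precisely what \eqref{eq: two dimensional hardy - scala variant} requires.

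By orthogonality of $\{\mathrm e^{\i n\theta}\}$ on $[0,2\pi]$ and $\d x=\rho\,\d\rho\,\d\theta$, both sides of \eqref{eq: hardy inequality scheibe} split into sums over $n$:
\begin{equation}
\int_{\{|x|\ge\nu\}}\frac{|u|^2}{|x|^2(1+\ln^2|x|)}\,\d x = 2\pi\sum_{n}\int_\nu^1\frac{|R_n(\rho)|^2}{\rho(1+\ln^2\rho)}\,\d\rho,
\end{equation}
and, using $|\nabla u|^2=|\partial_\rho u|^2+\rho^{-2}|\partial_\theta u|^2$,
\begin{equation}
\int_{\{|x|\ge\nu\}}|\nabla u|^2\,\d x = 2\pi\sum_{n}\int_\nu^1\Big(\rho\,|R_n'(\rho)|^2+\frac{n^2}{\rho}\,|R_n(\rho)|^2\Big)\,\d\rho,
\end{equation}
where the radial integrals extend only to $1$ since $R_n$ vanishes beyond. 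It therefore suffices to establish, for each fixed $n$, the scalar estimate $\int_\nu^1\rho^{-1}(1+\ln^2\rho)^{-1}|R_n|^2\,\d\rho\le 4\int_\nu^1\rho\,|R_n'|^2\,\d\rho$. The extra nonnegative term $n^2\rho^{-1}|R_n|^2$ on the right-hand side can simply be discarded, so one argument covers all modes, including $n=0$, simultaneously.

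To prove this per-mode estimate I would extend $R_n$ from $[\nu,1]$ to $(0,\infty)$ by setting it equal to the constant $R_n(\nu)$ on $(0,\nu)$ and to $0$ on $[1,\infty)$, calling the result $g_n$. Then $g_n\in\tilde H^1(\R)$ with $g_n(1)=0$, and $g_n'=0$ off $[\nu,1]$, so the Dirichlet energy is unchanged,
\begin{equation}
\int_0^\infty t\,|g_n'(t)|^2\,\d t=\int_\nu^1\rho\,|R_n'(\rho)|^2\,\d\rho,
\end{equation}
while the weighted term only grows,
\begin{equation}
\int_0^\infty\frac{|g_n(t)|^2}{t(1+\ln^2 t)}\,\d t\ \ge\ \int_\nu^1\frac{|R_n(\rho)|^2}{\rho(1+\ln^2\rho)}\,\d\rho.
\end{equation}
Applying \eqref{eq: two dimensional hardy - scala variant} to $g_n$ then yields the required inequality, and summing over $n$ recovers \eqref{eq: hardy inequality scheibe}.

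The only subtlety is that the inward constant extension must keep the weighted norm finite, which holds because $\int_0^\nu t^{-1}(1+\ln^2 t)^{-1}\,\d t$ converges at the origin (the substitution $s=\ln t$ turns it into $\int_{-\infty}^{\ln\nu}(1+s^2)^{-1}\,\d s<\infty$); everything else is routine Fourier bookkeeping. I expect the substantive point to be this extension step: it is what lets one trade the restricted domain $\{|x|\ge\nu\}$ for the full half-line of \eqref{eq: two dimensional hardy - scala variant} without enlarging the Dirichlet energy, and it is the reason a direct appeal to the full-space inequality \eqref{eq: two dimensional Hardy Solomyak} fails.
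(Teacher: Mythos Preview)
Your proof is correct, and the core trick---extending the function by its constant boundary value on $\{|x|<\nu\}$ so that the gradient vanishes there, then invoking the full-space Hardy inequality---is exactly what the paper does. The difference lies in how the reduction to the one-dimensional problem is carried out. The paper first argues, via a rearrangement inequality, that it suffices to treat spherically symmetric $u$, then applies the constant-extension trick directly in two dimensions and cites \eqref{eq: two dimensional Hardy Solomyak}. You instead Fourier-decompose in the angular variable and apply the constant extension and the scalar inequality \eqref{eq: two dimensional hardy - scala variant} mode by mode, discarding the nonnegative $n^2\rho^{-1}|R_n|^2$ term on the right. Your route is arguably cleaner: the rearrangement step in the paper is stated for integrals over the restricted annulus $\{|x|\ge\nu\}$, where the standard P\'olya--Szeg\H{o} and Hardy--Littlewood inequalities do not apply verbatim, whereas your Fourier argument needs no such justification and makes the role of the boundary condition $R_n(1)=0$ completely transparent.
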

\begin{proof}[Proof of Corollary \ref{lem: two dimensional Hardy radially symmetric functions}]
	Since the function $v(x)=\vert x\vert^{-2}\left(1+(\ln\left(|x|\right))^2\right)^{-1}$ is spherically symmetric, due to the rearrangement inequality it suffices to show that the inequality holds for spherically symmetric functions. Let $u\in \tilde H^1(\mathbb{R}^2)$ be spherically symmetric with $\supp(u)\subset \{x:|x|< 1\}$. Then the function $\tilde u$ given by
	\begin{equation}
		\tilde{u}(x) = \begin{cases}
			u(x) & \text{if} \ |x|\ge \nu,\\
			u(\nu) & \text{if} \ |x|< \nu
		\end{cases}
	\end{equation}
	is also an element of $ \tilde H^1(\mathbb{R}^2) $.
Applying the two-dimensional Hardy inequality \eqref{eq: two dimensional Hardy Solomyak} to the function $\tilde u$, using that $\tilde{u}$ is constant for $|x|\le \nu$ and that $\tilde{u}$ and $u$ coincide for $|x|\ge \nu$ proves \eqref{eq: hardy inequality scheibe}.
\end{proof}
\begin{rem}
Analogously to the proof of Corollary \ref{lem: two dimensional Hardy radially symmetric functions} one can see that if $d\ge 3$, then for any function $u\in \tilde{H}^1(\mathbb{R}^d)$ we have
\begin{equation}\label{eq: Hardy exterior domain d>=3}
		\int_{\{|x|\ge \nu\}} \frac{|u|^2}{|x|^2} \,\mathrm{d}x \le \frac{4}{(d-2)^2} \int_{\{|x|\ge \nu\}} |\nabla u|^2 \,\mathrm{d}x.
\end{equation}
\end{rem}

\section{Necessary and sufficient conditions for virtual levels}\label{Appendix Virtual Levels}
In this section we prove Theorem \ref{lem no virtual level subtract d=1}, which is stated for one-particle Schr\"odinger operators in dimension one or two. Afterwards, we give an analogue of this theorem for the multi-particle case.
\begin{proof}[Proof of Theorem \ref{lem no virtual level subtract d=1}]
We only need to prove that the absence of a virtual level of $h$ implies that \eqref{eq: no virtual level subtract d=1} does not hold. The proof of the other direction follows from Theorem \ref{thm: abstraktes Theorem} and the variational principle. 
\\Let $d=1$. Note that we can assume that $\mathcal{U}(x)=-(1+|x|)^{-2}$.
For $\psi \in H^1(\mathbb{R})$ we write
\begin{equation}
\psi_0(x)= \psi(x)- \psi(0).
\end{equation}
 Then, $\psi_0(0)=0$ and we can apply Hardy's inequality on the half line $\mathbb{R}_+$ to obtain
\begin{equation}
\int_0^\infty \frac{|\psi_0(x)|^2}{|x|^2}\, \mathrm{d}x \leq 4 \int_0^\infty |\psi_0'(x)|^2 \, \mathrm{d}x =  4 \int_0^\infty |\psi'(x)|^2 \, \mathrm{d}x.
\end{equation}
Furthermore, we have
\begin{align}\label{1: abstract virtual level}
\begin{split}
\langle V \psi,\psi \rangle &= \int\limits_{-\infty}^\infty V(x)|\psi(0)|^2\, \mathrm{d}x + \int\limits_{-\infty}^\infty V(x)|\psi_0(x)|^2\, \mathrm{d}x + 2\mathrm{Re}\int\limits_{-\infty}^\infty V(x)\psi(0)\psi_0(x)\, \mathrm{d}x 
\\ &\geq \int\limits_{-\infty}^\infty V(x)|\psi(0)|^2\, \mathrm{d}x + \int\limits_{-\infty}^\infty V(x)|\psi_0(x)|^2\, \mathrm{d}x - 2\int\limits_{-\infty}^\infty |V(x)| |\psi(0)\psi_0(x)|\, \mathrm{d}x.
\end{split}
\end{align}
Note that for any $\delta>0$
\begin{equation}
2|\psi(0)\psi_0(x)| \leq \delta |\psi(0)|^2 + \delta^{-1}|\psi_0(x)|^2,
\end{equation}
which together with \eqref{1: abstract virtual level} implies
\begin{align}\label{eq: equivalent definition virtual level estimate Vpsi}
\begin{split}
\langle V \psi,\psi \rangle &\geq |\psi(0)|^2\int\limits_{-\infty}^\infty \left( V(x) - \delta|V(x)|\right)\, \mathrm{d}x + \int\limits_{-\infty}^\infty |\psi_0(x)|^2 \left( V(x)-\delta^{-1} |V(x)| \right) \, \mathrm{d}x
\\ &\geq |\psi(0)|^2 \int\limits_{-\infty}^\infty (V(x)-\delta|V(x)|)\, \mathrm{d}x -(1+\delta^{-1}) \int\limits_{-\infty}^\infty |V(x)||\psi_0(x)|^2\, \mathrm{d}x
\\ &\geq |\psi(0)|^2 \int\limits_{-\infty}^\infty (V(x)-\delta|V(x)|)\, \mathrm{d}x - C(1+\delta^{-1}) \Vert \psi_0\Vert^2_{\tilde{H}^1},
\end{split}
\end{align}
where in the last estimate we used Corollary \ref{cor: Estimate Vu^2}. Since $\psi_0(0)=0$, we have $\Vert \psi_0\Vert^2_{\tilde{H}^1}\le C\Vert \psi_0'\Vert^2$. This, together with \eqref{eq: equivalent definition virtual level estimate Vpsi} yields
\begin{equation}
	\langle V\psi,\psi\rangle\ge  |\psi(0)|^2 \int\limits_{-\infty}^\infty (V(x)-\delta|V(x)|)\, \mathrm{d}x-C(\delta)\int_{-\infty}^\infty |\psi_0'(x)|^2\,\mathrm{d}x.
\end{equation}
 Since $\int_{-\infty}^\infty V(x)\, \mathrm{d}x>0$, we can choose the constant $\delta>0$ sufficiently small, such that $\int_{-\infty}^\infty(V(x)-\delta|V(x)|)\, \mathrm{d}x \geq \frac{1}{2}\int_{-\infty}^\infty V(x)\, \mathrm{d}x=:C_0$. This, together with $\psi'(x)=\psi_0'(x)$ implies
\begin{equation}\label{1: |phi(0)|<}
|\psi(0)|^2 \leq C_0^{-1}\langle V\psi,\psi \rangle + C_1(\delta)\Vert \psi' \Vert^2
\end{equation}
for some constant $C_1(\delta)>0$ which depends on $V$ and $\delta$ only. This yields
\begin{align}\label{1: eps_1}
\begin{split}
\varepsilon_1 \int\limits_{-\infty}^\infty \frac{|\psi(x)|^2}{1+x^2}\, \mathrm{d}x &\leq 2\varepsilon_1 |\psi(0)|^2 \int\limits_{-\infty}^\infty \frac{1}{1+x^2}\, \mathrm{d}x + 2 \varepsilon_1 \int\limits_{-\infty}^\infty \frac{|\psi_0(x)|^2}{1+x^2}\, \mathrm{d}x
\\ &\leq \varepsilon_1 \left( 2\pi |\psi(0)|^2 + 8 \Vert \psi'\Vert^2 \right)\leq \varepsilon_1 \left( 2\pi C_0^{-1} \langle V\psi,\psi\rangle+C_2(\delta)\Vert \psi' \Vert^2 \right),
\end{split}
\end{align}
where $C_2(\delta)=C_1(\delta)+8$. We distinguish between two cases:
\\ \textbf{(i)} If $2\pi C_0^{-1} \langle V\psi,\psi \rangle < C_2(\delta)\Vert \psi' \Vert^2$, then \eqref{1: eps_1} yields
\begin{equation}
\varepsilon_1 \int\limits_{-\infty}^\infty \frac{|\psi(x)|^2}{1+x^2}\, \mathrm{d}x \leq 2\varepsilon_1C_2(\delta)\Vert \psi' \Vert^2.
\end{equation}
Now since $h$ does not have a virtual level, we can choose $\varepsilon_1>0$ sufficiently small
 to conclude that
\begin{equation}\label{eq: no virtual level subtract}
	\langle h\psi,\psi\rangle -\varepsilon_1 \int_{-\infty}^\infty \frac{|\psi(x)|^2}{1+x^2}\,\mathrm{d}x\ge 0.
\end{equation}
\\ \textbf{(ii)} If $2\pi C_0^{-1} \langle V\psi,\psi \rangle \geq C_2(\delta)\Vert \psi' \Vert^2$, we have $\langle V\psi,\psi \rangle > 0$ and
\begin{equation}
\varepsilon_1 \int\limits_{-\infty}^\infty \frac{|\psi(x)|^2}{1+x^2}\, \mathrm{d}x \leq 4\varepsilon_1 \pi C_0^{-1} \langle V\psi,\psi \rangle.
\end{equation}
By choosing $0<\varepsilon_1< (4\pi)^{-1}C_0$ we obtain
\begin{equation}
	\langle h\psi,\psi\rangle - \varepsilon_1 \int\limits_{-\infty}^\infty \frac{|\psi(x)|^2}{1+x^2}\, \mathrm{d}x\ge \Vert \psi'\Vert^2\ge 0.
\end{equation} 
This implies \eqref{eq: no virtual level subtract} and therefore the statement of Theorem \ref{lem no virtual level subtract d=1} for the case $d=1$.\\
Now we assume that $d=2$. For $\psi\in H^1(\mathbb{R}^2)$ we write $ \psi_0(x)=\psi(x) - a_0$, where 
\begin{equation}
	a_0 = \frac{1}{2\pi}\int_{0}^{2\pi} \psi(1,\theta)\,\mathrm{d}\theta.
\end{equation}
Then $\int_{0}^{2\pi} \psi_0(1,\theta)\,\mathrm{d}\theta=0$  and thus we can apply the two-dimensional Hardy inequality \eqref{eq: two dimensional Hardy Solomyak} to the function $\psi_0$. Proceeding as in the proof of the one-dimensional case yields the statement for $d=2$ and therefore completes the proof of Theorem \ref{lem no virtual level subtract d=1}.
\end{proof}
Now we extend Theorem \ref{lem no virtual level subtract d=1} to the case of multi-particle Schr\"odinger operators. 
\begin{thm}\label{thm: equivalent definition virtual level multiparticle}
Let $H$ be the Schr\"odinger operator corresponding to a system of $N\ge 2$ one- or two-dimensional particles, where the potentials $V_{ij}\neq 0$ satisfy \eqref{cond: relatively form bounded} and \eqref{cond: decay at infinity} and let $H\ge 0$. 
 Then $H$ has a virtual level at zero if and only if the following two assertions hold.
\begin{enumerate}
\item[\textbf{(i)}] There exists an $\varepsilon_0>0$, such that for any cluster $C$ with $1<|C|<N$ we have
\begin{equation}\label{eq: subsystems non-negative}
	H[C]-\varepsilon_0 \left(1+|q[C]|^2_m(\ln(|q[C]|_m)^2\right)^{-1} \ge 0.
\end{equation}
\item[\textbf{(ii)}] For any $\varepsilon>0$ we have
\begin{equation}\label{eq: spectrum perturbation Hardy term}
	\inf\mathcal{S}\left(H-\varepsilon\left(1+|x|_m^2\ln^2(|x|_m)\right)^{-1}\right)<0.
\end{equation} 
\end{enumerate}
\end{thm}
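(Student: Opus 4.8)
The plan is to match the two conditions of the theorem to the two defining properties of a virtual level in Definition~\ref{def: Vitrtual levels multiparticle}, namely stability of the essential spectrum ($\inf\mathcal S_{\mathrm{ess}}(H+\varepsilon_0\Delta_0)=0$ for some $\varepsilon_0$) and instability of the bottom ($\inf\mathcal S(H+\varepsilon\Delta_0)<0$ for all $\varepsilon$). Writing $W=(1+|x|_m^2\ln^2|x|_m)^{-1}$ and $W_C=(1+|q[C]|_m^2\ln^2|q[C]|_m)^{-1}$, I would show that condition~\textbf{(i)} is equivalent to stability of the essential spectrum and that condition~\textbf{(ii)} is equivalent to instability of the bottom. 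The argument runs by induction on $N$; the base case $N=2$ reduces, after separating the two-particle center of mass, to the one-particle statement of Theorem~\ref{lem no virtual level subtract d=1} applied with $\mathcal U=-W$, noting that $W$ meets the decay requirement~\eqref{eq: stoerung durch u virtual level} of that theorem in both dimensions $d=1,2$.

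For the equivalence of \textbf{(i)} with essential-spectrum stability I would invoke the HVZ theorem: $\inf\mathcal S_{\mathrm{ess}}(H+\varepsilon_0\Delta_0)$ is the minimum over two-cluster decompositions $Z$ of $\inf\mathcal S(H(Z)+\varepsilon_0\Delta_0(Z))$, and since $H(Z)+\varepsilon_0\Delta_0(Z)=\sum_{C_k\subset Z}(H[C_k]+\varepsilon_0\Delta_0[C_k])$ acts on mutually orthogonal variables, its bottom is non-negative exactly when each $H[C]+\varepsilon_0\Delta_0[C]\ge 0$, i.e. when no cluster $C$ with $1<|C|<N$ has a virtual level (cf. the remark after Definition~\ref{def: Vitrtual levels multiparticle}). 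It then remains to identify ``$H[C]$ has no virtual level'' with condition~\textbf{(i)} for that cluster, namely $H[C]-\varepsilon_0 W_C\ge 0$. The implication ``$H[C]-\varepsilon_0W_C\ge0$ $\Rightarrow$ no virtual level'' is the easy half, since a virtual level would, via the solution produced by the existence theorems, force $\inf\mathcal S(H[C]-\varepsilon W_C)<0$ for every $\varepsilon>0$. The converse ``no virtual level $\Rightarrow$ $H[C]-\varepsilon_0W_C\ge 0$'' I would obtain from the geometric machinery: the far-field inequality of Lemma~\ref{lem abziehlemma N=4} together with the sharpened localization estimate of Theorem~\ref{lem: localization with log} yields $H[C]-\varepsilon_0W_C\ge 0$ on functions supported in $\{|q[C]|_m\ge R\}$, which is upgraded to a global inequality by an IMS localization, the bounded region contributing no negative spectrum precisely because $H[C]$ has no zero-energy resonance (Lemma~\ref{lem: Abstraktes Lemma endlich viele Eigenwerte}).

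For condition~\textbf{(ii)} I would argue both directions using the $1$- and $2$-dimensional Hardy inequalities of Proposition~\ref{prop: Homogenous Sobolev}. If $H$ has a virtual level, then Theorem~\ref{thm: Virtual level and Hardy constant}, its corollaries, and Theorem~\ref{thm d=2 N=3} furnish a solution $\varphi_0\neq 0$ with $\|\nabla_0\varphi_0\|^2+\langle V\varphi_0,\varphi_0\rangle=0$ whose weighted bounds guarantee $0<\langle W\varphi_0,\varphi_0\rangle<\infty$; the variational principle then gives $\langle(H-\varepsilon W)\varphi_0,\varphi_0\rangle=-\varepsilon\langle W\varphi_0,\varphi_0\rangle<0$, which is \textbf{(ii)} (in the resonance case $d=2,N=3$ one replaces $\varphi_0$ by $\chi_n\varphi_0$ and controls the localization error as in Lemma~\ref{lem: d=2,N=3 estimate estimate gradient outside}). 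Conversely, the Hardy inequality~\eqref{eq: two dimensional Hardy Solomyak} and its one-dimensional analogue give $\langle W\psi,\psi\rangle\le C\|\nabla_0\psi\|^2$ up to the mean-zero correction handled as in the proof of Theorem~\ref{lem no virtual level subtract d=1}; hence a test function realizing $\langle H\psi,\psi\rangle<\varepsilon\langle W\psi,\psi\rangle$ satisfies $\langle(H+\varepsilon C\Delta_0)\psi,\psi\rangle<0$, so \textbf{(ii)} implies instability of the bottom for every $\varepsilon>0$.

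The main obstacle is the converse half of the cluster characterization, ``no virtual level $\Rightarrow$ $H[C]-\varepsilon_0W_C\ge0$''. In dimension $d\ge 3$ this would follow immediately from the ordinary Hardy inequality, but in dimensions one and two that inequality degenerates, which is exactly why $W$ carries a logarithmic factor and why the localization error must be controlled with the improved bound of Theorem~\ref{lem: localization with log}. The delicate point is to absorb the subtracted weight $\varepsilon_0 W_C$ into a small fraction of the kinetic energy after separating the cones $K(Z,\kappa)$, while simultaneously exploiting the absence of virtual levels in all proper sub-subclusters and the positivity $\int V_{ij}>0$ forced by $H[C]\ge 0$ in low dimensions.
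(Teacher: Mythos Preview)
Your overall plan—induction on $N$, base case $N=2$ via Theorem~\ref{lem no virtual level subtract d=1}, matching \textbf{(i)} to essential-spectrum stability and \textbf{(ii)} to instability of the bottom—is aligned with the paper. Two remarks.

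First, for the ``hard direction'' of \textbf{(i)} you propose to invoke Lemma~\ref{lem abziehlemma N=4}. That lemma is stated under the hypothesis that the Hamiltonian \emph{has} a virtual level (used only to guarantee that proper subclusters do not), so it must be reformulated before it applies to a cluster $C$ without a virtual level. The paper avoids this entirely by using the induction hypothesis: since $|C|<N$, the equivalence for $H[C]$ is already in hand, and a minimal-bad-cluster contradiction argument dispatches both directions of \textbf{(i)} cleanly without rerunning the geometric machinery.

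Second—and this is a genuine gap—your argument for ``\textbf{(ii)} $\Rightarrow$ instability of the bottom'' breaks in the case $d=1$, $N=3$, where $\dim X_0=2$. You propose to bound $\langle W\psi,\psi\rangle$ by $C\|\nabla_0\psi\|^2$ ``up to the mean-zero correction handled as in the proof of Theorem~\ref{lem no virtual level subtract d=1}''. But in that proof the correction $|a_0|^2$ is controlled via the bound $|a_0|^2\le C_0^{-1}\langle V\psi,\psi\rangle+C_1\|\nabla\psi\|^2$, which rests on $\int V>0$ and $\int|V|<\infty$ for the one-particle short-range potential. The three-body potential $V=\sum_{i<j}V_{ij}$ viewed as a function on the two-dimensional space $X_0$ satisfies neither property—it does not decay in all directions and is not integrable—so the mean-zero term cannot be absorbed this way, and \textbf{(ii)} alone does not obviously yield $\inf\mathcal{S}(H+\varepsilon\Delta_0)<0$.

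The paper does not decouple the two conditions here. It uses \textbf{(i)} and \textbf{(ii)} together: condition \textbf{(i)} (through the induction hypothesis) guarantees that no two-particle subcluster has a virtual level, which is exactly the input needed to run the weighted-$L^2$ compactness argument of Theorems~\ref{thm: Virtual level and Hardy constant} and \ref{thm d=1 N=3} on the eigenfunctions $\psi_n$ of $H-n^{-1}W$, normalized in $\tilde H^1(X_0)$. This produces an $L^2$-eigenfunction $\psi_0$ of $H$ at zero, and $\psi_0$ is then the test function witnessing $\inf\mathcal{S}(H+\varepsilon\Delta_0)<0$. For $\dim X_0\ge 3$ the ordinary Hardy inequality does suffice, as you say.
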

\begin{proof}[Proof of Theorem \ref{thm: equivalent definition virtual level multiparticle}]
For $N=2$ the statement was proved in Theorem \ref{lem no virtual level subtract d=1}. Now assume that $N\ge 3$. First we prove that if $H$ has a virtual level at zero, then \eqref{eq: spectrum perturbation Hardy term} is true. According to remark \textbf{(ii)} after Theorem \ref{thm d=1 N=3} we know that if $d=1, \, N\ge 3 $ or $d=2,\,N\ge 4$ zero is an eigenvalue of $H$. Taking the corresponding eigenfunction as a trial function shows that \eqref{eq: spectrum perturbation Hardy term} holds for any $\varepsilon>0$. For $d=2,\, N=3$ we do not know if zero is an eigenvalue. However, by Theorem \ref{thm d=2 N=3} we know that there is a function $\varphi_0\in \tilde{H}^1(X_0)$ with $\Vert \nabla_0\varphi_0\Vert^2 +\langle V\varphi_0,\varphi_0\rangle =0$, which yields \eqref{eq: spectrum perturbation Hardy term}.
\\ \ \\
In the rest of the proof we will use induction in the number of particles. Assume that the system has $N\geq 3$ particles and that the theorem holds for all system with the number of particles less or equal to $N-1$. 
\\If $H$ has a virtual level and condition \eqref{eq: subsystems non-negative} does not hold, then there exists at least one cluster $C$ in this system with $1<|C|<N$, such that for this cluster $C$ and all $\varepsilon>0$ condition \eqref{eq: subsystems non-negative} does not hold as well. Among such clusters we choose one with the smallest number of particles and denote it by $C_0$. Then we have
\begin{equation}\label{eq: B16}
\inf\mathcal{S}\left(H[C_0]-\varepsilon \left(1+|q[C_0]|^2_m(\ln(|q[C_0]|_m)^2\right)^{-1}\right)<0 \qquad \text{for any} \ \varepsilon>0.
\end{equation}
If $C_0$ consists of only two particles, then by Theorem \ref{lem no virtual level subtract d=1} this condition implies that $H[C_0]$ has a virtual level at zero. However, by the remark after Definition \ref{def: Vitrtual levels multiparticle} Hamiltonians of non-trivial clusters can not have virtual levels at zero if the Hamiltonian of the whole system has a virtual level. Therefore, $C_0$ must consist of at least three particles. 
Now since $C_0$ is the smallest cluster for which \eqref{eq: subsystems non-negative} does not hold for any small $\varepsilon>0$, for each cluster $\tilde{C}\subsetneq C_0$ with $|\tilde{C}|>1$ we have 
\begin{equation}\label{eq: B18}
H[\tilde C]-\varepsilon_0 \left(1+|q[\tilde C]|^2_m(\ln(|q[\tilde C]|_m))^2\right)^{-1} \ge 0 
\end{equation}
for some $\varepsilon_0>0$. Since $|C_0|<N-1$, \eqref{eq: B16}, \eqref{eq: B18} and the induction assumption yield that $H[C_0]$ has a virtual level, which according to the remark after the Definition \ref{def: Vitrtual levels multiparticle} contradicts the assumption that $H$ has a virtual level. 
\\To complete the proof of the theorem we have to show that if condition \eqref{eq: subsystems non-negative} and \eqref{eq: spectrum perturbation Hardy term} are fulfilled, then $H$ has a virtual level.
\\
At first, we prove that condition \textbf{(i)} of Definition \ref{def: Vitrtual levels multiparticle} is fulfilled, namely that there exists a constant $\varepsilon_0\in(0,1)$, such that 
\begin{equation}\label{eq: B19}
\inf\mathcal{S}_{\mathrm{ess}} \left(H+\varepsilon_0\Delta_0\right)=0.
\end{equation} 
Recall that due to the remark after Definition \ref{def: Vitrtual levels multiparticle} to prove \eqref{eq: B19} it suffices to show that for any $C$ with $1<|C|<N$ the operator $H[C]$ does not have a virtual level. Assume for contradiction that there exists a cluster $C_1$ with $1<|C_1|<N$, such that $H[C_1]$ has a virtual level at zero. Then, due to the induction assumption we have
\begin{equation}
	\inf \mathcal{S}\left(H[C_1]-\varepsilon\left(1+|q[ C_1]|^2_m(\ln(|q[ C_1]|_m))^2\right)^{-1} \right)<0.
\end{equation}
This is a contradiction to \eqref{eq: subsystems non-negative}. Hence, condition \textbf{(i)} of Definition \ref{def: Vitrtual levels multiparticle} is fulfilled.
\\
It remains to prove that if conditions \eqref{eq: subsystems non-negative} and \eqref{eq: spectrum perturbation Hardy term} of Theorem \ref{thm: equivalent definition virtual level multiparticle} hold, then condition \textbf{(ii)} of Definition \ref{def: Vitrtual levels multiparticle} is fulfilled, namely
\begin{equation}\label{eq_ def vl (ii)}
\inf\mathcal{S}\left(H+\varepsilon\Delta_0\right)<0 \qquad  \text{for any} \ \varepsilon \in (0,1).
\end{equation}
If $\dim(X_0)\ge 3$, we can use Hardy's inequality to conclude that \eqref{eq_ def vl (ii)} holds. If $\dim(X_0)<3$, i.e. the system consists of three one-dimensional particles, \eqref{eq: spectrum perturbation Hardy term} implies that for any $n\in \mathbb{N}$ the operator $H-n^{-1}\left(1+|x|_m^2(\ln(|x|_m))^2\right)^{-1}$ has a negative eigenvalue. We take a sequence of eigenfunctions $\psi_n$ corresponding to these eigenvalues, normalized by $\Vert \psi_n\Vert_{\tilde{H}^1} =1$. Applying the same arguments as in the proof of Theorem \ref{thm d=1 N=3} we see that $\psi_n$ converges in $L^2(X_0)$ to a function $\psi_0$ which is an eigenfunction of the operator $H$ corresponding to the eigenvalue zero. For this function we have
\begin{equation}
	(1-\varepsilon) \Vert \nabla_0 \psi_0\Vert^2+\langle V\psi_0,\psi_0\rangle=-\varepsilon\Vert \nabla_0\psi_0\Vert^2 <0
\end{equation}
for any $\varepsilon>0$. This proves that condition \textbf{(ii)} of Definition \ref{def: Vitrtual levels multiparticle} is fulfilled and completes the proof of Theorem \ref{thm: equivalent definition virtual level multiparticle}.
\end{proof}

\section{A sufficient condition for finiteness of the discrete spectrum}\label{App: Technical Lemmas}
In this section we give a criterion for the finiteness of the number of negative eigenvalues, which we used in the proofs of Theorem \ref{thm: No Efimov}, Theorem \ref{thm: No Efimov N=3} and Theorem \ref{thm: No Efimov n=2 N=3}. This criterion, in a slightly different form, is due to G. Zhislin and is a part of the proof of the main result in \cite{Zhislin1}. For the convenience of the reader we give it here.
\begin{lem}\label{lem: Abstraktes Lemma endlich viele Eigenwerte}
Let $h=-\Delta+V$ in $L^2(\mathbb{R}^k),\ k\in \mathbb{N},$ where $V$ satisfies \eqref{cond: relatively form bounded}. Assume there exist constants $\beta, \varepsilon,\, b>0$, such that
\begin{equation}\label{A22}
\langle h\psi,\psi\rangle - \varepsilon \langle |x|^{-\beta}\psi,\psi \rangle \geq 0
\end{equation}
holds for any $\psi \in H^1(\mathbb{R}^k)$ with $\supp \psi \subset \{x\in \mathbb{R}^k, \ |x|\geq b\}$. Then the following assertions hold.
\begin{enumerate}
\item[$\textbf{(i)}$] $\inf \mathcal{S}_{\mathrm{ess}}(h) \geq 0$.
\item[$\textbf{(ii)}$] The operator $h$ has at most a finite number of negative eigenvalues.
\item[$\textbf{(iii)}$] Zero is not an infinitely degenerate eigenvalue of $h$.
\end{enumerate}
\end{lem}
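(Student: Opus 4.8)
The plan is to prove the three assertions in turn, with \textbf{(ii)} and \textbf{(iii)} sharing a single mechanism. For \textbf{(i)} I would invoke Persson's theorem, which represents the bottom of the essential spectrum as
\begin{equation*}
\inf\mathcal{S}_{\mathrm{ess}}(h)=\lim_{R\to\infty}\inf\big\{\langle h\psi,\psi\rangle:\psi\in C_0^\infty(\{|x|>R\}),\ \|\psi\|=1\big\}.
\end{equation*}
For every $R\ge b$ the admissible $\psi$ are supported in $\{|x|\ge b\}$, so hypothesis \eqref{A22} gives $\langle h\psi,\psi\rangle\ge\varepsilon\langle|x|^{-\beta}\psi,\psi\rangle\ge0$; letting $R\to\infty$ yields $\inf\mathcal{S}_{\mathrm{ess}}(h)\ge0$.

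The first step toward \textbf{(ii)} and \textbf{(iii)} is to upgrade the exterior bound \eqref{A22} to a global coercivity statement. Fix a smooth partition $\chi_1^2+\chi_2^2=1$ with $\chi_1=1$ on $\{|x|\le 2b\}$ and $\supp\chi_1\subset\{|x|\le 3b\}$, so that $\supp\chi_2\subset\{|x|\ge 2b\}\subset\{|x|\ge b\}$. The IMS localization formula reads
\begin{equation*}
\langle h\psi,\psi\rangle=\langle h(\chi_1\psi),\chi_1\psi\rangle+\langle h(\chi_2\psi),\chi_2\psi\rangle-\sum_{j=1}^2\big\||\nabla\chi_j|\,\psi\big\|^2 .
\end{equation*}
On the exterior piece I apply \eqref{A22} to get $\langle h(\chi_2\psi),\chi_2\psi\rangle\ge\varepsilon\langle|x|^{-\beta}\chi_2\psi,\chi_2\psi\rangle\ge0$; the interior piece is bounded below by $-C\|\chi_1\psi\|^2$ since $V$ is form bounded with relative bound zero; and the localization error is supported in $\{b\le|x|\le 3b\}$ and dominated by $C_0\int_{\{|x|\le 3b\}}|\psi|^2$. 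Writing $\Omega=\{|x|\le 3b\}$ and choosing $M>0$ large enough, these combine to
\begin{equation*}
\big\langle (h+M\,\mathbf 1_\Omega)\psi,\psi\big\rangle\ge\varepsilon\langle|x|^{-\beta}\chi_2\psi,\chi_2\psi\rangle+\int_\Omega|\psi|^2\ge0
\end{equation*}
for all $\psi\in H^1(\mathbb{R}^k)$. Thus $A:=h+M\,\mathbf 1_\Omega$ is nonnegative self-adjoint and, crucially, coercive against the strictly positive weight $\varepsilon|x|^{-\beta}\chi_2^2$, and we have expressed $h=A-M\,\mathbf 1_\Omega$ with $M\,\mathbf 1_\Omega$ a bounded, compactly supported, hence relatively form compact, perturbation.

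The second step is the Birman--Schwinger principle: for $E>0$ the number of eigenvalues of $h$ below $-E$ equals the number of eigenvalues of $K_E:=\sqrt M\,\mathbf 1_\Omega(A+E)^{-1}\sqrt M\,\mathbf 1_\Omega$ exceeding $1$. Since $\mathbf 1_\Omega$ maps $H^1$ compactly into $L^2$ by Rellich and $(A+E)^{-1/2}$ maps $L^2$ into the form domain $H^1$, each $K_E$ is compact, and the displayed coercivity gives $\langle f,(A+E)^{-1}f\rangle\le\|f\|^2$ for $f$ supported in $\Omega$, so $\{K_E\}_{E>0}$ is uniformly bounded and monotone as $E\downarrow0$. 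The heart of the matter is the threshold limit $E\to0$: I would show $K_E$ converges in operator norm to a compact operator $K_0$ whose integral kernel is the localization $\mathbf 1_\Omega G_0\,\mathbf 1_\Omega$ of the zero-energy Green's function of $A$, Hilbert--Schmidt on the bounded set $\Omega\times\Omega$. This is exactly where the weight in \eqref{A22} is indispensable: the coercivity $A\ge\varepsilon|x|^{-\beta}\chi_2^2$ forbids a zero-energy resonance (virtual level) of $A$, which is what keeps $\mathbf 1_\Omega(A+E)^{-1}\mathbf 1_\Omega$ bounded and equicontinuous down to $E=0$. Granting this, $\sup_{E>0}\#\{\text{eigenvalues of }K_E\text{ above }1\}=\#\{\text{eigenvalues of }K_0\text{ above }1\}<\infty$, which gives the finiteness of the negative spectrum in \textbf{(ii)}; and because $K_0$ is compact its eigenvalue $1$ has finite multiplicity, so the Birman--Schwinger correspondence at threshold yields the finite degeneracy of the eigenvalue $0$ in \textbf{(iii)}.

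The main obstacle is precisely this threshold compactness. For $E>0$ everything is routine, but $0\in\mathcal{S}_{\mathrm{ess}}(A)$, so the limiting resolvent is genuinely singular and its local regularity must be wrung out of the weighted lower bound, which is what rules out the accumulation of eigenvalues of $h$ at $0$ — the Efimov mechanism. A route that sidesteps some of this operator analysis would be a Glazman-type splitting: estimate the dimension of any subspace on which $\langle h\cdot,\cdot\rangle\le0$ by the number of negative modes of the interior operator $A$ on $\Omega$, using the coercivity inequality above to preclude such modes from living in the exterior. I expect both routes to rest on the same weighted-coercivity input.
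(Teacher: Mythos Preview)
Your proof of \textbf{(i)} via Persson's theorem is fine. For \textbf{(ii)} and \textbf{(iii)}, however, your primary route via Birman--Schwinger at threshold has a real gap. You correctly identify the obstacle yourself: establishing that $\mathbf 1_\Omega(A+E)^{-1}\mathbf 1_\Omega$ has a compact norm limit as $E\downarrow 0$ is not done, and the weighted lower bound $A\ge \varepsilon|x|^{-\beta}\chi_2^2$ is too weak to deliver it directly. That weight vanishes on $\Omega$ and decays at infinity, so it does not give a spectral gap for $A$ at zero; in dimensions $k=1,2$ the free resolvent itself diverges at threshold, and nothing in your argument explains why the localized resolvent of $A$ behaves better. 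In short, you have traded the original question (no accumulation of eigenvalues of $h$ at $0$) for an equally hard question (threshold regularity of the resolvent of $A$), and left the latter as an assertion.

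The paper takes exactly the ``Glazman-type splitting'' you mention in your last paragraph, and it is much more elementary than you suggest. The point is not to bound $h$ below by $-M\mathbf 1_\Omega$ (which throws away the interior kinetic energy and forces you into resolvent analysis), but to keep the gradient term on the interior piece. With a cutoff $\chi_1$ supported in $\{|x|\le\tilde b\}$ and $\chi_2=(1-\chi_1^2)^{1/2}$, one arranges the localization error so that it is absorbed by \eqref{A22} on the exterior, yielding
\[
\langle h\psi,\psi\rangle \ \ge\ \langle h(\chi_1\psi),\chi_1\psi\rangle - C\|\chi_1\psi\|^2 \ \ge\ (1-\varepsilon)\|\nabla(\chi_1\psi)\|^2 - C'\|\chi_1\psi\|^2,
\]
the second step using form-boundedness of $V$. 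Now let $M_l=\mathrm{span}\{\varphi_1\chi_1,\dots,\varphi_l\chi_1\}$, where $\varphi_1,\dots,\varphi_l$ are the first $l$ Dirichlet eigenfunctions of $-\Delta$ on $\{|x|\le\tilde b\}$. If $\psi\perp M_l$ in $L^2$ then $\chi_1\psi\perp\varphi_1,\dots,\varphi_l$, so $\|\nabla(\chi_1\psi)\|^2\ge\lambda_{l+1}\|\chi_1\psi\|^2$; choosing $l$ large makes the right-hand side nonnegative. Thus the form of $h$ is nonnegative on a subspace of finite codimension, which gives \textbf{(i)}--\textbf{(iii)} simultaneously without any resolvent analysis.
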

To prove Lemma \ref{lem: Abstraktes Lemma endlich viele Eigenwerte} we use the following
\begin{lem}\label{Cutoff ball}
Assume that $V$ satisfies \eqref{cond: relatively form bounded}.	Let $\beta>0$, $\varepsilon>0$ and $\tilde b>b>0$. Then there exist a constant $C(\varepsilon,\beta)$ and a function $\chi_1\in C^1(\mathbb{R}^k), \ 0\le \chi_1\le 1,$ with
	\begin{equation}
		\chi_1(x) = \begin{cases}
			1, & \vert x\vert \le b,\\
			0, & \vert x\vert \ge \tilde b,
		\end{cases}
	\end{equation}
such that for all $\psi \in H^1(\mathbb{R}^k)$ we have
	\begin{equation}\label{eq: Cutoff ball}
		\langle h\psi, \psi\rangle\ge  \langle h\psi \chi_1, \psi \chi_1\rangle-C(\varepsilon,\beta)\Vert \psi \chi_1\Vert^2 + \langle h \psi \chi_2, \psi \chi_2\rangle -\varepsilon \Vert \vert x\vert^{-\beta} \psi \chi_2\Vert^2_{\{b\le |x|\le\tilde b\}},
	\end{equation}
	where $\chi_2=\sqrt{1-\chi_1^2}$.
\end{lem}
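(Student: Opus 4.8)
The plan is to combine the IMS localization formula with an explicit radial construction of $\chi_1$ whose transition is shaped so that the localization error is absorbed by the two terms on the right-hand side. First I would record the IMS identity. Writing $\chi_2=\sqrt{1-\chi_1^2}$ so that $\chi_1^2+\chi_2^2=1$, the product rule gives, for every $\psi\in H^1(\mathbb{R}^k)$,
\begin{equation}
\langle h\psi,\psi\rangle=\langle h\chi_1\psi,\chi_1\psi\rangle+\langle h\chi_2\psi,\chi_2\psi\rangle-\int_{\mathbb{R}^k}\bigl(|\nabla\chi_1|^2+|\nabla\chi_2|^2\bigr)|\psi|^2\,\mathrm{d}x,
\end{equation}
since the cross terms in $\sum_{j}|\nabla(\chi_j\psi)|^2$ cancel because $\sum_j\chi_j\nabla\chi_j=\tfrac12\nabla\sum_j\chi_j^2=0$, while $\sum_j\langle V\chi_j\psi,\chi_j\psi\rangle=\langle V\psi,\psi\rangle$; all terms are finite because $V$ satisfies \eqref{cond: relatively form bounded}. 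Hence the lemma follows once $\chi_1$ is constructed so that, on the annulus $\{b\le|x|\le\tilde b\}$ where the gradients are supported, the pointwise bound
\begin{equation}
|\nabla\chi_1|^2+|\nabla\chi_2|^2\le C(\varepsilon,\beta)\,\chi_1^2+\varepsilon\,|x|^{-2\beta}\chi_2^2
\end{equation}
holds: integrating against $|\psi|^2$ and enlarging the first integral to all of $\mathbb{R}^k$ reproduces exactly the claimed inequality.

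Next I would reduce this to a scalar differential inequality. Taking $\chi_1(x)=f(|x|)$ radial with $f\equiv1$ on $[0,b]$ and $f\equiv0$ on $[\tilde b,\infty)$, and setting $g=\sqrt{1-f^2}$, a direct computation using $f^2+g^2=1$ (as in \eqref{eq: gradient u + gradient v}) gives $|\nabla\chi_1|^2+|\nabla\chi_2|^2=(f')^2+(g')^2=(f')^2/g^2$. Substituting $f=\cos\theta(r)$ and $g=\sin\theta(r)$ with $\theta\colon[b,\tilde b]\to[0,\tfrac\pi2]$ collapses the error density to $(\theta'(r))^2$, so the required pointwise bound becomes
\begin{equation}
(\theta'(r))^2\le C(\varepsilon,\beta)\cos^2\theta(r)+\varepsilon\,r^{-2\beta}\sin^2\theta(r),\qquad b\le r\le\tilde b,
\end{equation}
with $\theta(b)=0$ and $\theta(\tilde b)=\tfrac\pi2$. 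Note that $C^1$-matching of $\chi_1$ at $r=b$ is automatic, since $f'=-\theta'\sin\theta$ vanishes at $\theta=0$; at $r=\tilde b$ I will in addition arrange $\theta'(\tilde b)=0$, which makes $f'(\tilde b)=0$.

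Finally I would construct $\theta$. Since $r^{-2\beta}\ge\tilde b^{-2\beta}$ on the annulus, it suffices to satisfy the inequality with the constant weight $a:=\varepsilon\tilde b^{-2\beta}$. The key point is that the maximal-speed profile solving $\theta'=(C\cos^2\theta+a\sin^2\theta)^{1/2}$ traverses $[0,\tfrac\pi2]$ in ``time'' $T(C)=\int_0^{\pi/2}(C\cos^2\theta+a\sin^2\theta)^{-1/2}\,\mathrm{d}\theta$, and a splitting of this integral at $\tfrac\pi2-\delta$ yields $T(C)\le \tfrac{\pi}{2\sqrt C\sin\delta}+\tfrac{\delta}{\sqrt a\cos\delta}$, which can be made smaller than $\tilde b-b$ by first choosing $\delta$ small and then $C$ large. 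Choosing $C=C(\varepsilon,\beta)$ (depending also on $b,\tilde b$) so that $T(C)<\tilde b-b$ leaves room to let $\theta$ follow this ODE until it reaches $\tfrac\pi2-\delta$ at some $r_1<\tilde b$, and then to interpolate by a fixed monotone $C^1$ profile on $[r_1,\tilde b]$ from $\tfrac\pi2-\delta$ to $\tfrac\pi2$ with $\theta'(\tilde b)=0$ and slope staying under the envelope $\sqrt{C\cos^2\theta+a\sin^2\theta}$ (possible because $\sin\theta$ is bounded below there, the envelope is $\ge\sqrt a>0$, and $[r_1,\tilde b]$ has fixed positive length), smoothing the junction at $r_1$. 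This produces a $C^1$ profile obeying the differential inequality, hence the desired $\chi_1$.

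The main obstacle, and the reason the $\chi_1$-term must carry a large coefficient $C(\varepsilon,\beta)$ rather than a small one, is the behaviour near the outer boundary $|x|=\tilde b$: there $\chi_1\to0$, so only the small-$\varepsilon$ weighted term is available to absorb the localization error, yet the profile must still be driven all the way to $\chi_1=0$ on a fixed interval. This is precisely what the term $C\cos^2\theta$ resolves, by performing the bulk of the ascent quickly while $\theta$ is bounded away from $\tfrac\pi2$ (so that $T(C)\to0$), thereby freeing enough of the annulus to approach $\theta=\tfrac\pi2$ gently with $\theta'\to0$.
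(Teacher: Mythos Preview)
Your proof is correct and shares the overall strategy with the paper: both apply the IMS formula and reduce to a pointwise bound on the localization error in terms of $\chi_1^2$ and $|x|^{-2\beta}\chi_2^2$ on the annulus. The difference lies in how the radial profile is built. You solve a maximal-speed ODE for the phase $\theta$ and then splice in a gentle tail, choosing $\delta$ and then $C$ so that the construction fits inside $[b,\tilde b]$; this is fully constructive but somewhat delicate in the parameter bookkeeping. The paper instead takes any $C^1$ cutoff $u$ with $v'(1-v^2)^{-1/2}\to 0$ as $r\to\tilde b_-$ (equivalently, in your notation, $\theta'\to 0$ at $\tilde b$), and then simply splits the annulus at a point $b'$ close to $\tilde b$: on $[b',\tilde b]$ the error density is small and $\chi_2$ is close to $1$, so it is absorbed by the $\varepsilon$-term; on $[b,b']$ the function $\chi_1$ is bounded below by $u(b')>0$, so the bounded error is absorbed by a sufficiently large constant $C(\varepsilon,\beta)$. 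This two-region argument avoids any explicit ODE and makes the origin of the large constant transparent, while your approach has the virtue of giving an explicit $\chi_1$.
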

\begin{proof}[Proof of Lemma \ref{Cutoff ball}]
 Let $\beta,\varepsilon>0$ and $b,\tilde b>0$ with $\tilde b>b$ be fixed. Furthermore, let $u:\mathbb{R}_+\rightarrow [0,1]$ be a $C^1$-function, such that $u(t)=1, \, t\le b$ and $u(t)=0,\, t\ge \tilde{b}$. We assume that $u$ is strictly monotonically decreasing on $(b,\tilde b)$. Let $v=\sqrt{1-u^2}$. We choose $u$ in such a way that $v'(t)(1-v^2(t))^{-\frac{1}{2}}\rightarrow 0$ as $t\rightarrow \tilde b_-$. For $x\in \mathbb{R}^k$ let
\begin{equation}	
	 \chi_1(x)=u(|x|), \qquad \chi_2(x)=v(|x|).
\end{equation}	 
Then we have
	\begin{equation}\label{eq: compute sum of squared gradients}
		|\nabla \chi_1|^2+|\nabla\chi_2|^2 = \frac{|\nabla \chi_2|^2}{1-\chi_2^2}=\frac{(v'(|x|))^2}{1-v^2(|x|)}.
\end{equation}
Since $v'(|x|)(1-v^2(|x|))^{-\frac{1}{2}}\rightarrow 0$ as $|x|\rightarrow \tilde b_-$ and $v(|x|)$ is close to one in a vicinity of $|x|=\tilde b$, we can choose $b'$ so close to $\tilde b$ that
\begin{equation}
	\frac{(v'(|x|))^2}{1-v^2(|x|)} \le \varepsilon v^2(|x|)|x|^{-\beta},\qquad b'\le |x|\le \tilde b.
\end{equation}
This, together with \eqref{eq: compute sum of squared gradients} implies \begin{equation}\label{eq: estimate b'<x<tilde b}
|\nabla \chi_1|^2+|\nabla\chi_2|^2 \le \varepsilon \chi_2^2(x)|x|^{-\beta}, \qquad  b'\le |x|\le \tilde b.
\end{equation}
Now we estimate $|\nabla \chi_1|^2+|\nabla\chi_2|^2$ for $b\le |x|\le b'$. Recall that $u(t)> u(b')>0$ for $b<t<b'$. Hence, we get  
\begin{equation}\label{eq: estimate b<x<b'}
	\frac{(v'(|x|))^2}{1-v^2(|x|)}\le C u^2(|x|)|x|^{-\beta},\qquad b\le |x|\le b'
\end{equation}	 
for some $C>0$ which depends on $b'$ (which itself depends on $\varepsilon$ and $\beta$). Due to the IMS formula we have
\begin{equation}
\langle h\psi,\psi\rangle =\langle h\psi\chi_1,\psi\chi_1\rangle +\langle h\psi\chi_2,\psi\chi_2\rangle -\int \left(|\nabla \chi_1|^2+|\nabla \chi_2|^2\right)|\psi|^2\,\mathrm{d}x.
\end{equation}
This, together with \eqref{eq: estimate b'<x<tilde b} and \eqref{eq: estimate b<x<b'} completes the proof of Lemma \ref{Cutoff ball}.
\end{proof}
Now we turn to the
\begin{proof}[Proof of Lemma \ref{lem: Abstraktes Lemma endlich viele Eigenwerte}]
We construct a finite-dimensional subspace $M\subset L^2(\mathbb{R}^k)$, such that $\langle h \psi,\psi\rangle > 0$ holds for any $\psi \in H^1(\mathbb{R}^k)$, $\psi\neq 0$ which is orthogonal to $M$. Let $\varepsilon,\beta, b>0$, such that \eqref{A22} is fulfilled. Let $\chi_1$ and $\chi_2$ be functions according to Lemma \ref{Cutoff ball}. Then by assumption of the lemma for any function $\psi\in H^1(\mathbb{R}^k)$
\begin{align}
\begin{split}
	\langle h\psi,\psi\rangle &\ge  \langle h\psi \chi_1, \psi \chi_1\rangle-C(\varepsilon,\beta)\Vert \psi \chi_1\Vert^2 + \langle h \psi \chi_2, \psi \chi_2\rangle -\varepsilon \Vert \vert x\vert^{-\beta} \psi \chi_2\Vert^2_{\{b\le |x|\le\tilde b\}}\\
	&\ge  \langle h\psi \chi_1, \psi \chi_1\rangle-C(\varepsilon,\beta)\Vert \psi \chi_1\Vert^2 ,
	\end{split}
\end{align}
because $\supp(\chi_2)\subset \{x\in \mathbb{R}^k:|x|\ge b\}$. Thus, to prove statements \textbf{(i)}-\textbf{(iii)} it suffices to show that
\begin{equation}
\langle h\psi \chi_1, \psi \chi_1\rangle-C(\varepsilon,\beta)\Vert \psi \chi_1\Vert^2\ge 0
\end{equation}
 holds for any function $\psi\in H^1(\mathbb{R}^k)$ with $\psi\perp M$ (in $L^2(\mathbb{R}^k)$) for some finite-dimensional space $M\subset H^1(\mathbb{R}^k)$. By condition \eqref{cond: relatively form bounded} we get
\begin{equation}
	 \langle h\psi \chi_1, \psi \chi_1\rangle-C(\varepsilon,\beta)\Vert \psi \chi_1\Vert^2  \ge (1-\varepsilon)\Vert \nabla \left(\psi\chi_1\right)\Vert^2-C'(\varepsilon,\beta)\Vert \psi\chi_1\Vert^2
\end{equation}
for some $C'(\varepsilon,\beta)>0$. For $l \in \mathbb{N}$ let
\begin{equation}
	M_l := \left\{\varphi_1 \chi_1, \dots, \varphi_l \chi_1\right\},
\end{equation}
where $\{\varphi_1, \dots, \varphi_l\}$ is an orthonormal set of eigenfunctions corresponding to the $l$ lowest eigenvalues of the Laplacian, acting on $L^2\left(\{|x|\le \tilde b\}\right)$ with Dirichlet boundary conditions. For $\psi \perp M_l$ we have $\psi \chi_1 \perp \varphi_1, \dots \varphi_l$, which for sufficiently large $l$ implies
\begin{equation}
	\Vert \nabla (\psi \chi_1) \Vert^2 \ge  \left(1-\varepsilon\right)^{-1}C'(\varepsilon,\beta) \Vert \psi \chi_1\Vert^2.
\end{equation}
Therefore, we conclude $L[\psi \chi_1] > 0$. This proves statements \textbf{(i)}-\textbf{(iii)} of Lemma \ref{lem: Abstraktes Lemma endlich viele Eigenwerte}. 
\end{proof}

\section*{data availability statement}
Data sharing is not applicable to this article as no new data were created or analyzed in this study.

\bibliography{Bib}{}
\bibliographystyle{abbrv}
\end{document}